\theoremstyle{plain}
\theoremstyle{plain}
\newenvironment{proof}[1][\protect\proofname]{\par
	\normalfont\topsep6\p@\@plus6\p@\relax
	\trivlist
	\itemindent\parindent
	\item[\hskip\labelsep\scshape #1]\ignorespaces
}{%
	\endtrivlist\@endpefalse
}
\providecommand{\proofname}{Proof}
\def\beq{\begin{equation}}
\def\eeq{\end{equation}}
\def\bi{\begin{itemize}}
\def\ei{\end{itemize}}
	\def\ba{\begin{array}}
	\def\ea{\end{array}}
	\def\bfig{\begin{figure}}
	\def\efig{\end{figure}}
	\def\C{\mathbb{C}}
	\def\R{\mathbb{R}}
	\def\Z{\mathbb{Z}}
	\newtheorem{theorem}{Theorem}[section]
	\newtheorem{lemma}[theorem]{Lemma}
	\newcommand{\Slc}{\mathrm{SL}(2,\mathbb{C})}
	\def\be{\begin{eqnarray}}
	\def\ee{\end{eqnarray}}
	\newcommand{\ca}{\mathcal A}
	\newcommand{\cc}{\mathcal C}
	\newcommand{\cd}{\mathcal D}
	\newcommand{\ce}{\mathcal E}
	\newcommand{\cf}{\mathcal F}
	\newcommand{\cg}{\mathcal G}
	\newcommand{\ch}{\mathcal H}
	\newcommand{\ci}{\mathcal I}
	\newcommand{\ck}{\mathcal K}
	\newcommand{\cl}{\mathcal L}
	\newcommand{\cm}{\mathcal M}
	\newcommand{\co}{\mathcal O}
	\newcommand{\calp}{\mathcal P}
	\newcommand{\calr}{\mathcal R}
	\newcommand{\cs}{\mathcal S}
	\newcommand{\cu}{\mathcal U}
	\newcommand{\cw}{\mathcal W}
	\newcommand{\cx}{\mathcal X}
	\newcommand{\cy}{\mathcal Y}
	\newcommand{\sm}{\mathscr{M}}
	  \newcommand{\Fa}{\mathfrak{A}}
	  \newcommand{\Fd}{\mathfrak{D}}
	  \newcommand{\Ff}{\mathfrak{F}}
	\newcommand{\fp}{\mathfrak{p}}
	  \newcommand{\Fs}{\mathfrak{S}}
	  \newcommand{\Fw}{\mathfrak{W}}
	\renewcommand{\a}{\alpha}
	\renewcommand{\b}{\beta}
	\newcommand{\g}{\gamma}
	\newcommand{\G}{\Gamma}
	\newcommand{\eps}{\varepsilon}
	\newcommand{\sig}{\sigma}
	\newcommand{\Sig}{\Sigma}
	\renewcommand{\l}{\lambda}
	\renewcommand{\o}{\omega}
	\renewcommand{\t}{\tau}
	\newcommand{\rmd}{\mathrm d}
	\newcommand{\lt}{\left}
	\newcommand{\rt}{\right}
	\newcommand{\lag}{\left\langle}
	\newcommand{\rag}{\right\rangle}
	\newcommand{\act}{\rhd}
	\newcommand{\sn}{\mathscr{N}}
	\newcommand{\sk}{\mathscr{K}}
	\newcommand{\sw}{\mathscr{W}}
	\newcommand{\css}{\mathscr{S}}
	\newcommand{\re}{\mathrm{Re}}
	\newcommand{\im}{\mathrm{Im}}
	\newcommand{\half}{\frac{1}{2}}
	\newcommand{\tr}{\mathrm{Tr}}
	\newcommand{\bmy}{\bm{y}}
	\newcommand{\bmx}{\bm{x}}
	\newcommand{\bmL}{\bm{L}}
\newcommand{\bfq}{\bm{q}}
\newcommand{\uquqt}{U_{h}(sl_2)\otimes U_{\widetilde h}(sl_2)}
\newcommand{\suquqt}{\mathscr{U}_{\bm q}(sl_2)\otimes \mathscr{U}_{\widetilde{\bm {q}}}(sl_2)}
\newcommand{\uq}{U_{h}(sl_2)}
\newcommand{\uqt}{U_{\widetilde h}(sl_2)}
\newcommand{\suq}{\mathscr{U}_{\bfq}(sl_2)}
\newcommand{\suqt}{\mathscr{U}_{\widetilde \bfq}(sl_2)}
\newcommand{\wt}{\widetilde}
\title{Hamiltonian quantization of complex Chern-Simons theory at level-$\bm{k}$}
\author[1,2]{Muxin Han}
\affiliation[1]{Department of Physics, Florida Atlantic University, 777 Glades Road, Boca Raton, FL 33431-0991, USA}
\affiliation[2]{Department Physik, Institut f\"ur Quantengravitation, Theoretische Physik III, Friedrich-Alexander Universit\"at Erlangen-N\"urnberg, Staudtstr. 7/B2, 91058 Erlangen, Germany}
\emailAdd{hanm(At)fau.edu}
\abstract{
This paper develops a framework for the Hamiltonian quantization of complex Chern-Simons theory with gauge group $\mathrm{SL}(2,\mathbb{C})$ at an even level $k\in\mathbb{Z}_+$. Our approach follows the procedure of combinatorial quantization to construct the operator algebras of quantum holonomies on 2-surfaces and develop the representation theory. The $*$-representation of the operator algebra is carried by the infinite dimensional Hilbert space $\mathcal{H}_{\vec \l}$ and closely connects to the infinite-dimensional $*$-representation of the quantum deformed Lorentz group $\mathscr{U}_{\bm q}(sl_2)\otimes \mathscr{U}_{\widetilde{\bm {q}}}(sl_2)$. The quantum group $\mathscr{U}_{\bm q}(sl_2)\otimes \mathscr{U}_{\widetilde{\bm {q}}}(sl_2)$ also emerges from the quantum gauge transformations of the complex Chern-Simons theory. Focusing on a $m$-holed sphere $\Sigma_{0,m}$, the physical Hilbert space $\mathcal{H}_{phys}$ is identified by imposing the gauge invariance and the flatness constraint. The states in $\mathcal{H}_{phys}$ are the $\mathscr{U}_{\bm q}(sl_2)\otimes \mathscr{U}_{\widetilde{\bm {q}}}(sl_2)$-invariant linear functionals on a dense domain in $\ch_{\vec \l}$. Finally, we demonstrate that the physical Hilbert space carries a Fenchel-Nielsen representation, where a set of Wilson loop operators associated with a pants decomposition of $\Sigma_{0,m}$ are diagonalized. 
	}
\keywords{}
\providecommand{\lemmaname}{Lemma}
\providecommand{\theoremname}{Theorem}
\begin{document}

\maketitle

\section{Introduction}

Chern-Simons theory, a topological quantum field theory (TQFT) defined by the Chern-Simons action \cite{ChernSimons1974, Witten1989a}, has profoundly impacted both mathematics and physics over the past decades. Its connections range from knot theory and 3-manifold invariants \cite{Witten1989a, ReshetikhinTuraev1991} to conformal field theory \cite{Elitzur:1989nr,Moore:1989yh} and quantum gravity \cite{Witten1988,thiemann2008modern,Engle2011}. The quantization of Chern-Simons theory with a compact gauge group is well-understood, leading to a rich and powerful toolbox for applications \cite{Witten:2015aoa,1995AdPhy..44..405W,Freedman:2001eqc}.

The quantization of Chern-Simons theory with non-compact or complex gauge groups, often referred to as complex Chern-Simons theory, is one of important open issues in TQFT. The classical Chern-Simons theory of complex gauge group on an oriented 3-manifold $M_3$ is defined by the action 
\be
S(\ca,\overline{\ca})=\frac{k+is}{8 \pi} \int_M \operatorname{Tr}\left(\ca \wedge d \ca+\frac{2}{3} \ca \wedge \ca \wedge \ca\right)+\frac{k-is}{8 \pi} \int_M \operatorname{Tr}\left(\overline{\ca} \wedge d \overline{\ca}+\frac{2}{3} \overline{\ca} \wedge \overline{\ca} \wedge \overline{\ca}\right),
\ee
where $\ca,\overline{\ca}$ are the complex connections. The theory depends on the integer level $k$. The other parameter $s\in\R$ or $i\R$ correspond to one of two different unitary branches. The theory is introduced by Witten in \cite{Witten1991} and arises naturally in various different contexts: It emerges from the analytic continuation of quantum group invariants and connections to hyperbolic geometry, known as the volume conjecture \cite{analcs,QFTvolume,1997LMaPh..39..269K,1999math......5075M}. It links to quantum gravity in (2+1) dimensions \cite{Witten1988} and spinfoam quantum gravity in (3+1) dimensions via the non-compact groups such as $\Slc$ \cite{HHKR,Han:2021tzw}. The complex Chern-Simons theory also closely relates to (2+1)-dimensional supersymmetric gauge theories known as the 3d-3d correspondence \cite{DGG11,Cordova:2013cea}.  However, the non-compactness of the gauge group leads to challenges in defining the theory rigorously, because the Hilbert spaces are typically infinite-dimensional \cite{analcs, Dimofte2011,levelk}.

Various approaches have been developed to tackle the quantization of complex Chern-Simons theory, including geometric quantization \cite{Witten1991, 2020arXiv201215630E}, state-integral models \cite{Dimofte2011,levelk,andersen2016level,Andersen2014}, and perturbative quantization including resurgence \cite{Bar-Natan1991,DGLZ,GukovMarinoPutrov2017}. The combinatorial quantization, successfully applied to Chern-Simons theory of compact groups \cite{Fock:1998nu, Alekseev:1994pa,Alekseev:1994au,Alekseev:1995rn}, offer a beautiful scheme of Hamiltonian quantization and relating the theory to quantum groups, which arise naturally from quantum deformation of gauge symmetries. This approach aims to construct Hilbert spaces and operators directly from combinatorial data associated with graph discretizations of the underlying manifold. The combinatorial quantization of Chern-Simons theory is based on the canonical formalism with constraints and uses holononomies as basic operators, closely resemble the scheme of Loop Quantum Gravity \cite{thiemann2008modern}. The combinatorial quantization has been applied to $\Slc$ Chern-Simons theory in the case of the vanishing level $k=0$ \cite{BNR}, and the theory relates to the quantum Lorentz group $U_q(\sl(2,\C)_{\R})$ with real $q$ \cite{Podles1990,BR}. A different quantization of the theory at $k=0$ known as Schur quantization recently arises from the connection with supersymmetric gauge theory with eight supercharges and results in a different quantum deformation of Lorentz group \cite{Gaiotto:2024osr}.

In this paper, we develop the combinatorial quantization of $\Slc$ Chern-Simons theory at even level $k=2N> 0$ and with $s\in\R$. The quantization connects the Chern-Simons theory to the quantum group $\suquqt$, which is a product of two Hopf algebras with 
\be
\bfq=\exp\lt(\frac{4\pi i}{k+is}\rt),\qquad \widetilde{\bfq}=\exp\lt(\frac{4\pi i}{k-is}\rt).\label{qandk000}
\ee
and the $*$-structure interchange between these two Hopf algebras. This quantum group emerges as the $\bfq$-deformed gauge symmetry and reduces to the classical Lorentz group when $\bfq\to1$. The combinatorial quantization gives the operator algebra of quantum holonomies on 2-surfaces, a.k.a. the graph algebra, It turns out that the representation of the graph algebra closely relates to the infinite-dimensional $*$-representation of $\suquqt$ \cite{Han:2024nkf}. 

The motivation of choosing the even level $k=2N$ is two folds: First, the even level leads to an interesting simplification: Although the classical theory depends on $k$, the quantum theory only sees the integer $N$ instead of $k$. As discussed in Section \ref{Quantization of discrete connections}, the $*$-representation of both the graph algebra and quantum gauge symmetry $\suquqt$ are constructed by the elementary operators $\bm{u}$ and $\bmy$ satisfying the $\bfq^2$-Weyl algebra $\bm{u}\bmy=\bfq^2\bmy\bm{u}$ represented irreducibly on the Hilbert space $\ch\simeq L^2(\R)\otimes\C^N$. Although the quantization procedure ``derives'' the quantum theory from classical theory, the quantum theory is in fact fundamental, whereas the classical theory is only an approximation from the quantum theory in semiclassical regime. Therefore, $\bfq^2=\exp\lt(\frac{4\pi i}{N+is/2}\rt)$ depending on $N$ plays a more fundamental role than $\bfq$. The level $k$ is only identified in the semiclassical limit, by relating the operator algebra to classical poisson bracket.

Second, $k=2N$ connects the combinatorial quantization to the existing approaches of quantum complex Chern-Simons theory, including the state-integral model \cite{levelk} and the level-$N$ quantum Teichm\"uller theory \cite{andersen2016level,Andersen2014}. The quantizations in these approaches are based on the $q$-Weyl algebra represented on $L^2(\R)\otimes\C^N$, and their $q$ corresponds to our $\bfq^2$. When $N=1$, the combinatorial quantization relates to the standard quantum Teichm\"uller theory and the representation of modular double of $\mathrm{SL}(2,\R)$ \cite{Derkachov:2013cqa,Kashaev2001,Nidaiev:2013bda}. The quantum group $\suquqt$ may be viewed as a generalization of the modular double by including an integer level. 


When generalizing to odd level $k$, one may consider the $\bfq$-Weyl algebra $\bmx\bmy=\bfq\bmx\bmy$ and the irreducible representation on $\ch_0\simeq L^2(\R)\otimes\C^k$. But the graph algebra from the combinatorial quantization will only relate to a subalgebra generated by $\bm{u}=\bmx^2$ and $\bmy$. When $k=2N$, one can find a subspace $\ch\subset \ch_0$ where the $(\bm{u},\bmy)$ subalgebra is irreducible, but $\ch$ does not exist for odd $k$. Therefore, one has to take $\bmx,\bmy$ as the basic operator and representation the operator algebra on $\ch_0$. As a result, the theory at odd $k$ is very different from the level-$N$ quantum Teichm\"uller theory, since the operator algebra always involves the square of the basic operator $\bmx$. The detailed discussion for the case of odd $k$ is postponed to a future publication.

The following gives the architecture of this paper and summarizes the main results: Section \ref{Quantum Lorentz group at level-k} introduces the quantum Lorentz group $\suquqt$ and the quasi-triangular extension $\uquqt$. Some useful properties about the $R$-matrices and the finite-dimensional representations are discussed. Section \ref{Infinite-dimensional representations} discusses the infinite-dimensional $*$-representation $\pi^\l$ of $\suquqt$ based on the Weyl algebra and the quantum torus algebra. A theorem on the existence and uniqueness of invariant bilinear form of $\pi^\l\otimes\pi^\l$ is proven in this section, and this result is useful for constructing physical states for complex Chern-Simons theory. Section \ref{Complex Chern-Simons theory and discrete connections} briefly discuss the classical complex Chern-Simons theory and the Fock-Rosly Poisson structure on discrete connections and the gauge group. In Section \ref{Quantization of discrete connections}, we perform the combinatorial quantization and define the graph algebra of quantum holonomies. The quantization also deforms the gauge group from $\Slc$ to the quantum Lorentz group $\suquqt$. Section \ref{Representation of the graph algebra} defines the infinite-dimensional $*$-representation of the graph algebra for the $m$-holed sphere. The representation is carried by $m$ copies of $\ch\simeq L^2(\R)\otimes\C^N$. Starting from this section, our discussion only focuses on the quantization on $m$-holed sphere. Section \ref{Gauge invariance and flatness constraint} defines the $*$-representation of gauge transformations on the same Hilbert space. The flatness constraint is automatically satisfied by the gauge invariant states, so the physical states of the theory are given by the gauge invariant states, which are the $\suquqt$ invariant linear functionals on certain dense domain in the Hilbert space. Section \ref{Physical Hilbert space} derives the Hilbert space of physical states using Clebsch-Gordan decompositions for the infinite-dimensional representations of $\suquqt$ \cite{Han:2024nkf}. We introduce the "3j-symbol" notation for the physical states and relate the construction to refined algebraic quantization. The operators on the physical Hilbert space gives a $*$-representation to gauge invariant observables. Section \ref{Wilson loop operators} connects the formalism to the pants decomposition of $m$-holed sphere. The Wilson loop operators along the cuts of the pants decomposition are gauge invariant observables, and we find the representation of the physical Hilbert space where the Wilson loop operator are diagonalized. We call this representation the Fenchel-Nielsen representation of the complex Chern-Simons theory, as a generalization of quantizing Fenchel-Nielsen coordinates in quantum Teichm\"uller theory.

\section{A quantum Lorentz group}\label{Quantum Lorentz group at level-k}

\subsection{Quantum Lorentz group $\mathscr{U}_{\bf q}(sl_2)\otimes \mathscr{U}_{\widetilde{\bf {q}}}(sl_2)$}

The deformation parameters in this paper are $\bfq,\wt{\bfq}$ given by
\be
\bfq=e^h=\exp\lt[\frac{2\pi i}{k}(1+b^2)\rt],\qquad \widetilde{\bfq}=e^{\widetilde{h}}=\exp\lt[\frac{2\pi i}{k}(1+b^{-2})\rt]=\overline{\bm q}^{-1},
\ee
and $k,b$ satisfies 
\be
k\in\mathbb{Z}_+,\qquad|b|=1,\qquad\mathrm{Re}(b)>0, \qquad \mathrm{Im}(b)>0.
\ee
$b$ and $s\in\R$ are related by 
\be
is=k\frac{1-b^2}{1+b^2}\ \ \text{or}\ \ b^2=\frac{k-is}{k+is}.
\ee

We define the Hopf $*$-algebra $\suquqt$ and refer to this algebra as the quantum Lorentz group at level-$k$.  $\suquqt$ is the polynomial algebra generated by $\bm{1}$ and $E,F,K,K^{-1}$, $\widetilde{E},\widetilde{F},\widetilde K,\widetilde K^{-1}$. The generators satisfy the following Hopf-algebra relations
\be 
KE&=&\bm{q}^{2}EK,\qquad KF=\bm{q}^{-2}FK,\qquad\left[E,F\right]=\frac{K-K^{-1}}{\bm{q}-\bm{q}^{-1}},\label{EFKalg1}\\
\widetilde{K}\widetilde{E}	&=&\widetilde{\bm{q}}^{2}\widetilde{E}\widetilde{K},\qquad\widetilde{K}\widetilde{F}=\widetilde{\bm{q}}^{-2}\widetilde{F}\widetilde{K},\qquad\left[\widetilde{E},\widetilde{F}\right]=\frac{\widetilde{K}-\widetilde{K}^{-1}}{\widetilde{\bm{q}}-\widetilde{\bm{q}}^{-1}},\label{EFKalg2}
\ee
The set of generators $\{E,F,K\}$ commutes with $\{\widetilde{E},\widetilde{F},\widetilde{K}\}$. The coproduct, antipode, and counit are given by 
\be 
&&\Delta E=E\otimes K+1\otimes E,\qquad\Delta\widetilde{E}=\widetilde{E}\otimes\widetilde{K}+1\otimes\widetilde{E},\\
&&\Delta F=F\otimes1+K^{-1}\otimes F,\qquad\Delta\widetilde{F}=\widetilde{F}\otimes1+\widetilde{K}^{-1}\otimes\widetilde{F}\\
&&\Delta K^{\pm1}=K^{\pm1}\otimes K^{\pm1},\qquad\Delta\widetilde{K}^{\pm1}=\widetilde{K}^{\pm1}\otimes\widetilde{K}^{\pm1}\\
&&S\left(K^{\pm1}\right)=K^{\mp1},\qquad S\left(E\right)=-EK^{-1},\qquad S\left(F\right)=-KF,\\
&&S\left(\widetilde{K}^{\pm1}\right)=\widetilde{K}^{\mp1},\qquad S\left(\widetilde{E}\right)=-\widetilde{E}\widetilde{K}^{-1},\qquad S\left(\widetilde{F}\right)=-\widetilde{K}\widetilde{F}\\
&&\varepsilon\left({K}^{\pm1}\right)=\varepsilon\left(\widetilde{{K}}^{\pm1}\right)=1,\qquad \varepsilon(E)=\varepsilon(F)=\varepsilon(\widetilde{E})=\varepsilon(\widetilde{F})=0.
\ee
The $*$-action is given by
\be 
E^*=\widetilde{E},\qquad F^*=\widetilde{F},\qquad K^{\pm1}{}^*=\widetilde{K}^{\pm1}.
\ee
obeying $*^2=1$, $(\xi \zeta )^*=\zeta ^*\xi^*$, and $(\xi \otimes \zeta )^*=\xi^*\otimes\zeta^*$ for $\xi,\zeta\in \suquqt$. Moreover, we verify that
\be
\Delta \xi^*=\lt(\Delta\xi\rt)^*,\qquad \eps(\xi^*)=\overline{\eps(\xi)},\qquad (S\circ *)^2=\mathrm{id}.
\ee
In our notation, we identify $\xi\in \suq$ and its embedding $\xi\otimes 1\in \suquqt$, and similarly, $\widetilde{\xi}\in\suqt$ is identified with $1\otimes \widetilde{\xi}\in \suquqt$. Moreover, for $\xi\otimes \widetilde{\zeta}\in \suquqt$, we ignore $\otimes$ and denote by $\xi \widetilde{\zeta}\in \suquqt$, while keeping in mind that $[\xi, \widetilde{\zeta}]=0$.

The properties of $\suquqt$ share some similarities with the modular double of $U_{\bfq}(sl(2,\R))$ (see e.g. \cite{Kashaev2001,Derkachov:2013cqa,Nidaiev:2013bda}), especially in the following discussion of infinite-dimensional irreducible representations. 
For $k=2$, the deformation parameters reduces to $\bfq\to-e^{\pi i b^2}$ and $\wt{\bfq}\to -e^{\pi i b^{-2}}$, and $\suquqt$ reproduces the modular double of $U_{\bfq}(sl(2,\R))$ up to flipping signs of $\bfq,\wt{\bfq}$.

\subsection{Dual quantum group}

The $\star$-Hopf algebra ${SL}_{\bfq}(2)\otimes {SL}_{\widetilde \bfq}(2)$ dual to $\suquqt$ is the matrix quantum group generated by 1 and the elements $g^i_{\ j}$, $\widetilde{g}^i_{\ j}$ of the $2\times 2$ matrices $g$, $\widetilde{g}$, subject to the relations $[g^i_{\ j}$, $\widetilde{g}^k_{\ l}]=0$ and 
\be
R_{12}{g}_1{g}_2={g}_2{g}_1 R_{12},\qquad \det{\!}_{\bm q} (g)=g^1_{\ 1}g^2_{\ 2}-\bfq^{-1} g^1_{\ 2}g^2_{\ 1}=1,\\
\widetilde{R}_{12}\widetilde{{g}}_1\widetilde{{g}}_2=\widetilde{{g}}_2\widetilde{{g}}_1 \widetilde{R}_{12},\qquad \det{\!}_{\widetilde{\bm q}} (\widetilde{g})=\widetilde{g}^1_{\ 1}\widetilde{g}^2_{\ 2}-\widetilde{\bfq}^{-1} \widetilde{g}^1_{\ 2}\widetilde{g}^2_{\ 1}=1.
\ee
where $R$ and $\widetilde{R}$ matrices are
\be
R=\left(
	\begin{array}{cccc}
	 R^1_{\ 1}{}^1_{\ 1} & R^1_{\ 1}{}^1_{\ 2} & R^1_{\ 2}{}^1_{\ 1} & R^1_{\ 2}{}^1_{\ 2} \\
	 R^1_{\ 1}{}^2_{\ 1} & R^1_{\ 1}{}^2_{\ 2} & R^1_{\ 2}{}^2_{\ 1} & R^1_{\ 2}{}^2_{\ 2} \\
	 R^2_{\ 1}{}^1_{\ 1} & R^2_{\ 1}{}^1_{\ 2} & R^2_{\ 2}{}^1_{\ 1} & R^2_{\ 2}{}^1_{\ 2}\\
	 R^2_{\ 1}{}^2_{\ 1} & R^2_{\ 1}{}^2_{\ 2} & R^2_{\ 2}{}^2_{\ 1} & R^2_{\ 2}{}^2_{\ 2}\\
	\end{array}
	\right)=q^{-1}\left(
	\begin{array}{cccc}
	 \bfq & 0 & 0 & 0 \\
	 0 & 1 & \bfq-\bfq^{-1} & 0 \\
	 0 & 0 & 1 & 0\\
	 0 & 0 & 0 & \bfq^{-1}\\
	\end{array}
	\right)\ .\label{Rmatrixhalfrep}
\ee
They are representations of the $R$-matrices that is discussed in the next section.

Generally, the matrix elements $(g^I)^i_{\ j}$ and $(\widetilde{g}^J)^i_{\ j}$ of the spin-$I$ and spin-$J$ irreducible representations of $\suq$ and $ \suqt$ belong to ${SL}_{\bfq}(2)\otimes {SL}_{\widetilde \bfq}(2)$. The duality between ${SL}_{\bfq}(2)\otimes {SL}_{\widetilde \bfq}(2)$ and $\suquqt$ is given by the pairing 
\be
\langle (g^I)^i_{\ j}(\widetilde{g}^J)^k_{\ l},\xi\widetilde{\zeta}\rangle = \rho^{I}(\xi)^i_{\ j}\wt{\rho}^{J}(\widetilde{\zeta})^k_{\ l}.\label{quantumduality}
\ee
where $\xi\in \suq$, $\widetilde{\zeta}\in  \suqt$, $(g^I)^i_{\ j}\in {SL}_{\bfq}(2)$, and $(\widetilde{g}^J)^k_{\ l}\in {SL}_{\widetilde \bfq}(2)$. $\rho^I,\wt{\rho}^{J}$ are the finite-dimensional irreducible representations of $\suq$ and $\suqt$.
 
The comultiplication $\delta$, counit $\epsilon$, and antipode $\cs$ are given by
\be
&&\delta {g}^i_{\ j}={g}^i_{\ k}\otimes{g}^k_{\ j},\qquad \delta \widetilde{g}^i_{\ j}=\widetilde{g}^i_{\ k}\otimes\widetilde{g}^k_{\ j},\\
&&\epsilon ( {g}^i_{\ j})=\epsilon ( \widetilde{g}^i_{\ j})=\delta^i_j, \qquad \cs(g)=g^{-1},\qquad \cs(\widetilde{g})=\widetilde{g}^{-1}.
\ee
where $g^{-1},\widetilde{g}^{-1}$ are the inverses respect to the above non-commutative multiplication rules.

The $\star$-structure of ${SL}_{\bfq}(2)\otimes {SL}_{\widetilde \bfq}(2)$ is given by
\be
(g^I)^i_{\ j}{}^\star=\cs \lt((\widetilde{g}^I)^j_{\ i}\rt),\qquad (\widetilde{g}^J)^i_{\ j}{}^\star=\cs \lt(({g}^J)^j_{\ i}\rt).
\ee
The $\star$-Hopf algebra structure of ${SL}_{\bfq}(2)\otimes {SL}_{\widetilde \bfq}(2)$ can be derived from the duality \eqref{quantumduality}. The detailed discussion is given in Appendix \ref{duality and star-Hopf algebra strucutre}.

\subsection{Quasi-triangular Hopf $*$-algebra}

The combinatorial quantization of Chern-Simons theory uses $R$-matrix in finite-dimensional representations in the formulation of operator algebra, so we need to introduce a quasi-triangular Hopf $*$-algebra $\uquqt$, which may be seen as an extension from $\suquqt$.

The quantum group $\uquqt$ are the algebra of formal power series in $h,\wt{h}$ generated by ${1}$, ${\cx},{\cy},H$ and $\widetilde{ \cx},\widetilde{\cy},\widetilde{H}$, subject to the following commutation relations
\be 
&&{ \ck}{ \ck}^{-1}={ \ck}^{-1}{ \ck}=1,\qquad { \ck}{\cx}=\bfq {\cx}{\ck} ,\qquad { \ck}{ \cy}=\bfq^{-1} { \cy} { \ck},\qquad\left[{ \cx},{\cy}\right]=\frac{{ \ck}^{2}-{ \ck}^{-2}}{\bfq-\bm{q}^{-1}},\label{commutation00}\\
&&\widetilde{ \ck}\widetilde{ \ck}^{-1}=\widetilde{ \ck}^{-1}\widetilde{ \ck}=1,\qquad \wt{ \ck}\widetilde{{\cx}}=\widetilde{\bfq} \widetilde{{\cx}}\widetilde{\ck} ,\qquad \widetilde{\ck}\widetilde{\cy}=\widetilde{\bfq}^{-1} \widetilde{\cy} \widetilde{ \ck},\qquad\left[\widetilde{ \cx},\widetilde{ \cy}\right]=\frac{\widetilde{ \ck}^{2}-\widetilde{ \ck}^{-2}}{\widetilde{\bfq}-\widetilde{\bm{q}}^{-1}}.
\ee
where 
\be
{\ck}=\bm{q}^{\frac{H}{2}}=e^{\frac{h}{2}H},\qquad {\ck}^{-1}=\bm{q}^{-\frac{H}{2}}=e^{-\frac{h}{2}H},\qquad 
\widetilde{\ck}=\widetilde{\bm{q}}^{\frac{\widetilde H}{2}}=e^{\frac{\widetilde h}{2} \widetilde{H}},\qquad \widetilde{\ck}^{-1}=\widetilde{\bm{q}}^{-\frac{\widetilde H}{2}}=e^{-\frac{\widetilde h}{2}\widetilde{H}}.
\ee

The comultiplication $\Delta$, counit $\eps$, and antipode $S$ are defined by
\be
&&\Delta \ck^{\pm1}=\ck^{\pm1}\otimes \ck^{\pm1},\qquad \Delta \cx=\cx\otimes\ck+\ck^{-1}\otimes \cx,\quad \Delta \cy=\cy\otimes\ck+\ck^{-1}\otimes \cy,\\
&&\Delta\widetilde{\ck}^{\pm1}=\widetilde{\ck}^{\pm1}\otimes\widetilde{\ck}^{\pm1},\qquad \Delta \widetilde{\cx}=\widetilde{\cx}\otimes\widetilde{\ck}+\widetilde{\ck}^{-1}\otimes \widetilde{\cx},\quad \Delta \widetilde{\cy}=\widetilde{\cy}\otimes\widetilde{\ck}+\widetilde{\ck}^{-1}\otimes \widetilde{\cy},\\
&&\eps(\ck^{\pm1})=\eps(\widetilde{\ck}^{\pm1})=1,\qquad \eps(\cx)=\eps(\cy)=\eps(\widetilde{\cx})=\eps(\widetilde{\cy})=0,\\
&&S(\ck^{\pm1})=\ck^{\mp1},\qquad S(\cx)=-\bfq \cx,\qquad S(\cy)=-\bfq^{-1} \cy,\\
&&S(\widetilde{\ck}^{\pm1})=\widetilde{\ck}^{\mp1},\qquad S(\widetilde{\cx})=-\widetilde{\bfq }\widetilde{\cx},\qquad S(\widetilde{\cy})=-\widetilde{\bfq}^{-1} \widetilde{\cy}.
\ee
The $*$-structure is given by
\be
\cx^*=\widetilde{\cx},\qquad \cy^*=\widetilde{\cy},\qquad H^*=-\widetilde{H}.\label{starstr1}
\ee

The quantum Lorentz group $\suquqt$ is embedded into $\uquqt$ by the following relations
\be 
E&=&\mathcal{X}{\cal K},\qquad F={\cal K}^{-1}{\cal Y},\qquad K={\cal K}^{2},\qquad K^{-1}={\cal K}^{-2},\label{embeduq1}\\
\widetilde{E}&=&\widetilde{{\cal K}}\widetilde{\mathcal{X}},\qquad\widetilde{F}=\widetilde{{\cal Y}}\widetilde{{\cal K}}^{-1},\qquad\widetilde{K}=\widetilde{{\cal K}}^{2},\qquad\widetilde{K}^{-1}=\widetilde{{\cal K}}^{-2}.\label{embeduq2}
\ee

We have chosen the deformation parameters to be $(h,\widetilde{h})$. An equivalent choice is $(h,\overline{h})$, and the corresponding quantum group is $U_{h}(sl_2)\otimes U_{\overline{h}}(sl_2)$. As Hopf $*$-algebras, $\uquqt$ and $\uq\otimes U_{\overline{h}}(sl_2)$ are isomorphic due to the mapping 
\be 
\widetilde{\cx}\mapsto \cy',\qquad \widetilde{\cy}\mapsto \cx',\qquad \widetilde{H}\mapsto H',\label{ismorphictildeprime}
\ee
where $\cx',\cy',H'$ denote generators of $U_{\overline{\bm q}}(sl_2)$.

${h}$ becomes real when $k\to 0$ and $b^2\to -1$ such that $s$ is finite. We compare $U_{{h}}(sl_{2})\otimes U_{{\overline h}}(sl_{2})$ to the quantum Lorentz group $U_q(sl(2,\C)_{\R})$ in \cite{MR1059324,BR} in this limit: The algebra structure of $U_{{h}}(sl_{2})\otimes U_{{\overline h}}(sl_{2})$ reduces to the quantum Lorentz group studied earlier in \cite{Podles1990,BR}. But the co-algebraic structure is different from there by a twist. The coproduct of $U_q(sl(2,\C)_{\R})$ are twisted so that it preserves the features from the Iwasawa decomposition of $\Slc$. The quantum group $\uquqt$ preserves the features from the self-dual and anti-self-dual decompostion of the Lorentz algebra. In this sense, it is similar to the quantum Lorentz group in \cite{Gaiotto:2024osr}.

Both the Hopf algebras $\suquqt$ and $\uquqt$ are the quantum deformation of the enveloping algebra of the Lorentz Lie algebra. Their $*$-structures are consistent with the $*$-structure of the Lorentz Lie algebra \cite{Han:2024nkf}. In contrast to $\uquqt$, $\suquqt$ does not admit any $R$-element, and the $R$-element can only be defined in $\uquqt$, since it is a power series of $h,\wt{h}$. $\uquqt$ has the following $R$-element
\be
\calr&=&R\widetilde{R},\\
R&=&\bm{q}^{\frac{1}{2}\left(H\otimes H\right)}\, \ce_{\bm{q}}^{\left(1-\bfq^{-2}\right)\left(\bfq^{\frac{H}{2}}{\cal X}\right)\otimes\left(\bfq^{-\frac{H}{2}}{\cal Y}\right)},\qquad
\widetilde R=\widetilde{\bm{q}}^{\frac{1}{2}\left(\widetilde H\otimes\widetilde H\right)}\, \ce_{\widetilde{\bm{q}}}^{\left(1-\widetilde{\bfq}^{-2}\right)\left(\widetilde{\bfq}^{\frac{\widetilde H}{2}}\widetilde{\cal X}\right)\otimes\left(\widetilde{\bfq}^{-\frac{\widetilde H}{2}}\widetilde{\cal Y}\right)},\label{RandtildeR}
\ee
where $R$ and $\widetilde{R}$ are respectively the R-element of $\uq $ and $ \uqt$, and 
\be
\ce_{\bfq}^{x}&=&\sum_{n=0}^{\infty}\frac{x^{n}}{\left[n\right]_{\bfq}!}\bfq^{n(n-1)/2}\\
\left[n\right]_{\bfq}&=&\frac{\bfq^n-\bfq^{-n}}{\bfq-\bfq^{-1}},\qquad \left[n\right]_{\bfq}!=\left[n\right]_{\bfq}\left[n-1\right]_{\bfq}\cdots\left[1\right]_{\bfq}.
\ee
The properties of $R$ and $\widetilde{R}$ and the commutativity between $\{\cx,\cy,H\}$ and $\{\widetilde{\cx},\widetilde{\cy},\widetilde{H}\}$ imply that 
\be
&&{\cal R}\Delta\left(\xi\right){\cal R}^{-1}=\Delta^{\prime}\left(\xi\right),\qquad \forall\ \xi\in \uq \otimes  \uqt,\\
&&\left(\Delta\otimes1\right){\cal R}={\cal R}_{13}{\cal R}_{23},\qquad \left(1\otimes\Delta\right){\cal R}={\cal R}_{13}{\cal R}_{12}.
\ee
In other words, the quasi-triangularity of $\uq $ and $ \uqt$ implies the quasi-triangularity of $\uq \otimes  \uqt$. 

\begin{lemma}\label{RstarR}
$R^{*\otimes *}=\widetilde{R}^{-1}$ and $\calr^{*\otimes *}=\calr^{-1}$.
\end{lemma}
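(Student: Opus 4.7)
The plan is to compute $R^{*\otimes *}$ directly by splitting $R=R_{K}R_{U}$ with $R_{K}=\bfq^{H\otimes H/2}$ and $R_{U}=\ce_{\bfq}^{A}$, where $A=(1-\bfq^{-2})(\bfq^{H/2}\cx)\otimes(\bfq^{-H/2}\cy)$. Since $(\cdot)^{*\otimes *}$ is antilinear and reverses order in the componentwise product of $\uq\otimes\uq$, one has $R^{*\otimes *}=R_{U}^{*\otimes *}R_{K}^{*\otimes *}$, so it suffices to match each factor with the corresponding factor of $\widetilde R^{-1}=\widetilde R_{U}^{-1}\widetilde R_{K}^{-1}$, where $\widetilde R=\widetilde R_{K}\widetilde R_{U}$ is the analogous factorization of $\widetilde R$.

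The Cartan part is immediate: $H^{*}=-\widetilde H$ together with the scalar conjugation $\bfq^{*}=\overline{\bfq}=\widetilde{\bfq}^{-1}$ (from $\widetilde{\bfq}=\overline{\bfq}^{-1}$) yields $R_{K}^{*\otimes *}=\widetilde{\bfq}^{-\widetilde H\otimes\widetilde H/2}=\widetilde R_{K}^{-1}$. For the unipotent part I compute $A^{*\otimes *}$ from $\cx^{*}=\widetilde\cx$, $\cy^{*}=\widetilde\cy$, $\ck^{*}=\widetilde\ck$ and $(1-\bfq^{-2})^{*}=1-\widetilde{\bfq}^{2}$, then reorder with $\widetilde\ck\widetilde\cx=\widetilde{\bfq}\widetilde\cx\widetilde\ck$ and $\widetilde\ck^{-1}\widetilde\cy=\widetilde{\bfq}\widetilde\cy\widetilde\ck^{-1}$ to put the result in the ordering used inside $\widetilde R$. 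The reordering-generated factor $\widetilde{\bfq}^{2}$ converts $(1-\widetilde{\bfq}^{-2})\widetilde{\bfq}^{2}=-(1-\widetilde{\bfq}^{2})$, producing the sign swap $A^{*\otimes *}=-\widetilde A_{0}$, where $\widetilde A_{0}=(1-\widetilde{\bfq}^{-2})(\widetilde\ck\widetilde\cx)\otimes(\widetilde\ck^{-1}\widetilde\cy)$ is the exponent appearing in $\widetilde R_{U}$. Meanwhile the scalar coefficients of $\ce_{\bfq}^{A}$ conjugate as $[n]_{\bfq}!\mapsto[n]_{\widetilde{\bfq}}!$ (using $[n]_{q}=[n]_{q^{-1}}$) and $\bfq^{n(n-1)/2}\mapsto\widetilde{\bfq}^{-n(n-1)/2}$, giving $R_{U}^{*\otimes *}=\ce_{\widetilde{\bfq}^{-1}}^{-\widetilde A_{0}}$. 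Because $\widetilde A_{0}$ is a single element of $\uqt\otimes\uqt$, its powers commute and the standard $q$-exponential inverse identity $\ce_{q}(x)\ce_{q^{-1}}(-x)=1$ (a consequence of the $q$-binomial theorem for commuting $x$) applies, yielding $R_{U}^{*\otimes *}=(\ce_{\widetilde{\bfq}}^{\widetilde A_{0}})^{-1}=\widetilde R_{U}^{-1}$. Combining the two pieces gives $R^{*\otimes *}=\widetilde R_{U}^{-1}\widetilde R_{K}^{-1}=\widetilde R^{-1}$.

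The second identity then follows formally. Applying $*\otimes *$ to the first identity and using $(*)^{2}=\mathrm{id}$ gives $\widetilde R^{*\otimes *}=R^{-1}$. Since $R$ and $\widetilde R$ involve disjoint sets of generators, they commute in $\uquqt\otimes\uquqt$, so the antihomomorphism property of $(\cdot)^{*\otimes *}$ produces
\[\calr^{*\otimes *}=(R\widetilde R)^{*\otimes *}=\widetilde R^{*\otimes *}R^{*\otimes *}=R^{-1}\widetilde R^{-1}=(R\widetilde R)^{-1}=\calr^{-1}.\]
I expect the main obstacle to be the appeal to the $q$-exponential inverse identity combined with the sign swap $A^{*\otimes *}=-\widetilde A_{0}$ produced by moving $\widetilde\ck$ past $\widetilde\cx,\widetilde\cy$; these two effects must align precisely in order to convert $\ce_{\widetilde{\bfq}^{-1}}^{-\widetilde A_{0}}$ into $(\ce_{\widetilde{\bfq}}^{\widetilde A_{0}})^{-1}$, and a careful bookkeeping of which scalar becomes $\widetilde{\bfq}^{-1}$ versus $\widetilde{\bfq}$ under the $*$-conjugation is required.
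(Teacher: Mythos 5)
Your proof is correct and follows essentially the same route as the paper: apply $*\otimes*$ to the factorization of $R$ into its Cartan and $q$-exponential parts, use $H^*=-\widetilde H$ and $\overline{\bfq}=\widetilde{\bfq}^{-1}$ for the Cartan factor, reorder $\widetilde\cx\widetilde\ck\otimes\widetilde\cy\widetilde\ck^{-1}$ to produce the sign flip $A^{*\otimes*}=-\widetilde A_0$, and invoke $\ce_{\bfq}^{x}\ce_{\bfq^{-1}}^{-x}=1$. The only cosmetic difference is that you spell out the coefficient bookkeeping and the formal deduction of $\calr^{*\otimes*}=\calr^{-1}$, which the paper leaves implicit.
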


\begin{proof} We apply $*$ to $R$ in \eqref{RandtildeR} and obtain
\be 
R^{*\otimes*}&=&\ce_{\widetilde{\bfq}^{-1}}^{\left(1-\widetilde{\bm{q}}^{2}\right)\left(\widetilde{{\cal X}}\widetilde{\bm{q}}^{\frac{\widetilde{H}}{2}}\right)\otimes\left(\widetilde{{\cal Y}}\widetilde{\bm{q}}^{-\frac{\widetilde{H}}{2}}\right)}\widetilde{\bm{q}}^{-\frac{1}{2}\left(\widetilde{H}\otimes\widetilde{H}\right)}
=\ce_{\widetilde{\bfq}^{-1}}^{-(1-\widetilde{\bm{q}}^{-2})\left(\widetilde{\bm{q}}^{\frac{\widetilde{H}}{2}}\widetilde{{\cal X}}\right)\otimes\left(\widetilde{\bm{q}}^{-\frac{\widetilde{H}}{2}}\widetilde{{\cal Y}}\right)}\widetilde{\bm{q}}^{-\frac{1}{2}\left(\widetilde{H}\otimes\widetilde{H}\right)}\nonumber\\
&=&\left[\ce_{\widetilde{\bm{q}}}^{(1-\widetilde{\bm{q}}^{-2})\left(\widetilde{\bm{q}}^{\frac{\widetilde{H}}{2}}\widetilde{{\cal X}}\right)\otimes\left(\widetilde{\bm{q}}^{-\frac{\widetilde{H}}{2}}\widetilde{{\cal Y}}\right)}\right]^{-1}\widetilde{\bm{q}}^{-\frac{1}{2}\left(\widetilde{H}\otimes\widetilde{H}\right)}
=\widetilde{R}^{-1}
\ee
We use the relation $\ce_{\bfq}^{x}\ce_{\bfq^{-1}}^{-x}=1$ \cite{majid2000foundations} in the third step.


\end{proof}

\subsection{Finite-dimensional irreducible representations}\label{Finite-dimensional irreducible representations}

The finite dimensional irreducible representation of $\uq$ is completely classified by a couple $(\o,J)\in \{1,-1,i,-i\}\times \mathbb{N}_0/2$. The appearance of $\o$ is due to the automorphism $\t_\o$ defined by $\t_\o(\bfq^{H/2})=\o \bfq^{H/2},\  \t_\o(\cx)=\o^2 \cx, \t_\o(\cy)=\cy$. In the following, we only consider the representations with $\o=1$, while all other representations can be induced by the automorphism $\t_\o$.

The quantum group $\uquqt$ has infinitely many finite-dimensional irreducible representations, each of which is labelled by a pair of spins $(I,J)$, $I,J\in \mathbb{N}_0/2$ and carried by the tensor product of vector spaces $V^I\otimes \widetilde{V}^J$. The representation is the tensor product of a pair of irreducible representations $\rho^I$ and $\widetilde{\rho}^J$ of $\uq$ and $ \uqt$ respectively. We denote by $e^I_m\otimes \widetilde{e}^J_n$, $m=I,I-1,\cdots,-I$, $n=J,J-1,\cdots,-J$, a basis of $V^I\otimes\widetilde{V}^J$, such that $\rho^I$, $\widetilde{\rho}^J$ are given by 
\be 
&\rho^{I}\left(\bfq^{\frac{H}{2}}\right)e_{m}^{I} =\bfq^{m}\, e_{m}^{I},
&\quad\qquad\qquad\widetilde{\rho}^{J}\left(\widetilde{\bm{q}}^{\frac{\widetilde{H}}{2}}\right)\widetilde{e}_{m}^{J}=\widetilde{\bm{q}}^{-m}\widetilde{e}_{m}^{J}, \label{rhorep1}\\
&\rho^{I}\lt(\cx\rt)e_{m}^{I} =\sqrt{[I-m]_{\bfq}[I+m+1]_{\bfq}}\, e_{m+1}^{I},&\quad \widetilde{\rho}^{J}\lt(\widetilde{{\cal Y}}\rt)\widetilde{e}_{m}^{J}=\sqrt{[J-m]_{\widetilde{\bm{q}}}[J+m+1]_{\widetilde{\bm{q}}}}\, \widetilde{e}_{m+1}^{J}, \label{rhorep2}\\
&\rho^{I}\lt(\cy\rt)e_{m}^{I} =\sqrt{[I+m]_{\bfq}[I-m+1]_{\bfq}}\, e_{m-1}^{I}, &\quad \widetilde{\rho}^{J}\lt(\widetilde{{\cal X}}\rt)\widetilde{e}_{m}^{J}=\sqrt{[J+m]_{\widetilde{\bm{q}}}[J-m+1]_{\widetilde{\bm{q}}}}\, \widetilde{e}_{m-1}^{J}. \label{rhorep3}
\ee 
Given a generic element $\xi\widetilde{\zeta}\in \uquqt$, the irreducible representation $(\varrho^{IJ},V^I\otimes\widetilde{V}^J)$ is given by 
\be 
\varrho^{IJ}(\xi\widetilde{\zeta})=\rho^I(\xi)\otimes \widetilde{\rho}^J(\widetilde{\zeta}).
\ee 

\begin{lemma}\label{rhodagger0}

Given $\xi\in  \uq$ and $\xi^*\in \uqt$, their representation matrices w.r.t the above basis are related by 
\be
\rho^J(\xi)^\dagger=\widetilde{\rho}^J(\xi^*).\label{rhodagger}
\ee
Similar relation holds for $\widetilde{\xi}\in \uqt$ and $\widetilde{\xi}^*\in  \uq$:
\be
\widetilde{\rho}^J(\widetilde{\xi})^\dagger={\rho}^J(\widetilde{\xi}^*).\label{rhodagger1}
\ee

\end{lemma}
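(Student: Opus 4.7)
The plan is to reduce \eqref{rhodagger} to a verification on algebra generators. Both sides, $\xi\mapsto\rho^J(\xi)^\dagger$ and $\xi\mapsto\wt\rho^J(\xi^*)$, define antilinear and antimultiplicative maps from $\uq$ into $\mathrm{End}(V^J)$, once we identify $V^J\simeq\wt V^J$ via the basis bijection $e^J_m\leftrightarrow\wt e^J_m$ and declare these bases orthonormal. Both maps send $1\mapsto 1$, so they agree on all of $\uq$ as soon as they agree on a generating set, which we take to be $\bfq^{\pm H/2},\cx,\cy$; the extension from generators to the formal power series algebra $\uq$ is automatic.

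The only nontrivial arithmetic input is that $|b|=1$ forces $\overline{b^2}=b^{-2}$, whence $\bar h=-\wt h$ and $\bar\bfq=\wt\bfq^{-1}$. From the definition $[n]_\bfq=(\bfq^n-\bfq^{-n})/(\bfq-\bfq^{-1})$ one then reads off $\overline{[n]_\bfq}=[n]_{\wt\bfq}$ and, with a consistent choice of branch, $\overline{\sqrt{[a]_\bfq[b]_\bfq}}=\sqrt{[a]_{\wt\bfq}[b]_{\wt\bfq}}$. Combined with $H^*=-\wt H$ and the antimultiplicativity of $*$, one obtains $(\bfq^{H/2})^*=e^{-\bar h\wt H/2}=\wt\bfq^{\wt H/2}$. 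With these identities the check on generators is immediate from \eqref{rhorep1}--\eqref{rhorep3}: $\rho^J(\bfq^{H/2})$ is diagonal with entries $\bfq^m$, so its Hermitian conjugate is diagonal with $\overline{\bfq^m}=\wt\bfq^{-m}$, matching $\wt\rho^J(\wt\bfq^{\wt H/2})$; the raising matrix $\rho^J(\cx)$ with entries $\sqrt{[J-m]_\bfq[J+m+1]_\bfq}$ becomes, under $\dagger$, a lowering matrix whose entries are precisely those of $\wt\rho^J(\wt\cx)$ by the conjugation identity above; and the case of $\cy$ is completely analogous.

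The companion identity \eqref{rhodagger1} follows from \eqref{rhodagger} by applying $\dagger$ to both sides, invoking $\dagger^2=\id$ on matrices and $*^2=\id$ on the algebra, and relabeling $\wt\xi:=\xi^*\in\uqt$. I do not expect a real obstacle in this argument; it is essentially bookkeeping. The only delicate points are maintaining a consistent branch for the square roots in \eqref{rhorep2}--\eqref{rhorep3} so that complex conjugation genuinely commutes with those square roots on the relevant products, and keeping the tacit identification $V^J\simeq\wt V^J$ in view throughout the comparison of matrices acting nominally on different spaces.
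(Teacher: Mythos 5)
Your proof is correct and follows essentially the same route as the paper: verify the identity on the generators $\bfq^{\pm H/2},\cx,\cy$ using $|b|=1\Rightarrow\overline{\bfq}=\wt{\bfq}^{-1}$ and $\overline{[n]_{\bfq}}=[n]_{\wt{\bfq}}$ (the paper justifies the square-root conjugation by reading all matrix entries as power series in $h$, which is the precise version of your ``consistent branch'' remark), then extend to all of $\uq$ by antilinearity and antimultiplicativity of both sides. Deriving \eqref{rhodagger1} from \eqref{rhodagger} via $\dagger^2=*^2=\mathrm{id}$ is a harmless shortcut where the paper simply repeats the computation.
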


\begin{proof}

For the generators, the representation matrices of $\rho^J$ w.r.t the basis $e^J_m$ are given by
\be 
&\rho^{J}(\bfq^{\frac{H}{2}})^n_{\ m}=\bfq^{m}\, \delta_{nm},&\quad \qquad \qquad
\widetilde{\rho}^{J}(\widetilde{\bm{q}}^{\frac{\widetilde{H}}{2}})^n_{\ m}=\widetilde{\bm{q}}^{-m}\delta_{nm}, \nonumber\\
&\rho^{J}(\cx)^n_{\ m}=\sqrt{[J-m]_{\bfq}[J+m+1]_{\bfq}}\, \delta_{n,m+1},&\quad \widetilde{\rho}^{J}(\widetilde{{\cal Y}})^n_{\ m}=\sqrt{[J-m]_{\widetilde{\bm{q}}}[J+m+1]_{\widetilde{\bm{q}}}}\, \delta_{n,m+1},\nonumber\\
&\rho^{J}(\cy)^n_{\ m}=\sqrt{[J+m]_{\bfq}[J-m+1]_{\bfq}}\, \delta_{n,m-1},&\quad \widetilde{\rho}^{J}(\widetilde{{\cal X}})^n_{\ m}=\sqrt{[J+m]_{\widetilde{\bm{q}}}[J-m+1]_{\widetilde{\bm{q}}}}\, \delta_{n,m-1}^{J}. \nonumber
\ee
All matrix elements are understood as a power series of $h$, so the complex conjugate commutes with the square-root, and $\overline{[n]_{\bfq}}=[n]_{\overline{\bfq}}=[n]_{\wt{\bfq}}$:
\be 
\overline{\rho^{J}(\bfq^{\frac{H}{2}})^n_{\ m}}=\widetilde{\rho}^{J}(\widetilde{\bfq}^{\frac{\widetilde H}{2}})^m_{\ n},\qquad \overline{\rho^{J}(\cx)^n_{\ m}}=\widetilde{\rho}^{J}(\widetilde{\cx})^m_{\ n}\qquad \overline{\rho^{J}(\cy)^n_{\ m}}=\widetilde{\rho}^{J}(\widetilde{\cy})^m_{\ n}
\ee 
where $\widetilde{\cy}=\cy^*$ and $\widetilde{\cx}=\cx^*$. For monomials of generators, given $\xi,\zeta$ satisfying $\rho^J(\xi)^\dagger=\widetilde{\rho}^J(\xi^*)$ and $\rho^J(\zeta)^\dagger=\widetilde{\rho}^J(\zeta^*)$, we have $\rho^J(\xi\zeta)^\dagger=\rho^J(\zeta)^\dagger\rho^J(\xi)^\dagger=\widetilde{\rho}^J(\zeta^*)\widetilde{\rho}^J(\xi^*)=\widetilde{\rho}^J((\xi\zeta)^*)$.

Eq.\eqref{rhodagger1} can be shown by a similar derivation.

\end{proof}

If we introduce a Hermitian inner product $\lag\ ,\ \rag$ on $V^J$ such that $\langle e^J_m,e^J_n\rangle =\delta_{mn}$, Eq.\eqref{rhodagger} can be understood as an equation of operators on $V^J$, where $\widetilde{\rho}^J$ is given by \eqref{rhorep1} - \eqref{rhorep3} with $\widetilde{e}^J_n$ replaced by ${e}^J_n$. Similar for \eqref{rhodagger1} if we introduce a Hermitian inner product $\lag\ ,\ \rag$ on $\widetilde{V}^J$ such that $\langle \widetilde{e}^J_m,\widetilde{e}^J_n\rangle =\delta_{mn}$. We emphasize that \eqref{rhodagger} does not imply $(\varrho^{JJ},V^J\otimes \widetilde{V}^J)$ to be a $*$-representation, because $\rho^J(\xi)^\dagger$ still acts on $V^J$ rather than $\widetilde{V}^J$. Indeed, $\varrho^{IJ}(\xi\widetilde{\zeta})^\dagger =\rho^{I}(\xi)^\dagger\otimes \widetilde{\rho}^{J}(\widetilde{\zeta})^\dagger=\widetilde{\rho}^{I}(\xi^*)\otimes {\rho}^{J}(\widetilde{\zeta}^*)$ does not equal to $\varrho^{IJ}(\widetilde{\zeta}^*\xi^*)$ even for $I=J$ but becomes a representation of a different quantum group $\uqt\otimes \uq$.

We apply Lemma \ref{rhodagger0} to $R^{*\otimes *}=\widetilde{R}^{-1}$ in Lemma \ref{RstarR}, we denote by $R^{IJ}=(\rho^I\otimes\rho^J)(R)$ and $\widetilde{R}^{IJ}=(\widetilde{\rho}^I\otimes\widetilde{\rho}^J)(\widetilde{R})$, then we obtain 
\be 
(R^{IJ})^{\dagger\otimes \dagger}=(\widetilde{R}^{-1})^{IJ},\qquad (\widetilde{R}^{IJ})^{\dagger\otimes \dagger}=({R}^{-1})^{IJ}.\label{Rdagger}
\ee 
Again, these are relations of representation matrices. 

We define the quantum traces $\tr_{\bfq}(X^I)$ and $\tr_{\widetilde \bfq}(\widetilde X^I)$ for $X^I\in \mathrm{End}(V^I)$ and $\widetilde X^I\in \mathrm{End}(\widetilde V^I)$:
\be    
\tr_{\bfq}(X^I)=\tr\lt(\rho^I(\theta)X^I\rt),\qquad \tr_{\wt \bfq}(\wt X^I)=\tr\lt(\wt\rho^I(\wt\theta)\wt X^I\rt),\qquad \theta=\bfq^{-H},\qquad \wt\theta=\wt{\bfq}^{-\wt H}.
\ee
$\theta$ is a group-like element satisfying
\begin{eqnarray}
\Delta \theta =\theta\otimes\theta,\qquad S(\theta)=\theta^{-1},\qquad \theta S(\xi)=S^{-1}(\xi)\theta,\quad \xi\in\uq,
\end{eqnarray}
and similar for $\wt\theta$. The following relation will be useful in Section \ref{Wilson loop operators}.

\begin{lemma}\label{quantumtracelemma}
For $R=\sum_{\alpha}R_{\alpha}^{(1)}\otimes R_{\alpha}^{(2)}$ and $R'=\sum_{\alpha}R_{\alpha}^{(2)}\otimes R_{\alpha}^{(1)}$, we denote by $\left(R^{\prime}\right)^{I}=[(\rho^I\otimes\mathrm{id})R']\otimes 1$. Given $Y^I=\sum_\alpha  y^{I}_\alpha\otimes1\otimes\xi_\a$ where $y^I_\a\in \mathrm{End}(V^I)$ and $\xi_\a\in\uq$, we have
\begin{eqnarray}
	\mathrm{Tr}_{\bm{q}}\left[\left(R^{\prime}\right)^{I}Y^{I}\left(R^{\prime-1}\right)^{I}\right]=\mathrm{Tr}_{\bm{q}}\left[Y^{I}\right].
\end{eqnarray}
The similar relation holds for the tilded sector.
\end{lemma}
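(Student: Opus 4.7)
The plan is to expand $(R')^I Y^I (R'^{-1})^I$ explicitly in $\mathrm{End}(V^I)\otimes\uq\otimes\uq$, take the quantum trace on the first slot using the cyclicity of $\mathrm{Tr}_{\bm{q}}$, and reduce the claim to an algebraic identity in $\uq\otimes\uq$ that follows from the quasi-triangular axioms. The preliminary needed is the formula $(R')^{-1}=\sum_\gamma R^{(2)}_\gamma\otimes S(R^{(1)}_\gamma)$, obtained by flipping $R^{-1}=(S\otimes 1)R$.

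Multiplying the three factors, the slot-2 contribution collapses to $R^{(1)}_\alpha S(R^{(1)}_\gamma)$ (the ``$1$'' of $Y^I$ sits between the two $\uq$-pieces of $(R')^I$ and $(R'^{-1})^I$) and the slot-3 contribution is just $\xi_\beta$. Applying $\mathrm{Tr}_{\bm{q},1}$ and using ordinary trace cyclicity together with the relation $x\,\theta=\theta\,S^{2}(x)$ in $\uq$ (valid because $\theta=\bfq^{-H}=K^{-1}$ implements $S^{-2}$ by conjugation), the trace over $V^I$ becomes $\mathrm{Tr}_{\bm{q}}\!\big[\rho^I\!\big(S^{2}(R^{(2)}_\gamma)R^{(2)}_\alpha\big)\,y^I_\beta\big]$. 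Since $y^I_\beta,\xi_\beta$ are arbitrary, the lemma reduces to the purely algebraic identity
\begin{equation*}
\sum_{\alpha,\gamma}S^{2}(R^{(2)}_\gamma)\,R^{(2)}_\alpha\otimes R^{(1)}_\alpha\, S(R^{(1)}_\gamma)\;=\;1\otimes 1\quad\text{in }\uq\otimes\uq.
\end{equation*}

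To prove this identity I would first invoke the standard property $(S\otimes S)R=R$, which when applied to $R'=\tau R$ and then composed with $S\otimes\mathrm{id}$ yields $\sum_\gamma S^{2}(R^{(2)}_\gamma)\otimes S(R^{(1)}_\gamma)=\sum_\gamma S(R^{(2)}_\gamma)\otimes R^{(1)}_\gamma$ as equal tensors in $\uq\otimes\uq$. Substituting this equivalent representative (the target expression depends only on the tensor, not on the chosen decomposition) transforms the goal into $\sum_{\alpha,\gamma}S(R^{(2)}_\gamma)R^{(2)}_\alpha\otimes R^{(1)}_\alpha R^{(1)}_\gamma=1\otimes 1$. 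This last identity is a direct consequence of the quasi-triangular axiom $(\mathrm{id}\otimes\Delta)R=R_{13}R_{12}$: applying $\mathrm{id}\otimes\bigl(m\circ(S\otimes\mathrm{id})\bigr)$ to its right-hand side produces $\sum_{\alpha,\beta} R^{(1)}_\alpha R^{(1)}_\beta\otimes S(R^{(2)}_\beta)R^{(2)}_\alpha$, while the same operation on its left-hand side collapses, via the antipode axiom $m(S\otimes\mathrm{id})\Delta=\epsilon$ together with $(\mathrm{id}\otimes\epsilon)R=1$, to $1\otimes 1$; flipping the two tensor slots then gives the required form. The tilded-sector statement is proved by the identical argument with $R,\bfq,\theta$ replaced by $\widetilde{R},\widetilde{\bfq},\widetilde{\theta}$.

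The main obstacle is that the identity does not factor as a naive conjugation $(R')^{-1}(\theta\otimes 1)R'$ in $\uq\otimes\uq$: such a regular product gives $S(R^{(1)}_\gamma)R^{(1)}_\alpha$ in the second slot instead of the desired $R^{(1)}_\alpha S(R^{(1)}_\gamma)$, so the heuristic ``$R'$ commutes with $\theta\otimes 1$'' fails. The non-trivial mechanism is instead to eliminate the $S^2$ created by the quantum-trace cyclicity via $(S\otimes S)R=R$, after which the axiom $(\mathrm{id}\otimes\Delta)R=R_{13}R_{12}$ combined with the antipode axiom closes the argument in a single step.
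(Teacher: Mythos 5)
Your proof is correct. The skeleton is the same as the paper's — expand the triple product in $\mathrm{End}(V^I)\otimes\uq\otimes\uq$, use cyclicity of the ordinary trace together with the conjugation property of $\theta$, and reduce to an identity in $\uq\otimes\uq$ — but the closing mechanism differs. The paper takes $R'^{-1}=(S^{-1}\otimes\mathrm{id})R'$, so that after cyclicity and $\theta S(\xi)=S^{-1}(\xi)\theta$ the representation-slot factor becomes $S\bigl(S^{-1}(R^{(2)}_\alpha)R^{(2)}_\beta\bigr)$ paired with $R^{(1)}_\alpha R^{(1)}_\beta$, and the identity closes in one line because the inner tensor is literally $R'^{-1}R'=1\otimes 1$. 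You instead take $R'^{-1}=(\mathrm{id}\otimes S)R'$, which leaves an $S^{2}$ under the trace and obliges you to invoke $(S\otimes S)R=R$ and then rederive the resulting identity from $(\mathrm{id}\otimes\Delta)R=R_{13}R_{12}$ together with the antipode axiom and $(\mathrm{id}\otimes\epsilon)R=1$. Each of your steps checks out (including the remark that the expression depends only on the tensor $\sum_\gamma S^{2}(R^{(2)}_\gamma)\otimes S(R^{(1)}_\gamma)$ and not on its decomposition), so the argument is valid; it is simply longer, trading the packaged fact $R'^{-1}R'=1$ for a self-contained derivation from the quasi-triangular axioms. If you want the shorter route, choose the antipode formula that puts $S^{-1}$ on the $R^{(2)}$-leg of $R'^{-1}$, so that the $S$ generated by the $\theta$-relation cancels it directly.
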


\begin{proof}
Applying the relation $R'{}^{-1}=(S^{-1}\otimes \mathrm{id})R'$,
\begin{eqnarray}
&&\mathrm{Tr}_{\bm{q}}\left[\left(R^{\prime}\right)^{I}Y^{I}\left(R^{\prime-1}\right)^{I}\right]	=\mathrm{Tr}\left[R_{\alpha}^{(2)}y_\mu S^{-1}\left(R_{\beta}^{(2)}\right)\theta\right]\otimes R_{\alpha}^{(1)}R_{\beta}^{(1)}\otimes\xi_\mu\nonumber\\
&=&\mathrm{Tr}\left[y_\mu\theta S\left(R_{\beta}^{(2)}\right)R_{\alpha}^{(2)}\right]\otimes R_{\alpha}^{(1)}R_{\beta}^{(1)}\otimes\xi_\mu
=\mathrm{Tr}\left[y_\mu\theta S\left(S^{-1}\left(R_{\alpha}^{(2)}\right)R_{\beta}^{(2)}\right)\right]\otimes R_{\alpha}^{(1)}R_{\beta}^{(1)}\otimes\xi_\mu\nonumber\\
&=&\mathrm{Tr}_{\bm{q}}\left[Y^{I}\right],
\end{eqnarray}
where summation is implied over the repeated index.
\end{proof}

\subsection{Clebsch-Gordan maps}

The tensor product representations $(\rho^I\otimes\rho^J,V^I\otimes V^J)$ and $(\widetilde\rho^I\otimes\widetilde\rho^J,\widetilde V^I\otimes\widetilde V^J)$ have the Clebsch-Gordan (CG) decompositions:
\be 
V^I\otimes V^J\simeq \bigoplus_{K=|I-J|}^{I+J}V^K,\qquad \
\widetilde V^I\otimes\widetilde V^J\simeq \bigoplus_{K=|I-J|}^{I+J}\widetilde V^K
\ee
We define the quantum CG map to be the linear transformation relating the basis $\{e^K_l\}_{K=|I-J|, |l|\leq K}^{I+J}$ and $\{e^I_m\otimes e^J_n\}_{|m|\leq I,|n|\leq J}$ (or relating $\{\widetilde e^K_l\}_{K=|I-J|, |l|\leq K}^{I+J}$ and $\{\widetilde e^I_m\otimes \widetilde e^J_n\}_{|m|\leq I,|n|\leq J}$)
\be 
e^K_l=e^I_m\otimes e^J_n(C_1^{IJ})^{mn}_{\ \ Kl},\qquad \widetilde e^K_l=\widetilde e^I_m\otimes \widetilde e^J_n(\widetilde C_1^{IJ})^{mn}_{\ \ Kl}
\ee
We denote the inverse transformation by $(C_2^{IJ})^{Kl}_{\ \ mn}$ and $(\widetilde C_2^{IJ})^{Kl}_{\ \ mn}$
\be
(C_2^{IJ})^{Kl}_{\ \ mn}( C_1^{IJ})^{mn}_{\ \ K'l'}=\delta^K_{K'}\delta^l_{l'},\qquad (\widetilde C_2^{IJ})^{Kl}_{\ \ mn}(\widetilde C_1^{IJ})^{mn}_{\ \ K'l'}=\delta^K_{K'}\delta^l_{l'},\label{CCKK}\\
( C_1^{IJ})^{mn}_{\ \ Kl}(C_2^{IJ})^{Kl}_{\ \ m'n'}=\delta^m_{m'}\delta^n_{n'},\qquad (\widetilde C_1^{IJ})^{mn}_{\ \ Kl}(\widetilde C_2^{IJ})^{Kl}_{\ \ m'n'}=\delta^m_{m'}\delta^n_{n'}.\label{CCmn}
\ee
The CG maps intertwine the representations
\be 
(C_1^{IJ})^{mn}_{\ \ Kl'}\rho^K(\xi)^{l'}_{\ l}&=&[\lt(\rho^I\otimes\rho^J\rt)(\Delta\xi)]^{mn}_{\ m'n'}(C_1^{IJ})^{m'n'}_{\ \ Kl},\label{Crho1}\\ 
(\widetilde C_1^{IJ})^{mn}_{\ \ Kl'}\widetilde\rho^K(\widetilde\xi)^{l'}_{\ l}&=&[\lt(\widetilde\rho^I\otimes\widetilde\rho^J\rt)(\Delta\widetilde\xi)]^{mn}_{\ m'n'}(\widetilde C_1^{IJ})^{m'n'}_{\ \ Kl}\\
\rho^K(\xi)_{\ l'}^{l}(C_2^{IJ})_{\ \ mn}^{ Kl'}&=&(C_2^{IJ})_{\ \ m'n'}^{Kl}[\lt(\rho^I\otimes\rho^J\rt)(\Delta\xi)]_{\ \ mn}^{m'n'},\label{Crho3}\\ 
\widetilde\rho^K(\widetilde\xi)_{\ l'}^{l}(\widetilde C_2^{IJ})_{\ \ mn}^{ Kl'}&=&(\widetilde C_2^{IJ})_{\ \ m'n'}^{Kl}[\lt(\widetilde\rho^I\otimes\widetilde\rho^J\rt)(\Delta\widetilde\xi)]_{\ \ mn}^{m'n'}.\label{Crho4}
\ee
These relations may also be rewritten as 
\be
\sum_{K}(C_1^{IJ})_{K}\rho^K(\xi)(C_2^{IJ})^{ K}=(\rho^I\otimes\rho^J)(\Delta\xi),\qquad  \sum_{K}(\widetilde C_1^{IJ})_{K}\widetilde\rho^K(\widetilde\xi)(\widetilde C_2^{IJ})^{ K}=(\widetilde\rho^I\otimes\widetilde\rho^J)(\Delta\widetilde\xi).
\ee


By the complex conjugate of \eqref{Crho1} and Lemma \ref{rhodagger0}, we can choose
\be 
(\widetilde C_2^{IJ})^{Kl}_{\ \ mn}
=\overline{(C_1^{IJ})^{mn}_{\ \ Kl}}\ .\label{C1conjugate}
\ee

\section{Infinite-dimensional $*$-representations}\label{Infinite-dimensional representations}

The purpose of this section is to construct a family of Infinite-dimensional irreducible $*$-representations of $\suquqt$. The construction of these representations are based on the quantum torus algebra at level-$k$ and its $*$-representation \cite{Han:2024nkf}. In the following, we assume $k=2N$ where $N$ is a positive integer.

\subsection{Quantum torus algebra at level-$k$ and $*$-representation}

The holomorphic quantum torus algebra $\mathcal{O}_{q} $ is spanned
by Laurent polynomials of the symbols $\bm{x}_{\alpha,\beta}$ with
$\alpha,\beta\in\mathbb{Z}$, satisfying the following relation
\begin{equation}
\bm{x}_{\alpha,\beta}\bm{x}_{\gamma,\delta}=q^{\alpha\delta-\beta\gamma}\bm{x}_{\alpha+\gamma,\beta+\delta},\qquad q=e^{h/2},\qquad h=\frac{2\pi i}{k}\left(1+b^{2}\right).\label{eq:quantumtorus1}
\end{equation}
We associated to $\mathcal{O}_{q} $ the anti-holomorphic
counterpart $\mathcal{O}_{\widetilde{q}} $ generated
by $\widetilde{\bm{x}}_{\alpha,\beta}$ with $\alpha,\beta\in\mathbb{Z}$,
satisfying
\begin{equation}
\widetilde{\bm{x}}_{\alpha,\beta}\widetilde{\bm{x}}_{\gamma,\delta}=\widetilde{q}^{\alpha\delta-\beta\gamma}\widetilde{\bm{x}}_{\alpha+\gamma,\beta+\delta},\qquad\widetilde{q}=e^{\widetilde{h}/2},\qquad\widetilde{h}=\frac{2\pi i}{k}\left(1+b^{-2}\right),\label{eq:quantumtorus2}
\end{equation}
and $\mathcal{O}_{\widetilde{q}} $ commutes with $\mathcal{O}_{q} $.
The entire quantum torus algebra is $\mathcal{A}_{h} =\mathcal{O}_{q} \otimes\mathcal{O}_{\widetilde{q}} $. The algebra is endowed with a $*$-structure which interchanges the holomorphic and antiholomorphic copies:
\be
*\left(\bm{x}_{\alpha,\beta}\right)=\widetilde{\bm{x}}_{\alpha,\beta},\qquad *\left(\widetilde{\bm{x}}_{\alpha,\beta}\right)=\bm{x}_{\alpha,\beta}.
\ee
The deformation parameter $q$ and $\widetilde{q}$ are square-roots of $\bfq $ and $\widetilde{\bfq}$ respectively.


An infinite-dimensional irreducible $*$-representation of $\mathcal{A}_{h} $ is proposed in the quantization of complex Chern-Simons theory \cite{levelk,andersen2016level}. The Hilbert space carrying the representation
is $\mathcal{H}_0\simeq L^{2}(\mathbb{R})\otimes\mathbb{C}^{k}$. A state in $\ch_0$ is a function $f(\mu,m),$ $\mu\in\mathbb{R}$, $m\in\mathbb{Z}/k\mathbb{Z}$. The inner product is given by
\be 
\langle f\mid f'\rangle_0=\sum_{m\in\Z/k\Z}\int\rmd\mu\, \overline{f(\mu,m)}f'(\mu,m).
\ee
The following elementary operators are densely defined on $\mathcal{H}$ 
\begin{align}
\bm{\mu}f(\mu,m) & =\mu f(\mu,m),\qquad\bm{\nu}f(\mu,m)=-\frac{k}{2\pi i}\frac{\partial}{\partial\mu}f(\mu,m)\\
e^{\frac{2\pi i}{k}\bm{m}}f(\mu,m) & =e^{\frac{2\pi i}{k}m}f(\mu,m),\qquad e^{\frac{2\pi i}{k}\bm{n}}f(\mu,m)=f(\mu,m+1).
\end{align}
They satisfy
\be 
[\bm{\mu},\bm{\nu}]=\frac{k}{2\pi i},\qquad e^{\frac{2\pi i}{k}\bm{n}}e^{\frac{2\pi i}{k}\bm{m}}=e^{\frac{2\pi i}{k}}e^{\frac{2\pi i}{k}\bm{m}}e^{\frac{2\pi i}{k}\bm{n}},\qquad [\bm{\nu},e^{\frac{2\pi i}{k}\bm{m}}]=[\bm{\mu},e^{\frac{2\pi i}{k}\bm{n}}]=0
\ee 
We define the exponential operators $\bm{x},\bm{y},\widetilde{\bm{x}},\widetilde{\bm{y}}$ by
\be
\bm{y} & =&\exp\left[\frac{2\pi i}{k}(-ib\boldsymbol{\mu}-\bm{m})\right],\qquad\widetilde{\bm{y}}=\exp\left[\frac{2\pi i}{k}\left(-ib^{-1}\boldsymbol{\mu}+\bm{m}\right)\right],\\
\bm{x} & =&\exp\left[\frac{2\pi i}{k}(-ib\boldsymbol{\nu}-\bm{n})\right],\qquad\widetilde{\bm{x}}=\exp\left[\frac{2\pi i}{k}\left(-ib^{-1}\boldsymbol{\nu}+\bm{n}\right)\right].
\ee
For any $f(\mu,m)$ analytic in $\mu$ in the strip $\im(\mu)\in[0,\re(b)]$ and decaying sufficiently fast at $\re(\mu)\to\infty$, the actions of the above operators are given by
\be
\bm{y}f(\mu,m)&=e^{\frac{2\pi i}{k}(-ib {\mu}-{m})}f(\mu,m),\qquad& \bm{x}f(\mu,m)=f(\mu+ib,m-1),\label{xyrepresentation1}\\
\widetilde{\bm{y}}f(\mu,m)&=e^{\frac{2\pi i}{k}(-ib^{-1} {\mu}+{m})}f(\mu,m),\qquad & \widetilde{\bm{x}}f(\mu,m)=f(\mu+ib^{-1},m+1).\label{xyrepresentation2}
\ee
The operators $\bm{x},\widetilde{\bm{x}},\bm{y},\widetilde{\bm{y}}$ are unbounded operators. We denote by $\overline{\mathfrak{D}}_0$ the maximal common domain of $\bm{x},\bm{y},\widetilde{\bm{x}},\widetilde{\bm{y}}$ and their Laurent polynomials. $\overline{\mathfrak{D}}_0$ contains $f(\mu,m)$ that can be analytic continued to entire functions in $\mu$ and satisfy
\be
e^{\alpha\frac{2\pi}{k}\re(b)\mu}f(\mu+i\b \re(b),m)\in L^{2}(\mathbb{R}),
\qquad\forall\,m\in\mathbb{Z}/k\mathbb{Z},\quad\alpha,\b\in\mathbb{Z},\quad \mu\in\R.\label{domainD}
\ee
The Hermite functions $e^{-\mu^{2}/2}H_{n}(\mu)$ , $n=1,\cdots,\infty$
satisfy all the requirements and span a dense domain in $L^{2}(\mathbb{R})$, so $\overline{\mathfrak{D}}_0$ is dense in $\mathcal{H}_0$. On $\overline{\Fd}_0$, the operators $\bm{x},\widetilde{\bm{x}},\bm{y},\widetilde{\bm{y}}$ form the $\bfq,\widetilde{\bfq}$-Weyl algebra with $\bfq=q^{2}=e^h$,
and $\widetilde{\bfq}=\widetilde{q}^{2}=e^{\widetilde{h}}$: 
\be
\bm{x}\bm{y}=\bfq\bm{y}\bm{x},\qquad\widetilde{\bm{x}}\widetilde{\bm{y}}=\widetilde{\bfq}\widetilde{\bm{y}}\widetilde{\bm{x}},\qquad \bm{x}\widetilde{\bm{y}}=\widetilde{\bm{y}}\bm{x},\qquad \widetilde{\bm{x}}{\bm{y}}={\bm{y}}\widetilde{\bm{x}}.
\ee
We define the Fourier transformation and the inverse by
\be 
\tilde{f}(\nu,n)&=&\frac{1}{k}\sum_{m\in\mathbb{Z}/k\mathbb{Z}}\int d\mu\,e^{-\frac{2\pi i}{k}\left(\mu\nu-mn\right)}f(\mu,m),\\
{f}(\mu,m)&=&\frac{1}{k}\sum_{n\in\mathbb{Z}/k\mathbb{Z}}\int d\nu\,e^{\frac{2\pi i}{k}\left(\mu\nu-mn\right)}\tilde{f}(\nu,n).\label{invFourier}
\ee
The actions of $\bmx,\bmy,\widetilde{\bmx},\widetilde{\bmy}$ on $\tilde{f}(\nu,n)$ are given by
\be
\bm{y}\tilde f(\nu,n)=&\tilde f(\nu+ib,n-1),\qquad& \bm{x}\tilde f(\nu,n)=e^{-\frac{2\pi i}{k}(-ib {\nu}-{n})}\tilde f(\nu,n),\\
\widetilde{\bm{y}}\tilde f(\nu,n)=&\tilde f(\nu+ib^{-1},n+1),\qquad & \widetilde{\bm{x}}\tilde f(\nu,n)=e^{-\frac{2\pi i}{k}(-ib^{-1} {\nu}+{n})}\tilde f(\nu,n).
\ee

The tilded and untilded operators are related by the Hermitian conjugate
\be
\bm{x}^{\dagger}=\widetilde{\bm{x}},\qquad\bm{y}^{\dagger}=\widetilde{\bm{y}}.
\ee
$\bm{x},\bm{y},\widetilde{\bm{x}},\widetilde{\bm{y}}$ are all normal operators.

We denote by $\cl(\overline{\mathfrak{D}}_0)$ the space of linear operators leaving $\overline{\mathfrak{D}}_0$ invariant. The representation $\rho$: $\mathcal{A}_{h} \to\mathcal{L}(\overline{\mathfrak{D}}_0)$
is given by
\be
\rho:\  \bm{x}_{\alpha,\beta}\mapsto 
q^{-\alpha\beta}\bm{x}^{\alpha}\bm{y}^{\beta},\qquad 
\widetilde{\bm{x}}_{\alpha,\beta}\mapsto 
 \widetilde{q}^{-\alpha\beta}\widetilde{\bm{x}}^{\alpha}\widetilde{\bm{y}}^{\beta}.\label{eq:reptor}
\ee
The relations (\ref{eq:quantumtorus1}) and (\ref{eq:quantumtorus2})
are obtained by applying the $(\bfq,\widetilde{\bfq})$-Weyl
algebra. In the following, we often denote $\rho(\bm{x}_{\alpha,\beta}$)
by $\bm{x}_{\alpha,\beta}$ for simplifying notations. The $*$-stucture is represented by the Hermitian conjugate
\be
\bm{x}_{\alpha,\beta}^{\dagger}=\widetilde{q}^{\alpha\beta}\widetilde{\bm{y}}^{\beta}\widetilde{\bm{x}}^{\alpha}=\widetilde{q}^{-\alpha\beta}\widetilde{\bm{x}}^{\alpha}\widetilde{\bm{y}}^{\beta}=\widetilde{\bm{x}}_{\alpha,\beta},
\ee
which holds on $\overline{\mathfrak{D}}_0$. All $\bm{x}_{\alpha,\beta}$ and $\widetilde{\bm{x}}_{\alpha,\beta}$ are closable operators since their adjoints are densely defined. Their closure are still denoted by $\bm{x}_{\alpha,\beta}$ and $\widetilde{\bm{x}}_{\alpha,\beta}$. For any $\a,\b\in\Z$, $\bm{x}_{\alpha,\beta}$ and $\widetilde{\bm{x}}_{\alpha,\beta}$ leaves $\overline{\mathfrak{D}}_0$ invariant.

We define countably many semi-norms $\left\Vert \cdot \right\Vert _{\alpha,\beta}$ on $\overline{\Fd}_0$ by 
\begin{eqnarray}
	\left\Vert f\right\Vert _{\alpha,\beta}=\left\Vert \bm{x}_{\alpha,\beta}f\right\Vert_0,\qquad \a,\b\in\Z,\quad f\in\overline{\Fd}_0,
\end{eqnarray}
where $\Vert\cdot\Vert_0$ is the Hilbert-space norm of $\ch_0$. The natural topology on $\overline{\mathfrak{D}}_0$ endowed by the semi-norms is the weakest topology in which all semi-norms are continuous and in which the operation of addition is continuous. 

\begin{lemma}\label{Frechet}
$\overline{\mathfrak{D}}_0$ is a Fr\'echet space.
\end{lemma}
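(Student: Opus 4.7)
The plan is to exhibit $\overline{\Fd}_0$, equipped with the countable family $\{\Vert\cdot\Vert_{\a,\b}\}_{\a,\b\in\Z}$, as a complete, Hausdorff, metrizable locally convex topological vector space. Hausdorffness is immediate since $\Vert\cdot\Vert_{0,0}=\Vert\cdot\Vert_0$ is already the ambient Hilbert-space norm on $\ch_0$ and thus separates points; metrizability follows from countability via the standard construction of a translation-invariant metric $d(f,g)=\sum_{\a,\b}2^{-|\a|-|\b|}\Vert f-g\Vert_{\a,\b}/(1+\Vert f-g\Vert_{\a,\b})$. Hence the only real content is completeness.

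For completeness I would argue as follows. Let $(f_n)\subset\overline{\Fd}_0$ be Cauchy in the seminorm topology. By definition of the seminorms, for each $(\a,\b)\in\Z^2$ the sequence $(\bmx_{\a,\b}f_n)_n$ is Cauchy in $\ch_0$ and so converges to some $g_{\a,\b}\in\ch_0$; taking $(\a,\b)=(0,0)$ gives $f_n\to f:=g_{0,0}$ in $\ch_0$. The key input is then the closability of each $\bmx_{\a,\b}$ on $\overline{\Fd}_0$, which was established just before the lemma from the fact that $\bmx_{\a,\b}^\dagger=\wt{\bmx}_{\a,\b}$ is densely defined. Since the graph sequence $(f_n,\bmx_{\a,\b}f_n)\to(f,g_{\a,\b})$ in $\ch_0\oplus\ch_0$, the definition of the graph closure gives $f\in\mathrm{Dom}(\overline{\bmx_{\a,\b}})$ with $\overline{\bmx_{\a,\b}}f=g_{\a,\b}$. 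This being true for every $(\a,\b)\in\Z^2$, $f$ lies in the intersection of the domains of the closures of all Laurent monomials in $\bmx,\bmy,\wt{\bmx},\wt{\bmy}$, which is precisely the maximal common invariant domain $\overline{\Fd}_0$. Fr\'echet convergence is then tautological: $\Vert f_n-f\Vert_{\a,\b}=\Vert\bmx_{\a,\b}f_n-g_{\a,\b}\Vert_0\to 0$.

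The main potential obstacle is reconciling this abstract domain characterization with the concrete description of $\overline{\Fd}_0$ as entire functions in $\mu$ satisfying the weighted $L^2$ bound \eqref{domainD}. The growth bounds read off directly, since $\bmy^\b$ acts as multiplication by $e^{\frac{2\pi}{k}(b\mu-im)\b}$, so $\bmy^\b f\in\ch_0$ for every $\b$ gives the required $L^2$-weight condition in the $\mu$-variable. Entire analyticity in $\mu$ is recovered through the Fourier representation \eqref{invFourier}: on $\widetilde f(\nu,n)$ the operator $\bmx^\a$ acts as multiplication by $e^{-\frac{2\pi}{k}(b\nu-in)\a}$, so $\bmx^\a f\in\ch_0$ for every $\a\in\Z$ yields weighted $L^2$-decay of $\widetilde f$, and a standard Paley--Wiener type argument then produces an entire extension of $f(\mu,m)$ with the correct growth in the strip $\im(\mu)\in[0,\re(b)]$. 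This compatibility between the graph-closure domain and the pointwise analytic description is the most technical point; everything else reduces to formal manipulations with closable operators and their graphs.
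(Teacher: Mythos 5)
Your proof is correct and follows essentially the same route as the paper: completeness is reduced to the closedness of each $\bm{x}_{\alpha,\beta}$ (guaranteed by the densely defined adjoint $\widetilde{\bm{x}}_{\alpha,\beta}$), so a seminorm-Cauchy sequence converges in the graph of every monomial and the limit lies in the common domain $\overline{\Fd}_0$. Your added remarks on Hausdorffness, metrizability, and the Paley--Wiener compatibility of the graph-closure domain with the concrete analytic description are sensible supplements, but the core argument coincides with the paper's.
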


\begin{proof} To prove $\overline{\mathfrak{D}}_0$ is Fr\'echet, we need to show that every sequence $f_n\in \overline{\Fd}_0$ $n=1,2,\cdots$ that is Cauchy with respect to every semi-norm converges to $f\in \overline{\mathfrak{D}}_0$. Suppose a sequence $\{f_n\}$ is Cauchy with respect to each semi-norms $\Vert\cdot\Vert _{\alpha,\beta}$. Then $\bm{x}_{\alpha,\beta}f_n$ converges to $g_{\a,\b}\in\ch_0$ for all $\a,\b\in\Z$ with respect to the Hilbert space norm. Each $\bm{x}_{\alpha,\beta}$ is a closed operator, so $f_n\to g_{0,0}\equiv f$ and $\bm{x}_{\alpha,\beta}f_n\to g_{\a,\b}$ imply $f$ belongs to the domain of $\bm{x}_{\alpha,\beta}$ and $g_{\a,\b}=\bm{x}_{\alpha,\beta}f$. Since it applies to all $\a,\b\in\Z$, $f$ belongs to the common domain $\overline{\mathfrak{D}}_0$.
\end{proof}

The topological dual $\overline{\Fd}_0'$ is the space of continuous linear functionals of $\overline{\Fd}$. $T\in\overline{\Fd}'_0$ if and only if there are $C>0$ and a finite set of semi-norms $\Vert\cdot\Vert_{\a,\b}$, $(\a,\b)\in \ci$ where $\ci$ is a finite set, such that
\be 
\lt\vert T[f]\rt\vert\leq C\sum_{(\a,\b)\in \ci} \Vert f \Vert_{\a,\b},\qquad \forall f\in\Fd_0.
\ee
Given any $T\in\overline{\Fd}'_0$ and any $\bm{a}\in \ca_h$, the linear functional $T_{\bm{a}}[f]\equiv T[\rho(\bm{a})f]$, $\forall f\in\overline{\Fd}_0$,  belongs to $\overline{\Fd}'_0$.


We denote by $\Fd_0$ the space of functions $f(\mu,m)$ satisfying additionally that for any $\eta\in\R$, 
\be
|f(\mu+i\eta,m)|\leq C_\eta e^{-a|\mu|} ,\qquad \text{for any $a>0$}.
\ee
We denote by $\Fw\subset L^2(\R)$ the space of functions
\be
e^{-\alpha \mu^2+\b\mu}\, \mathrm{Pol}(\mu),\quad\text{where},\quad \re(\alpha)>0,\ \b\in\C,\ \text{and $\mathrm{Pol}(\mu)$ is a polynomial in $\mu$}.
\ee
we define $\sw_0\simeq \Fw\otimes\C^k\subset \Fd_0$. Both $\Fd_0$ and $\sw_0$ are dense in $\ch_0$ because all Hermite functions are inside them, and both of them are invariant by the action of all Laurent polynomials of $\bm{x},\bmy$ and their tilded partners. Moreover, $\sw_0$ and $\Fd_0$ are dense in $\overline{\Fd}_0$ by the Fr\'echet topology \cite{2008InMat.175..223F}.




\begin{lemma}\label{distributiondelta}

(1) Assume $F$ to be a linear functional on $\Fd_0$ and satisfy 
\begin{eqnarray}
F\lt[\left(\bm{y}-1\right)f\rt]=0,\qquad \text{and}\qquad F\lt[\left(\widetilde{\bm{y}}-1\right)f\rt]=0,\label{FyFy}
\end{eqnarray}
for all $f\in \Fd_0$. Then $F$ is given by \footnote
{
$F$ in \eqref{deltadistri} belongs to $\overline{\Fd}'$: $F[\psi]=C\sum_{n\in\Z/k\Z}\int d\nu\, \tilde{\psi}(\nu,n)=C\sum_{n\in\Z/k\Z}\int d\nu\,\overline{g(\nu,n)} \frac{\tilde{\psi}(\nu,n)}{\overline{g(\nu,n)}}$ where $\tilde{\psi}$ is the Fourier transfromation of $\psi$. We choose $g(\nu, n)=[e^{\frac{2\pi i}{k/l}(-ib\nu-n)}+e^{-\frac{2\pi i}{k/l}(-ib\nu-n)}]^{-1}\in\ch_0$ with $k/l$ odd. We obtain $|F[\psi]|\leq \Vert g\Vert \Vert (\widetilde{\bm x}^l+\widetilde{\bm x}^{-l})\psi \Vert \leq \Vert g\Vert(\Vert {\bm x}^l\psi \Vert+\Vert{\bm x}^{-l}\psi \Vert)$.
}
\begin{eqnarray}
F\lt[\psi\rt]=C\sum_{m\in\Z/k\Z}\int_\R\rmd\mu\,\delta(\mu)\delta_{\exp(\frac{2\pi i}{k}m),1}\psi(\mu,m)\qquad \forall \psi\in\Fd_0.\label{deltadistri}
\end{eqnarray}
where $C$ is a constant.

(2) Assume $S$ to be a linear functional on $\Fd_0$ and satisfy 
\begin{eqnarray}
S\lt[\left(\bm{x}-1\right)f\rt]=0,\qquad \text{and}\qquad S\lt[\left(\widetilde{\bm{x}}-1\right)f\rt]=0,
\end{eqnarray}
for all $f\in \Fd_0$. Then $S$ is given by
\begin{eqnarray}
S\lt[\psi\rt]=C\sum_{m\in\Z/k\Z}\int_\R\rmd\mu\,\psi(\mu,m)\qquad \forall \psi\in\Fd_0.\label{deltadistri1}
\end{eqnarray}
where $C$ is a constant.

\end{lemma}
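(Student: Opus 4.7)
For part (1), the plan is to read the hypotheses as saying that $F$ factors through the quotient $\Fd_0/\ci$, where $\ci:=(u-1)\Fd_0+(v-1)\Fd_0$ and $u(\mu,m):=e^{\frac{2\pi i}{k}(-ib\mu-m)}$, $v(\mu,m):=e^{\frac{2\pi i}{k}(-ib^{-1}\mu+m)}$ are the multipliers by which $\bm y$ and $\widetilde{\bm y}$ act on $\Fd_0$. I would aim to show $\Fd_0/\ci\cong\C$ via evaluation at $(0,0)$; the conclusion $F[\psi]=C\,\psi(0,0)$ then coincides with \eqref{deltadistri}. The easy inclusion $\ci\subseteq\ker(\mathrm{ev}_{(0,0)})$ follows because $|u|=e^{2\pi\re(b)\mu/k}$ and $|v|=e^{2\pi\re(b^{-1})\mu/k}$ both equal $1$ on $\R$ only at $\mu=0$ (as $\re b=\re b^{-1}>0$), after which $u(0,m)=e^{-2\pi i m/k}=1$ forces $m\equiv 0\pmod k$.

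The main step is the reverse inclusion: given $\psi\in\Fd_0$ with $\psi(0,0)=0$, I would construct $f_1,f_2\in\Fd_0$ satisfying $\psi=(u-1)f_1+(v-1)f_2$. The zeros of $u(\cdot,m)-1$ lie at $\mu=i(m+kn)\bar b$ and those of $v(\cdot,m)-1$ at $\mu=ib(kn'-m)$, $n,n'\in\Z$. Apart from the single real zero $\mu=0$ when $m=0$, these two discrete sets are disjoint, for a common zero would demand $m+kn=b^2(kn'-m)$ with both sides real, contradicting $b^2\notin\R$. A Mittag-Leffler-type construction then yields entire $f_2(\cdot,m)$ whose principal parts at the zeros of $u-1$ cancel those of the meromorphic function $\psi/(u-1)$, after which $f_1:=[\psi-(v-1)f_2]/(u-1)$ is entire. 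For $m=0$ the extra zero of $u-1$ at $\mu=0$ is absorbed by the hypothesis $\psi(0,0)=0$. Finally, the super-exponential decay of $\psi$ on every horizontal line propagates, through explicit bounds on the principal-part series, to give $f_1,f_2\in\Fd_0$.

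Part (2) then follows by Fourier duality: from the Fourier-picture formulas recorded after \eqref{invFourier}, $\bm x$ and $\widetilde{\bm x}$ act on $\tilde f(\nu,n)$ as multiplication by $e^{\frac{2\pi i}{k}(ib\nu+n)}$ and $e^{\frac{2\pi i}{k}(ib^{-1}\nu-n)}$, so $\tilde S:=S\circ\cf^{-1}$ satisfies exactly the hypotheses of part (1) with $(\mu,m)$ replaced by $(\nu,n)$. Applying (1) yields $\tilde S[\tilde f]=C\,\tilde f(0,0)=C\cdot\frac{1}{k}\sum_m\int f(\mu,m)\,\rmd\mu$, which is \eqref{deltadistri1} after absorbing the factor $1/k$ into the constant. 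The main obstacle in the whole argument is the Mittag-Leffler step: ensuring that the correction $f_2$ is not merely entire but also inherits the strong decay \eqref{domainD} on every horizontal strip required by $\Fd_0$. This requires careful control of the principal-part series, exploiting the quasi-uniform separation of the zero lattices of $u-1$ and $v-1$ together with the super-exponential decay of $\psi$.
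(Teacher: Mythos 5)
Your proposal follows essentially the same route as the paper: both reduce to showing that any $\psi$ with $\psi(0,0)=0$ lies in the ideal $(\bm y-1)\Fd_0+(\widetilde{\bm y}-1)\Fd_0$ by prescribing the values of the second summand at the zero lattice $\mu_{\a,m}=ib^{-1}(m+k\a)$ of $\bm y-1$ (using that the two zero lattices meet only at the origin, exactly your $b^2\notin\R$ observation), and both obtain part (2) by conjugating with the Fourier transform. The one step you leave open --- producing the interpolant with the required decay --- is handled in the paper by the multiplicative ansatz $h=f\cdot s$ with a universal $\mathrm{sinc}^2$-type interpolating series $s$ of at most exponential growth, so that the super-exponential decay of $f$ does the work; this is a clean realization of your Mittag--Leffler step.
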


\begin{proof} Given $m\in\Z/k\Z$, $y-1$ has zeros at $\mu=\mu_{\a,m}=ib^{-1}(m+k\alpha)$, $\a\in\Z$. Let us first consider any $f(\mu,m)\in\Fd_0$ with $f(0,0)=0$, we define $h(\mu,m)\in\Fd_0$ with prescribed values at $\mu=\mu_{\a,m}$ by
	\be
	f(\mu_{\a,m},m)=\lt[\widetilde{y}(\mu_{\a,m},m)-1\rt]h(\mu_{\a,m},m),\qquad \forall \a\in\Z,\ m\in\Z/k\Z,\label{constraintmualpham}
	\ee
where $\widetilde{y}(\mu_{\a,m},m)=e^{\frac{2\pi i}{k}\left(1+b^{-2}\right)\left(m+k\alpha\right)}$. $f\left(\mu,m\right)-\left(\widetilde{{y}}-1\right)h\left(\mu,m\right)$ has zeros at $\mu_{\a,m}$, which are the zeros of $y-1$. Then there exists $g(\mu,m)\in\Fd_0$ satisfying $f\left(\mu,m\right)=\left({y}-1\right)g\left(\mu,m\right)+\left(\widetilde{{y}}-1\right)h\left(\mu,m\right)$. 

For any $f(\mu,m)\in\Fd_0$ whose $f(0,0)$ is not necessarily zero, and for any $\phi_0(\mu,m)\in \Fd_0$ with $\phi_0(0,0)=1$, $f(\mu,m)-f(0,0)\phi_0(\mu,m)$ vanishes at $(0,0)$, there exist $g,h\in\Fd_0$ such that
\be
f\left(\mu,m\right)=\left({y}-1\right)g\left(\mu,m\right)+\left(\widetilde{{y}}-1\right)h\left(\mu,m\right)+f(0,0)\phi_0(\mu,m).
\ee
As a result, we obtain $F[f]=f(0,0)F[\phi_0]$. $F[\phi_0]$ is a constant, since $\phi_0$ is arbitrary and independent of $f$, and $F[f]$ does not depend on the choice of $\phi_0$.


The above argument relies on the existence of $h(\mu,m)\in\Fd_0$ satisfying \eqref{constraintmualpham}. For any $f(\mu,m)\in\Fd_0$ with $f(0,0)=0$, we can construct $h(\mu,m)\in\Fd_0$ explicitly by
\be
h\left(\mu,m\right)=f\left(\mu,m\right)s\left(\mu,m\right)\qquad s\left(\mu,m\right)=\sum_{\alpha\in\mathbb{Z}}\frac{1}{e^{\frac{2\pi i}{k}\left(1+b^{-2}\right)\left(m+k\alpha\right)}-1}\left[\frac{\sin\left(\frac{\pi}{k}\left(-ib\mu-m\right)\right)}{\frac{\pi}{k}\left(-ib\mu-m\right)-\pi\alpha}\right]^{2}
\ee
for $m\neq 0$ and remove the $\a=0$ term in the sum for the case of $m=0$. We have restrict $m=0,\cdots,k-1$ in a single period. $h\left(\mu,m\right)$ is entire in $\mu$: For any compact domain $K\subset \C$, there exists $\alpha_0>0$ such that $\left|\frac{\pi}{k}\left(-ib\mu-m\right)-\pi\alpha\right|\geq\left|\left|\pi\alpha\right|-\left|\frac{\pi}{k}\left(-ib\mu-m\right)\right|\right|\geq\left|\pi\alpha\right|-\sup_{\mu\in K}\left|\frac{\pi}{k}\left(-ib\mu-m\right)\right|$ for any $|\a |>\a_0$, and $|e^{\frac{2\pi i}{k}\left(1+b^{-2}\right)\left(m+k\alpha\right)}-1|^{-1}$ is bounded, then
\be 
\lt|{s(\mu,m)}\rt|\leq\text{finite sum}+C\sum_{|\alpha|>\a_0}\frac{|\sin\left(\frac{\pi}{k}\left(-ib\mu-m\right)\right)|^2}{\left(\left|\pi\alpha\right|-\sup_{\mu\in K}\left|\frac{\pi}{k}\left(-ib\mu-m\right)\right|\right)^{2}}.
\ee
which shows the uniform convergence within any compact domain $K$ (the poles can only affect a finite number of terms in the finite sum and cancels with the zeros of the sine function). Additionally, for large $|\re(\mu)|$, $|s(\mu ,m)|$ grows at most exponentially \footnote{Use the boundedness of $[e^{\frac{2\pi i}{k}\left(1+b^{-2}\right)\left(m+k\alpha\right)}-1]^{-1}$ and the formula $\sum_{\alpha\in\mathbb{Z}}\left|\frac{1}{x+iy-\pi\alpha}\right|^{2}=\frac{\sinh(2y)}{y\left[\cosh(2y)-\cos(2x)\right]}$ ($x,y\in\R$).}, while $f(\mu,m)$ decays faster than any exponential. Therefore $h(\mu,m)\in\Fd_0$.

By the Fourier transformation $\bmx=\cf \bmy\cf^{-1}$, we have $0=S\lt[\left(\bm{x}-1\right)f\rt]=S\lt[\cf\left(\bm{y}-1\right)\cf^{-1}f\rt]$, and thus $S\lt[\cf\,\cdot\,\rt]=F[\,\cdot\,]$, so $S$ is is given by \eqref{deltadistri1}. 

\end{proof}

\begin{lemma}\label{irreducible1}
The representation $\rho$ is irreducible in the sense that any bounded operator $\bm\co\in \cl(\ch_0)$ commuting with all elements in $\ca_h$ (i.e. $\bm\co\bm{a}\psi=\bm{a}\bm\co\psi$ \footnote{The commutativity may be written as $\bm\co\bm{a}\subset\bm{a}\bm\co$ (the graph of $\bm{a}\bm\co$ contains the graph of $\bm\co\bm{a}$), namely, $\bm{a}\bm\co$ is an extension of $\bm\co\bm{a}$.} for all $\psi\in\mathfrak{D}_0$ and $\bm{a}\in\ca_h$) is a scalar multiple of identity operator.

\end{lemma}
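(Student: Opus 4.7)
The plan is to reduce the irreducibility to two standard facts: Stone-von Neumann on $L^{2}(\R)$ for the canonical pair $(\bm\mu,\bm\nu)$, and Schur's lemma for the Schr\"odinger-type representation of the discrete Heisenberg-Weyl group on $\C^{k}$. Since $\rho(\ca_h)$ is generated by Laurent monomials in $\bmx,\bmy,\wt\bmx,\wt\bmy$, commuting with $\rho(\ca_h)$ on $\Fd_0$ is equivalent to commuting with these four elementary operators.

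The key observation is the two identities $\bmy\wt\bmy=e^{\frac{4\pi\re(b)}{k}\bm\mu}$ and $\bmx\wt\bmx=e^{\frac{4\pi\re(b)}{k}\bm\nu}$, which follow from $b+b^{-1}=2\re(b)$ together with the commutativity of tilded and untilded factors. Both are strictly positive unbounded self-adjoint operators for which $\Fd_0$ is a core, as is evident from the explicit multiplication presentation in the $(\mu,m)$ and $(\nu,n)$ pictures. From $\bm\co(\bmy\wt\bmy)\subset(\bmy\wt\bmy)\bm\co$ on $\Fd_0$, the standard criterion that a bounded operator intertwining a self-adjoint operator on a core commutes with every spectral projection lifts the commutation to all bounded Borel functions of $\bmy\wt\bmy$. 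Since $\mu\mapsto e^{\frac{4\pi\re(b)}{k}\mu}$ is a homeomorphism $\R\to(0,\infty)$, this means $\bm\co$ commutes with every $e^{i\a\bm\mu}$, and analogously with every $e^{i\b\bm\nu}$. Because $\bm\mu,\bm\nu$ act as the identity on the $\C^{k}$ tensor factor and their Weyl exponentials form a Stone-von Neumann irreducible pair on $L^{2}(\R)$, one concludes $\bm\co=I_{L^{2}(\R)}\otimes C$ for some bounded $C\in\cl(\C^{k})$.

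For the residual $\C^{k}$ factor I would use the tensor factorizations $\bmy=e^{\frac{2\pi b}{k}\bm\mu}\otimes e^{-\frac{2\pi i}{k}\bm m}$ and $\bmx=e^{\frac{2\pi b}{k}\bm\nu}\otimes e^{-\frac{2\pi i}{k}\bm n}$. Applying $\bm\co\bmy=\bmy\bm\co$ and $\bm\co\bmx=\bmx\bm\co$ to elementary tensors in $\Fd_0$ and cancelling the nonzero $L^{2}(\R)$ factor yields $[C,e^{-\frac{2\pi i}{k}\bm m}]=0=[C,e^{-\frac{2\pi i}{k}\bm n}]$. The pair $e^{-\frac{2\pi i}{k}\bm m},e^{-\frac{2\pi i}{k}\bm n}$ obeys the discrete Weyl relation at the primitive $k$-th root of unity $e^{-\frac{2\pi i}{k}}$ and thus generates the unique irreducible $k$-dimensional representation of the finite Heisenberg-Weyl group, whose commutant in $\cl(\C^{k})$ is $\C\cdot I_{\C^{k}}$. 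Therefore $C=c\,I_{\C^{k}}$ and $\bm\co=c\cdot I$.

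The main obstacle I anticipate is purely technical: upgrading the formal commutation $\bm\co\bm a\psi=\bm a\bm\co\psi$ on $\Fd_0$ to commutation with spectral projections of the unbounded operators $\bmy\wt\bmy$ and $\bmx\wt\bmx$. This requires verifying essential self-adjointness of these operators on $\Fd_0$, which is straightforward given the explicit multiplication realisation (the $m$-dependence of $\bmy$ and $\wt\bmy$ cancels in $\bmy\wt\bmy$, and likewise for $\bmx\wt\bmx$, so after a trivial change of variables each is ordinary real multiplication on $L^{2}(\R)\otimes\C^{k}$). Once this analytic step is in place, the remainder is essentially algebraic: one application of Schur's lemma in the continuous variables, one in the discrete variables.
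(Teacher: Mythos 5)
The paper does not give an in-line proof of this lemma; it defers entirely to \cite{Han:2024nkf}, so there is no internal argument to compare against. Your proposal is a sound, self-contained argument, and its two pillars are correct: the identities $\bmy\widetilde{\bmy}=e^{\frac{4\pi\mathrm{Re}(b)}{k}\bm{\mu}}$ and $\bmx\widetilde{\bmx}=e^{\frac{4\pi\mathrm{Re}(b)}{k}\bm{\nu}}$ do hold (the $\bm{m}$-, resp.\ $\bm{n}$-dependence cancels because $|b|=1$ gives $b+b^{-1}=2\mathrm{Re}(b)$), these are positive self-adjoint operators, and once $\bm{\co}=1\otimes C$ is forced on the $L^2(\R)$ factor, the clock-and-shift pair $e^{-\frac{2\pi i}{k}\bm{m}},e^{-\frac{2\pi i}{k}\bm{n}}$ at a primitive $k$-th root of unity has trivial commutant in $\mathrm{End}(\C^k)$, so $C$ is scalar.

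The one step you should not wave through as "straightforward" is the core property of $\Fd_0$ for the multiplication operator $e^{c\bm{\mu}}$ (and its Fourier conjugate). Since $\Fd_0$ consists of entire functions, it contains no compactly supported vectors, and a dense invariant domain is not automatically a core for an unbounded multiplication operator. The clean fix is to work with $\sw_0=\Fw\otimes\C^k\subset\Fd_0$: one checks that $(1+e^{c\mu})\Fw\subset\Fw$ is dense in $L^2(\R)$ by a Gaussian-damping/quasi-analyticity argument (if $u\in L^2$ annihilates $(1+e^{c\mu})e^{-\a\mu^2+\b\mu}$ for all $\b$, then the Fourier transform of the finite measure $e^{-\a\mu^2}(1+e^{c\mu})\overline{u}\,d\mu$ vanishes, so $u=0$), which is exactly the graph-density statement needed. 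With that supplied, the passage from commutation on the core to commutation with all spectral projections, and thence with all $e^{i\a\bm{\mu}}$, $e^{i\b\bm{\nu}}$, is the standard criterion you cite, and the rest of your argument closes correctly.
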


\begin{proof} See \cite{Han:2024nkf}.

\end{proof}


It is useful to introduce an auxiliary Hopf $*$-algebra $U_{\bfq}(sl_2)\otimes U_{\wt \bfq}(sl_2)$ defined by the polynomials of the generators ${1}$, ${\cx},{\cy},\ck$ and $\widetilde{ \cx},\widetilde{\cy},\widetilde{\ck}$ and the same Hopf $*$-structure as in \eqref{commutation00} - \eqref{starstr1} without relating $\ck,\wt{\ck}$ to $H,\wt{H}$ by the power series (i.e. $H,\wt{H}$ are not elements in $U_{\bfq}(sl_2)\otimes U_{\wt \bfq}(sl_2)$). The Hopf $*$-algebra $U_{\bfq}(sl_2)\otimes U_{\wt \bfq}(sl_2)$ is not anymore quasi-triangular since the R-element is a power series of $h$ and is not a polynomial of the generators of $U_{\bfq}(sl_2)\otimes U_{\wt \bfq}(sl_2)$.  

Based in the representation $(\mathcal{H}_0,\rho_0)$ of the quantum torus
algebra, we construct a representation of $U_{{\bfq}}(sl_{2})$ on the dense domain $\overline{\Fd}_0$: We define a family of operators $\cx_{\lambda},\cy_{\lambda},\ck_{\lambda},\ck_{\lambda}^{-1}$
\begin{align}
\ck_{\lambda} & =\bm{x}_{-1,0},\qquad \ck_{\lambda}^{-1}=\bm{x}_{1,0},\qquad \cy_{\lambda}=-\frac{i}{{\bfq}-{\bfq}^{-1}}\bm{x}_{-1,1},\label{KKEF1}\\
\cx_{\lambda} & =-\frac{i}{{\bfq}-{\bfq}^{-1}}\left[\left(\lambda+\lambda^{-1}\right)\bm{x}_{1,-1}+\bm{x}_{3,-1}+\bm{x}_{-1,-1}\right],\label{KKEF2}
\end{align}
parametrized by $\lambda\in\mathbb{C^{\times}}$. It is straight-forward to check that $\cx_{\lambda},\cy_{\lambda},\ck_{\lambda},\ck_{\lambda}^{-1}$
satisfy the commutation relation of $U_{{\bfq}}(sl_{2})$: 
\begin{eqnarray}
{\cal K}_{\lambda}{\cal X}_{\lambda}=\bm{q}{\cal X}_{\lambda}{\cal K}_{\lambda},\qquad{\cal K}_{\lambda}{\cal Y}_{\lambda}=\bm{q}^{-1}{\cal Y}_{\lambda}{\cal K}_{\lambda},\qquad\left[{\cal X}_{\lambda},{\cal Y}_{\lambda}\right]=\frac{{\cal K}_{\lambda}^{2}-{\cal K}_{\lambda}^{-2}}{\bm{q}-\bm{q}^{-1}}
\end{eqnarray}

Similarly, we define the tilded operators 
\be
\widetilde{\ck}_{\lambda} & =&\widetilde{\bm{x}}_{-1,0},\qquad\widetilde{\ck}_{\lambda}^{-1}=\widetilde{\bm{x}}_{1,0},\qquad\widetilde{\cy}_{\lambda}=-\frac{i}{\widetilde{{\bfq}}-\widetilde{{\bfq}}^{-1}}\widetilde{\bm{x}}_{-1,1},\\
\widetilde{\cx}_{\lambda} & =&-\frac{i}{\widetilde{{\bfq}}-\widetilde{{\bfq}}^{-1}}\left[\left(\overline{\lambda}+\overline{\lambda}^{-1}\right)\widetilde{\bm{x}}_{1,-1}+\widetilde{\bm{x}}_{3,-1}+\widetilde{\bm{x}}_{-1,-1}\right],
\ee
They satisfy the commutation relation of $U_{\widetilde{{\bfq}}}(sl_{2})$:
\be
\widetilde{\ck}_{\lambda}\widetilde{\cx}_{\lambda}=\widetilde{{\bfq}}\widetilde{\cx}_{\lambda}\widetilde{\ck}_{\lambda},\qquad\widetilde{\ck}_{\lambda}\widetilde{\cy}_{\lambda}=\widetilde{{\bfq}}^{-1}\widetilde{\cy}_{\lambda}\widetilde{\ck}_{\lambda},\qquad\left[\widetilde{\cx}_{\lambda},\widetilde{\cy}_{\lambda}\right]=\frac{\widetilde{\ck}_{\lambda}^{2}-\widetilde{\ck}_{\lambda}^{-2}}{\widetilde{{\bfq}}-\widetilde{{\bfq}}^{-1}}.
\ee
The tilded generators and untilded
generators are related by the Hermitian conjugate
\be
\widetilde{\ck}_{\lambda}=\ck_{\lambda}^{\dagger},\qquad\widetilde{\cx}_{\lambda}=\cx_{\lambda}^{\dagger},\qquad\widetilde{\cy}_{\lambda}=\cy_{\lambda}^{\dagger},
\ee
as the representation of the $*$-structure \eqref{starstr1}. $\cx_{\lambda},\cy_{\lambda},\ck_{\lambda},\ck_{\lambda}^{-1}, \widetilde\cx_{\lambda},\widetilde\cy_{\lambda},\widetilde\ck_{\lambda},\widetilde\ck_{\lambda}^{-1}$ are all normal operators.

A useful observation is that any even-degree monomial of $\cx_{\lambda},\cy_{\lambda},\ck_{\lambda},\ck_{\lambda}^{-1}$ is a sum of $\bmx_{\a,\b}$ with even $\a$. A similar statement holds for $\widetilde\cx_{\lambda},\widetilde\cy_{\lambda},\widetilde\ck_{\lambda},\widetilde\ck_{\lambda}^{-1}$.

The above shows that $\overline{\Fd}_0$ carries a family of $*$-representation of
$U_{{\bfq}}(sl_{2})\otimes U_{\widetilde{{\bfq}}}(sl_{2})$. The family of representations is parametrized
by the parameter $\lambda\in\mathbb{C^{\times}}$. We denote this representation by $\pi^{\lambda}$. It is manifest that $\pi^\l$ and $\pi^{\l^{-1}}$ are equivalent.

\begin{lemma}\label{irred0}
The representation $(\overline{\Fd}_0,\pi^{\lambda})$ is irreducible.
\end{lemma}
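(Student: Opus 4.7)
The plan is to reduce the irreducibility of $\pi^\lambda$ to Lemma \ref{irreducible1}: suppose $A\in\mathcal{L}(\mathcal{H}_0)$ is a bounded operator satisfying $A\pi^\lambda(\xi)\psi=\pi^\lambda(\xi)A\psi$ for every $\xi\in U_{\bfq}(sl_2)\otimes U_{\widetilde\bfq}(sl_2)$ and every $\psi\in\overline{\Fd}_0$; I would show that $A$ must then commute with every element of $\rho(\mathcal{A}_h)$, whence $A$ is a scalar multiple of the identity.

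The key point is that only a very small subset of generators is needed. Definitions \eqref{KKEF1} together with the representation formula \eqref{eq:reptor} give, as operators on $\overline{\Fd}_0$,
\begin{equation}
\ck_\lambda=\bmx^{-1},\qquad \cy_\lambda=-\frac{iq}{\bfq-\bfq^{-1}}\,\bmx^{-1}\bmy,
\end{equation}
and analogous expressions in the tilded sector. Commutation of $A$ with $\ck_\lambda$ yields $A\bmx^{-1}=\bmx^{-1}A$ on $\overline{\Fd}_0$; since both $\bmx$ and $\bmx^{-1}$ leave $\overline{\Fd}_0$ invariant, the substitution $\psi\mapsto\bmx\psi$ upgrades this to $A\bmx=\bmx A$. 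Commutation with $\cy_\lambda$ gives $A(\bmx^{-1}\bmy)=(\bmx^{-1}\bmy)A$, and combining with the previous identity yields $A\bmy=\bmy A$. The identical argument applied to $\widetilde\ck_\lambda=\widetilde\bmx^{-1}$ and $\widetilde\cy_\lambda\propto\widetilde\bmx^{-1}\widetilde\bmy$ produces $A\widetilde\bmx=\widetilde\bmx A$ and $A\widetilde\bmy=\widetilde\bmy A$. Because \eqref{eq:reptor} represents every $\bmx_{\alpha,\beta}$ (resp.\ $\widetilde\bmx_{\alpha,\beta}$) as a Laurent monomial in $\bmx,\bmy$ (resp.\ $\widetilde\bmx,\widetilde\bmy$), it follows that $A$ commutes with all of $\rho(\mathcal{A}_h)$ on $\overline{\Fd}_0$, and Lemma \ref{irreducible1} then forces $A=c\cdot\mathrm{id}$.

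The only real subtlety is domain bookkeeping: the identities above involve unbounded operators, so one must verify that each manipulation stays inside the common invariant domain $\overline{\Fd}_0$. This is taken care of by the invariance of $\overline{\Fd}_0$ under $\bmx^{\pm1},\bmy^{\pm1},\widetilde\bmx^{\pm1},\widetilde\bmy^{\pm1}$ and their Laurent polynomials, established earlier in this section (with the Fr\'echet structure of Lemma \ref{Frechet} available should one want to phrase the commutation of $A$ with unbounded operators more carefully). I note finally that the generator $\cx_\lambda$ (and hence the parameter $\lambda$) never enters the argument, which is consistent both with the equivalence $\pi^\lambda\simeq\pi^{\lambda^{-1}}$ stated just above and with the fact that irreducibility holds uniformly in $\lambda\in\mathbb{C}^\times$.
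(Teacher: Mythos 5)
Your proposal is correct and follows essentially the same route as the paper's proof: both recover the Weyl-algebra generators $\bmx,\bmy,\widetilde{\bmx},\widetilde{\bmy}$ from $\ck_\lambda^{\pm1},\cy_\lambda$ and their tilded partners (note $\bmx=\ck_\lambda^{-1}$ and $\bmy=i(\bfq-\bfq^{-1})q^{-1}\ck_\lambda^{-1}\cy_\lambda$), and then invoke Lemma \ref{irreducible1}. Your added remarks on domain invariance and on the irrelevance of $\cx_\lambda$ are accurate but do not change the argument.
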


\begin{proof} The generators of the $\bfq,\widetilde{\bfq}$-Weyl algebra can be recovered by $U_{{\bfq}}(sl_{2})\otimes U_{\widetilde{{\bfq}}}(sl_{2})$ generators in the following way:
\be
&&\bm{x}=\bm{x}_{1,0}=\ck_\l^{-1},\qquad \bm{y}=\bm{x}_{0,1}=q^{-1}\bm{x}_{1,0}\bm{x}_{-1,1}=i(\bfq-\bfq^{-1})q^{-1}\ck_\l^{-1}\cy_\l,\\
&&\widetilde{\bm{x}}=\widetilde{\bm{x}}_{1,0}=\widetilde\ck_\l^{-1},\qquad \widetilde{\bm{y}}=\widetilde{\bm{x}}_{0,1}=\widetilde{q}^{-1}\widetilde{\bm{x}}_{1,0}\widetilde{\bm{x}}_{-1,1}=i(\widetilde{\bfq}-\widetilde{\bfq}^{-1})\widetilde{q}^{-1}\widetilde{\ck}_\l^{-1}\widetilde{\cy}_\l.
\ee
Since any bounded operator commuting with $\bm{x},\bm{y},\widetilde{\bm{x}},\widetilde{\bm{y}}$ is a scalar multiple of identity operator, the bounded operator commuting with $\pi^{\lambda}[U_{{\bfq}}(sl_{2})\otimes U_{\widetilde{{\bfq}}}(sl_{2})]$ also has to be a scalar multiple of identity operator. Therefore, the representation $(\overline{\Fd}_0,\pi^{\lambda})$ is irreducible.

\end{proof}


$\pi^\lambda$ is not the representation of $\uquqt$, because $\uquqt$ has generators $H,\wt{H}$, whereas $\pi^\l$ does not give the representation to $H,\wt{H}$. $\pi^\lambda(\ck)=\pi^\lambda(\bfq^{H/2})=\bm{x}^{-1}$ is well-defined on $\overline{\Fd}_0$, but $\log(\bm{x})$ is not well-define, since $\bm{n}$ is not well-defined on $\ch_0$ ($e^{\frac{2\pi i}{k}\bm{n}}$ is well-defined, since it respects the periodicity $n\sim n+k$). Since $\pi^\l(H)$ and $\pi^\l(\wt H)$ do not exist it is not clear how to define the representation of $R$ or $\wt{R}$ on $\ch_0\otimes \ch_0$. However, by using $\pi^\l$ for $U_{\bfq}(sl_2)\otimes U_{\wt \bfq}(sl_2)$, it is possible to define $R^{I\l}=(\rho^I\otimes\pi^\l)(R)$ and $\wt R^{I\l}=(\wt\rho^I\otimes\pi^\l)(\wt R)$ as the matrices of operators on $\ch_0$: the finite-dimensional representation $\rho^I$,$\wt\rho^I$ truncates $R^{I\l}$,$\wt R^{I\l}$ to finite polynomials
\be
R^{I\lambda}=\left(\rho^{I}\otimes\pi^{\lambda}\right)\left(\bm{q}^{\frac{1}{2}(H\otimes H)}\right)\sum_{n=0}^{\dim(\rho^{I})}\frac{\bm{q}^{n(n-1)/2}}{[n]_{\bm{q}}!}\left(1-\bm{q}^{-2}\right)^{n}\left[\rho^{I}\left({\cal K}{\cal X}\right)\otimes\pi^{\lambda}\left({\cal K}^{-1}{\cal Y}\right)\right]^{n}
\ee
where $\left(\rho^{I}\otimes\pi^{\lambda}\right)\left(\bm{q}^{\frac{1}{2}(H\otimes H)}\right)$ can be written as a diagonal matrix
\be
\left(\rho^{I}\otimes\pi^{\lambda}\right)\left(\bm{q}^{\frac{1}{2}(H\otimes H)}\right)=\mathrm{diag}\left({\cal K}_\l^{2I},{\cal K}_\l^{2(I-1)},\cdots{\cal K}_\l^{-2(I-1)},{\cal K}_\l^{2I}\right).
\ee
Therefore, $R^{I\lambda}$ is well-defined as an operator in $\mathrm{End}(V^I)\otimes\cl(\overline{\Fd}_0)$. The same argument also applies to $\wt R^{I\l}$, $R'{}^{I\l}$, and $\wt R'{}^{I\l}$. Appendix \ref{rep of Rs} gives explicitly the expressions of these $R$-operators when $I=1/2$.

\subsection{Infinite-dimensional $*$-representation of $\mathscr{U}_{\bf q}(sl_2)\otimes \mathscr{U}_{\widetilde{\bf {q}}}(sl_2)$}\label{Infinite-dimensional representation of suquqt}

The relations \eqref{embeduq1} and \eqref{embeduq2} embed $\suquqt$ as a subalgebra in $U_{\bfq}(sl_2)\otimes U_{\bfq}(sl_2)$. $\pi^\l$ defines the $*$-representation for $\suquqt$ carried by $\ch_0$:
\be 
K_{\lambda}&=&\bm{x}^{-2},\qquad K_{\lambda}^{-1}=\bm{x}^{2},\qquad F_{\lambda}=-\frac{i}{\bm{q}-\bm{q}^{-1}}q\bm{y},\\
E_{\lambda}&=&-\frac{i\bm{q}^{-1}}{\bm{q}-\bm{q}^{-1}}\left(1+\bm{q}\lambda^{-1}\bm{x}^{2}\right)\left(1+\bm{q}\lambda\bm{x}^{2}\right)q^{-1}\bm{x}^{-2}\bm{y}^{-1},\\
\widetilde{K}_{\lambda}&=&\widetilde{\bm{x}}^{-2},\qquad\widetilde{K}_{\lambda}^{-1}=\widetilde{\bm{x}}^{2},\qquad \widetilde{F}_{\lambda}=-\frac{i}{\widetilde{\bm{q}}-\widetilde{\bm{q}}^{-1}}\widetilde{q}^{-1}\widetilde{\bm{y}},\\
\widetilde{E}_{\lambda}&=&-\frac{i}{\widetilde{\bm{q}}-\widetilde{\bm{q}}^{-1}}\left(1+\widetilde{\bm{q}}\overline{\lambda}^{-1}\widetilde{\bm{x}}^{2}\right)\left(1+\widetilde{\bm{q}}\overline{\lambda}\widetilde{\bm{x}}^{2}\right)\widetilde{q}^{-1}\widetilde{\bm{x}}^{-2}\widetilde{\bm{y}}^{-1}.
\ee
These operators relate to $\bm{x},\widetilde{\bmx}$ only through $\bm{x}^2,\widetilde{\bmx}^2$. When $k=2N$ is an even integer, $(\pi^\l,\ch_0)$ is not irreducible for $\suquqt$, because $e^{-i\pi\bm{m}}\tilde{f}(\nu,n)=\tilde{f}(\nu,n+N)$ commutes with $\bm{x}^2,\widetilde{\bmx}^2$ and $\bmy,\widetilde{\bmy}$. Note that $e^{-i\pi\bm{m}}=(e^{-\frac{2\pi i}{k}\bm{m}})^N$ is well-defined when $k=2N$. The eigenspaces of $e^{-i\pi\bm{m}}$ are denoted by $\ch_\pm$ on which $e^{-i\pi\bm{m}}=\pm 1$:
\be 
\tilde{f}_\pm(\nu,n+N)=\pm\tilde{f}_\pm(\nu,n+N),\qquad \forall f_\pm\in\ch_{\pm}.
\ee
In terms of $f(\mu,m)$, $f_+$ is nonzero only at even $m$, whereas $f_-$ is nonzero only at odd $m$. The full Hilbert space $\ch_0$ is a direct sum: $\ch_0=\ch_+\oplus\ch_-$. 

In the following, we denote $\ch\equiv \ch_+$ for simplicity. The Hilbert space $\ch$ has the isomorphism $\ch\simeq L^2(\R)\otimes \C^N$. If we denote by $\bm{u}=\bmx^2$ and $\widetilde{\bm{u}}=\widetilde{\bmx}^2$ and represent states in $\ch$ by $f(\mu,m)$ with $\mu\in \R$ and $m\in\Z/N\Z$, 
\be
\bm{y}\psi(\mu,m)&=e^{\frac{2\pi i}{N}(-ib {\mu}-{m})}\psi(\mu,m),\qquad& \bm{u}\psi(\mu,m)=\psi(\mu+ib,m-1),\label{repuandy}\\
\tilde{\bm{y}}\psi(\mu,m)&=e^{\frac{2\pi i}{N}(-ib^{-1} {\mu}+{m})}\psi(\mu,m),\qquad & \tilde{\bm{u}}\psi(\mu,m)=\psi(\mu+ib^{-1},m+1).\label{repuandy1}
\ee
It is important to note that $(\mu,m)$ for $\ch$ in \eqref{repuandy} and \eqref{repuandy1} is obtained by $(\mu,m)\to (2\mu,2m)$ from $(\mu,m)$ for $\ch_0$, so that $m\in\Z/N\Z$ for $\ch$. The inner product on $\ch$ is defined by
\be
\langle f\mid f'\rangle=\sum_{m\in\Z/N\Z}\int_{\R}\rmd \mu\, \overline{f(\mu,m)}f'(\mu,m).
\ee

The operators in \eqref{repuandy} and \eqref{repuandy1} are exactly the same as in complex Chern-Simons theory at level-$N$ in \cite{levelk,andersen2016level,Andersen2014}, and it turns out that $\ch$ indeed carries the irreducible representation of the operator algebra from the combinatorial quantization, which is developed in Section \ref{Quantization of discrete connections}. Therefore, we mainly focus on the irreducible representation of $\suquqt$ on $\ch$ in this paper.

In the same way as in Lemma \ref{irred0}, the operators in \eqref{repuandy} and \eqref{repuandy1} define an irreducible representation of the $(\bfq^2,\widetilde{\bfq}^2)$-Weyl algebra
\be
\bm{u}\bm{y}=\bfq^2\bm{y}\bm{u},\qquad\widetilde{\bm{u}}\widetilde{\bm{y}}=\widetilde{\bfq}^2\widetilde{\bm{y}}\widetilde{\bm{u}},\qquad \bm{u}\widetilde{\bm{y}}=\widetilde{\bm{y}}\bm{u},\qquad \widetilde{\bm{u}}{\bm{y}}={\bm{y}}\widetilde{\bm{u}},\qquad \bfq^2=\exp\lt[\frac{2\pi i}{N}(b^2+1)\rt],
\ee
and the quantum torus algebra at level-$N$ generated by 
\be
\bm{u}_{\alpha,\beta}= 
\bfq^{-\alpha\beta}\bm{u}^{\alpha}\bm{y}^{\beta},\qquad 
\widetilde{\bm{u}}_{\alpha,\beta}=
 \widetilde{\bfq}^{-\alpha\beta}\widetilde{\bm{u}}^{\alpha}\widetilde{\bm{y}}^{\beta}.\label{eq:reptor11}
\ee

The representation $(\pi^\l,\ch)$ can be written in terms of $\bm{u}_{\a,\b}$ and $\widetilde{\bm{u}}_{\a,\b}$:
\begin{eqnarray}
K_{\lambda}&=&\bm{u}_{-1,0},\qquad K_{\lambda}^{-1}=\bm{u}_{1,0},\qquad F_\l=-\frac{iq}{\bm{q}-\bm{q}^{-1}}\bm{u}_{0,1}\\
E_\l&=&-\frac{iq^{-1}}{\bm{q}-\bm{q}^{-1}}\left[\left(\lambda+\lambda^{-1}\right)\bm{u}_{0,-1}+\bm{u}_{1,-1}+\bm{u}_{-1,-1}\right],\label{E354}\\
\widetilde{K}_{\lambda}&=&\widetilde{\bm{u}}_{-1,0},\qquad\widetilde{K}_{\lambda}^{-1}=\widetilde{\bm{u}}_{1,0},\qquad \widetilde{F}_{\lambda}=-\frac{i\widetilde{q}^{-1}}{\widetilde{\bm{q}}-\widetilde{\bm{q}}^{-1}}\widetilde{\bm{u}}_{0,1},\\
\widetilde{E}_\l&=&-\frac{i\widetilde{q}}{\widetilde{\bm{q}}-\widetilde{\bm{q}}^{-1}}\left[\left(\overline{\lambda}+\overline{\lambda}^{-1}\right)\widetilde{\bm{u}}_{0,-1}+\widetilde{\bm{u}}_{1,-1}+\widetilde{\bm{u}}_{-1,-1}\right].\label{Et359}
\end{eqnarray}
Only the representation of $\suquqt$ will be taken into account in our following discussion. We could have started the construction on $\ch$ with the representation of the $(\bfq^2,\widetilde{\bfq}^2)$-Weyl algebra with $\bm{u},\bmy,\widetilde{\bm u},\widetilde{\bmy}$, without involving the $(\bfq,\widetilde{\bfq})$-Weyl algebra with $\bm{x},\bmy,\widetilde{\bm x},\widetilde{\bmy}$. Indeed, the results from combinatorial quantization developed in this paper only involve the representation of $\suquqt$. We define the auxiliary $U_{\bfq}(sl_2)\otimes U_{\wt \bfq}(sl_2)$ only for some convenience of using the $R$-operators in various intermediate steps of the combinatorial quantization.

For all $\xi\in\suquqt$, $\pi^\l(\xi)$ are defined on the common dense domain $\overline{\Fd}$ of $\bm{u}_{\a,\b},\widetilde{\bm{u}}_{\a,\b}$. $\overline{\Fd}$ is the set of $f(\mu,m)$ that can be analytic continued to entire functions in $\mu$ and satisfy
\be
e^{\alpha\frac{2\pi}{N}\re(b)\mu}f(\mu+i \beta\re(b),m)\in L^{2}(\mathbb{R}),
\qquad\forall\,m\in\mathbb{Z}/N\mathbb{Z},\quad\alpha,\b\in\mathbb{Z},\quad \mu\in\R.\label{domainD11}
\ee
In the same way as in Lemma \ref{Frechet}, $\overline{\Fd}$ is a Fr\'echet space with the seminorms $\Vert f\Vert_{\a,\b}=\Vert \bm{u}_{\a,\b}f\Vert$. The subspace $\Fd\subset\overline{\Fd}$ is defined by requiring additionally that $|f(\mu+i\eta,m)|\leq C_\eta e^{-a|\mu|}$ for any $a>0$ and $\eta\in\R$. We also define $\sw\simeq \Fw\otimes\C^N\subset \Fd$. $\overline{\Fd}$, ${\Fd}$ and $\sw$ are all dense in $\ch$ and  are invariant by the action of all Laurent polynomials of $\bm{u},\bmy$ and their tilded partners.


\subsection{Conjugate representation}

The representation $(\pi^\l,\ch)$ is associated with the conjugate representation $(\overline{\pi}^\l,\ch)$ defined by 
\be     
\overline{\pi}^{\l}\lt(\xi\rt)f=\overline{\pi^{\l}\left(S\left(\xi\right)^*\right)}f,\qquad \xi\in\suquqt,\quad \forall f\in \Fd.
\ee
Using the representation \eqref{repuandy} - \eqref{repuandy1} and $\overline{\bm{\co}\psi(\mu,m)}=\overline{\bm{\co}}\,\overline{\psi}(\mu,m)$, we obtain the complex conjugates of $\bm{u},\bm{y},\widetilde{\bm u},\widetilde{\bm y}$:
\be  
\overline{\bm{u}}=\widetilde{\bm{u}}^{-1},\qquad \overline{\widetilde{\bm{u}}}=\bm{u}^{-1},\qquad \overline{\bm{y}}=\widetilde{\bm{y}},\qquad \overline{\widetilde{\bm{y}}}={\bm{y}}
\ee
The conjugate representation of the generators can be written in terms of $\bm{u}_{\a,\b}$ and $\widetilde{\bm{u}}_{\a,\b}$:
\be
\overline{\pi}^\l(K)&=&\bm{u}_{-1,0},\qquad \overline{\pi}^\l(K^{-1})=\bm{u}_{1,0},\qquad \overline{\pi}^\l(F)=\frac{iq^{-1}}{\bm{q}-\bm{q}^{-1}}\bm{u}_{1,1}\\
\overline{\pi}^\l(E)&=&\frac{iq}{\bm{q}-\bm{q}^{-1}}\left[\left(\lambda+\lambda^{-1}\right)\bm{u}_{-1,-1}+\bm{u}_{0,-1}+\bm{u}_{-2,-1}\right],\\
\overline{\pi}^\l(\widetilde{K})&=&\widetilde{\bm{u}}_{-1,0},\qquad\overline{\pi}^\l(\widetilde{K}^{-1})=\widetilde{\bm{u}}_{1,0},\qquad \overline{\pi}^\l(\widetilde{F})=\frac{i\widetilde{q}^{-1}\widetilde{\bm{q}}^{-1}}{\widetilde{\bm{q}}-\widetilde{\bm{q}}^{-1}}\widetilde{\bm{u}}_{1,1},\\
\overline{\pi}^\l(\widetilde{E})&=&\frac{i\widetilde{q}\widetilde{\bm{q}}}{\widetilde{\bm{q}}-\widetilde{\bm{q}}^{-1}}\left[\left(\overline{\lambda}+\overline{\lambda}^{-1}\right)\widetilde{\bm{u}}_{-1,-1}+\widetilde{\bm{u}}_{0,-1}+\widetilde{\bm{u}}_{-2,-1}\right].
\ee
It is straight-forward to check that $\overline{\pi}^\l$ gives a representation of \eqref{EFKalg1} and \eqref{EFKalg2}. It is also clear that $\overline{\pi}^\l$ is not a $*$-representation. 

As a representation, $\overline{\pi}^\l$ is equivalent to $\pi^\l$ in the sense that there exist densely defined operator $\bm{V}:\ \overline{\Fd}\to\overline{\Fd}$, such that 
\be    
\bm{V}\overline{\pi}^\l(\xi)\bm{V}^{-1}={\pi}^\l(\xi),\qquad \forall\xi\in\suquqt. \label{equivRepw}
\ee     
Indeed, we introduce the following unitary transformation 
\be
\mathscr{T}f\left(\mu,m\right)=\frac{1}{N^{2}}\sum_{n,m_1\in\mathbb{Z}/N\mathbb{Z}}\int d\nu d\mu_{1}\,e^{\frac{2\pi i}{N}\left(\mu\nu-mn\right)}(-1)^{n^{2}}e^{\frac{\pi i}{N}\left(-\nu^{2}+n^{2}\right)}e^{-\frac{2\pi i}{N}\left(\mu_{1}\nu-m_{1}n\right)}f\left(\mu_{1},m_{1}\right)
\ee
satisfying 
\be
\mathscr{T}^{-1}\bm{u}_{\alpha,\beta}\mathscr{T}=\left(-1\right)^{\beta}\bm{u}_{\alpha-\beta,\beta},\qquad\mathscr{T}^{-1}\widetilde{\bm{u}}_{\alpha,\beta}\mathscr{T}=\left(-1\right)^{\beta}\widetilde{\bm{u}}_{\alpha-\beta,\beta},
\ee
and the affine shift
\be 
\sigma_{1}f\left(\mu,m\right)=f\left(\mu+\frac{i}{2}\left(b+b^{-1}\right),m\right)
\ee
satisfying 
\be
\sigma_{1}\bm{u}_{\alpha,\beta}\sigma_{1}^{-1}=\bm{q}^{\beta}\bm{u}_{\alpha,\beta},\qquad\sigma_{1}\widetilde{\bm{u}}_{\alpha,\beta}\sigma_{1}^{-1}=\widetilde{\bm{q}}^{\beta}\widetilde{\bm{u}}_{\alpha,\beta}.
\ee
$\bm{V}$ in \eqref{equivRepw} is given by 
\be  
\bm{V}=\sig_1\mathscr{T}^{-1}.
\ee
Then one can check that \eqref{equivRepw} holds for all $\suquqt$ generators.

\subsection{Invariant bilinear form}\label{Invariant bilinear form}

We define a bilinear functional on $\overline{\Fd}\otimes\overline{\Fd}$
\begin{eqnarray}
\Psi_{\l}\lt[\phi\otimes\psi\rt]:=\langle \overline{\bm{V}^{-1}\phi} \mid \psi\rangle	
\end{eqnarray}
for any $\phi,\psi\in\overline{\Fd}$. If we uses the Fourier transformation $\psi(\nu,n)=\frac{1}{N}\sum_{m\in\Z/N\Z}\int \rmd \mu\, e^{-\frac{2\pi i}{N}\left(\mu\nu-mn\right)}\psi\left(\mu,m\right)$, the operators $\bm{u},\bmy$ are represented by $\bm{u}\psi\left(\nu,n\right)=e^{-\frac{2\pi i}{N}\left[-ib\nu-n\right]}\psi\left(\nu,n\right)$, $\bm{y}\psi\left(\nu,n\right)=\psi\left(\nu+ib,n-1\right)$. Then the bilinear form $\Psi_\l$ has the following integral expression:
\be 
\Psi_{\l}\lt[\phi\otimes\psi\rt]=\sum_{n\in\mathbb{Z}/N\mathbb{Z}}\int d\nu\left[(-1)^{n^{2}}e^{\frac{\pi i}{N}\left(-\nu^{2}+n^{2}\right)}e^{\frac{\pi}{N}\left(b+b^{-1}\right)\nu}\right]\phi\left(\nu,n\right)\psi\left(-\nu,-n\right).\label{psilambdaintegral}
\ee

\begin{lemma}\label{existinvbilinear}
$\Psi_\l$ is an invariant bilinear form:
\begin{eqnarray}
	\Psi_{\lambda}\lt[\lt(\pi^\l\otimes\pi^{\l}\rt)\lt(\Delta\xi\rt)f\rt]=\eps(\xi)\Psi_{\l}\lt[f\rt]\label{invPsilambda2}
	\end{eqnarray}
for any $f\in\overline{\Fd}\otimes\overline{\Fd}$.
\end{lemma}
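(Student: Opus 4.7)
My plan is to translate the claimed invariance into a short operator identity on $\ch$ by combining three ingredients already at hand: (i) the intertwining relation $\bm{V}\,\overline{\pi}^\l(\xi)\,\bm{V}^{-1}=\pi^\l(\xi)$ from \eqref{equivRepw}, (ii) the $*$-representation property $\pi^\l(\eta)^\dagger=\pi^\l(\eta^*)$ valid on $\overline{\Fd}$, and (iii) the Hopf algebra axiom $m\circ(S\otimes\mathrm{id})\circ\Delta=\eps\cdot 1$. Written in Sweedler notation $\Delta\xi=\xi_{(1)}\otimes\xi_{(2)}$, the whole derivation is a chain of substitutions ending at the antipode relation $S(\xi_{(1)})\xi_{(2)}=\eps(\xi)\cdot 1$.

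Concretely, I would start from
$$
\Psi_\l\bigl[(\pi^\l\otimes\pi^\l)(\Delta\xi)(\phi\otimes\psi)\bigr]=\langle\,\overline{\bm{V}^{-1}\pi^\l(\xi_{(1)})\phi}\,\mid\,\pi^\l(\xi_{(2)})\psi\,\rangle.
$$
The first step would be to push $\pi^\l(\xi_{(1)})$ through $\bm{V}^{-1}$ using (i), replacing $\bm{V}^{-1}\pi^\l(\xi_{(1)})$ by $\overline{\pi}^\l(\xi_{(1)})\bm{V}^{-1}$. Next, from the very definition $\overline{\pi}^\l(\xi)=\overline{\pi^\l(S(\xi)^*)}$ one has $\overline{\,\overline{\pi}^\l(\xi)\,f\,}=\pi^\l(S(\xi)^*)\,\bar f$, which converts the left-hand argument of the pairing into $\pi^\l(S(\xi_{(1)})^*)\,\overline{\bm{V}^{-1}\phi}$. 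Transferring $\pi^\l(S(\xi_{(1)})^*)$ to the right entry by taking its adjoint uses (ii) together with $*^2=\mathrm{id}$ and yields $\pi^\l(S(\xi_{(1)}))$ acting on $\pi^\l(\xi_{(2)})\psi$. Finally, $\pi^\l(S(\xi_{(1)})\xi_{(2)})=\eps(\xi)\,\mathrm{id}$ collapses the expression to $\eps(\xi)\,\langle\overline{\bm{V}^{-1}\phi}\mid\psi\rangle=\eps(\xi)\,\Psi_\l[\phi\otimes\psi]$.

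The algebra is essentially a one-line manipulation; the real work is functional-analytic. I would need to verify that each of $\bm{V}^{-1}$, $\pi^\l(\xi_{(1)})$, $\pi^\l(\xi_{(2)})$, $\overline{\pi}^\l(\xi_{(1)})$, and $\pi^\l(S(\xi_{(1)})^*)$ preserves the common dense domain $\overline{\Fd}$, so that all pairings converge absolutely and the adjoint move is not merely formal. A clean way to manage this is to first establish the identity on the smaller core $\sw$ (on which the Gaussian shift $\sig_1$ and the Fourier piece $\mathscr T^{-1}$ that build $\bm V$ act transparently) and then extend by continuity in the Fr\'echet topology of $\overline{\Fd}$ guaranteed by Lemma \ref{Frechet}. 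The main obstacle I anticipate is therefore not algebraic but bookkeeping: checking that the intertwining \eqref{equivRepw}, which is verified on generators, extends as an operator identity on $\overline{\Fd}$ to arbitrary products in $\suquqt$, and that the conjugation identity $\overline{\overline{\pi}^\l(\xi)f}=\pi^\l(S(\xi)^*)\bar f$ used above is valid on the same domain for the non-polynomial elements that appear when $\xi_{(1)}$ involves $K^{\pm 1}$ or $\widetilde K^{\pm 1}$.
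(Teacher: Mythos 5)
Your proposal is correct and follows essentially the same route as the paper's own proof: intertwine $\bm{V}^{-1}\pi^\l(\xi_{(1)})=\overline{\pi}^\l(\xi_{(1)})\bm{V}^{-1}$, convert the conjugate representation to $\pi^\l(S(\xi_{(1)})^*)$ acting on $\overline{\bm{V}^{-1}\phi}$, move it across the inner product using the $*$-representation property and $*^2=\mathrm{id}$, and collapse with $S(\xi_{(1)})\xi_{(2)}=\eps(\xi)1$. Your added attention to domain preservation on $\overline{\Fd}$ is a reasonable refinement of bookkeeping the paper leaves implicit, not a different argument.
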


\begin{proof}
Given $\phi, \psi\in\overline{\Fd}$, 
	\begin{eqnarray}
		&&\sum_{\xi}\lag \overline{ \bm{w}^{-1}{\pi}^{\l}\left(\xi^{(1)}\right)  \phi}\mid \pi^{\lambda}\left(\xi^{(2)}\right)\psi\rag
		=\sum_{\xi}\lag \overline{ \overline{\pi}^{\l}\left(\xi^{(1)}\right)\bm{w}^{-1} {\phi}}\mid \pi^{\lambda}\left(\xi^{(2)}\right)\psi\rag\nonumber\\
		&=&\sum_{\xi}\lag  \pi^{\l}\left(S(\xi^{(1)})^*\right)\overline{\bm{w}^{-1} {\phi}}\mid \pi^{\lambda}\left(\xi^{(2)}\right)\psi\rag
		=\varepsilon\left(\xi\right)\lag \overline{\bm{w}^{-1} {\phi}}\mid\psi\rag.
	\end{eqnarray}
\end{proof}

We define $\overline{\mathfrak{D}}_{2}\subset{\cal H}\otimes{\cal H}$ the common domain of the operator $\bm{u}_{\alpha,\beta}^{(1)},\bm{u}_{\alpha,\beta}^{(2)},\widetilde{\bm{u}}_{\alpha,\beta}^{(1)},\widetilde{\bm{u}}_{\alpha,\beta}^{(2)}$. By the same argument in Lemma \ref{Frechet}, $\overline{\mathfrak{D}}_2$ is Fr\'echet with the semi-norms
\be
\left\Vert f\right\Vert _{\vec{\alpha},\vec{\beta}}=\left\Vert \bm{u}_{\alpha_{1},\beta_{1}}^{(1)}\bm{u}_{\alpha_{2},\beta_{2}}^{(2)}f\right\Vert ,\qquad \forall f\in\overline{\Fd}_2,\quad \a,\b\in\Z.\label{seminormsD2}
\ee
We denote by $\overline{\mathfrak{D}}_2'$ the space of continuous linear functionals on $\overline{\mathfrak{D}}_2$. We also define the subspace $\Fd_2$ as before by requiring additional  $|f(\vec{\mu}+i\vec{\eta},\vec{m})|\leq C_{\vec \eta} e^{-a|\vec{\mu}|}$ for any $a>0$ and $\vec{\eta}\in\R^2$.

\begin{lemma}\label{bilinearextend}
$\Psi_\lambda$ extends to a continuous invariant linear functional on $\overline{\mathfrak{D}}_2$.
\end{lemma}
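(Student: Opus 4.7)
The strategy is to exhibit $\Psi_\lambda$ as a functional built from a distributional kernel, bound it by a finite sum of Fr\'echet semi-norms of $\overline{\Fd}_2$, and then propagate the invariance from the algebraic tensor product by continuity. First I would rewrite the integral representation \eqref{psilambdaintegral} in the direct picture: substituting the Fourier inversions of $\tilde\phi,\tilde\psi$ and evaluating the Fresnel-type Gaussian integral in $\nu$ (whose quadratic coefficient $-\pi i/N$ is purely imaginary, so contour completion is unambiguous) produces
\be
\Psi_\lambda[f]=\sum_{m_1,m_2\in\Z/N\Z}\int d\mu_1\,d\mu_2\,K(\mu_1-\mu_2,m_1-m_2)\,f(\mu_1,m_1,\mu_2,m_2),\nonumber
\ee
an expression that is unambiguous for every $f\in\overline{\Fd}_2$ since super-exponential decay of $f$ dominates the one-sided exponential growth of $K$. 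A careful evaluation of the Gaussian yields $|K(X,Y)|\leq C\,e^{-(2\pi/N)\re(b)X}$; the only real exponential direction is $X=\mu_1-\mu_2$, all other dependence being oscillatory.

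Next I would establish continuity by a Cauchy-Schwarz estimate with the weight $w(\mu_1,\mu_2)=e^{a(|\mu_1|+|\mu_2|)}$, $a=2\cdot(2\pi/N)\re(b)$. A direct check (splitting the $\mu_i$ into their four sign quadrants) shows $Kw^{-1}\in L^2(d\mu_1 d\mu_2\times(\Z/N\Z)^2)$. Splitting $e^{a|\mu_i|}\le e^{a\mu_i}+e^{-a\mu_i}$ and using that
\be
\|\bm{u}_{0,\beta}^{(i)}f\|^2=\sum_{m_1,m_2}\int e^{(4\pi/N)\re(b)\beta\mu_i}|f|^2\,d\mu_1 d\mu_2,\nonumber
\ee
one obtains $\|wf\|_{L^2}\leq\sum_{\sigma_1,\sigma_2\in\{\pm\}}\|f\|_{(0,2\sigma_1),(0,2\sigma_2)}$, so that
\be
|\Psi_\lambda[f]|\leq C'\sum_{\sigma_1,\sigma_2\in\{\pm\}}\|f\|_{(0,2\sigma_1),(0,2\sigma_2)},\nonumber
\ee
a finite sum of the semi-norms \eqref{seminormsD2}; hence $\Psi_\lambda$ is continuous on $\overline{\Fd}_2$.

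Finally I would propagate the invariance. On $\overline{\Fd}\otimes\overline{\Fd}\subset\overline{\Fd}_2$, Fubini shows the kernel formula reproduces $\langle\overline{\bm{V}^{-1}\phi}\mid\psi\rangle$, so the extension agrees with the original $\Psi_\lambda$. For any $\xi\in\suquqt$, $(\pi^\lambda\otimes\pi^\lambda)(\Delta\xi)$ is a finite polynomial in $\bm{u}_{\alpha,\beta}^{(i)}$ and $\widetilde{\bm{u}}_{\alpha,\beta}^{(i)}$, hence continuous on $\overline{\Fd}_2$. Both sides of \eqref{invPsilambda2} are therefore continuous in $f$ and agree on the dense subspace $\sw\otimes\sw$ by Lemma \ref{existinvbilinear}, so the invariance holds for all $f\in\overline{\Fd}_2$. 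The main obstacle is the explicit evaluation of the Fresnel $\nu$-integral producing $K$: one must verify that the combined quadratic phases $e^{-\pi i\nu^2/N}e^{-(2\pi i/N)X\nu}$ leave only a one-sided real exponential in $X$. Were the real growth two-sided in $\mu_1,\mu_2$ separately, the weight $w$ would need to contain an additional exponential in $\mu_1+\mu_2$, which cannot be controlled by finitely many Fr\'echet semi-norms.
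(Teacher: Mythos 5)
Your proof is essentially sound but follows a genuinely different route from the paper's. The paper never passes to a position-space convolution kernel: it keeps $\Psi_\lambda$ as the weighted restriction of $f$ to the anti-diagonal in a mixed Fourier/position representation, writes that restriction as $\frac{1}{N}\sum_{m_2}\int d\mu_2\, e^{\frac{2\pi i}{N}(\mu_2\nu_1-m_2n_1)}f(\nu_1,\mu_2,n_1,m_2)$, and then obtains continuity by pairing against a fixed $L^2$ function $g$ built from $[\,e^{\frac{2\pi i}{N'}(\cdot)}+e^{-\frac{2\pi i}{N'}(\cdot)}]^{-1}$ factors, absorbing the leftover operators into the semi-norms $\Vert\bm{u}_1^{l\pm1}\bm{y}_2^{\pm l}f\Vert$. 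Your weighted Cauchy--Schwarz against $K w^{-1}$ plays exactly the role of the paper's $\langle g\mid\cdot\rangle$ pairing, and your identification of the single non-oscillatory direction $X=\mu_1-\mu_2$ with decay rate $\frac{2\pi}{N}\re(b)$ (coming from $b+b^{-1}=2\re(b)$ since $|b|=1$) is correct, as is the reduction $\Vert wf\Vert\le\sum_{\sigma_1,\sigma_2}\Vert f\Vert_{(0,2\sigma_1),(0,2\sigma_2)}$. The density argument for the invariance is the same as the paper's. What your version buys is an explicit, representation-independent kernel and a cleaner accounting of which semi-norms are actually needed; what the paper's version buys is that it never has to make sense of an oscillatory $\nu$-integral against an exponentially growing weight.

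That last point is the one genuine gap. The triple integral you implicitly invoke when "substituting the Fourier inversions and evaluating the Fresnel-type Gaussian integral in $\nu$" is \emph{not} absolutely convergent: the weight in \eqref{psilambdaintegral} contains $e^{\frac{\pi}{N}(b+b^{-1})\nu}=e^{\frac{2\pi}{N}\re(b)\nu}$, which grows in $\nu$, while the Fourier kernel is only oscillatory, so Fubini does not apply directly and the kernel $K$ is a priori only a conditionally convergent (distributional) object. This is repairable — insert $e^{-\epsilon\nu^2}$, apply Fubini to the now absolutely convergent integral, evaluate the genuine Gaussian, and pass $\epsilon\to0$ by dominated convergence, noting that the $\epsilon$-regularized kernel is uniformly dominated by $Ce^{\frac{2\pi}{N}\re(b)|\mu_1-\mu_2|}$, which is integrable against $|f|$ by the same weighted estimate — but as written the step is asserted rather than proved. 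Relatedly, the identification of the kernel formula with $\langle\overline{\bm{V}^{-1}\phi}\mid\psi\rangle$ should be carried out on $\Fd\otimes\Fd$ (or $\sw\otimes\sw$), where pointwise super-exponential decay is available, rather than on $\overline{\Fd}\otimes\overline{\Fd}$, whose elements only satisfy $L^2$ decay conditions; density then transfers the identification. Finally, your closing remark is off: a real exponential in $\mu_1+\mu_2$ \emph{would} still be controlled by the semi-norms, since $e^{c|\mu_1+\mu_2|}\le(e^{c\mu_1}+e^{-c\mu_1})(e^{c\mu_2}+e^{-c\mu_2})$ is again a sum of $\bm{y}_1^{\pm}\bm{y}_2^{\pm}$ weights; what would genuinely escape the Fr\'echet structure is super-exponential growth such as $e^{c\mu_1\mu_2}$. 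This does not affect the proof.
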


\begin{proof}
For any $f(\nu_1,\nu_2,n_1,n_2)\in \overline{\Fd}_2$, the action of $\Psi_\l$ is defined by
\be 
\Psi_{\l}\lt[f\rt]=\sum_{n_1\in\mathbb{Z}/N\mathbb{Z}}\int d\nu_1\bm{V}_1^{-1} f(\nu_1,-\nu_1,n_1,-n_1),\qquad \bm{V}_1^{-1}=(-1)^{n_1}e^{\frac{\pi i}{N}\left(-\nu^{2}_1+n^{2}_1\right)}e^{\frac{\pi}{N}\left(b+b^{-1}\right)\nu_1}.\label{PsiandV1}
\ee
which is an extension of \eqref{psilambdaintegral}. A useful relation is $\left\langle \bm{V}_1^{-1}f\mid\bm{V}_1^{-1}f\right\rangle =\left\langle \bm{u}_1^{-1}f\mid\bm{u}_1^{-1}f\right\rangle $. We define ${g\left(\nu_{1},\mu_2,n_{1},m_2\right)}=[e^{\frac{2\pi i}{N'}(-ib^{-1}\nu_{1}+n_{1})}+e^{-\frac{2\pi i}{N'}(-ib^{-1}\nu_{1}+n_{1})}]^{-1}[e^{\frac{2\pi i}{N'}(-ib^{-1}\mu_2+m_2)}+e^{-\frac{2\pi i}{N'}(-ib^{-1}\mu_2+m_2)}]^{-1}\in{\cal H}\otimes{\cal H}$ where $N'=N/l$ is odd ($l\in\Z_+$) so that $g$ has no pole.
\be
\lt|\Psi_{\l}\lt[f\rt]\rt|&=&\lt|\sum_{n_{1},m_{2}\in\mathbb{Z}/N\mathbb{Z}}\frac{1}{N}\int d\nu_{1}d\mu_{2}\bm{V}_{1}^{-1}e^{\frac{2\pi i}{N}\left(\mu_{2}\nu_{1}-m_{2}n_{1}\right)}f(\nu_{1},\mu_{2},n_{1},m_{2})\rt|\nonumber\\
&=&\lt|\lag g\mid\bm{V}_{1}^{-1}e^{\frac{2\pi i}{k}\left(\mu_{2}\nu_{1}-m_{2}n_{1}\right)}f/\overline{g}\rag\rt|
\leq \Vert g \Vert\,\Vert(\bm{u}_1^l+\bm{u}^{-l}_1)(\bm{y}^l_2+\bm{y}^{-l}_2)\bm{V}_1^{-1} f\Vert\nonumber\\
&=&\Vert g \Vert\,\Vert(\bm{u}_1^{l-1}+\bm{u}_1^{-l-1})(\bm{y}^l_2+\bm{y}^{-l}_2) f\Vert\nonumber\\
&\leq& \Vert g \Vert\lt(\Vert\bm{u}_1^{l-1}\bm{y}_2^l f\Vert+\Vert\bm{u}_1^{l-1}\bm{y}_2^{-l} f\Vert+\Vert\bm{u}_1^{-l-1}\bm{y}_2^l f\Vert+\Vert\bm{u}_1^{-l-1}\bm{y}_2^{-l}f\Vert\rt).\label{trickg}
\ee 
This implies $\Psi_\l$ is continuous on $\overline{\Fd}_2$.

The invariance \eqref{invPsilambda2} can be extended for all $f\in\overline{\Fd}_2$: ${\Fd}\otimes{\Fd}$ is dense in $\overline{\Fd}_2$ by the Fr\'echet topology, consider any sequence $f_n\to f$ converges by the semi-norms, where $f_n\in {\Fd}\otimes{\Fd}$ and $f\in\overline{\Fd}_2$. Since $\lt(\pi^\l\otimes\pi^{\l}\rt)\lt(\Delta\xi\rt)$ is a polynomial of $\bm{u},\bm{y}$ and leaves $\overline{\Fd}_2$ invariant, we have $\Psi_{\lambda}\lt[f_n\rt]\to \Psi_{\lambda}\lt[f\rt]$ and $\Psi_{\lambda}\lt[\lt(\pi^\l\otimes\pi^{\l}\rt)\lt(\Delta\xi\rt)f_n\rt]\to \Psi_{\lambda}\lt[\lt(\pi^\l\otimes\pi^{\l}\rt)\lt(\Delta\xi\rt)f\rt]$ by the above continuity. Then $\Psi_{\lambda}\lt[\lt(\pi^\l\otimes\pi^{\l}\rt)\lt(\Delta\xi\rt)f\rt]=\eps(\xi)\Psi_{\lambda}\lt[f\rt]$ is implied by $\Psi_{\lambda}\lt[\lt(\pi^\l\otimes\pi^{\l}\rt)\lt(\Delta\xi\rt)f_n\rt]=\eps(\xi)\Psi_{\lambda}\lt[f_n\rt]$.

\end{proof}

\begin{theorem}\label{uniquebilinear}

The invariant bilinear form only exists for with $\l_1=\l_2^{\pm1}$. When  $\l_1=\l_2^{\pm1}$, the invariant bilinear form is unique up to complex rescaling in the space of linear functionals on $\Fd_2$. 

\end{theorem}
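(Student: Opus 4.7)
The plan is to impose the invariance of $\Phi$ successively under the generators of $\suquqt$, working in the Fourier picture where $\bm u, \wt{\bm u}$ act as multiplication operators. First, I would use the $K$- and $\wt K$-invariances, which read $\Phi[(\bm u^{-1}_{(1)}\bm u^{-1}_{(2)} - 1) f] = 0$ and the tilded analog. By a two-variable extension of Lemma \ref{distributiondelta}(1)---obtained by the same decomposition trick used in its proof---this should force $\Phi$ to be localized on the antidiagonal $\nu_1 + \nu_2 = 0$, $n_1 + n_2 \equiv 0 \pmod N$, so that
\begin{equation*}
\Phi[f] = \sum_{n_1 \in \Z/N\Z} \int_\R d\nu_1\, \phi(\nu_1, n_1)\, f(\nu_1, -\nu_1, n_1, -n_1)
\end{equation*}
for some linear functional $\phi$ on the antidiagonal restriction of $\Fd_2$.

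Next, since $\pi^\l(F) \propto \bm y$ and $\pi^\l(\wt F) \propto \wt{\bm y}$ do not depend on $\l$, the $F$-invariance $\Phi[(\bm y_{(1)} + \bm u_{(1)}\bm y_{(2)})f] = 0$ and its tilded counterpart should reduce, after a change of variables on the antidiagonal, to first-order difference equations
\begin{align*}
\phi(\nu_1 + ib, n_1 - 1) &= -\bfq^2\, e^{-y}\, \phi(\nu_1, n_1), \\
\phi(\nu_1 + ib^{-1}, n_1 + 1) &= -\wt{\bfq}^2\, e^{-\wt y}\, \phi(\nu_1, n_1),
\end{align*}
with $y = \frac{2\pi i}{N}(ib\nu_1 + n_1)$ and $\wt y = \frac{2\pi i}{N}(ib^{-1}\nu_1 - n_1)$. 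A direct check confirms that the kernel $\phi_0(\nu_1, n_1) = (-1)^{n_1^2} e^{\frac{\pi i}{N}(-\nu_1^2 + n_1^2)}\, e^{\frac{\pi}{N}(b+b^{-1})\nu_1}$ from Lemma \ref{bilinearextend} solves both, so the general solution has the form $\phi = p\cdot \phi_0$ with $p$ invariant under the two complex shifts.

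Then I would apply the $E$-invariance. Writing $E_\l = C_0[(\l + \l^{-1})\bm y^{-1} + \bfq\,\bm u\, \bm y^{-1} + \bfq^{-1}\bm u^{-1}\bm y^{-1}]$ and using $\Delta E = E \otimes K + 1\otimes E$ with $\pi^{\l_2}(K) = \bm u^{-1}_{(2)}$, the invariance on the antidiagonal produces a linear combination of copies of $\phi$ all evaluated on the test function $f(\nu_1, -\nu_1 - ib, n_1, -n_1 + 1)$. Substituting the $F$-equation above to eliminate shifts of $\phi$, a direct calculation shows that the four contributions from the $\bfq^{\pm 1}\bm u^{\pm 1}\bm y^{-1}$ terms on the two tensor factors cancel in pairs, leaving the purely algebraic condition
\begin{equation*}
\bigl[(\l_2 + \l_2^{-1}) - (\l_1 + \l_1^{-1})\bigr]\, \phi(\nu_1, n_1) = 0.
\end{equation*}
Hence a nontrivial $\Phi$ requires $\l_1 + \l_1^{-1} = \l_2 + \l_2^{-1}$, that is $\l_1 = \l_2^{\pm 1}$; the $\wt E$-invariance gives the complex conjugate identity, which is automatic.

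For existence, the case $\l_1 = \l_2$ is Lemma \ref{bilinearextend}, and the case $\l_1 = \l_2^{-1}$ follows by composing with the intertwiner realizing $\pi^\l \simeq \pi^{\l^{-1}}$ noted after Lemma \ref{irred0}. The main obstacle is the uniqueness: the freedom $\phi = p\,\phi_0$ with $p$ doubly quasi-periodic must be pinned to $p = \mathrm{const}$. I expect this to follow from a rigidity argument: among distributions on $\R\times\Z/N\Z$ producing well-defined linear functionals on antidiagonal restrictions of $\Fd_2$, the only ones invariant under both complex lattice shifts from the second step are constants. A more conceptual alternative is to identify $\Phi$ with a morphism $\pi^{\l_1}\otimes\pi^{\l_2}\to \C$ of $\suquqt$-representations and invoke a Schur-type argument using the irreducibility of $\pi^\l$ from Lemma \ref{irred0}.
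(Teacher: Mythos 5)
Your overall route is the same as the paper's: localize the functional using the $K,\widetilde K$-invariances, pin it down using the $F,\widetilde F$-invariances, and extract the constraint $\l_1+\l_1^{-1}=\l_2+\l_2^{-1}$ from the $E$-invariance (your final algebraic identity matches the computation in Appendix \ref{invbilinearwithE}, and your existence argument for $\l_1=\l_2^{-1}$ via the equivalence $\pi^\l\simeq\pi^{\l^{-1}}$ is exactly the paper's remark). The problem is the uniqueness step, which you yourself flag as "the main obstacle" and then only gesture at. After localizing to the antidiagonal you treat $\phi$ as a \emph{function} satisfying two first-order difference equations, write the general solution as $p\cdot\phi_0$ with $p$ doubly quasi-periodic, and assert that a rigidity argument should force $p=\mathrm{const}$. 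This is not justified: $\phi$ is a priori an arbitrary linear functional on the antidiagonal restriction of $\Fd_2$ (the theorem explicitly claims uniqueness in the space of \emph{all} linear functionals, with no continuity and no kernel representation), so it need not be given by integration against a kernel, and even for kernels the statement that the only shift-invariant multiplier among admissible distributions is a constant is precisely the nontrivial rigidity you would need to prove. Your proposed alternative via Schur's lemma also does not close the gap, since irreducibility of $\pi^\l$ alone says nothing about the multiplicity of invariant bilinear pairings on a tensor product of infinite-dimensional representations.

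The paper closes exactly this gap by a second application of Lemma \ref{distributiondelta}: the $F$- and $\widetilde F$-invariances are recast as the statement that the functional $f_1\mapsto w[\bm{V}_1 f_1]$ annihilates the ranges of $(\bm{y}_1-1)$ and $(\widetilde{\bm{y}}_1-1)$, and the lemma (whose proof is the explicit decomposition $f=(\bm{y}-1)g+(\widetilde{\bm{y}}-1)h+f(0,0)\phi_0$, valid for arbitrary linear functionals) then forces $w[\bm{V}_1 f_1]=C\sum_{n_1}\int d\nu_1\, f_1$. To repair your argument you should replace the difference-equation/kernel step by this functional-level application of Lemma \ref{distributiondelta} (part (2), after conjugating by $\bm{V}_1$), which is available earlier in the paper and which you already use for the first localization; the combination of the holomorphic and antiholomorphic conditions is what kills the quasi-periodic freedom, and this is the whole content of the lemma rather than something that can be waved through.
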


\begin{proof}
Consider an arbitrary tensor product representation $\pi^{\l_1}\otimes\pi^{\l_2}$ on $\overline{\Fd}_2$. For any $\xi\in\suquqt$, an invariant bilinear form $\Psi$ satisfies 
	\begin{eqnarray}
	\Psi\lt[\lt(\pi^{\l_1}\otimes\pi^{\l_2}\rt)\lt(\Delta\xi\rt)f\rt]=\eps(\xi)\Psi\lt[f\rt],\qquad \forall f\in \overline{\Fd}_2\label{invbilinearf}
	\end{eqnarray}
We are going to explicitly solve \eqref{invbilinearf} for $\Psi$ with $\xi=K$, $KF$, $E$ and their tilded partners, and the result determines $\Psi$: For any $f\left(\nu_{1},\nu_2,n_{1},n_2\right)\in\ch\otimes\ch $, we define a unitary transformation $Uf\left(\nu_{1},\nu,n_{1},n\right)=f\left(\nu_{1},\nu-\nu_{1},n_{1},n-n_{1}\right)$. We have the following relations on $\overline{\Fd}_2$
\be
	U\bm{u}_{\alpha,\beta}^{(2)}U^{-1}=\bm{u}_{-\alpha,0}^{(1)}\bm{u}_{\alpha,\beta},\qquad U\bm{u}_{\alpha,\beta}^{(1)}U^{-1}=\bm{u}_{\alpha,\beta}^{(1)}\bm{u}_{0,\beta}
\ee
where $\bm{u}_{\a,\b}^{(1)}$ and $\bm{u}_{\a,\b}$ operate on $(\nu_1,n_1)$ and $(\nu,n)$ respectively. We define $\Psi^U$ by $\Psi^U[Uf]=\Psi[f]$. 

Consider $Uf(\nu_1,n_1;\nu,n)=f_1(\nu_1,n_1)\phi(\nu,n)$ with arbitrary $f_1,\phi\in\Fd$. Eq.\eqref{invbilinearf} with $\xi=K,\widetilde{K}$ gives $\Psi^{U}\left[f_{1}\otimes\left(\bm{u}^{-1}-1\right)\phi\right]=\Psi^{U}\left[f_{1}\otimes\left(\wt{\bm{u}}^{-1}-1\right)\phi\right]=0$. By Lemma \ref{distributiondelta}, we obtain $\Psi^{U}\left[f_{1}\otimes\phi\right]=w\left[f_{1}\right]\phi\left(0,0\right)$, where $w$ is a linear functional on $\Fd$.
	
Applying $\xi=KF$ and $Uf=\bm{V}_1\bm{u}_1^{-1}f_1\otimes\phi$ to \eqref{invbilinearf} results in $w\left[\bm{V}_1\left(\bm{y}_1-1\right)f_{1}\right]=0$, for all $ f_1,\phi\in \Fd$. Similarly, applying $\xi=\wt{K}\wt{F}$ and $Uf=(\bm{V}_1\wt{\bm{u}}_1^{-1}f_1\otimes\phi)$ to \eqref{invbilinearf} results in $w\left[\bm{V}_1\left(-\widetilde{\bm{y}}_1+1\right)f_{1}\right]=0$ for all $ f_1,\phi\in \Fd$. 
	As a result, we obtain $w\left[\bm{V}_1f_1\right]=C\sum_{n_1\in\Z/N\Z}\int d\nu_1 f_1\left(\nu_1,n_1\right)$. We determine
	\be 
	\Psi^{U}\left[f_{1}\otimes\phi\right]=C\sum_{n_1\in\mathbb{Z}/N\mathbb{Z}}\int_\R \rmd \nu_1\left(\bm{V}_{1}^{-1}f_{1}\right)\left(\nu_{1},n_{1}\right)\phi\left(0,0\right)
	\ee
For $f=U^{-1}(f_1\otimes \phi)$,  $\Psi[f]=\Psi^U[Uf]$ is identical to $\Psi_\l[f]$ in \eqref{PsiandV1} up to a rescaling. Although we only considered a few special cases of \eqref{invbilinearf}, the expression of $\Psi$ has already uniquely determined up to an overall constant.

Finally, applying $\xi=E$ to \eqref{invbilinearf} results in the constraint in the representation
	\be 
	\lambda_{1}+\lambda_{1}^{-1}=\lambda_{2}+\lambda_{2}^{-1}
	\ee
See Appendix \ref{invbilinearwithE}. The invariant bilinear form only exists for $\pi^{\l_1}\otimes\pi^{\l_2}$ with $\l_1=\l_2^{\pm1}$. We only need to consider $\l_1=\l_2$ since $\pi^\l$ and $\pi^{\l^{-1}}$ are manifestly equivalent. The existence of invariant bilinear form has been proven in Lemma \ref{existinvbilinear} and \ref{bilinearextend}.
	
\end{proof}

Note that in the above proof, the invariant bilinear form is determined without using the continuity on $\Fd_2$. Therefore $\Psi_\l$ is unique in the space of linear functionals, which is larger than $\overline{\Fd}_2'$.

\section{Complex Chern-Simons theory and discrete connections}\label{Complex Chern-Simons theory and discrete connections}

The classical $\Slc$ Chern-Simons theory on an oriented 3-manifold $M_3$ is given by the action
\be
S(\ca,\overline{\ca})=\frac{k+is}{8 \pi} \int_M \operatorname{Tr}\left(\ca \wedge d \ca+\frac{2}{3} \ca \wedge \ca \wedge \ca\right)+\frac{k-is}{8 \pi} \int_M \operatorname{Tr}\left(\overline{\ca} \wedge d \overline{\ca}+\frac{2}{3} \overline{\ca} \wedge \overline{\ca} \wedge \overline{\ca}\right),
\ee
where $k\in\mathbb{Z}_+$ and $s\in\R$. The holomorphic connection $\ca=\ca^a t^a$ where the Lie algebra basis $t^a$ is arbitrary and gives $\mathrm{Tr}(t^a t^b)=g^{ab}$, and the anti-holomorphic connection $\overline{\ca}$ is given by the complex conjugate. 

When we take $M\simeq \Sigma\times \R$ where $\Sigma$ is a 2-surface, the Chern-Simons theory induces the following nontrivial Poisson bracket to the connections on $\Sigma$
\be 
\lt\{\ca_i(z_1)\stackrel{\otimes}{,} \ca_j(z_2)\rt\}=\frac{4\pi}{k+is} \epsilon_{ij}\delta^{(2)}(z_1,z_2)\mathbf{t},\label{poisson1}\\
\lt\{\overline{\ca}_i(z_1)\stackrel{\otimes}{,} \overline{\ca}_j(z_2)\rt\}=\frac{4\pi}{k-is} \epsilon_{ij}\delta^{(2)}(z_1,z_2)\overline{\mathbf t},\label{poisson2}
\ee
where $\mathbf{t}=\sum_{a,b=1}^3(g^{-1})_{ab}t^a\otimes t^b\in sl_2\otimes sl_2$.

Let $\Sigma_{g,m}$ be a 2-surface of genus $g$ and $m$ marked points. We denote by $\pi_{g,m}=\pi_1(\Sigma_{g,m})$ the fundamental group of the 2-surface. $\pi_{g,m}$ is generated by $2g+m$ invertible generators $a_i,b_i,l_\nu$, $i=1,\cdots,g$, $\nu=1,\cdots,m$ subject to the relation,
\be 
[b_g,a_g^{-1}]\cdots[b_1,a_1^{-1}]l_{m}\cdots l_{1}=\rm{id}.
\ee
where $[x,y]=xyx^{-1}y^{-1}$. The holomorphic moduli space $\cm_{g,m}$ of $\Slc$ flat connection on $\Sigma$ is defined by 
\be 
\cm_{g,m}=\mathrm{Hom}\lt(\pi_{g,m},\Slc\rt)/\Slc
\ee 
The moduli space of $\Slc$ flat connection is given by including the anti-holomorphic part: $\cm_{g,m}\times \overline{\cm}_{g,m}$.

An $\Slc$ flat connection $(\ca,\overline{\ca})$ defines $\rho_{\ca,\overline{\ca}}\in \mathrm{Hom}\lt(\pi_{g,m},\Slc\rt)\times \mathrm{Hom}\lt(\pi_{g,m},\overline{\Slc}\rt) $ a representation of fundamental group. The representative of the generators are given by a pair of holomorphic and anti-holomorphic holonomies in 2-dimensional representations $\bf{2}$ and $\overline{\bf 2}$
\be
\lt(M_\nu,\overline{M}_\nu\rt)&=&\rho_{\ca,\overline{\ca}}(l_\nu)=\lt(\mathrm{Hol}_\ca(l_\nu),\overline{\mathrm{Hol}_\ca(l_\nu)}\rt),\label{MMbar}\\
\lt(A_i,\overline{A}_i\rt)&=&\rho_{\ca,\overline{\ca}}(l_\nu)=\lt(\mathrm{Hol}_\ca(a_i),\overline{\mathrm{Hol}_\ca(a_i)}\rt),\\
\lt(B_i,\overline{B}_i\rt)&=&\rho_{\ca,\overline{\ca}}(l_\nu)=\lt(\mathrm{Hol}_\ca(b_i),\overline{\mathrm{Hol}_\ca(b_i)}\rt).\label{BBbar}
\ee
where $\mathrm{Hol}_\ca$ as a $2\times 2$ matrix is the holonomy of the holomorphic $\ca$:
\be 
\mathrm{Hol}_\ca(c)=\calp\exp\int_c\ca\ ,
\ee
for any (oriented) curve $c$, and $\overline{\mathrm{Hol}_\ca(c)}=\mathrm{Hol}_{\overline{\ca}} (c)$. The gauge transformation acts on the holonomies by 
\be 
\mathrm{Hol}_\ca(c)\mapsto g_{t(c)}^{-1}\mathrm{Hol}_\ca(c)g_{s(c)},\qquad g_{t(c)},g_{s(c)}\in\Slc,\label{gaugetrans}
\ee 
where $s(c)$ and $t(c)$ are the source and target of the curve $c$.

Instead of the complex conjugate $\overline{\mathrm{Hol}_\ca(c)}$, it turns out to be convenient to consider
\be 
\widetilde{\mathrm{Hol}_\ca(c)}:=\lt(\mathrm{Hol}_\ca(c)^\dagger\rt)^{-1}.
\ee
Correspondingly, the gauge transformation of $\widetilde{\mathrm{Hol}_\ca(c)}$ is 
\be 
\widetilde{\mathrm{Hol}_\ca(c)}\mapsto \widetilde{g}_{t(c)}^{-1}\widetilde{\mathrm{Hol}_\ca(c)}\widetilde{g}_{s(c)},\qquad \widetilde{g}_{s(c)}=\lt({g}_{s(c)}^\dagger\rt)^{-1},\qquad \widetilde{g}_{t(c)}=\lt({g}_{t(c)}^\dagger\rt)^{-1}.
\ee 
If $t^a$ is anti-Hermitian, we have $\widetilde{\ca}^a=\overline \ca{}^a$, which gives $\widetilde{\mathrm{Hol}_\ca(c)}=\mathrm{Hol}_{\widetilde{\ca}}(c)$ with $\widetilde{\ca}=-{\ca}^\dagger$. We replace the pair of holomorphic and anti-holomorphic holonomies in \eqref{MMbar} - \eqref{BBbar} by $(M_\nu,\widetilde{M}_\nu)$, $(A_i,\widetilde{A}_i)$, $(B_i,\widetilde{B}_i)$, satisfying 
\be 
M_\nu^\dagger =\widetilde{M}_\nu^{-1},\qquad A_i^\dagger =\widetilde{A}_i^{-1},\qquad B_i^\dagger =\widetilde{B}_i^{-1},\label{daggerandtilde}
\ee 
as $2\times 2$ matrices.

The Lie algebra $sl_2$ is generated by $H,\cx,\cy$ satisfy the commutation relation $[H,\cx]=2\cx,\ [H,\cy]=-2\cy,\ [\cx,\cy]=H$. The anti-holomorphic counter-part is denoted by $\widetilde{sl_2}$, whose generators are $\widetilde H,\widetilde\cx,\widetilde\cy$ satisfying the same commutation relation. The $*$-structure on $sl_2\oplus\widetilde{sl_2}$ is given by
\be 
\cx^*=\widetilde{\cx},\qquad \cy^*=\widetilde{\cy},\qquad H^*=-\widetilde H
\ee
The representation on $V^I\otimes \widetilde{V}^J$ is given by the $\bfq,\widetilde{\bfq}\to 1$ limit of \eqref{rhorep1} - \eqref{rhorep3}:
\be 
&\rho^{I}\left(H\right)e_{m}^{I} =2m\, e_{m}^{I},
&\quad\qquad\qquad\widetilde{\rho}^{J}\left({\widetilde{H}}\right)\widetilde{e}_{m}^{J}={-2m}\, \widetilde{e}_{m}^{J}, \label{rhorepc1}\\
&\rho^{I}\lt(\cx\rt)e_{m}^{I} =\sqrt{(I-m)(I+m+1)}\, e_{m+1}^{I},&\quad \widetilde{\rho}^{J}\lt(\widetilde{{\cal Y}}\rt)\widetilde{e}_{m}^{J}=\sqrt{(J-m)(J+m+1)}\, \widetilde{e}_{m+1}^{J}, \label{rhorepc2}\\
&\rho^{I}\lt(\cy\rt)e_{m}^{I} =\sqrt{(I+m)(I-m+1)}\, e_{m-1}^{I}, &\quad \widetilde{\rho}^{J}\lt(\widetilde{{\cal X}}\rt)\widetilde{e}_{m}^{J}=\sqrt{(J+m)(J-m+1)}\, \widetilde{e}_{m-1}^{J}. \label{rhorepc3}
\ee 
Similar to Lemma \ref{rhodagger0}, we have $\rho^{I}\left(\xi\right)^T=\widetilde\rho^{I}\left(\xi^*\right)$ as the representation matrices. We consider the following linear map $W^I:\,V^I\to V^I$, whose matrix elements in $\{e_m^I\}$ are given by
\be 
(W^I)_{\ \ n}^{m}=\left[\rho^{I}(-i\sigma^2)\right]_{\ n}^{m}=\delta_{m,-n}(-1)^{j-n}
\ee
where $\sigma^{a}$, $a=1,2,3$ are Pauli matrices. $W$ gives the following transformation to the representation matrix w.r.t. $\{e_m^I\}$ (or $\wt{e}_m^I$)
\be 
W^I\rho^{I}\left(\xi\right)(W^I)^{-1}=-{\rho}^{I}\left(\xi\right)^T,\qquad \forall\, \xi \in sl_2.
\ee
We still denote by $\rho^I$ the representation of the Lie group $\Slc$ on $V^I$, and we have
\be 
W^I\rho^{I}\left(g\right)(W^I)^{-1}={\rho}^{I}\left(g^{-1}\right)^T,\qquad g\in\Slc.\label{WrhogW}
\ee
Apply this relation to the holonomies $M_\nu,A_i,B_i$, we obtain 
\be 
W^I \overline{M_\nu^I} (W^I)^{-1}=\widetilde{M}_\nu^I,\qquad W^I \overline{A_i^I} (W^I)^{-1}=\widetilde{A}_i^I,\qquad W^I \overline{B_i^I} (W^I)^{-1}=\widetilde{B}_i^I.\label{WMWM}
\ee
where ${M}_\nu^I=\rho^I({M}_\nu)$ and the same for ${A}_i^I,{B}_i^I$.

A ciliated fat graph $\G$ on $\Sig_{g,m}$ divide $\Sig_{g,m}$ into plaquettes. Each plaquette is either contractible or contains a unique hole. The orientation of the surface induces at each vertex a cyclic order on the adjacent edges. The ciliated fat graph implements an additional linear order at each vertex. When drawing the underlying fat graph on a sheet of paper in such a way that the cyclic order is counterclockwise everywhere, The ciliated fat graph has at each vertex a small cilium separating the minimal and the maximal end incident to that vertex.

\begin{figure}[h]
	\centering
	\includegraphics[width=0.5\textwidth]{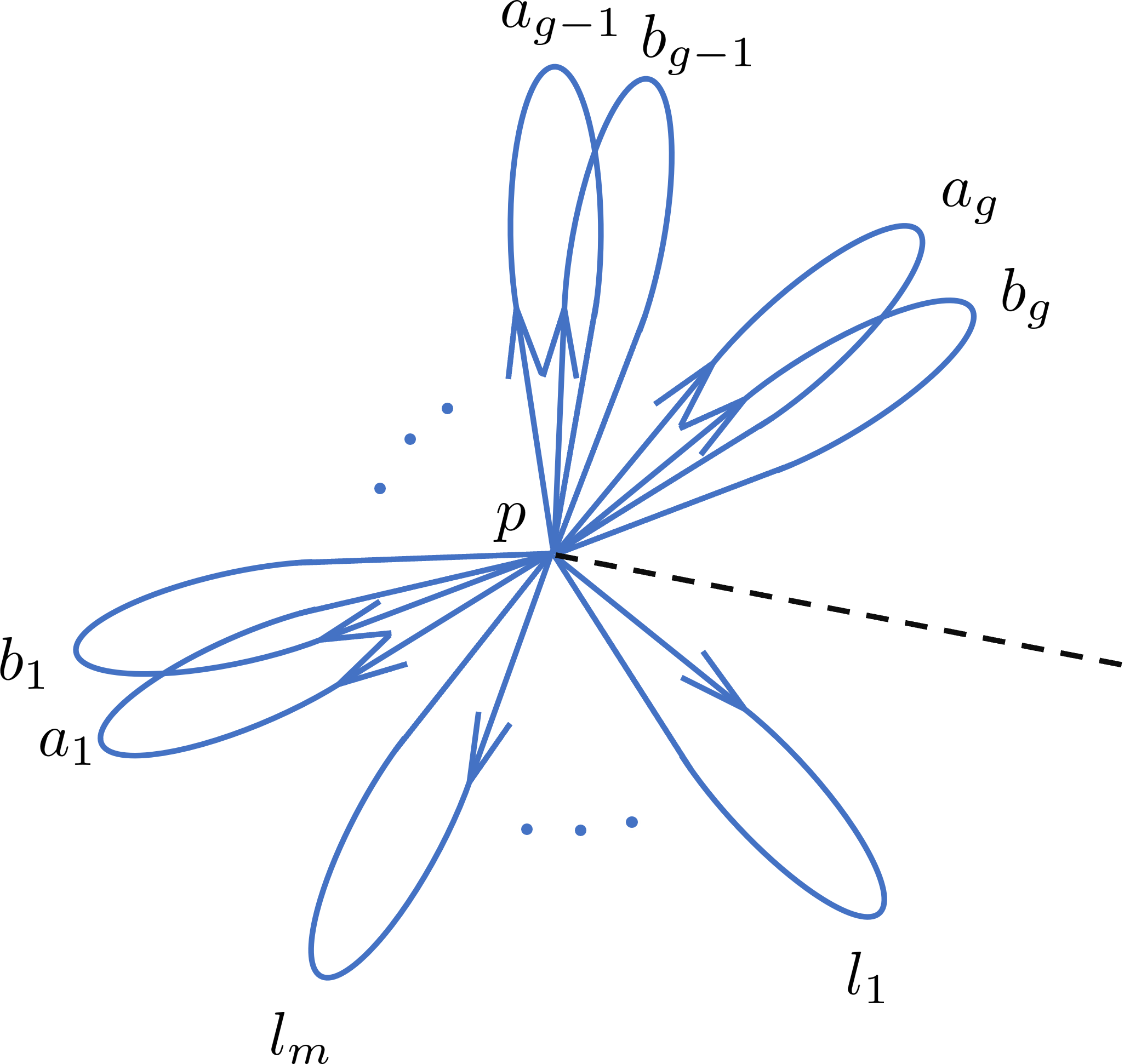}
	\caption{The standard graph $\G_{g,m}$: the cyclic order at the vertex $p$ is counterclockwise. The dashed line is the cilium.}
	\label{graph}
\end{figure}

The standard graph $\G_{g,m}$ consists a single vertex $p$ and generators $a_i,b_i,l_\nu$ of $\pi_{g,m}$ as loops based at $p$ (see FIG.\ref{graph}). In the following, we focus on the standard graph and the associated holonomies $M_\nu,A_i,B_i,\widetilde{M}_\nu,\widetilde{A}_i,\widetilde{B}_i$ to proceed the analysis. The results about operator algebra of physical observables and physical Hilbert space from the combinatorial quantization are expected to not depend on the choice of ciliated fat graph \cite{Alekseev:1994au}.

We define the space of discrete holomorphic connections $\Fa(\G_{g,m})\simeq\Slc^{\times (2g+m)}$ to be the space of holomorphic holonomies $M_\nu,A_i,B_i$. The anti-holomorphic counterpart $\widetilde{\Fa(\G_{g,m})}$ is the space of $\widetilde{M}_\nu,\widetilde{A}_i,\widetilde{B}_i$. The matrix elements of $M_\nu,A_i,B_i,\widetilde{M}_\nu,\widetilde{A}_i,\widetilde{B}_i$ and identity generate the algebra of polynomial functions on the space of discrete connections. We denote this algebra by $\Ff_{g,m}$. Any $f\in\Ff_{g,m}$ invariant under the gauge transformation \eqref{gaugetrans} defines a function on the moduli space $\cm_{g,m}\times \overline{\cm}_{g,m}$, by restricting the variables to satisfy 
\be 
[B_g,A_g^{-1}]\cdots[B_1,A_1^{-1}]M_{m}\cdots M_{1}= [\widetilde B_g,\widetilde A_g^{-1}]\cdots[\widetilde B_1,\widetilde A_1^{-1}]\widetilde M_{m}\cdots\widetilde M_{1}=1.
\ee

Given $X\in sl_2$ and $f(g)$ a holomorphic function on $\Slc$, we define the left and right invariant holomorphic differentials
\be 
<X,\nabla_L f(g)>=\left.\frac{d}{d t} f\left(e^{-t X} g\right)\right|_{t=0}, \qquad 
<X,\nabla_R f(g)>=\left.\frac{d}{d t} f\left(g e^{t X}\right)\right|_{t=0} .
\ee 
similarly, the left and right invariant anti-holomorphic differentials are defined by 
\be  
<\overline{X},\overline{\nabla}_L \wt f(\overline{g})>=\left.\frac{d}{d t} \wt f\left(e^{-t \overline{X}} \overline{g}\right)\right|_{t=0}, \qquad 
<\overline{X},\overline{\nabla}_R \wt f(\overline{g})>=\left.\frac{d}{d t}\wt f\left(\overline{g} e^{t \overline{X}}\right)\right|_{t=0} ,
\ee
We use the following notations for differentials acting on $M_\nu, A_i,B_i$:
\be 
	\nabla_{2 \nu-1}&=&\nabla_R^{M_\nu} \quad, \quad \nabla_{2 \nu}=\nabla_L^{M_\nu} \quad \text { for } \quad \nu=1, \ldots, m ; \\
	\nabla_{m+4 i-3}&=&\nabla_R^{A_i} \quad, \quad \nabla_{m+4 i-1}=\nabla_L^{A_i} \quad \text { for } \quad i=1, \ldots, g ; \\
	\nabla_{m+4 i-2}&=&\nabla_R^{B_i} \quad, \quad \nabla_{m+4 i}=\nabla_L^{B_i} \quad \text { for } i=1, \ldots, g ,
\ee 
similar for the anti-holomorphic counterpart.

The classical $r$-matrix $r\in sl_2\otimes sl_2$ satisfies the classical Yang-Baxter equation 
\be 
\lt[r_{12},r_{13}\rt]+\lt[r_{12},r_{23}\rt]+\lt[r_{13},r_{23}\rt]=0.
\ee 
where e.g. $r_{12}=r\otimes 1$ and
\be
r=\frac{1}{2}\left(H\otimes H\right)+2\left(\cx\otimes \cy\right).
\ee
Given a solution $r$ of the Yang-Baxter equation, $r'$ defined by permuting two copies of $sl_2$ also satisfies the Yang-Baxter equation.

The Fock-Rosly poisson bracket \cite{Fock:1998nu} of holomorphic functions $f,f'$ on $\Fa(\G_{g,m})$ is defined by the following formula
\be 
\lt\{f\stackrel{\otimes}{,} f'\rt\}_{\rm FR}=\frac{1}{2} \sum_i<r, \nabla_i f \wedge \nabla_i f'>+\sum_{i<j}<r, \nabla_i f \wedge \nabla_j f'>.
\ee 
For the anti-holomorphic functions $\widetilde{f},\widetilde{f}'$ on $\overline{\Fa(\G_{g,m})}$, 
\be 
\lt\{\widetilde{f}\stackrel{\otimes}{,} \widetilde{f}'\rt\}_{\overline{\rm FR}}=\frac{1}{2} \sum_i<\overline{r}, \overline{\nabla}_i \widetilde{f} \wedge \overline{\nabla}_i \widetilde{f}'>+\sum_{i<j}<\overline{r}, \overline{\nabla}_i \widetilde{f} \wedge \overline{\nabla}_j \widetilde{f}'>.
\ee 
The poisson bracket of smooth functions $F,F'$ on $\Fa(\G_{g,m})\times \overline{\Fa(\G_{g,m})}$ is defined by
\be	
\{F\stackrel{\otimes}{,} F'\}&=&\frac{4\pi}{k+is}\{F\stackrel{\otimes}{,} F'\}_{\rm FR}
+\frac{4\pi}{k-is}\{F\stackrel{\otimes}{,} F'\}_{\overline{\rm FR}}.
\ee
The poisson bracket between a holomorphic and an anti-holomorphic functions vanishes. This Poisson bracket restricted to the moduli space of flat connections coincides with the poisson bracket computed from \eqref{poisson1} and \eqref{poisson2} \cite{Fock:1998nu}.

The Fock-Rosly poisson brackets among $A_i,B_i,M_\nu$ can be computed, and some examples are 
\be 
\left\{ M_{\nu}^{I}\stackrel{\otimes}{,} M_{\nu}^{J}\right\}_{\rm FR}&=&r^{IJ}\left(M_{\nu}^{I}\otimes M_{\nu}^{J}\right)-\left(M_{\nu}^{I}\otimes M_{\nu}^{J}\right)(r^{\prime})^{IJ}\nonumber\\
&&-\left(M_{\nu}^{I}\otimes1\right)r^{IJ}\left(1\otimes M_{\nu}^{J}\right)+\left(1\otimes M_{\nu}^{J}\right)(r^{\prime} )^{IJ}\left(M_{\nu}^{I}\otimes1\right),\label{poissonFR1}\\
\left\{ M_{\nu}^{I}\stackrel{\otimes}{,} M_{\mu}^{J}\right\}_{\rm FR} &=& \left(M_{\nu}^{I}\otimes M_{\mu}^{J}\right)r^{IJ}+r^{IJ}\left(M_{\nu}^{I}\otimes M_{\mu}^{J}\right)\nonumber\\
&&-\left(M_{\nu}^{I}\otimes1\right)r^{IJ}\left(1\otimes M_{\mu}^{J}\right)-\left(1\otimes M_{\nu}^{J}\right)r^{IJ}\left(M_{\nu}^{I}\otimes1\right),\qquad \nu<\mu,\\
\left\{ A_{i}^{I}\stackrel{\otimes}{,} B_{i}^{J}\right\}_{\rm FR} &=& \left(A_{i}^{I}\otimes B_{i}^{J}\right)r^{IJ}+r^{IJ}\left(A_{i}^{I}\otimes B_{i}^{J}\right)\nonumber\\
&&-\left(A_{i}^{I}\otimes1\right)r^{IJ}\left(1\otimes B_{i}^{J}\right)+\left(1\otimes B_{i}^{J}\right)(r^{\prime})^{IJ}\left(A_{i}^{I}\otimes1\right).\label{poissonFR3}
\ee 
where $I,J$ label the finite-dimensional irreducible representation of the holonomies.

We define the anti-holomorphic $r$-matrix $\widetilde r\in \widetilde{sl_2}\otimes \widetilde{sl_2}$ is by 
\be
\widetilde r=\frac{1}{2}\left(\widetilde H\otimes \widetilde H\right)+2\left(\widetilde\cx\otimes \widetilde\cy\right).
\ee
We have the following relations as the representation matrices
\be 
W^{IJ}r^{IJ}(W^{-1})^{IJ}=\widetilde{r}^{IJ},\qquad W^{IJ}(r')^{IJ}(W^{-1})^{IJ}=(\widetilde{r}')^{IJ}.\label{WrWr}
\ee
where $W^{IJ}=W^I\otimes W^J$ and $(W^{-1})^{IJ}=(W^{-1})^I\otimes (W^{-1})^J$.

The poisson brackets $\{\cdot \stackrel{\otimes}{,}  \cdot\}_{\overline{\mathrm{FR}}}$ of the complex conjugates $\overline{M_\nu^I},\overline{A_i^I},\overline{B_i^I}$ are given by the complex conjugates of \eqref{poissonFR1} - \eqref{poissonFR3}. Considering the matrices are with respect to the basis $\{e^I_m\}$ and $\{\wt{e}^J_m\}$, we have $\overline{r^{IJ}}=r^{IJ}$ and $\overline{(r')^{ IJ}}=(r')^{IJ}$. We sandwich the equations by $W^{IJ}$ and $(W^{-1})^{IJ}$ and use \eqref{WrWr} and \eqref{WMWM}, then we obtain the tilded version of \eqref{poissonFR1} - \eqref{poissonFR3}, e.g.
\be 
\left\{\widetilde M_{\nu}^{I}\stackrel{\otimes}{,} \widetilde M_{\nu}^{J}\right\}_{\overline{\rm FR}}&=&\widetilde r^{IJ}\left(\widetilde M_{\nu}^{I}\otimes \widetilde M_{\nu}^{J}\right)-\left(\widetilde M_{\nu}^{I}\otimes \widetilde M_{\nu}^{J}\right)(\widetilde r^{\prime})^{IJ}\nonumber\\
&&-\left(\widetilde M_{\nu}^{I}\otimes1\right)\widetilde r^{IJ}\left(1\otimes \widetilde M_{\nu}^{J}\right)+\left(1\otimes \widetilde M_{\nu}^{J}\right)(\widetilde r^{\prime} )^{IJ}\left(\widetilde M_{\nu}^{I}\otimes1\right),
\ee
and similar for other two poisson brackets. The algebra $\Ff_{g,m}$ of polynomial functions becomes a poisson $*$-algebra by the implementation of the poisson bracket $\{\cdot,\cdot\}$ to the generators $M_\nu,A_i,B_i,\widetilde{M}_\nu,\widetilde{A}_i,\widetilde{B}_i$, while the $*$-structure is given by \eqref{daggerandtilde}.

We implement a poisson bracket to the group of gauge transformation at $p$:
\be 
\lt\{g^I\stackrel{\otimes}{,} g^J\rt\}&=&-r^{IJ}\left(g^I\otimes g^J\right)+\left(g^I\otimes g^J\right)r^{IJ},\qquad g\in \Slc,\label{poissongg1}\\
\lt\{\widetilde{g}^I\stackrel{\otimes}{,} \widetilde{g}^J\rt\}&=&-\widetilde{r}^{IJ}\left(\widetilde{g}^I\otimes \widetilde{g}^J\right)+\left(\widetilde{g}^I\otimes \widetilde{g}^J\right)\widetilde{r}^{IJ},\qquad \widetilde{g}\in \widetilde{\Slc},\label{poissongg2}\\
\lt\{g^I\stackrel{\otimes}{,} \widetilde{g}^J\rt\}&=&0\label{poissongg3}.
\ee 
so the gauge group becomes a poisson Lie group. The holomorphic and anti-holomorphic gauge transformations
\be 
(\Slc \times \widetilde{\Slc})\times \Ff_{g,m}\to \Ff_{g,m}
\ee 
are poisson maps.

The matrix elements of $M_\nu,A_i,B_i,\widetilde{M}_\nu,\widetilde{A}_i,\widetilde{B}_i$ that are the generators of $\Ff_{g,m}$ are constraint by e.g. $\det(M_\nu)=1$, since they are $\Slc$ holonomies. However, for the convenience of quantization, we formulate $\Ff_{g,m}$ as follows:

$\Ff_{g,m}$ is the poisson algebra spanned by polynomials of the matrix elements of $M^I_\nu,A^I_i,B^I_i,\widetilde{M}^I_\nu,\widetilde{A}^I_i,\widetilde{B}^I_i$, $I\in \mathbb{Z}_+/2$ and their inverses, subject to the poisson brackets and the following product rules
\be
M_{\nu}^{I}\otimes M_{\nu}^{J} &=&\sum_{K}(c_{1}^{IJ})_K {M}_\nu^{K}(c_{2}^{IJ})^{K},\qquad
A_i^{I}\otimes A_i^{J} =\sum_{K}(c_{1}^{IJ})_K A_i^{K}(c_{2}^{IJ})^{K},\label{prodr1}\\
B_i^{I}\otimes B_i^{J}& =&\sum_{K}(c_{1}^{IJ})_K B_i^{K}(c_{2}^{IJ})^{K},\qquad \widetilde M_{\nu}^{I}\otimes\widetilde M_{\nu}^{J} =\sum_{K}(\widetilde c_{1}^{IJ})_K\widetilde {M}_\nu^{K}(\widetilde c_{2}^{IJ})^{K},\label{prodr2}\\
\widetilde A_i^{I}\otimes\widetilde A_i^{J} &=&\sum_{K}(\widetilde c_{1}^{IJ})_K \widetilde A_i^{K}(\widetilde c_{2}^{IJ})^{K},\qquad \widetilde B_i^{I}\otimes\widetilde B_i^{J} =\sum_{K}(\widetilde c_{1}^{IJ})_K \widetilde B_i^{K}(\widetilde c_{2}^{IJ})^{K},\label{prodr3}
\ee
where $ c_{1,2}^{IJ}$ and $\widetilde c_{1,2}^{IJ}$ are the CG maps for $sl_2$ and $\widetilde{sl_2}$ and are the classical analog of $C_{1,2}^{IJ}$ and $\widetilde C_{1,2}^{IJ}$. That $M_\nu,A_i,B_i,\widetilde{M}_\nu,\widetilde{A}_i,\widetilde{B}_i$ have unit determinants is implied by taking $I,J=1/2$ in \eqref{prodr1} - \eqref{prodr3} and using $(c_1^{1/2,1/2})_0=(c_2^{1/2,1/2})^0=\frac{i}{\sqrt{2}}\sigma^2$.


\section{Quantization of discrete connections}\label{Quantization of discrete connections}

\subsection{Graph algebra}

The quantum deformation parameters used in Section \ref{Quantum Lorentz group at level-k} relates to the complex Chern-Simons level by
\be
&&q=e^{h/2},\qquad \bfq=q^2=e^h,\qquad h =\frac{4\pi i}{k+is}=\frac{2\pi i}{k}(1+b^2), \\
&&\widetilde{q}=e^{\widetilde{h}/2},\qquad \widetilde{\bfq}=\widetilde{q}^2=e^{\widetilde h},\qquad \widetilde{h} =\frac{4\pi i}{k-is}=\frac{2\pi i}{k}(1+b^{-2}).
\ee
The parameter $b$ satisfying $|b|=1$, $\mathrm{Re}(b)>0$ and $\im(b)>0$ relates to the ratio $\gamma=s/k$ by
\be
i\gamma =\frac{1-b^2}{1+b^2}.
\ee

We apply the technique of combinatorial quantization \cite{Alekseev:1994pa,Alekseev:1994au,Alekseev:1995rn} to the complex Chern-Simons theory with nonzero level $k$: As the quantization of (the holomorphic part of) $\Ff_{g,m}$, the holomorphic graph algebra $\cl_{g,m}$ is an associative algebra spanned by {polynomials} of the generators that include the identity $\bm{1}$ and the matrix elements of ${\bm A}^I_i, {\bm B}^I_i, {\bm M}^I_\nu\in \mathrm{End}(V^I)\otimes\cl_{g,m} $,  $\nu=1,\cdots,m$, $i=1,\cdots,g$. The vector space $ V^I$, $I\in \mathbb{Z}_+/2$, carries the finite-dimensional irreducible representation of $\uq$. We denote by $R=\sum_i R^{(1)}_i\otimes R^{(2)}_i$ the $R$-element in $\uq$ and by $R^\prime=\sum_i R^{(2)}_i\otimes R^{(1)}_i$. The generators in $\cl_{g,m}$ are subject to the following product rule and exchange relations. 

\begin{itemize}

\item The product rules involving only a single loop:
\be 
\left(R^{-1}\right){\!}^{I J}\stackrel{1}{\bm A}{\!}_i^I R^{I J}\stackrel{2}{\bm A}{\!}_i^J&=&\sum_{K}(C_{1}^{IJ})_K\bm{A}^{K}_i(C_{2}^{IJ})^K, \label{prodrquantum1}\\ 
\left(R^{-1}\right){\!}^{I J}\stackrel{1}{\bm B}{\!}_i^I R^{I J}\stackrel{2}{\bm B}{\!}_i^J&=&\sum_{K}(C_{1}^{IJ})_K\bm{B}^{K}_i(C_{2}^{IJ})^K, \label{prodrquantum2} \\  
\left(R^{-1}\right){\!}^{I J}\stackrel{1}{\bm M}{\!}_\nu^I R^{I J}\stackrel{2}{\bm M}{\!}_\nu^J&=&\sum_{K}(C_{1}^{IJ})_K\bm{M}^{K}_\nu (C_{2}^{IJ})^K, \label{prodrquantum3}
\ee 
where we use the notations $\stackrel{1}{\bm \co}={\bm \co}\otimes 1$ and $\stackrel{2}{\bm \co}=1\otimes {\bm \co}$ for ${\bm \co}\in\cl_{g,m}$. These relations quantize the corresponding product rules in \eqref{prodr1} - \eqref{prodr3} of $\Ff_{g,m}$.

\item The exchange relations involving two loops:
\be 
\left(R^{-1}\right){\!}^{I J}\stackrel{1}{\bm A}{\!}_i^I R^{I J}\stackrel{2}{\bm B}{\!}_i^J&=&\stackrel{2}{\bm B}{\!}_i^J\left(R^{\prime}\right){\!}^{I J}\stackrel{1}{\bm A}{\!}^I_i R^{I J},\label{loopeqnAB}\\
\left(R^{-1}\right){\!}^{I J}\stackrel{1}{\bm M}{\!}_\nu^I R^{I J}\stackrel{2}{\bm M}{\!}_\mu^J&=&\stackrel{2}{\bm M}{\!}_\mu ^J\left(R^{-1}\right){\!}^{I J}\stackrel{1}{\bm M}{\!}^I_\nu R^{I J},\qquad \nu <\mu\ ,\label{loopeqnMM}\\
\left(R^{-1}\right){\!}^{I J}\stackrel{1}{\bm M}{\!}_\nu^I R^{I J}\stackrel{2}{\bm A}{\!}_j^J&=&\stackrel{2}{\bm A}{\!}_j ^J\left(R^{-1}\right){\!}^{I J}\stackrel{1}{\bm M}{\!}^I_\nu R^{I J},\\
\left(R^{-1}\right){\!}^{I J}\stackrel{1}{\bm M}{\!}_\nu^I R^{I J}\stackrel{2}{\bm B}{\!}_j^J&=&\stackrel{2}{\bm B}{\!}_j ^J\left(R^{-1}\right){\!}^{I J}\stackrel{1}{\bm M}{\!}^I_\nu R^{I J},\\
\left(R^{-1}\right){\!}^{I J}\stackrel{1}{\bm A}{\!}_i^I R^{I J}\stackrel{2}{\bm A}{\!}_j^J&=&\stackrel{2}{\bm A}{\!}_j ^J\left(R^{-1}\right){\!}^{I J}\stackrel{1}{\bm A}{\!}^I_i R^{I J},\qquad i<j\ ,\\
\left(R^{-1}\right){\!}^{I J}\stackrel{1}{\bm B}{\!}_i^I R^{I J}\stackrel{2}{\bm B}{\!}_j^J&=&\stackrel{2}{\bm B}{\!}_j ^J\left(R^{-1}\right){\!}^{I J}\stackrel{1}{\bm B}{\!}^I_i R^{I J},\qquad i<j\ ,\\
\left(R^{-1}\right){\!}^{I J}\stackrel{1}{\bm A}{\!}_i^I R^{I J}\stackrel{2}{\bm B}{\!}_j^J&=&\stackrel{2}{\bm B}{\!}_j ^J\left(R^{-1}\right){\!}^{I J}\stackrel{1}{\bm A}{\!}^I_i R^{I J},\qquad i<j\ ,\\
\left(R^{-1}\right){\!}^{I J}\stackrel{1}{\bm B}{\!}_i^I R^{I J}\stackrel{2}{\bm A}{\!}_j^J&=&\stackrel{2}{\bm A}{\!}_j ^J\left(R^{-1}\right){\!}^{I J}\stackrel{1}{\bm B}{\!}^I_i R^{I J},\qquad i<j\ \label{loopeqn10}.
\ee

\end{itemize}

The relations \eqref{prodrquantum1} - \eqref{prodrquantum3} imply the following quadratic relations \cite{Alekseev:1994pa,Alekseev:1994au,Alekseev:1995rn} (see also Appendix \ref{Loop equation} for a derivation)
\be 
\left(R^{-1}\right){\!}^{I J}\stackrel{1}{\bm A}{\!}_i^I R^{I J}\stackrel{2}{\bm A}{\!}_i^J&=&\stackrel{2}{\bm A}{\!}_i^J\left(R^{\prime}\right){\!}^{I J}\stackrel{1}{\bm A}{\!}^I_i\left(R^{\prime -1}\right){\!}^{I J}, \label{lpeqn1}\\ 
\left(R^{-1}\right){\!}^{I J}\stackrel{1}{\bm B}{\!}_i^I R^{I J}\stackrel{2}{\bm B}{\!}_i^J&=&\stackrel{2}{\bm B}{\!}_i^J\left(R^{\prime}\right){\!}^{I J}\stackrel{1}{\bm B}{\!}^I_i\left(R^{\prime -1}\right){\!}^{I J}, \label{lpeqn2} \\  
\left(R^{-1}\right){\!}^{I J}\stackrel{1}{\bm M}{\!}_\nu^I R^{I J}\stackrel{2}{\bm M}{\!}_\nu^J&=&\stackrel{2}{\bm M}{\!}_\nu^J\left(R^{\prime}\right){\!}^{I J}\stackrel{1}{\bm M}{\!}^I_\nu\left(R^{\prime -1}\right){\!}^{I J},\label{lpeqn3}
\ee 
These quadratic relations are called loop equations.

If we relate the $R$-matrix to the classical $r$-matrix by the expansion
\be 
R^{IJ}=1\otimes 1+h\, r^{IJ}+O(h ^2).
\ee
Then the product rules for holomorphic holonomies in \eqref{prodr1} - \eqref{prodr3} of $\Ff_{g,m}$ are recovered by $h\to0$, and the Fock-Rosly poisson brackets of any pairs in $A_i,B_i,M_\nu$ are recovered from \eqref{loopeqnAB} - \eqref{lpeqn3} by 
\be
\lt[\bm{\co},\,\bm{\co}'\rt]=h\widehat{ \{\co ,\, \co' \}}_{\rm FR}+O(h^2),\qquad \co,\co'\in \Ff_{g,m}, \quad \bm{\co},\bm{\co}'\in\cl_{g,m},
\ee 
where $\widehat{\cdot}:\co\mapsto \bm{\co}$ is the quantization map.

Classically, the holonomies with $I=1/2$ have unit determinant, at the quantum level, the graph algebra implies the following result.

\begin{lemma}\label{qdetM}

The relations \eqref{prodrquantum1} - \eqref{prodrquantum3} imply $\mathrm{Det}_{\bfq^2}({\bm A}^{1/2}_i)=\mathrm{Det}_{\bfq^2}({\bm B}^{1/2}_i)=\mathrm{Det}_{\bfq^2}({\bm M}^{1/2}_\nu)=1$, where $\mathrm{Det}_{\bfq^2}({\bm M})={\bm M}^1_{\ 1}{\bm M}^2_{\ 2}-\bfq^2 {\bm M}^2_{\ 1}{\bm M}^1_{\ 2}$ is a quantum deformation of the determinant for $2\times 2$ matrix.

\end{lemma}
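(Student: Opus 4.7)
My plan is to specialize \eqref{prodrquantum3} to $I=J=1/2$ and decompose the right-hand side via $V^{1/2}\otimes V^{1/2}\simeq V^0\oplus V^1$. Since $\rho^0$ is the trivial representation of $\uq$, one has $\bm M^0_\nu=\bm 1$, so the $K=0$ summand reduces to the identity element of $\cl_{g,m}$; the $K=1$ summand introduces only the single extra generator $\bm M^1_{\ 00}$ (the weight-zero entry of the spin-$1$ block). I will then extract three scalar component equations from \eqref{prodrquantum3} and eliminate $\bm M^1_{\ 00}$ between them to leave a single relation among the matrix elements of $\bm M^{1/2}_\nu$.

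First, applying \eqref{Crho1} with $K=0$ and $\xi=F$ (using $\rho^0(F)=\eps(F)=0$ together with the explicit action of $\Delta F$ on $V^{1/2}\otimes V^{1/2}$) fixes the singlet Clebsch--Gordan coefficients as $(C_1^{1/2,1/2})^{12}_{00}=c_0$, $(C_1^{1/2,1/2})^{21}_{00}=-\bfq^{-1}c_0$, and analogously for $(C_2^{1/2,1/2})^{00}$, with the normalization $c_0 d_0(1+\bfq^{-2})=1$ imposed by \eqref{CCKK}. The weight-zero triplet coefficients likewise come out to $(C_1^{1/2,1/2})^{12}_{10}=\bfq^{-1}/\sqrt{[2]_\bfq}$, $(C_1^{1/2,1/2})^{21}_{10}=1/\sqrt{[2]_\bfq}$, $(C_2^{1/2,1/2})^{10}_{12}=1/\sqrt{[2]_\bfq}$, $(C_2^{1/2,1/2})^{10}_{21}=\bfq/\sqrt{[2]_\bfq}$. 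Using the explicit $R$ and $R^{-1}$ from \eqref{Rmatrixhalfrep}, the sparse index summation collapses the three relevant left-hand side entries of \eqref{prodrquantum3} to $\bm M^2_{\ 2}\bm M^1_{\ 1}$ at position $(21)(21)$, $\bfq\,\bm M^2_{\ 1}\bm M^1_{\ 2}$ at $(21)(12)$, and $\bm M^1_{\ 1}\bm M^2_{\ 2}-(\bfq^2-1)\bm M^2_{\ 1}\bm M^1_{\ 2}$ at $(12)(12)$, while the corresponding right-hand sides are affine combinations of $c_0 d_0\,\bm 1$ and $\bm M^1_{\ 00}/[2]_\bfq$ with coefficients read off from the Clebsch--Gordan formulas above.

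Eliminating $\bm M^1_{\ 00}$ between the $(21)(21)$ and $(21)(12)$ equations and invoking $c_0 d_0(1+\bfq^{-2})=1$ yields $\bm M^2_{\ 2}\bm M^1_{\ 1}-\bfq^2\bm M^2_{\ 1}\bm M^1_{\ 2}=\bm 1$. Substituting this identity together with the same $\bm M^1_{\ 00}$-elimination into the $(12)(12)$ equation then gives $\bm M^1_{\ 1}\bm M^2_{\ 2}-\bfq^2\bm M^2_{\ 1}\bm M^1_{\ 2}=\bm 1$, i.e.\ $\mathrm{Det}_{\bfq^2}(\bm M^{1/2}_\nu)=\bm 1$. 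The identical derivation based on \eqref{prodrquantum1} and \eqref{prodrquantum2} settles $\bm A^{1/2}_i$ and $\bm B^{1/2}_i$. The principal obstacle is simply the bookkeeping with the asymmetric $R$-matrix of \eqref{Rmatrixhalfrep} and the several prefactors of $q$ versus $\bfq=q^2$; the crucial structural fact is that the $\bm M^1_{\ 00}$ contributions from the three chosen matrix element equations conspire to cancel, leaving exactly the single scalar relation asserted by the lemma.
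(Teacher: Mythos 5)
Your proposal is correct and starts from the same point as the paper's proof: specialize \eqref{prodrquantum3} to $I=J=1/2$, use $V^{1/2}\otimes V^{1/2}\simeq V^0\oplus V^1$ with $\bm M^0_\nu=\bm 1$, and extract the singlet content. The execution differs in a way worth noting. The paper contracts the whole matrix equation with $(C_2^{\half\half})^0$ and $(C_1^{\half\half})_0$, which removes the $K=1$ term by the orthogonality \eqref{CCKK}, and then must invoke two exchange relations derived from the loop equation \eqref{lpeqn3} (including $[\bm M^1_{\ 1},\bm M^2_{\ 2}]=0$) to recognize the contracted left-hand side as $\mathrm{Det}_{\bfq^2}(\bm M)$. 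You instead take three weight-zero component equations and eliminate the single spin-$1$ unknown $\bm M^1_{\ 00}$. I checked your left-hand-side entries against \eqref{Rmatrixhalfrep} and your Clebsch--Gordan ratios against the intertwining property (the singlet condition $(C_1)^{21}_{\ 00}=-\bfq^{-1}(C_1)^{12}_{\ 00}$ and the orthogonality $(C_2)^{10}_{\ 21}=\bfq\,(C_2)^{10}_{\ 12}$ are what actually matter), and the elimination does yield $\bm M^2_{\ 2}\bm M^1_{\ 1}-\bfq^2\bm M^2_{\ 1}\bm M^1_{\ 2}=\bm 1$ and then $\bm M^1_{\ 1}\bm M^2_{\ 2}-\bfq^2\bm M^2_{\ 1}\bm M^1_{\ 2}=\bm 1$ as claimed; the only normalization discrepancy (your triplet coefficients differ from the standard ones by an overall $\bfq^{1/2}$) is harmless because only ratios and the singlet normalization $c_0d_0(1+\bfq^{-2})=1$ enter. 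Your route buys independence from the loop equations \eqref{lpeqn1}--\eqref{lpeqn3} and even recovers $[\bm M^1_{\ 1},\bm M^2_{\ 2}]=0$ as a byproduct of comparing the two orderings, at the cost of somewhat heavier index bookkeeping than the paper's projector contraction.
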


\begin{proof}
Consider \eqref{prodrquantum3} with $I=J=1/2$, $K$ on the right-hand side can only take values $K=0,1$. Contracting the equation with $(C_2^{\half\half})^0$ and $(C_1^{\half\half})_0$ picks up only the $K=0$ term on the right-hand side. We obtain
\be 
(C_2^{\half\half})^0_{ab}(R^{-1})_{\ ij}^{ab}({\bm{M}})_{\ k}^{i}R_{\ ml}^{kj}({\bm{M}})_{\ n}^{l}(C_1^{\half\half})_0^{mn}=1
\ee
We check that the left-hand side equals $\mathrm{Det}_{\bfq^2}({\bm M})$, by using the $R$-matrix in \eqref{Rmatrixhalfrep}, $(C_2^{\half\half})^0_{mn}=(C_1^{\half\half})_0^{mn}=\delta_{m,-n}\frac{(-1)^{1/2-m}}{\sqrt{[2]_{\bm{q}}}}\bm{q}^{m}$, and the relations from \eqref{lpeqn3}: $\left(\bfq-{\bfq}^{-1}\right) (\bm{M}^1_{\ 1})^2+\bfq \bm{M}^1_{\ 2}\bm{M}^2_{\ 1}-\bfq \bm{M}^2_{\ 1}\bm{M}^1_{\ 2}+\left({\bfq}^{-1}-\bfq\right) \bm{M}^2_{\ 2}\bm{M}^1_{\ 1}=0$ and $\bm{M}^2_{\ 2}\bm{M}^1_{\ 1}=\bm{M}^1_{\ 1}\bm{M}^2_{\ 2}$.

\end{proof}

The anti-holomorphic graph algebra $\widetilde{\cl}_{g,m}$ is defined by the following replacements applied to $\cl_{g,m}$:
\be 
\bm{A}^I_i\mapsto \widetilde{\bm{A}}^I_i, \qquad \bm{B}^I_i\mapsto \widetilde{\bm{B}}^I_i,\qquad \bm{M}^I_\nu\mapsto \widetilde{\bm{M}}^I_\nu,\qquad R\mapsto \widetilde{R}, \qquad C_{1,2}\mapsto \widetilde{C}_{1,2}\label{replacementtilde}
\ee
where $\widetilde{R}=\sum_i \widetilde{R}^{(1)}_i\otimes \widetilde{R}^{(2)}_i$ the $R$-element in $U_{\widetilde{h}}(sl_2)$. In analogy with Lemma \ref{qdetM}, we derive $\mathrm{Det}_{\wt{\bfq}^2}(\wt{\bm A}^{1/2}_i)=\mathrm{Det}_{\wt{\bfq}^2}(\wt{\bm B}^{1/2}_i)=\mathrm{Det}_{\wt{\bfq}^2}(\wt{\bm M}^{1/2}_\nu)=1$, where $\mathrm{Det}_{\wt{\bfq}^2}(\wt{\bm M})=\wt{\bm M}^1_{\ 1}\wt{\bm M}^2_{\ 2}-\wt{\bfq}^2 \wt{\bm M}^1_{\ 2}\wt{\bm M}^2_{\ 1}$.

The quantum graph algebra for the complex CS theory at level-$k$ is given by the tensor product between the holomorphic and anti-holomorphic parts: ${\cl}_{g,m}\otimes \widetilde{\cl}_{g,m}$.

\subsection{$*$-structure}

The classical poisson algebra $\Ff_{g,m}$ has the natural $*$-structure given by the complex conjugate. The quantization gives the $*$-structure to the graph algebra ${\cl}_{g,m}\otimes \widetilde{\cl}_{g,m}$. It is convenient to formulate the $*$-structure as the following
\be
(\bm{A}^I_i)^m_{\ \ n}{}^{ *}= [(\widetilde{\bm{A}}^I_i)^{-1}]^n_{\ m},\qquad 
(\bm{B}^I_i)^m_{\ \ n}{}^{ *}= [(\widetilde{\bm{B}}^I_i)^{-1}]^n_{\ m},\qquad 
(\bm{M}^I_\nu)^m_{\ \ n}{}^{ *}= [(\widetilde{\bm{M}}^I_\nu)^{-1}]^n_{\ m},\label{star structure}
\ee
as the quantum analog to \eqref{daggerandtilde}. $[(\bm{M}^I_\nu)^{-1}]^{m}_{\ n}$ and $[(\wt{\bm{M}}^I_\nu)^{-1}]^{m}_{\ n}$ are polynomial functions of $[\bm{M}^I_\nu]^i_{\ j}$ and $[\wt{\bm{M}}^I_\nu]^i_{\ j}$ repectively, and same for other generators. For $I=1/2$, we have
\be
(\bm{M}^{1/2}_\nu)^{-1}&=&\left(\begin{array}{cc}
\bm{q}^{2}\bm{M}_{\ 2}^{2}+\left(1-\bm{q}^{2}\right)\bm{M}_{\ 1}^{1} & -\bm{q}^{2}\bm{M}_{\ 2}^{1}\\
-\bm{q}^{2}\bm{M}_{\ 1}^{2} & \qquad\bm{M}_{\ 1}^{1}
\end{array}\right),\label{inverseM1}\\
(\widetilde{\bm{M}}^{1/2}_\nu)^{-1}&=&\left(\begin{array}{cc}
\widetilde{\bm{M}}_{\ 2}^{2} & -\widetilde{\bm{q}}^{2}\widetilde{\bm{M}}_{\ 2}^{1}\\
-\widetilde{\bm{q}}^{2}\widetilde{\bm{M}}_{\ 1}^{2} & \qquad\widetilde{\bm{q}}^{2}\widetilde{\bm{M}}_{\ 1}^{1}+\left(1-\widetilde{\bm{q}}^{2}\right)\widetilde{\bm{M}}_{\ 2}^{2}
\end{array}\right),\label{inverseM2}
\ee
where $\bm{M}^i_{\ j}\equiv [\bm{M}^{1/2}_\nu]^i_{\ j}$ and $\wt{\bm{M}}^i_{\ j}\equiv [\wt{\bm{M}}^{1/2}_\nu]^i_{\ j}$. For $I>1/2$, $(\bm{M}^{I}_\nu)^{-1}$ and $(\widetilde{\bm{M}}^{I}_\nu)^{-1}$ are obtained by 
\be
(\stackrel{2}{\bm{M}}{\!}^J_\nu)^{-1}({R}^{-1})^{IJ}(\stackrel{1}{\bm{M}}{\!}^I_\nu)^{-1}{R}^{IJ}=\sum_{K}({C}_{1}^{IJ})_{K}({\bm{M}}_\nu^{K})^{-1}({C}_{2}^{IJ})^{K},\\
(\stackrel{2}{\widetilde{\bm{M}}}{\!}^J_\nu)^{-1}(\widetilde{R}^{-1})^{IJ}(\stackrel{1}{\widetilde{\bm{M}}}{\!}^I_\nu)^{-1}\widetilde{R}^{IJ}=\sum_{K}(\widetilde{C}_{1}^{IJ})_{K}(\widetilde{\bm{M}}^{K}_\nu)^{-1}(\widetilde{C}_{2}^{IJ})^{K},\label{inverseLoopt}
\ee
which come from taking inverses of the left and right hand sides in the relation \eqref{prodrquantum3} and its tilded partner.

To show that the $*$-structure leave ${\cl}_{g,m}\otimes \widetilde{\cl}_{g,m}$ invariant, it is sufficient to show that the action of $*$-structure transforms the algebraic relations \eqref{prodrquantum1} - \eqref{lpeqn3} to their tilded partners: First, if we recover all matrix indices in e.g. \eqref{prodrquantum3},
\be 
(R^{-1})_{\ bj}^{ai}({\bm{M}})_{\ c}^{b}R_{\ dk}^{cj}({\bm{M}})_{\ l}^{k}=\sum_{K}(C_{1}^{IJ})_{\ Kj}^{ai}(\bm{M}^{K})_{\ k}^{j}(C_{2}^{IJ})_{\ \ dl}^{Kk}.
\ee
Apply the action of $*$ and use \eqref{star structure}, \eqref{Rdagger}, and \eqref{C1conjugate}, we obtain the relation \eqref{inverseLoopt}, which give the tilded partner of \eqref{prodrquantum3}:
\be 
(\widetilde R^{-1})_{\ bj}^{ai}(\widetilde{\bm{M}})_{\ c}^{b}\widetilde R_{\ dk}^{cj}(\widetilde{\bm{M}})_{\ l}^{k}=\sum_{K}(\widetilde C_{1}^{IJ})_{\ Kj}^{ai}(\widetilde{\bm{M}}^{K})_{\ k}^{j}(\widetilde C_{2}^{IJ})_{\ \ dl}^{Kk},
\ee
by taking the inverses of the left and right hand sides. The similar computations can be applied to \eqref{prodrquantum1} and \eqref{prodrquantum2} and obtain their tilded partners. In the above computation, the action of $*$ on the left-hand side followed by the inverse is equivalent to the replacements \eqref{replacementtilde}. The same action on the left and right hand sides of \eqref{loopeqnAB} - \eqref{lpeqn3} has the same effect and thus results in their tilded partners.



\subsection{Quantum group gauge invariance}

Recall that the group of gauge transformations $\Slc\times \wt{\Slc}$ is a poisson Lie group by the poisson brackets \eqref{poissongg3} - \eqref{poissongg3}. The combinatorial quantization deforms the gauge group to the quantum group. Indeed, the quantization of the poisson brackets gives the commutation relation
\be
R^{IJ}\stackrel{1}{g}{\!}^I\stackrel{2}{g}{\!}^J=\stackrel{2}{g}{\!}^{J}\stackrel{1}{g}{\!}^{I}R^{IJ},\qquad \widetilde{R}^{IJ}\stackrel{1}{\widetilde g}{\!}^I\stackrel{2}{\widetilde g}{\!}^J=\stackrel{2}{\widetilde g}{\!}^{J}\stackrel{1}{\widetilde g}{\!}^{I}\widetilde{R}^{IJ}.\label{Rgg=ggR0}
\ee
This is exactly the same as the commutation relation for the multiplication of $SL_{\bfq}(2)\otimes SL_{\widetilde \bfq}(2) $ (see \eqref{Rgg=ggR} in Appendix \ref{duality and star-Hopf algebra strucutre}), so we understand $g^I\in \mathrm{End}(V^I)\otimes SL_{\bfq}(2)$ and $\widetilde{g}^I\in \mathrm{End}(V^I)\otimes SL_{\widetilde \bfq}(2) $. 



It is straight-forward to check that \eqref{prodrquantum1} - \eqref{lpeqn3} are invariant under the following holomorphic gauge transformation
\be
\bm{M}^I_\nu\mapsto (g^I)^{-1}\bm{M}^I_\nu g^I,\qquad \bm{A}^I_i\mapsto (g^I)^{-1}\bm{A}^I_i g^I,\qquad \bm{B}^I_i\mapsto (g^I)^{-1}\bm{B}^I_i g^I,\label{quantumgauge0}
\ee 
where $(g^I)^{-1}=\cs(g^I)$. The results from the gauge transformation belong to $\mathrm{End}(V^I)\otimes\cl_{g,m}\otimes SL_{\bfq}(2)$. The action of $ *\otimes \star$ relates \eqref{quantumgauge0} to the anti-holomorphic gauge transformation 
\be
\widetilde{\bm{M}}^I_\nu\mapsto (\widetilde g^I)^{-1}\widetilde{\bm{M}}^I_\nu \widetilde g^I,\qquad \widetilde{\bm{A}}^I_i\mapsto (\widetilde g^I)^{-1}\widetilde{\bm{A}}^I_i \widetilde g^I,\qquad \widetilde{\bm{B}}^I_i\mapsto (\widetilde g^I)^{-1}\widetilde{\bm{B}}^I_i \widetilde{g}^I.
\ee 
which leaves the algebraic relations in $\widetilde \cl_{g,m}$ invariant.

We can let \eqref{quantumgauge0} act on $\xi\in\suq$: For instance
\be 
\xi\lt((\bm{M}^I_\nu)^i_{\ l}\rt)&:=&\lag \cs(g^I)^{i}_{\ j}(\bm{M}^I_\nu)^{j}_{\ k} (g^I)^{k}_{\ l},\xi\rag=\lag \cs(g^I)^{i}_{\ j} \otimes (g^I)^{k}_{\ l},\Delta\xi\rag(\bm{M}^I_\nu)^{j}_{\ k}\nonumber\\
&=&\sum_{\a}\rho^{I}\left(S(\xi_{\a}^{(1)})\right)^{i}_{\ j}(\bm{M}^{I}_\nu)^{j}_{\ k}\rho^{I}\left(\xi_{\a}^{(2)}\right)^{k}_{\ l}\label{xiM}
\ee 
and $\xi(\bm{A}^I_i),\xi(\bm{B}^I_i)$ are defined similarly. As a result, any $\xi\in \suq$ gives the holomorphic gauge transformation as an automorphism $\xi(\cdot):\ \cl_{g,m}\to \cl_{g,m}$.

For any $\widetilde{\zeta}\in \suqt$, the anti-holomorphic gauge transformation defines the automorphism $\widetilde{\zeta}(\cdot):\ \widetilde{\cl}_{g,m}\to \widetilde{\cl}_{g,m}$ by 
\be 
\widetilde{\zeta}\lt((\widetilde{\bm{M}}^I_\nu)^i_{\ l}\rt):=\sum_{\sigma}\widetilde \rho^{I}\left(S(\widetilde \zeta_{\sigma}^{(1)})\right)^{i}_{\ j}(\widetilde{\bm{M}}^{I}_\nu)^{j}_{\ k}\widetilde \rho^{I}\left(\widetilde \zeta_{\sigma}^{(2)}\right)^{k}_{\ l},\label{xiMtilde}
\ee
and similar for $\widetilde \zeta(\widetilde{\bm{A}}^I_i),\widetilde \zeta(\widetilde{\bm{B}}^I_i)$.

For the identity $\bm{1}$ in ${\cl}_{g,m}\otimes \widetilde{\cl}_{g,m}$, we define
\be 
\xi(\bm{1}):=\eps(\xi)\bm{1},\qquad \widetilde\xi(\bm{1}):=\eps(\widetilde \xi)\bm{1}.\label{modelalg1}
\ee
These can also be obtained by taking $I$ to be trivial representation in \eqref{xiM} and \eqref{xiMtilde} and using $\rho^I(\xi)=\eps(\xi)$ for the trivial representation. 

For any $\xi,\widetilde{\zeta}\in  \suquqt$ and $\bm{\co},\bm{\co}'\in {\cl}_{g,m}\otimes \widetilde{\cl}_{g,m}$, the gauge transformation on $\bm{\co}\bm{\co}'$ is given by
\be
\xi\lt(\bm{\co}\bm{\co}'\rt)=\sum_\alpha\xi^{(1)}_\a\lt(\bm{\co}\rt)\xi^{(2)}_\a\lt(\bm{\co}'\rt),\qquad \widetilde{\zeta}\lt(\bm{\co}\bm{\co}'\rt)=\sum_\alpha\widetilde\zeta^{(1)}_\a\lt(\bm{\co}\rt)\widetilde\zeta^{(2)}_\a\lt(\bm{\co}'\rt).\label{modelalg2}
\ee
It is sufficient to check for $\bm{\co},\bm{\co}'$ being $\bm{M}^I_\nu,\bm{A}^I_i,\bm{B}^I_i$ or their tilded counterparts: For instance, 
\be 
\xi\left(\bm{M}^{I}_\nu \otimes \bm{A}^J_i\right)
&=&\langle (g^I)^{-1}\bm{M}^I_\nu g^I\otimes(g^J)^{-1}\bm{A}^J_i g^J,\xi\rangle\nonumber\\
&=&\sum_{\alpha}\langle(g^I)^{-1}\bm{M}^I_\nu g^I,\xi_{\alpha}^{(1)}\rangle\otimes \langle(g^J)^{-1}\bm{A}^J_i g^J,\xi_{\alpha}^{(2)}\rangle\nonumber\\
&=&\sum_{\alpha}\xi_{\alpha}^{(1)}\left(\bm{M}^{I}_\nu\right)\otimes \xi_{\alpha}^{(2)}\left(\bm{A}^J_i\right)
\ee
where $\otimes$ is the matrix tensor product of $\mathrm{End}(V^I)\otimes \mathrm{End}(V^J)$.

The compatibility conditions \eqref{modelalg1} and \eqref{modelalg2} imply that ${\cl}_{g,m}\otimes \widetilde{\cl}_{g,m}$ is a module-algebra over the quantum gauge group $\suquqt$.


\subsection{Semi-direct product}\label{Semi-direct product}

Since the quantum group $\suquqt$ acts on the graph algebra ${\cl}_{g,m}\otimes \widetilde{\cl}_{g,m}$, we may extend the graph algebra to the semi-direct product 
\be 
\lt[{\cl}_{g,m}\otimes \widetilde{\cl}_{g,m}\rt]\rtimes \lt[\suquqt\rt]\simeq \lt[{\cl}_{g,m}\rtimes \suq\rt]\otimes\lt[ \widetilde{\cl}_{g,m} \rtimes \suqt\rt]\equiv \Fs_{g,m}\otimes\widetilde{\Fs}_{g,m}
\ee
where $\Fs_{g,m}={\cl}_{g,m}\rtimes \suq$ and $\widetilde{\cl}_{g,m} \rtimes \suqt$. 


Let us clarify the algebraic structure of the semi-direct product: Given a Hopf algebra $\cg$ (such as $\suquqt$) and a $\cg$-module algebra $\Fa$ (such as $\cl_{g,m}\otimes \widetilde{\cl}_{g,m}$), the semi-direct product $\Fa\rtimes \cg$ is an associative algebra made by the vector space $\Fa\otimes \cg$ and the following bilinear multiplication 
\be 
(\bm{a}\otimes\xi)(\bm{b}\otimes \zeta )=\sum_\alpha\bm{a}(\xi^{(1)}_\a\act \bm{b})\otimes\xi^{(2)}_\alpha \zeta,\qquad \bm{a},\bm{b}\in\Fa,\quad \xi,\zeta\in \cg,
\ee 
where $\act$ denotes the action $\cg$ on $\Fa$. In the following, we often use the notation $\bm{a}\otimes\xi\equiv \bm{a}\xi$. Note that $\xi$ and $\bm{a}$ are generally non-commutative: $\xi \bm{a}=(\bm{1}\otimes\xi)(\bm{a}\otimes 1 )=\sum_\alpha (\xi^{(1)}_\a\act \bm{a})\otimes\xi^{(2)}_\alpha \neq \bm{a}\xi$. The above semi-direct product is also called the smash product \cite{MOLNAR197729}.

A generic element in $\Fs_{g,m}\otimes \widetilde{\Fs}_{g,m}$ is written as $\bm{\co}\widetilde{\bm \co}'\xi\widetilde{\zeta}$, where $\bm{\co}\widetilde{\bm \co}'\in {\cl}_{g,m}\otimes \widetilde{\cl}_{g,m}$ and $\xi\widetilde{\zeta}\in \suquqt$. The left multiplication of $\xi$ gives the holomorphic gauge transformation: e.g.
\be 
\xi\bm{M}_\nu^I=\sum_\alpha \xi^{(1)}_\a(\bm{M}_\nu^I)\ \xi^{(2)}_\a,\qquad \xi\bm{A}_i^I=\sum_\alpha \xi^{(1)}_\a(\bm{A}_i^I)\ \xi^{(2)}_\a,\qquad \xi\bm{B}_i^I=\sum_\alpha \xi^{(1)}_\a(\bm{B}_i^I)\ \xi^{(2)}_\a,\label{xileftmultipli1}
\ee
Similarly, the left multiplication of $\widetilde{\xi}$ gives the anti-holomorphic gauge transformation
\be 
\widetilde\xi\widetilde{\bm{M}}_\nu^I=\sum_\alpha \widetilde\xi^{(1)}_\a(\widetilde{\bm{M}}_\nu^I)\ \widetilde\xi^{(2)}_\a,\qquad \widetilde\xi\widetilde{\bm{A}}_i^I=\sum_\alpha \widetilde\xi^{(1)}_\a(\widetilde{\bm{A}}_i^I)\ \widetilde\xi^{(2)}_\a,\qquad \widetilde\xi\widetilde{\bm{B}}_i^I=\sum_\alpha \widetilde\xi^{(1)}_\a(\widetilde{\bm{B}}_i^I)\ \widetilde\xi^{(2)}_\a,\label{xileftmultipli2}
\ee
These relations are the quantum analogs of the gauge transformation with classical Lie group: $U_g\co U_g^{-1}=\co^{(g)}$, where $U_g$ is the unitary operator representing the gauge transformation.

We define $\mu^I(\xi)\in \mathrm{End}(V^I)\otimes \suq$ and $\mu^I(\widetilde\xi)\in \mathrm{End}(\widetilde V^I)\otimes\suqt$ by
\be
\mu^I(\xi)=\sum_{\alpha}\rho^{I}\left(\xi_{\alpha}^{(1)}\right)\xi_{\alpha}^{(2)},\qquad \widetilde\mu^I(\widetilde\xi)=\sum_{\alpha}\widetilde\rho^{I}\left(\widetilde\xi_{\alpha}^{(1)}\right)\widetilde\xi_{\alpha}^{(2)}.
\ee

\begin{lemma}\label{muMMmurelations}
The left multiplication rules \eqref{xileftmultipli1} and \eqref{xileftmultipli2} are equivalent to the following relations
\be 
&&\mu^{I}(\xi)\bm{M}_{\nu}^{I}=\bm{M}_{\nu}^{I}\mu^{I}(\xi),\qquad \mu^{I}(\xi)\bm{A}_{i}^{I}=\bm{A}_{i}^{I}\mu^{I}(\xi),\qquad \mu^{I}(\xi)\bm{B}_{i}^{I}=\bm{B}_{i}^{I}\mu^{I}(\xi),\\
&&\widetilde\mu^{I}(\widetilde\xi)\widetilde{\bm{M}}_{\nu}^{I}=\widetilde{\bm{M}}_{\nu}^{I}\widetilde\mu^{I}(\widetilde \xi),\qquad \widetilde\mu^{I}(\widetilde\xi)\widetilde{\bm{A}}_{i}^{I}=\widetilde{\bm{A}}_{i}^{I}\widetilde\mu^{I}(\widetilde\xi),\qquad \widetilde\mu^{I}(\widetilde\xi)\widetilde{\bm{B}}_{i}^{I}=\widetilde{\bm{B}}_{i}^{I}\widetilde\mu^{I}(\widetilde\xi).
\ee
\end{lemma}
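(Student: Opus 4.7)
The plan is to reduce the commutation relations to standard Hopf-algebra identities (coassociativity, antipode, counit) using the smash-product rule $\xi\bm a = \xi_{(1)}(\bm a)\xi_{(2)}$. The argument is structurally the same for $\bm M^I_\nu$, $\bm A^I_i$, $\bm B^I_i$ and all their tilded counterparts, because the gauge-transformation formulas \eqref{xileftmultipli1}--\eqref{xileftmultipli2} are identical in form; so I will only carry out the case of $\bm M^I_\nu$, noting that the rest is identical after the substitutions $\rho^I\to\widetilde\rho^I$, $\xi\to\widetilde\xi$ (for the tilded sector) or $\bm M^I_\nu\to\bm A^I_i,\bm B^I_i$.

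First, combine \eqref{xileftmultipli1} with the explicit form \eqref{xiM} of $\xi(\bm M^I_\nu)$ and use coassociativity to rewrite the smash product in the closed form (Sweedler notation with $\Delta^{(2)}\xi = \xi_{(1)}\otimes\xi_{(2)}\otimes\xi_{(3)}$)
\[
\xi\,\bm M^I_\nu \;=\; \rho^I\!\bigl(S(\xi_{(1)})\bigr)\,\bm M^I_\nu\,\rho^I(\xi_{(2)})\,\xi_{(3)}.
\]
Substitute this into $\mu^I(\xi)\,\bm M^I_\nu = \rho^I(\xi_{(1)})\,\xi_{(2)}\,\bm M^I_\nu$, passing to the $4$-fold coproduct, to obtain
\[
\mu^I(\xi)\,\bm M^I_\nu \;=\; \rho^I\!\bigl(\xi_{(1)}S(\xi_{(2)})\bigr)\,\bm M^I_\nu\,\rho^I(\xi_{(3)})\,\xi_{(4)}.
\]
Applying the antipode axiom $\xi_{(1)}S(\xi_{(2)}) = \eps(\xi_{(1)})\cdot 1$ to the first two slots of the $4$-fold coproduct, and then the counit identity $\eps(\xi_{(1)})\xi_{(2)}\otimes\xi_{(3)} = \xi_{(1)}\otimes\xi_{(2)}$ to the remaining slots, collapses the right-hand side to $\bm M^I_\nu\,\rho^I(\xi_{(1)})\,\xi_{(2)} = \bm M^I_\nu\,\mu^I(\xi)$.

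For the reverse implication (since the lemma claims equivalence), I use the auxiliary identity $\rho^I(S(\xi_{(1)}))\,\mu^I(\xi_{(2)}) = \xi$, which is again a one-line consequence of the antipode plus counit axioms. Multiplying the assumed commutation $\mu^I(\xi_{(2)})\bm M^I_\nu = \bm M^I_\nu\mu^I(\xi_{(2)})$ on the left by $\rho^I(S(\xi_{(1)}))$, unpacking $\mu^I(\xi_{(2)})$, and invoking coassociativity reconstructs precisely the smash-product formula \eqref{xileftmultipli1}. The whole argument is Sweedler-index bookkeeping and the only ``hard'' ingredient is the recognition that the two opposite-sign $\rho^I$-pieces in $\xi$-conjugation combine through the antipode axiom; no genuine obstacle is expected.
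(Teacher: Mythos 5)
Your proof is correct and follows essentially the same route as the paper's: both directions are the identical Sweedler-index computation, expanding the smash-product rule together with the conjugation formula \eqref{xiM}, then collapsing via coassociativity, the antipode axiom $\xi_{(1)}S(\xi_{(2)})=\eps(\xi)1$ (resp.\ $S(\xi_{(1)})\xi_{(2)}=\eps(\xi)1$), and the counit identity. Your auxiliary identity $\rho^I(S(\xi_{(1)}))\mu^I(\xi_{(2)})=\xi$ is precisely the manipulation the paper performs on the left-hand side in its converse step, just packaged as a named lemma.
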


\begin{proof}
We take the relation for $\bm{M}_{\nu}^{I}$ as an example,
\be 
\mu^{I}(\xi)\bm{M}_{\nu}^{I}&=&\sum_{\alpha}\rho^{I}\left(\xi_{\alpha}^{(1)}\right)\xi_{\alpha}^{(2)}\bm{M}_{\nu}^{I}=\sum_{\alpha,\beta}\rho^{I}\left(\xi_{\alpha}^{(1)}\right)\xi_{\alpha,\beta}^{(2,1)}\left(\bm{M}_{\nu}^{I}\right)\xi_{\alpha,\beta}^{(2,2)}\nonumber\\
&=&\sum_{\alpha,\beta,\sigma}\rho^{I}\left(\xi_{\alpha}^{(1)}\right)\rho^{I}\left(S\left(\xi_{\alpha,\beta,\sigma}^{(2,1,1)}\right)\right)\bm{M}_\nu^{I}\rho^{I}\left(\xi_{\alpha,\beta,\sigma}^{(2,1,2)}\right)\xi_{\alpha,\beta}^{(2,2)}\nonumber\\
&=&\sum_{\alpha,\beta,\sigma}\rho^{I}\left(\xi_{\alpha,\beta,\sigma}^{(1,1,1)}\right)\rho^{I}\left(S\left(\xi_{\alpha,\beta,\sigma}^{(1,1,2)}\right)\right)\bm{M}_\nu^{I}\rho^{I}\left(\xi_{\alpha,\beta}^{(1,2)}\right)\xi_{\alpha}^{(2)}\nonumber\\
&=&\sum_{\alpha,\beta}\varepsilon\left(\xi_{\alpha,\beta}^{(1,1)}\right)\bm{M}_\nu^{I}\rho^{I}\left(\xi_{\alpha,\beta}^{(1,2)}\right)\xi_{\alpha}^{(2)}\nonumber\\
&=&\sum_{\alpha}\bm{M}_\nu^{I}\rho^{I}\left(\xi_{\alpha}^{(1)}\right)\xi_{\alpha}^{(2)}=\bm{M}_{\nu}^{I}\mu^{I}(\xi).
\ee
The 4th step uses $(\mathrm{id}\otimes\Delta\otimes\mathrm{id})(\mathrm{id}\otimes\Delta)\Delta=(\Delta\otimes\mathrm{id}\otimes\mathrm{id})(\Delta\otimes\mathrm{id})\Delta$ by the co-associativity of $\uq$. The 5th step uses $\sum_\alpha \xi_{\alpha}^{(1)}S(\xi_{\alpha}^{(2)})=\eps(\xi)1$ for all $\xi\in \uq$. The 6th step uses $\left(\varepsilon\otimes1\right)\Delta=\mathrm{id} $. 

Conversely, $\mu^{I}(\xi)\bm{M}_{\nu}^{I}=\bm{M}_{\nu}^{I}\mu^{I}(\xi)$ implies
\be
\sum_{\a,\b}\rho^{I}\left(S(\xi_{\a}^{(1)})\right)\rho^{I}\left(\xi_{\a,\b}^{(2,1)}\right)\xi_{\a,\b}^{(2,2)}\bm{M}_{\nu}^{I}=\sum_{\a,\b}\rho^{I}\left(S(\xi_{\a}^{(1)})\right)\bm{M}_\nu^{I}\rho^{I}\left(\xi_{\alpha,\b}^{(2,1)}\right)\xi_{\alpha,\b}^{(2,2)}
\ee
By the co-associativity, the right-hand side equals $\sum_{\a,\b}\rho^{I}(S(\xi_{\a,\b}^{(1,1)}))\bm{M}_\nu^{I}\rho^{I}(\xi_{\alpha,\b}^{(1,2)})\xi_{\alpha}^{(2)}=\sum_\alpha \xi^{(1)}_\a(\bm{M}_\nu^I)\ \xi^{(2)}_\a$. Therefore, by using again the co-associativity
\be
\sum_\alpha \xi^{(1)}_\a(\bm{M}_\nu^I)\ \xi^{(2)}_\a
=\sum_{\a,\b}\rho^{I}\left(S(\xi_{\a,\b}^{(1,1)})\right)\rho^{I}\left(\xi_{\a,\b}^{(1,2)}\right)\xi_{\a}^{(2)}\bm{M}_{\nu}^{I}=\sum_{\a}\eps\lt(\xi_\a^{(1)}\rt)\xi_{\a}^{(2)}\bm{M}_{\nu}^{I}=\xi \bm{M}_{\nu}^{I}.
\ee
we have used $\sum_\alpha \xi_{\alpha}^{(1)}S(\xi_{\alpha}^{(2)})=\eps(\xi)1$, and $\left(\varepsilon\otimes1\right)\Delta=\mathrm{id} $ in the last two steps.

The equivalence for other generators can be derived analogously.

\end{proof}

\section{Representation of the graph algebra for a $m$-holed sphere}\label{Representation of the graph algebra}

\subsection{Representation of loop algebra}\label{Representation of loop algebra}

The rest of this paper focuses on the quantization on $m$-holed sphere. We first consider the simplest graph algebra with $m=1$: $\cl_{0,1}\otimes\widetilde\cl_{0,1} $ on a sphere with a single hole. This algebra is generated by the identity $\bm{1}$, $\bm{M}^I$, $\widetilde{\bm{M}}^I$, subject to the following relations
\be 
\left(R^{-1}\right){\!}^{I J}\stackrel{1}{\bm M}{\!}^I R^{I J}\stackrel{2}{\bm M}{\!}^J&=&\sum_{K}(C_{1}^{IJ})_K\bm{M}^{K} (C_{2}^{IJ})^K, \label{RMRMCMC1}\\
\left(\widetilde R^{-1}\right){\!}^{I J}\stackrel{1}{\widetilde{\bm M}}{\!}^I \widetilde R^{I J}\stackrel{2}{\widetilde{\bm M}}{\!}^J&=&\sum_{K}(\widetilde C_{1}^{IJ})_K\widetilde{\bm{M}}^{K} (\widetilde C_{2}^{IJ})^K, \label{RMRMCMC2}
\ee 


We construct a family of representations $\cd^{\l}$ labelled by $\l\in\C^\times$. The representation is carried by the dense domain $\overline{\Fd}$ in the Hilbert space $\ch\simeq L^{2}(\R)\otimes \C^N$, the same as the one carrying the infinite-dimensional $*$-representation $\pi^{\l}$ of $\suquqt$. Indeed, the representation of the generators of $\cl_{0,1}$ is defined by using the representations $\rho^I$ and $\pi^\l$ of the quantum group
\be 
\cd^{\l}\lt(\bm{M}^I\rt)&=& \lt(\rho^I\otimes \pi^\l\rt)\lt(R'R\rt)\equiv  \lt(R'R\rt)^{I\l},\label{looprep1}\\
\cd^{\l}\lt(\widetilde{\bm{M}}^I\rt)&=& \lt(\widetilde\rho^I\otimes \pi^\l\rt)\lt(\widetilde{R}'\widetilde R\rt)\equiv \lt(\widetilde{R}'\widetilde R\rt)^{I\l}.\label{looprep2}
\ee 
The representation $(R'R)^{I\l}$ leaves $\ch$ invariant, although $R^{I\l}$ and $(R^\prime)^{I\l}$ are defined on the larger Hilbert space $\ch_0$. It is sufficient to show this at $I=1/2$, since all $(\bm{M}^I)^m_{\ n}$ for $I>1/2$ are the polynomials of $(\bm{M}^{1/2})^m_{\ n}$ by the relation \eqref{RMRMCMC1}.  $\cd^\l(\bm{M}^{1/2})$ can be written explicitly as (see Appendix \ref{rep of Rs}):
\be
\cd^\l(\bm{M}^{1/2})&=&\left(R^{\prime}R\right)^{1/2,\lambda}=\left(
	\begin{array}{cc}
		{\ck}_\l^2  &\quad q^{-1}\lt(q^2-q^{-2}\rt){\ck}_\l{ \cy}_\l\\
		q^{-1}\lt(q^2-q^{-2}\rt){ \cx}_\l{  \ck}_\l  &\quad q^{-2}\lt(q^2-q^{-2}\rt)^2{ \cx}_\l{ \cy}_\l+{\ck}^{-2}_\l
	\end{array}
	\right),\label{MandUq}
\ee
All matrix elements are the linear combinations of quadratic monomials of $\ck_\l,\cx_\l,\cy_\l$, so they only depend on $\bmx^2=\bm{u}$ and $\bm{y}$. 
\be 
{\cal D}^{\lambda}\left(\bm{M}^{1/2}\right)&=&\begin{pmatrix}\bm{u}_{-1,0} & -iq^{-2}\bm{u}_{-1,1}\\
-iq^{-2}\left[\left(\lambda+\lambda^{-1}\right)\bm{u}_{0,-1}+\bm{u}_{1,-1}+\bm{u}_{-1,-1}\right] & \quad-q^{-2}\left[\left(\lambda+\lambda^{-1}\right)\bm{1}+q^{-2}\bm{u}_{-1,0}\right]
\end{pmatrix}\nonumber
\ee
In the same way, we have
\be
 \cd^\l(\widetilde{\bm{M}}^{1/2})&=&\left(\widetilde R^{\prime}\widetilde R\right)^{1/2,\lambda}=\begin{pmatrix}\widetilde{{\cal K}}_{\lambda}^{-2}+\left(\widetilde{q}^{2}-\widetilde{q}^{-2}\right)^{2}\widetilde{q}^{-2}\widetilde{{\cal X}}_{\lambda}\widetilde{{\cal Y}}_{\lambda} & \quad\left(\widetilde{q}^{2}-\widetilde{q}^{-2}\right)\widetilde{q}^{-1}\widetilde{{\cal X}}_{\lambda}\widetilde{{\cal K}}_{\lambda}\\
\left(\widetilde{q}^{2}-\widetilde{q}^{-2}\right)\widetilde{q}^{-1}\widetilde{{\cal K}}_{\lambda}\widetilde{{\cal Y}}_{\lambda} & \quad\widetilde{{\cal K}}_{\lambda}^{2}
\end{pmatrix}.\label{MandUqtilde}\\
&=&\begin{pmatrix}-\widetilde{q}^{-2}\left[\left(\overline{\lambda}+\overline{\lambda}^{-1}\right)\bm{1}+\widetilde{q}^{-2}\widetilde{\bm{u}}_{-1,0}\right] & \quad-i\widetilde{q}^{-2}\left[\left(\overline{\lambda}+\overline{\lambda}^{-1}\right)\widetilde{\bm{u}}_{0,-1}+\widetilde{\bm{u}}_{1,-1}+\widetilde{\bm{u}}_{-1,-1}\right]\\
	-i\widetilde{q}^{-2}\widetilde{\bm{u}}_{-1,1} & \quad\widetilde{\bm{u}}_{-1,0}
	\end{pmatrix}.\nonumber
\ee
It is also easy to see that $\cd^\l({\bm{M}}^{1/2})^m_{\ n}$ and $\cd^\l(\widetilde{\bm{M}}^{1/2})^m_{\ n}$ are polynomials of the generators of $\suquqt$. Therefore, the representation $\cd^\l$ leave $\ch$ invariant. This clarifies the statement below \eqref{Et359} that $U_{\bm q}(sl_2)\otimes U_{\wt{\bm q}}(sl_2)$ and $R^{I\l}$ defined on $\ch_0$ are only auxiliary and introduced for some convenience of derivations. We could have define $\cd^\l(\bm{M}^{1/2})$ in terms of $\bm{u}_{\a,\b}$ directly on $\ch$. But defining the representation using $R'$ and $R$ allows us to use the Yang-Baxter equation to simplify several derivations (e.g. see below). The combinatorial quantization only relates to the representation of $\suquqt$. In addition, following the same proof as Lemma \ref{irreducible1} (see \cite{Han:2024nkf}), one can show that the representation $\cd^\l$ of the loop algebra is irreducible on $\overline{\Fd}$. 

It might not difficult to see that the quantum traces of ${\cal D}^{\lambda}\left(\bm{M}^{1/2}\right)$ and ${\cal D}^{\lambda}\left(\wt{\bm{M}}^{1/2}\right)$ relate to the quadratic Casimir operators of the quantum group. We will come back to this point in Section \ref{Wilson loop operators}.



We check that \eqref{looprep1} and \eqref{looprep2} indeed give the representation of the loop algebra: We enlarge the representation $\cd^\l$ to $\ch_0$ and show that 
\be 
\stackrel{1}{\bm{M}}_{3}R_{12}R_{23}^{\prime}= R_{13}^{\prime}R_{13}R_{12}R_{23}^{\prime}=R_{13}^{\prime}R_{23}^{\prime}R_{12}R_{13}=R_{12}R_{23}^{\prime}R_{13}^{\prime}R_{13}=R_{12}R_{23}^{\prime}\stackrel{1}{\bm{M}}_{3},\label{RRRR111}
\ee
where we skip all representation labels. In the second step, we have used $R_{13}R_{12}R_{23}^{\prime}=R_{23}^{\prime}R_{12}R_{13}$ obtained from the Yang-Baxter equation $R_{12}R_{13}R_{23}=R_{23}R_{13}R_{12}$ by switching $2\leftrightarrow 3$. In the third step, we have used $R_{13}^{\prime}R_{23}^{\prime}R_{12}=R_{12}R_{23}^{\prime}R_{13}^{\prime}$ obtained from the primed Yang-Baxter equation $R_{12}^{\prime}R_{13}^{\prime}R_{23}^{\prime}=R_{23}^{\prime}R_{13}^{\prime}R_{12}^{\prime}$ by switching $1\leftrightarrow 2$. Using \eqref{RRRR111},
we check \eqref{RMRMCMC1} is indeed satisfied by the representation
\be
&&\left(R^{-1}\right){}^{IJ}\cd^\l\Big(\stackrel{1}{\bm{M}}{\!\!}^{I}\Big)R^{IJ}\cd^\l\Big(\stackrel{2}{\bm{M}}{\!\!}^{J}	\Big)\nonumber\\
&=&\left(R_{12}^{-1}R_{13}^{\prime}R_{13}R_{12}R_{23}^{\prime}R_{23}\right)^{IJ\lambda}
=\left(R_{12}^{-1}R_{12}R_{23}^{\prime}R_{13}^{\prime}R_{13}R_{23}\right)^{IJ\lambda}=\left(R_{23}^{\prime}R_{13}^{\prime}R_{13}R_{23}\right)^{IJ\lambda}\nonumber\\
&=&\left(\Delta\otimes1\left(R^{\prime}R\right)\right)^{IJ\lambda}
=\sum_{K}(C_{1}^{IJ})_K\left(R^{\prime}R\right)^{K\lambda}(C_{2}^{IJ})^ K=\sum_{K}(C_{1}^{IJ})_K\cd^\l(\bm{M}^{K})(C_{2}^{IJ})^ K.
\ee
This relation holds on $\ch$ since both left and right hand sides leave $\ch$ invariant. A similar computation in the tilded sector shows that \eqref{RMRMCMC2} is also satisfied by the representation.

We check that $\cd^{\l}$ is a $*$-representation by using Lemma \ref{rhodagger0} and the property that $\pi^\l$ is a $*$-representation
\be
\cd^{\lambda}\left((\bm{M}^{J})_{\ n}^{m}\right)^{\dagger}	&=&\sum_{\alpha,\beta}\overline{\rho^{J}(R_{\alpha}^{(2)})_{\ j}^{m}}\overline{\rho^{J}(R_{\beta}^{(1)})_{\ n}^{j}}\pi^{\lambda}(R_{\alpha}^{(1)}R_{\beta}^{(2)})^{\dagger}\nonumber\\
&=&\sum_{\alpha,\beta}\widetilde \rho^{J}(R_{\alpha}^{(2)*})_{\ m}^{j}\widetilde\rho^{J}(R_{\beta}^{(1)*})_{\ j}^{n}\pi^{\lambda}(R_{\beta}^{(2)*}R_{\alpha}^{(1)*})\nonumber\\
&=&\sum_{\alpha,\beta}\widetilde\rho^{J}(R_{\beta}^{(1)*}R_{\alpha}^{(2)*})_{\ m}^{n}\pi^{\lambda}(R_{\beta}^{(2)*}R_{\alpha}^{(1)*})\nonumber\\
&=&(\widetilde\rho^{J})_{\ m}^{n}\otimes\pi^{\lambda}\left(R^{*\otimes*}R^{\prime*\otimes*}\right)=(\widetilde\rho^{J})_{\ m}^{n}\otimes\pi^{\lambda}\left(\widetilde{R}^{-1}\widetilde{R}^{\prime-1}\right)\nonumber\\
&=& \cd^{\lambda}\left([(\widetilde{\bm{M}}^{J})^{-1}]_{\ m}^{n}\right)
\ee
where we have used the expression $R=\sum_\a R^{(1)}_\a\otimes R^{(2)}_\a$.

\subsection{Representation of graph algebra}

The graph algebra $\cl_{0,m}\otimes\widetilde{\cl}_{0,m}$ on a sphere with $m$ holes is generated by the identity $\bm{1}$ and the matrix elements of $\bm{M}^I_\nu, \widetilde{\bm{M}}^I_\nu$, $\nu=1,\cdots,m$, subject to the relations 
\be 
\left(R^{-1}\right){\!}^{I J}\stackrel{1}{\bm M}{\!}^I_\nu R^{I J}\stackrel{2}{\bm M}{\!}^J_\nu&=&\sum_{K}(C_{1}^{IJ})_K\bm{M}^{K}_\nu (C_{2}^{IJ})^K, \label{RMRMCMC31}\\
\left(\widetilde R^{-1}\right){\!}^{I J}\stackrel{1}{\widetilde{\bm M}}{\!}^I_\nu \widetilde R^{I J}\stackrel{2}{\widetilde{\bm M}}{\!}^J_\nu&=&\sum_{K}(\widetilde C_{1}^{IJ})_K\widetilde{\bm{M}}^{K}_\nu (\widetilde C_{2}^{IJ})^K, \label{RMRMCMC32}\\
\left(R^{-1}\right){\!}^{I J}\stackrel{1}{\bm M}{\!}_\nu^I R^{I J}\stackrel{2}{\bm M}{\!}_\mu^J&=&\stackrel{2}{\bm M}{\!}_\mu ^J\left(R^{-1}\right){\!}^{I J}\stackrel{1}{\bm M}{\!}^I_\nu R^{I J},\qquad \nu <\mu\ ,\\
\left(\widetilde R^{-1}\right){\!}^{I J}\stackrel{1}{\widetilde{\bm M}}{\!}_\nu^I \widetilde R^{I J}\stackrel{2}{\widetilde{\bm M}}{\!}_\mu^J&=&\stackrel{2}{\widetilde{\bm M}}{\!}_\mu ^J\left(\widetilde R^{-1}\right){\!}^{I J}\stackrel{1}{\widetilde{\bm M}}{\!}^I_\nu \widetilde R^{I J},\qquad \nu <\mu\ ,
\ee 

The representation of the graph algebra $\cd^{\l_1,\cdots,\l_m}$ is carried by the dense domain $\overline{\Fd}_m$ in 
\be
\ch_{\l_1,\cdots,\l_m}:=\ch_{\l_1}\otimes \cdots\otimes\ch_{\l_m}, \label{chlambda1m}
\ee
$\cd^{\l_1,\cdots,\l_m}$ closely relates to the tensor product $\pi^{\l_1}\otimes\cdots\otimes\pi^{\l_m}$, and $\ch_{\l_\nu}\simeq\ch\simeq L^2(\R)\otimes\C^N$ carries the representation $\pi^{\l_\nu}$ associated to the $\nu$-th hole. $\overline{\Fd}_m$ is the maximal common domain of $\bm{u}_{\a,\b}^{(\nu)}$, $\nu=1,\cdots,m$, $\a,\b\in\Z$, where $\bm{u}_{\a,\b}^{(\nu)}$ acts on $\ch_{\l_{\nu}}$. $\overline{\Fd}_m$ is a Fr\'echet space whose semi-norms generalizes from the ones of $\overline{\Fd}_2$ in \eqref{seminormsD2}.

We denote by $\cl_{\nu}\otimes \widetilde{\cl}_\nu\subset \cl_{0,m}\otimes\widetilde{\cl}_{0,n}$ the loop algebra generated by $\bm{1},\ \bm{M}^I_\nu, \widetilde{\bm{M}}^I_\nu$.  The action of $\cl_{\nu}\otimes \widetilde{\cl}_\nu$ on the $\nu$-th copy of the tensor product \eqref{chlambda1m} is given by
\be
\cd_{\nu}^{\lambda_{\nu}}(\bm{a}):=\mathrm{id}_{\l_1}\otimes \cdots \mathrm{id}_{\l_{\nu-1}}\otimes \cd^{\lambda_{\nu}}(\bm{a})\otimes \mathrm{id}_{\l_{\nu+1}}\otimes \cdots \mathrm{id}_{\l_{m}},
\ee
for any $\bm{a}\in \cl_{\nu}\otimes \widetilde{\cl}_\nu$. For any $\xi\in U_{\bm q}(sl_2)\otimes U_{\wt{\bm q}}(sl_2)$, we denote by
\be 
\iota^{\l_1,\cdots,\l_{\nu-1}}_\nu(\xi)=\lt(\pi^{\l_1}\otimes\cdots\otimes\pi^{\l_{\nu-1}}\rt)\lt(\Delta^{(\nu-1)}\xi\rt)\otimes \mathrm{id}_{\l_{\nu}}\otimes \cdots \mathrm{id}_{\l_{m}},\qquad \iota_1(\xi)=\eps(\xi).
\ee
With these notations, the representation $\cd^{\l_1,\cdots,\l_m}\equiv \cd^{\vec \l}$ is defined by
\be 
\cd^{\vec \l}\left(\bm{M}_{\nu}^{J}\right)&=&\left(\rho^{J}\otimes\iota_\nu^{\l_1,\cdots,\l_{\nu-1}}\right)\left(R^{\prime}\right)D_{\nu}^{\lambda_{\nu}}\left(\bm{M}_{\nu}^{J}\right)\left(\rho^{J}\otimes\iota_\nu^{\l_1,\cdots,\l_{\nu-1}}\right)\left(R^{\prime-1}\right),\\
\cd^{\vec \l}\left(\widetilde{\bm{M}}_{\nu}^{J}\right)&=&\left(\rho^{J}\otimes\iota_\nu^{\l_1,\cdots,\l_{\nu-1}}\right)\left(\widetilde R^{\prime}\right)D_{\nu}^{\lambda_{\nu}}\left(\widetilde{\bm{M}}_{\nu}^{J}\right)\left(\rho^{J}\otimes\iota_\nu^{\l_1,\cdots,\l_{\nu-1}}\right)\left(\widetilde R^{\prime-1}\right).
\ee
The relation $(\Delta\otimes1)(R)=R_{13}R_{23}$ and cyclic permuting $(123)$ give $(1\otimes\Delta)(R^{\prime})=R_{12}^{\prime}R_{13}^{\prime}$. In the following, we give the explicit expressions of $\cd^{\lambda_{1},\cdots,\lambda_{m}}\left(\bm{M}_{\nu}^{J}\right)$ for $\nu=1,\cdots,4$:
\be 
\cd^{\vec \l}\left(\bm{M}_{1}^{J}\right)
&=&\left(R_{01}^{\prime}R_{01}\right)^{J\lambda_{1}}\otimes{\rm id}_{\lambda_{2}}\otimes\cdots\otimes{\rm id}_{\lambda_{m}},\\
\cd^{\vec \l}\left(\bm{M}_{2}^{J}\right)
&=&\left[R_{01}^{\prime}\left(R_{02}^{\prime}R_{02}\right)R_{01}^{\prime-1}\right]^{J\lambda_{1}\lambda_{2}}\otimes{\rm id}_{\lambda_{3}}\otimes\cdots\otimes{\rm id}_{\lambda_{m}},\\
\cd^{\vec \l}\left(\bm{M}_{3}^{J}\right)
&=&\left[R_{01}^{\prime}R_{02}^{\prime}\left(R_{03}^{\prime}R_{03}\right)R_{02}^{\prime-1}R_{01}^{\prime-1}\right]^{J\lambda_{1}\lambda_{2}\lambda_{3}}\otimes{\rm id}_{\lambda_{4}}\otimes\cdots\otimes {\rm id}_{\lambda_{m}},\\
\cd^{\vec \l}\left(\bm{M}_{4}^{J}\right)
&=&\left[R_{01}^{\prime}R_{02}^{\prime}R_{03}^{\prime}\left(R_{04}^{\prime}R_{04}\right)R_{03}^{\prime-1}R_{02}^{\prime-1}R_{01}^{\prime-1}\right]^{J\lambda_{1}\lambda_{2}\lambda_{3}\lambda_{4}}\otimes{\rm id}_{\lambda_{5}}\otimes\cdots\otimes {\rm id}_{\lambda_{m}},
\ee
where the slots with the numbers $0$ and $\nu=1,\cdots,m$ correspond to the representations $\rho^J$ and $\pi^{\l_\nu}$ respectively. The computation in Appendix \ref{App:Rep of graph algebra} checks $\cd^{\vec \l}$ is indeed a representation of the graph algebra. Additionally, $\cd^{\vec \l}$ is a $*$-representation: 
\be 
&&\cd^{\vec \l}\left((\bm{M}_{\nu}^{J})^m_{\ n} \right)^{\dagger}\nonumber\\
&=&\sum_{\a,\b}\overline{\rho^{J}\lt(R_\a^{(2)-1}\rt)^l_{\ n}}\iota_{\nu}^{\l_1,\cdots,\l_{\nu-1}}\left(R^{(1)-1}_\a\right)^{\dagger}\cd_{\nu}^{\lambda_{\nu}}\left((\bm{M}_{\nu}^{J})^k_{\ l}\right)^{\dagger}\overline{\rho^{J}\lt(R_\b^{(2)}\rt)^m_{\ k}}\iota_{\nu}^{\l_1,\cdots,\l_{\nu-1}}\left(R^{(1)}_\b\right)^{\dagger}\nonumber\\
&=&\left((\widetilde{\rho}^{J})^n_{\ l}\otimes\iota_{\nu}^{\l_1,\cdots,\l_{\nu-1}}\right)\left(\widetilde{R}^{\prime}\right)\cd_{\nu}^{\lambda_{\nu}}\left(((\widetilde{\bm{M}}_{\nu}^{J})^{-1})^l_{\ k}\right)\left((\widetilde\rho^{J})^k_{\ m}\otimes\iota_{\nu}^{\l_1,\cdots,\l_{\nu-1}}\right)\left(\widetilde R^{\prime-1}\right)\nonumber\\
&=&\cd^{\vec \l}\left(\big((\widetilde{\bm{M}}_{\nu}^{J})^{-1}\big)^n_{\ m}\right).
\ee

The relation \eqref{prodrquantum3} implies $\cd^{\vec \l}((\bm{M}^{J}_\nu)^m_{\ n})$ is a polynomial of $\cd^{\vec \l}((\bm{M}^{1/2}_\nu)^a_{\ b})$, and the situation is similar for the tilded partners. The explicit expressions of $\cd^{\vec \l}(\bm{M}^{1/2}_\nu)$ and $\cd^{\vec \l}(\widetilde{\bm{M}}^{1/2}_\nu)$ are in terms the $U_{\bm q}(sl_2)\otimes U_{\wt{\bm q}}(sl_2)$ generators $\ck,\ck^{-1},\cx,\cy,\widetilde\ck,\widetilde\ck^{-1},\widetilde\cx,\widetilde\cy$ defined on $\ch_0^{\otimes m}$ larger than $\ch_{\l_1,\cdots,\l_m}$. However, $\cd^{\vec \l}(\bm{M}^{1/2}_\nu)$ and $\cd^{\vec \l}(\widetilde{\bm{M}}^{1/2}_\nu)$ leave $\ch_{\l_1,\cdots,\l_m}$ invariant: First, the factors $(R_{0\mu}R_{0\mu}^\prime)^{1/2,\l}$ and $(\widetilde{R}_{0\mu}\widetilde{R}_{0\mu}^\prime)^{1/2,\l}$ have been given by \eqref{MandUq} and \eqref{MandUqtilde} and only contained the quadratic monomials of the generators. Given any operator $(\mathbf{X}_{0,\mu_1,\cdots,\mu_n})^{1/2,\l_{\mu_1},\cdots,\l_{\mu_n}}$ as a $2\times 2$ matrix whose elements are some linear combinations of order-even monomials of $\ck,\ck^{-1},\cx,\cy$ (including order-0), for any $\nu< \mu_1 <\cdots<\mu_n$, the matrix elements of $[R'_{0\nu}\mathbf{X}_{0,\mu_1,\cdots,\mu_n}R^{\prime -1}_{0\nu}]^{1/2,\l_\nu,\l_{\mu_1},\cdots,\l_{\mu_n}}$:
\be
&&\mathbf{X}^1_{\ 1} + (\bfq^{-1/2}-\bfq^{3/2})\mathcal{K}_\nu \mathcal{X}_\nu \mathbf{X}^1_{\ 2}, \qquad  \mathcal{K}^2_\nu \mathbf{X}^1_{\ 2}\nonumber \\
&&\mathcal{K}^{-2}_\nu{\mathbf{X}^2_{\ 1}}+ ({\bfq}^{1/2} -  \bfq^{-3/2}) \mathcal{X}_\nu\mathcal{K}^{-1}_\nu\mathbf{X}^1_{\ 1} +   ({\bfq}^{-1/2} -  \bfq^{3/2}) \mathcal{K}^{-1}_\nu\mathcal{X}_\nu \mathbf{X}^2_{\ 2} -(\bfq -\bfq^{-1})^2 \mathcal{X}_\nu^2\mathbf{X}^1_{\ 2}, \nonumber\\
&& \mathbf{X}^2_{\ 2}+(\bfq^{1/2}-\bfq^{-3/2})  \mathcal{X}_\nu\mathcal{K}_\nu\mathbf{X}^1_{\ 2}
\ee
are still the linear combinations of order-even monomials. Then by recursion, all ${\cal D}^{\vec{\lambda}}(\bm{M}_\nu^{1/2})$ can be shown to only contain order-even monomials of the generators, and the same for ${\cal D}^{\vec{\lambda}}(\wt{\bm{M}}_\nu^{1/2})$. As a result, all these operators are made by $\bm{u}^{(\nu)}_{\a,\b}$ and $\wt{\bm{u}}^{(\nu)}_{\a,\b}$ and thus leaves $\ch_{\l_1,\cdots,\l_m}$ invariant.

\section{Gauge invariance and flatness constraint}\label{Gauge invariance and flatness constraint}

By the discussion in Section \ref{Infinite-dimensional representation of suquqt}, we define the $*$-representation of gauge transformations $\cd^{\vec\l}(\xi)$ acting on states in $\ch_{\vec \l}$ by the tensor product representation of $\suquqt$:
\be 
\cd^{\vec\l}(\xi)=\lt(\pi^{\l_1}\otimes\cdots\otimes\pi^{\l_m}\rt)\Delta^{(m)}(\xi),\qquad \forall\, \xi\in\suquqt.
\ee
Now $\cd^{\vec\l}$ gives $*$-representations to both the graph algebra $\cl_{0,m}\otimes\widetilde{\cl}_{0,m}$ and gauge transformations $\suquqt$. It turns out that this two representations are coherently combined, and $\cd^{\vec \l}$ can be upgraded to the representation of the extended graph algebra $\Fs_{0,m}\otimes\widetilde{\Fs}_{0,m}$ defined in Section \ref{Semi-direct product}. 

\begin{lemma} 
$\cd^{\vec \l}$ extended to $\Fs_{0,m}\otimes\widetilde{\Fs}_{0,m}$ is consistent with the multiplication rules \eqref{xileftmultipli1} and \eqref{xileftmultipli2}.
\end{lemma}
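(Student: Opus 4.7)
The plan is to reduce consistency with the multiplication rules \eqref{xileftmultipli1}--\eqref{xileftmultipli2} to an operator commutation relation and then establish that commutation from the quasi-triangular structure of $\uquqt$. Consistency with \eqref{xileftmultipli1} amounts to the operator identity
\begin{equation*}
\cd^{\vec\l}(\xi)\,\cd^{\vec\l}(\bm{M}_\nu^I) \;=\; \sum_\alpha \cd^{\vec\l}\!\lt(\xi^{(1)}_\alpha(\bm{M}_\nu^I)\rt)\,\cd^{\vec\l}(\xi^{(2)}_\alpha)
\end{equation*}
on $\ch_{\vec\l}$ for every $\xi\in\suquqt$, with $\cd^{\vec\l}(\xi):=(\pi^{\l_1}\otimes\cdots\otimes\pi^{\l_m})\Delta^{(m)}(\xi)$, and analogously for \eqref{xileftmultipli2}. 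Applying Lemma \ref{muMMmurelations} inside $\Fs_{0,m}\otimes\wt\Fs_{0,m}$ converts this into the matrix identity
\begin{equation*}
\lt[(\rho^I\otimes\cd^{\vec\l})\Delta(\xi)\rt]\,\cd^{\vec\l}(\bm{M}_\nu^I) \;=\; \cd^{\vec\l}(\bm{M}_\nu^I)\,\lt[(\rho^I\otimes\cd^{\vec\l})\Delta(\xi)\rt]
\end{equation*}
in $\mathrm{End}(V^I)\otimes\cl(\ch_{\vec\l})$, together with its tilded analogue. Since $(\rho^I\otimes\cd^{\vec\l})\Delta(\xi)=(\rho^I\otimes\pi^{\l_1}\otimes\cdots\otimes\pi^{\l_m})\Delta^{(m+1)}(\xi)$, it suffices to show that $\cd^{\vec\l}(\bm{M}_\nu^I)$ commutes with the $(m+1)$-fold coproduct representation of every $\xi$.

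The algebraic engine will be the $\Delta$-centrality of $R'R$. From the quasi-triangular relation $R\Delta(\xi)=\Delta^{\mathrm{op}}(\xi)R$ and its flipped version $R'\Delta^{\mathrm{op}}(\xi)=\Delta(\xi)R'$ (obtained by swapping the two tensor slots), one gets $(R'R)\Delta(\xi)=\Delta(\xi)(R'R)$ in $\uquqt$. For $\nu=1$, Section \ref{Representation of loop algebra} gives $\cd^{\vec\l}(\bm{M}_1^I)=(\rho^I\otimes\pi^{\l_1})(R'R)\otimes\mathrm{id}_{\l_2,\ldots,\l_m}$, so the required commutator follows by applying $(\rho^I\otimes\pi^{\vec\l})$ to the algebraic identity $(R'R)_{01}\Delta^{(m+1)}(\xi)=\Delta^{(m+1)}(\xi)(R'R)_{01}$; the latter in turn follows from $\Delta$-centrality on slots $(0,1)$ together with coassociativity redistributing the remaining $\Delta^{(m-1)}$ onto slots $(2,\ldots,m)$. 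For general $\nu$, $\cd^{\vec\l}(\bm{M}_\nu^I)$ equals the conjugate of $(R'R)_{0\nu}$ by the nested product $R'_{01}\cdots R'_{0,\nu-1}$. Using the coproduct identities $(\Delta\otimes 1)R=R_{13}R_{23}$, $(1\otimes\Delta)R=R_{13}R_{12}$ and their primed versions, I would recognise this nested product as $(1\otimes\Delta^{(\nu-1)})(R')$ acting on slots $(0;1,\ldots,\nu-1)$, so that the conjugate becomes $(1\otimes\Delta^{(\nu-1)}\otimes\mathrm{id}\otimes\cdots\otimes\mathrm{id})$ applied to the centrality relation for $R'R$ in the appropriate slots, and commutation with $\Delta^{(m+1)}(\xi)$ follows by a second invocation of $\Delta$-centrality combined with coassociativity. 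The tilded case is identical after the replacements $R,R'\mapsto\wt R,\wt R'$ and $\Delta\mapsto\wt\Delta$.

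For the cross-terms of the semidirect product---$\xi\in\suq\subset\suquqt$ acting on $\wt{\bm M}_\nu^I$, and $\wt\zeta\in\suqt\subset\suquqt$ acting on $\bm M_\nu^I$---the module-algebra action collapses through the counit, so $\xi\wt{\bm M}_\nu^I=\wt{\bm M}_\nu^I\xi$ in $\Fs_{0,m}\otimes\wt\Fs_{0,m}$, and the consistency condition reduces to plain operator commutativity $\cd^{\vec\l}(\xi)\,\cd^{\vec\l}(\wt{\bm M}_\nu^I) = \cd^{\vec\l}(\wt{\bm M}_\nu^I)\,\cd^{\vec\l}(\xi)$. This follows because $\cd^{\vec\l}(\xi)$ is built only from the untilded Weyl operators $\bm u_{\alpha,\beta}^{(\mu)},\bm y^{(\mu)}$ while $\cd^{\vec\l}(\wt{\bm M}_\nu^I)$ involves only the tilded ones $\wt{\bm u}_{\alpha,\beta}^{(\mu)},\wt{\bm y}^{(\mu)}$, and the two sets commute by the $(\bfq^2,\wt\bfq^2)$-Weyl relations.

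The hardest part will be the bookkeeping for general $\nu$: the nested conjugations by $R'$-factors must be unwound in the correct order, using the coproduct identities and the hexagon relations, so that the $\Delta$-centrality of $R'R$ can be invoked on the appropriate pair of slots. This is routine but somewhat lengthy, completely parallel to the verification that $\cd^{\vec\l}$ represents the exchange relations of the graph algebra carried out in Appendix \ref{App:Rep of graph algebra}, and proceeds by the same quasi-triangular moves.
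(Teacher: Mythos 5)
Your proposal is correct and follows essentially the same route as the paper: both reduce the statement to the commutation $\cd^{\vec \l}(\mu^I(\xi))\,\cd^{\vec \l}(\bm{M}_\nu^I)=\cd^{\vec \l}(\bm{M}_\nu^I)\,\cd^{\vec \l}(\mu^I(\xi))$ via Lemma \ref{muMMmurelations}, and then establish it from the $\Delta$-centrality of $R'R$ together with coassociativity and $R'\Delta'(\xi)=\Delta(\xi)R'$ to unwind the nested $R'$-conjugations for general $\nu$. The only cosmetic difference is that you treat the cross-terms (untilded gauge generators against tilded holonomies) explicitly via commutativity of the two Weyl sectors, which the paper leaves implicit.
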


\begin{proof} 
It is sufficient to check the consistency with the relations in Lemma \ref{muMMmurelations}, since they are equivalent to the multiplication rules of $\Fs_{0,m}\otimes\widetilde{\Fs}_{0,m}$. Let us focus on the relation for $\bm{M}_{\nu}^{I}$:
\be 
\cd^{\vec\l}\lt(\mu^{I}(\xi)\rt)\cd^{\vec\l}\lt(\bm{M}_{\nu}^{I}\rt)&=&\lt(\rho^{I}\otimes \cd^{\vec{\l}}\rt)\lt(\Delta\xi\rt)\cd^{\vec{\l}}\lt(\bm{M}_{\nu}^{I}\rt)
=\lt(\Delta^{m+1}\xi\rt)^{I\vec{\l}}\lt[\Delta_{1}^{(\nu-1)}\left(R_{x1}^{\prime}\bm{M}_{x\nu}R_{x1}^{\prime-1}\right)\rt]^{I\vec{\l}}\nonumber\\
&=&\lt[\Delta_{1}^{(\nu-1)}\left(\Delta^{(m-\nu+1)}(\xi)R_{x1}^{\prime}\bm{M}_{x\nu}R_{x1}^{\prime-1}\right)\rt]^{I\vec{\l}}
\ee
We have labelled the tensor slots corresponding to $I,\l_1,\cdots,\l_m$ by $x,1,\cdots, m$, so we write $\cd^{\vec{\l}}\left(\bm{M}_{\nu}^{I}\right)=[\Delta_{1}^{(\nu-1)}\left(R_{x1}^{\prime}\bm{M}_{x\nu}R_{x1}^{\prime-1}\right)]^{I\vec{\l}}$ where $\Delta_{1}^{(\nu-1)}=\mathrm{id}_{x}\otimes\Delta^{(\nu-1)}\otimes\mathrm{id}_\nu$. Skip all representation labels and use the following relations:
\be 
\Delta^{(m-\nu+1)}(\xi)&=&\sum_{\xi}\xi_{x}^{(0)}\otimes\xi_{1}^{(1)}\otimes\xi_{\nu}^{(\nu)}\otimes\xi_{\nu+1}^{(\nu+1)}\otimes\cdots\otimes\xi_{m}^{(m)}\nonumber\\
&=&\sum_{\xi}\Delta\left(\xi_{1}^{(1)}\right)\otimes\xi_{\nu}^{(\nu)}\otimes\xi_{\nu+1}^{(\nu+1)}\otimes\cdots\otimes\xi_{m}^{(m)},\qquad\Delta\left(\xi^{(1)}\right)=\sum_{\xi}\xi^{(0)}\otimes\xi^{(1)}\\
&=&\sum_{\xi}\xi_{x}^{(0)}\otimes\Delta\left(\xi_{\nu}^{(\nu)}\right)\otimes\xi_{\nu+1}^{(\nu+1)}\otimes\cdots\otimes\xi_{m}^{(m)},\qquad\Delta\left(\xi^{(\nu)}\right)=\sum_{\xi}\xi^{(1)}\otimes\xi^{(\nu)}
\ee
and $R^{\prime}\left(\Delta^{\prime}\xi\right)=\left(\Delta\xi\right)R^{\prime}$, 
we derive 
\be
\cd^{\vec\l}\lt(\mu^{I}(\xi)\rt)\cd^{\vec\l}\lt(\bm{M}_{\nu}^{I}\rt)&=&\sum_{\xi}\Delta_{1}^{(\nu-1)}\left(R_{x1}^{\prime}\left(\xi_{x}^{(1)}\xi_{\nu}^{(\nu)}\right)\bm{M}_{x\nu}\left(\xi_{1}^{(0)}\xi_{\nu+1}^{(\nu+1)}\cdots\xi_{m}^{(m)}\right)R_{x1}^{\prime-1}\right)\nonumber\\
&=&\sum_{\xi}\Delta_{1}^{(\nu-1)}\left(R_{x1}^{\prime}\Delta\left(\xi^{(\nu)}\right)\bm{M}_{x\nu}\left(\xi_{1}^{(0)}\xi_{\nu+1}^{(\nu+1)}\cdots\xi_{m}^{(m)}\right)R_{x1}^{\prime-1}\right).
\ee
Moreover, we derive the relation $\Delta\left(\xi^{(\nu)}\right)\bm{M}_{x\nu}=\Delta\left(\xi^{(\nu)}\right)R_{x\nu}^{\prime}R_{x\nu}=R_{x\nu}^{\prime}\Delta^{\prime}\left(\xi^{(\nu)}\right)R_{x\nu}=R_{x\nu}^{\prime}R_{x\nu}\Delta\left(\xi^{(\nu)}\right)$ (this means that $\cd^\l$ is a representation of the extended loop algebra $\Fs_{0,1}\otimes\Fs_{0,1}$). Applying this relation, we obtain 
\be 
\cd^{\vec\l}\lt(\mu^{I}(\xi)\rt)\cd^{\vec\l}\lt(\bm{M}_{\nu}^{I}\rt)&=&\sum_{\xi}\Delta_{1}^{(\nu-1)}\left(R_{x1}^{\prime}\bm{M}_{x\nu}\left(\xi_{x}^{(1)}\xi_{1}^{(0)}\xi_{\nu}^{(\nu)}\xi_{\nu+1}^{(\nu+1)}\cdots\xi_{m}^{(m)}\right)R_{x1}^{\prime-1}\right)\nonumber\\
&=&\sum_{\xi}\Delta_{1}^{(\nu-1)}\left(R_{x1}^{\prime}\bm{M}_{x\nu}R_{x1}^{\prime-1}\left(\xi_{x}^{(0)}\xi_{1}^{(1)}\xi_{\nu}^{(\nu)}\xi_{\nu+1}^{(\nu+1)}\cdots\xi_{m}^{(m)}\right)\right)\nonumber\\
&=&\Delta_{1}^{(\nu-1)}\left(R_{x1}^{\prime}\bm{M}_{x\nu}R_{x1}^{\prime-1}\right)\left(\Delta^{(m+1)}\xi\right).
\ee
Recovering the representation labels, we obtain the result 
\be 
\cd^{\vec\l}\lt(\mu^{I}(\xi)\rt)\cd^{\vec\l}\lt(\bm{M}_{\nu}^{I}\rt)=\cd^{\vec\l}\lt(\bm{M}_{\nu}^{I}\rt)\cd^{\vec\l}\lt(\mu^{I}(\xi)\rt).
\ee
The consistency with other relations in Lemma \ref{muMMmurelations} can be derived analogously. 

\end{proof}

The physical states of the theory are gauge invariant, i.e. $\cd^{\vec \l}(\xi)$ acting on the physical states should be equivalent to the trivial representation:
\begin{eqnarray}
	\cd^{\vec \l}(\xi)\Psi = \eps(\xi)\Psi\ ,\label{gaugeinv}
\end{eqnarray}
for all $\xi$ belonging to the quantum gauge group. It turns out that $\Psi$ satisfying the gauge invariance does not belong to $\ch_{\vec \l}$ but rather belongs to the space of linear functionals on a dense domain in $\ch_{\vec \l}$. The proper formulation of the gauge invariant state $\Psi$ is given by
\begin{eqnarray}
	\Psi\lt[\cd^{\vec \l}(\xi)f\rt] = \eps(\xi)\Psi\lt[f\rt],\label{gaugeinv2}
\end{eqnarray}
for any $f$ belonging to the dense domain.

On an $m$-holed sphere, the defining relation of the moduli space $\cm_{0,m}\times\overline{\cm}_{0,m}$ is the flatness constraint
\begin{eqnarray}
M^I_0=M^I_m M^I_{m-1}\cdots M^I_{2} M^I_{1}=1,\qquad \widetilde{M}^I_0=\widetilde{M}^I_m \widetilde{M}^I_{m-1}\cdots \widetilde{M}^I_{2} \widetilde{M}^I_{1}=1\ .
\end{eqnarray} 
$M_0^I,\widetilde{M}_0^I$ are the holonomies along the largest circle enclosed all holes. They have the quantization as elements in $\Fs_{0,m}\otimes\widetilde{\Fs}_{0,m}$ by
\begin{eqnarray}
	\bm M^I_0=\bm M^I_m \bm M^I_{m-1}\cdots \bm M^I_{2} \bm M^I_{1},\qquad \widetilde{\bm M}^I_0=\widetilde{\bm M}^I_m \widetilde{\bm M}^I_{m-1}\cdots \widetilde{\bm M}^I_{2} \widetilde{\bm M}^I_{1}\ .
\end{eqnarray} 

Assigning $\l_1,\cdots,\l_m$ to $m$ holes, the representation $\cd^{\vec \l}(\bm M^I_0)$ on $\ch_{\vec \l}$ is given by
\begin{eqnarray}
\cd^{\vec \l}\lt(\bm M^I_0\rt)
&=&{\cal D}^{\vec{\lambda}}\left(\bm{M}_{m}^{I}\right)\cdots{\cal D}^{\vec{\lambda}}\left(\bm{M}_{2}^{I}\right){\cal D}^{\vec{\lambda}}\left(\bm{M}_{1}^{I}\right)\nonumber\\
&=&\lt(R_{x1}^{\prime}R_{x2}^{\prime}\cdots R_{xm}^{\prime}R_{xm}\cdots R_{x2}R_{x1}\rt)^{I\vec{\l}}\nonumber\\
&=&\lt(\rho^I\otimes\pi^{\l_1}\otimes\cdots\otimes\pi^{\l_m}\rt)\Delta_{1}^{(m)}\left(R_{x1}^{\prime}R_{x1}\right),
\end{eqnarray}
where we label the tensor slots corresponding to $I,\l_1,\cdots,\l_m$ by $x,1,\cdots, m$. Similarly,
\begin{eqnarray}
\cd^{\vec \l}\lt(\widetilde{\bm M}^I_0\rt)=\lt(\rho^I\otimes\pi^{\l_1}\otimes\cdots\otimes\pi^{\l_m}\rt)\Delta_{1}^{(m)}\left(\widetilde{R}_{x1}^{\prime}\widetilde{R}_{x1}\right). 
\end{eqnarray}

The physical state of the theory needs to solve the quantum flatness constraint: $\cd^{\vec \l}(\bm M^I_0)\Psi=\cd^{\vec \l}(\widetilde{\bm M}^I_0)\Psi=\Psi$. But this is implied by
the gauge invariance \eqref{gaugeinv2}: For any gauge invariant state $\Psi$, we have
\begin{eqnarray}
	\Psi\lt[\cd^{\vec \l}(\bm M^I_0)f\rt] &=& \lt(\rho^I\otimes\eps\rt)\lt(R_{x1}^{\prime}R_{x1}\rt)\Psi\lt[f\rt]=\rho^I(\mathbf{1})\Psi\lt[f\rt],\\
	\Psi\lt[\cd^{\vec \l}(\widetilde{\bm M}^I_0)f\rt] &=& \lt(\rho^I\otimes\eps\rt)\lt(\widetilde{R}_{x1}^{\prime}\widetilde{R}_{x1}\rt)\Psi\lt[f\rt]=\rho^I(\mathbf{1})\Psi\lt[f\rt].
\end{eqnarray}
Therefore, the physical states are selected only by imposing the gauge invariance.

\section{Physical Hilbert space}\label{Physical Hilbert space}

In this section, we derive the physical states $\Psi$ as general solutions to \eqref{gaugeinv2}. We show that the physical states form a Hilbert space, which relate to a fiber Hilbert space in a direct integral decomposition of $\ch_{\vec \l}$. As we see below, the direct integral decomposition of $\ch_{\vec \l}$ results from the Clebsch-Gordan decomposition of the tensor product representations of $\suquqt$.

\subsection{Clebsch-Gordan decomposition of $\pi^{\lambda_1}\otimes\pi^{\l_2}$}

The quadratic Casimirs in $\suquqt$ is given by
\be 
Q=-\left(\boldsymbol{q}-\boldsymbol{q}^{-1}\right)^{2}EF-\boldsymbol{q}^{-1}K-\boldsymbol{q}K^{-1},\\
\widetilde{Q}=-\left(\widetilde{\boldsymbol{q}}-\widetilde{\boldsymbol{q}}^{-1}\right)^{2}\widetilde{E}\widetilde{F}-\widetilde{\boldsymbol{q}}^{-1}\widetilde{K}-\widetilde{\boldsymbol{q}}\widetilde{K}^{-1},
\ee
satisfying $Q^* =\widetilde{Q}$. Given the representation $\pi^\l$ carried by $\ch=L^2(\R)\otimes \C^N$, we obtain 
\be 
Q_{\lambda}=\left(\lambda+\lambda^{-1}\right)\mathrm{id}_{\lambda},\qquad
\widetilde{Q}_{\lambda}=\left(\overline{\lambda}+\overline{\lambda}^{-1}\right)\mathrm{id}_{\lambda}.
\ee

We denote by $Q_{12},\widetilde{Q}_{12}$ the representation of the quadratic Casimir in the tensor product representation: $Q_{12}=(\pi^{\lambda_1}\otimes\pi^{\l_2})\Delta Q$ and $\widetilde{Q}_{12}={Q}_{12}^\dagger$:
\be 
Q_{12}
&=&K_{1}^{-1}Q_{2}+Q_{1}K_{2}+\left(\boldsymbol{q}+\boldsymbol{q}^{-1}\right)K_{1}^{-1}K_{2}+\left(Q_{1}+\boldsymbol{q}^{-1}K_{1}+\boldsymbol{q}K_{1}^{-1}\right)F_{1}^{-1}K_{1}^{-1}K_{2}F_{2}\nonumber\\
&&+\,F_{1}\left(Q_{2}+\boldsymbol{q}^{-1}K_{2}+\boldsymbol{q}K_{2}^{-1}\right)F_{2}^{-1},
\ee
where we use the notations e.g. $\ck_i=\ck_{\l_i}=\pi^{\l_i}(\ck)$ and $Q_i=(\l_i+\l_i^{-1})\mathrm{id}$. The tensor product representation is carried by $\ch\otimes\ch$.

The tensor product representation $\pi^{\lambda_1}\otimes\pi^{\l_2}$ is carried by $\ch\otimes\ch$. We define a unitary transformation $\cu_{12}=C_2\cs_{2}^{-1} t_{12}^{-1}:\ \ch\otimes\ch\to \ch\otimes\ch$, where
\be\label{U12}
t_{12}&=&\varphi\left(q^{2}\bm{u}_{1}\bm{y}_{1}^{-1}\bm{y}_{2},\widetilde{q}^{2}\widetilde{\bm{u}}_{1}\widetilde{\bm{y}}_{1}^{-1}\widetilde{\bm{y}}_{2}\right),\qquad {\cal S}_{2}=e^{\frac{2\pi i}{N}\left(\bm{\mu}_{1}\bm{\nu}_{2}-\bm{m}_{1}\bm{n}_{2}\right)},\nonumber\\
C_{2}&=&\left[\varphi\left(\lambda_{1}\bm{y}_{2},\overline{\lambda}_{1}\widetilde{\bm{y}}_{2}\right)\frac{\varphi\left(\lambda_{2}\bm{u}_{2}^{-1},\overline{\lambda}_{2}\widetilde{\bm{u}}_{2}^{-1}\right)}{\varphi\left(\lambda_{2}\bm{u}_{2},\overline{\lambda}_{2}\widetilde{\bm{u}}_{2}\right)}\right]^{-1}
\ee
are unitary operators on $\ch\otimes\ch$. $\varphi(y,\tilde{y})$ is the quantum dilogarithm function (see Appendix A in \cite{Han:2024nkf} for some details)
\[
\varphi(y,\wt{y})=\left[\frac{\prod_{j=0}^{\infty}\left(1+{\bfq}^{2j+1}y\right)}{\prod_{j=0}^{\infty}\left(1+\wt{{\bfq}}^{-2j-1}\wt{y}\right)}\right]^{-1}.
\]
By the unitary transformation, 
\be 
Q_{2}^{\prime\prime}&=&\mathcal{U}_{12}Q_{12}\mathcal{U}_{12}^{-1}=\lambda_{1}\bm{u}_{2}^{-1}+\lambda_{1}^{-1}\bm{u}_{2}+\lambda_{2}^{-1}\bm{y}_{2}^{-1},\label{Q2pp1}\\
\widetilde{Q}_{2}^{\prime\prime}&=&\mathcal{U}_{12}\widetilde{Q}_{12}\mathcal{U}_{12}^{-1}=\overline{\lambda}_{1}\widetilde{\bm{u}}_{2}^{-1}+\overline{\lambda}_{1}^{-1}\widetilde{\bm{u}}_{2}+\overline{\lambda}_{2}^{-1}\widetilde{\bm{y}}_{2}^{-1} \label{Q2pp2}
\ee
only act on the second factor of $\ch\otimes \ch$ \cite{Han:2024nkf}. The tensor product representations of the generators $\left(\Delta{F}\right)_{12}\equiv \lt(\pi^{\lambda_1}\otimes\pi^{\l_2}\rt)\lt(\Delta F\rt)$ and $\left(\Delta{K}^{\pm 1}\right)_{12}\equiv \lt(\pi^{\lambda_1}\otimes\pi^{\l_2}\rt)\lt(\Delta K^{\pm1}\rt)$ are transformed to only act on the first factor of $\ch\otimes \ch$ \cite{Han:2024nkf}
\be 
\cu_{12}\left(\Delta F\right)_{12}\cu_{12}^{-1}=F_1,\qquad 
\cu_{12}\left(\Delta{ K}^{\pm1}\right)_{12}\cu_{12}^{-1}=K^{\pm1}_1, \label{DeltaYK12}
\ee
while $\left(\Delta E\right)_{12}\equiv\lt(\pi^{\lambda_1}\otimes\pi^{\l_2}\rt)\lt(\Delta E\rt)$ transforms to 
\be
\cu_{12}\left(\Delta E\right)_{12}\cu_{12}^{-1}=\frac{\bfq^{-1}K_1+\bfq K^{-1}_1+Q''_2}{(\bfq-\bfq^{-1})^{2}}F_1^{-1},
\ee
which is $E_\l$ in \eqref{E354} with the substitution $(\l+\l^{-1})\mathrm{id}_\l\to Q''_2$. Similar for the tilded sector
\be 
\cu_{12}\left(\Delta\widetilde{F}\right)_{12}\cu_{12}^{-1}&=&\wt{F}_1,\qquad 
\cu_{12}\left(\Delta\widetilde{K}^{\pm1}\right)_{12}\cu_{12}^{-1}=\wt{K}^{\pm1}_1,\\
\cu_{12}\left(\Delta \wt{E}\right)_{12}\cu_{12}^{-1}&=&\frac{\wt{\bfq}^{-1}\wt K_1+\wt{\bfq} \wt K^{-1}_1+\wt Q''_2}{(\wt{\bfq}-\wt{\bfq}^{-1})^{2}}\wt{F}_1^{-1}.\label{DeltaXtilde12}
\ee
We denote the resulting representation by $\pi^{Q_2''}(\xi)\equiv\cu_{12}(\pi^{\l_1}\otimes\pi^{\l_2})(\Delta\xi)\cu_{12}^{-1}$, in which $\bm{u}_{\a,\b},\wt{\bm u}_{\a,\b}$ act on the 1st factor of $\ch\otimes\ch$, while $Q_2'',\wt Q_2''$ act on the 2nd factor. When $Q_2'',\wt Q_2''$ is diagonalized, $\pi^{Q_2''}(\xi)$ reduces to the irreducible representation at each eigenspace. 

$\css_{12}\in\ch\otimes\ch$ denotes the maximal common dense domain of $(\Delta \xi)_{12}$ where $\xi$ is any polynomial of $E,F,K^{\pm1}$ and their tilded partners, and $\css_{12}''\in\ch\otimes\ch$ denotes the maximal common dense domain of polynomials of $E_1,F_1,K_1^{\pm1},Q_2''$ and the tilded partners. $\css_{12}$ and $\css_{12}''$ are invariant by the actions of $(\pi^{\l_1}\otimes\pi^{\l_2})(\Delta \xi)$ and $\pi^{Q''_2}(\xi)$ respectively. The unitary map $\cu_{12}$ is a bijection of these two domains \cite{Han:2024nkf}:
\be
\cu_{12}:\ \css_{12}\to \css_{12}''.
\ee
Both spaces are Fr\'echet with semi-norms defined by $\Vert \bm{B}f\Vert$, where $\bm{B}$ is a basis of the operator algebra.

As to be discussed in Section \ref{Spectral decompositions} (see also \cite{Han:2024nkf}), $Q_2'',\widetilde{Q}_2''$ has a common generalized eigenvector $\a_\chi$ in the space of distributions dual to $\Fd$: $Q_2''\a_\chi=(\chi+\chi^{-1})\a_\chi$, $\widetilde{Q}_2''\a_\chi=(\overline{\chi}+\overline{\chi}^{-1})\a_\chi$. The representation $\mathcal{U}_{12}\left(\Delta\xi\right)_{12}\mathcal{U}_{12}^{-1}$ on $\ch\otimes\a_\chi$ is identical to the irreducible representation $\pi^\chi$ for all $\xi\in\suquqt$. Indeed, $Q_2''$ is a normal operator, and $Q_2''^\dagger=\widetilde{Q}_2''$. Their spectral decomposition on $\ch$ gives 
\be 
Q_2''=\int_\C\lt(\chi+\chi^{-1}\rt)\rmd P_\chi,\qquad \widetilde{Q}_2''=\int_\C\lt(\overline\chi+\overline\chi^{-1}\rt)\rmd P_\chi.
\ee
The projection-valued measure $\rmd P_\chi$ can be written as
\be
\rmd P_\chi=\rmd\varrho_\chi|\a_\chi\rangle \langle\a_\chi|\ .\label{pvm}
\ee
The spectral measure $\rmd\varrho_\chi$ and the generalized eigenstate $\a_\chi$ will be clarified in Section \ref{Spectral decompositions}. Correspondingly, $\ch$ has the direct integral representation
\be 
\ch\simeq\int_{\mathbb{C}}^{\oplus}\rmd\varrho_\chi\,W_\chi,\label{DIDHW}
\ee
where $\mu$ is the spectral measure (note that $d\mu(\chi)$ is not holomorphic). The fiber Hilbert space $W_\chi$ is the generalize eigenspace of $Q_2''$ and $\widetilde{Q}_2''$, where they reduce to $(\chi+\chi^{-1})\mathrm{id}_{W_\chi}$ and $(\overline\chi+\overline\chi^{-1})\mathrm{id}_{W_\chi}$ \footnote{One may worry that $W_\chi$ is not well-define since the integrand is only well-defined up to measure-zero. However, when we specify the Gelfand triple including $\ch$, the dense subspace $\overline{\Fd}\subset\ch$ and the topological dual $\overline{\Fd}'$, the generalized eigenvectors of certain eigenvalue can be found in the topological dual, then $W_\chi$ is the dual space of the space of generalized eigenvectors. See e.g. \cite{GelfandVol4} for details.}. Applying this decomposition, we obtain the following decomposition of the Hilbert space carrying the tensor product representation:
\be 
\ch_{\l_1}\otimes\ch_{\l_2}\simeq \int_{\mathbb{C}}^{\oplus}\rmd\varrho_\chi\,\ch_{\chi}\otimes W_\chi.
\ee 
We have added the representation labels in order to indicate the representation carried by $\ch$. The corresponding decomposition of the tensor product representation is given by
\be
\cu_{12}\lt[\lt(\pi^{\l_1}\otimes\pi^{\l_2}\rt)\lt(\Delta\xi\rt)\rt]\cu_{12}^{-1}=\int_{\mathbb{C}}\rmd\varrho_\chi\,\pi^\chi(\xi)\otimes |\a_\chi\rangle\langle\a_\chi|.\label{CGdecomp}
\ee
This is the Clebsch-Gordan (CG) decomposition of $\pi^{\l_1}\otimes\pi^{\l_2}$, and the unitary transformation $\cu_{12}$ is the CG map.

\subsection{Diagonalizing the quadratic Casimirs} \label{Spectral decompositions}

As discussed in the last section, a key step of the CG decomposition is the spectral decompositions of the following operators on $\ch\simeq L^2(\R)\otimes \C^N$
\be 
Q_{2}^{\prime\prime}&=&\lambda_{1}\bm{u}^{-1}+\lambda_{1}^{-1}\bm{u}+\lambda_{2}^{-1}\bm{y}^{-1},\\
\widetilde{Q}_{2}^{\prime\prime}&=&\overline{\lambda}_{1}\widetilde{\bm{u}}^{-1}+\overline{\lambda}_{1}^{-1}\widetilde{\bm{u}}+\overline{\lambda}_{2}^{-1}\widetilde{\bm{y}}^{-1}.
\ee
The representation label $\l\in\C^\times$ is arbitrary in the above discussion. From now on, we further assume\footnote{$\l+\l^{-1}$ is the quantum trace of $\cd^{\l}(\bm{M}^{1/2})$. When considering the Chern-Simons theory on a graph complement 3-manifold (see e.g. \cite{Han:2021tzw}), the hole on the $m$-holed sphere is connected to an annulus where $\l$ is one of the canonical variable of the phase space, then \eqref{assumption2N} is the quantization of $\l$.}
\be  
\l_a&=&\exp\lt[\frac{2\pi i}{N}(-ib\mu_a-m_a)\rt],\quad \mu_a\in\R,\ m_a\in\Z/N\Z,\label{assumption2N}
\ee 
so that the following operators are unitary
\be 
\mathcal{S}_{\lambda_{2}}f\left(\mu,m\right)=f\left(\mu-\mu_{2},m-m_{2}\right),\qquad {\mathcal{D}}_{\lambda_{1}}f\left(\mu,m\right)=e^{-\frac{2\pi i}{N}(\mu_1\mu-m_1m)}f\left(\mu,m\right).
\ee
These operators satisfy $\mathcal{S}_{\lambda_{2}}\mathcal{D}_{\lambda_{1}}=e^{\frac{2\pi i}{N}(\mu_{1}\mu_{2}-m_{1}m_{2})}\mathcal{D}_{\lambda_{1}}\mathcal{S}_{\lambda_{2}}$ and 
\be 
&&\mathcal{S}_{\lambda_{2}}\bm{y}^{-1}\mathcal{S}_{\lambda_{2}}^{-1}=\lambda_{2}\bm{y}^{-1},\quad \mathcal{S}_{\lambda_{2}}\bm{u}\mathcal{S}_{\lambda_{2}}^{-1}=\bm{u},\quad \mathcal{S}_{\lambda_{2}}\widetilde{\bm{y}}^{-1}\mathcal{S}_{\lambda_{2}}^{-1}=\overline{\lambda}_{2}\widetilde{\bm{y}}^{-1},\quad \mathcal{S}_{\lambda_{2}}\widetilde{\bm{u}}\mathcal{S}_{\lambda_{2}}^{-1}=\widetilde{\bm{u}},\\
&&\mathcal{D}_{\lambda_{1}}^{-1}\bm{y}\mathcal{D}_{\lambda_{1}}=\bm{y},\qquad \mathcal{D}_{\lambda_{1}}^{-1}\bm{u}\mathcal{D}_{\lambda_{1}}=\lambda_{1}\bm{u},\qquad \mathcal{D}_{\lambda_{1}}^{-1}\widetilde{\bm{y}}\mathcal{D}_{\lambda_{1}}=\widetilde{\bm{y}},\qquad \mathcal{D}_{\lambda_{1}}^{-1}\widetilde{\bm{u}}\mathcal{D}_{\lambda_{1}}=\overline{\lambda}_{1}\widetilde{\bm{u}}.
\ee
The Fourier transformation
\be  
\mathcal{F}f(\mu,m)=\frac{1}{N}\sum_{m^{\prime}\in\mathbb{Z}/N\mathbb{Z}}\int d\mu^{\prime}e^{\frac{2\pi i}{N}\left(\mu\mu^{\prime}-mm^{\prime}\right)}f\left(\mu^{\prime},m^{\prime}\right)
\ee
satisfies
\be 
\cf\bm{u}\cf^{-1} =\bmy,\quad \cf\bm{y}^{-1}\cf^{-1} =\bm{u},\quad \cf\widetilde{\bm{u}}\cf^{-1} =\widetilde{\bmy},\quad \cf\widetilde{\bm{y}}^{-1}\cf^{-1} =\widetilde{\bm{u}}.
\ee
Applying these unitary transformations to $Q''_2$ and $\widetilde{Q}_2''$ leads to the simplification
\be 
\bm{L}\equiv{\cal F}\mathcal{D}_{\lambda_{1}}^{-1}\mathcal{S}_{\lambda_{2}}Q_{2}^{\prime\prime}\mathcal{S}_{\lambda_{2}}^{-1}\mathcal{D}_{\lambda_{1}}{\cal F}^{-1}&=&\bm{y}^{-1}+\bm{y}+\bm{u},\\
\widetilde{\bm{L}}\equiv{\cal F}\mathcal{D}_{\lambda_{1}}^{-1}\mathcal{S}_{\lambda_{2}}\widetilde Q_{2}^{\prime\prime}\mathcal{S}_{\lambda_{2}}^{-1}\mathcal{D}_{\lambda_{1}}{\cal F}^{-1}&=&\widetilde{\bm{y}}^{-1}+\widetilde{\bm{y}}+\widetilde{\bm{u}}.
\ee
These pair of operators are generalizations to the Dehn-twist operators in quantum Teichm\"uller theory \cite{Kashaev:2000ku}, and they recover the Dehn-twist operators when $N=1$. The Dehn-twist operator appeared to be unitary equivalent to the Casimir operator of the tensor product representations for the modular double of $U_q(sl(2,\R))$ \cite{Ponsot:2000mt,Derkachov:2013cqa,Nidaiev:2013bda}.

The eigenfunctions of $\bmL,\wt{\bmL}$ relate to the quantum dilogarithm function
\be
\g(-x,n)&=&\prod_{j=0}^{\infty}\frac{1-{\bfq}^{2j+1}\exp\left[\frac{2\pi i}{N}\left(-ibx\sqrt{N}+n\right)\right]}{1-\widetilde{{\bfq}}^{-2j-1}\exp\left[\frac{2\pi i}{N}\left(-ib^{-1}x\sqrt{N}-n\right)\right]},
\ee
where ${\bfq}=e^{\frac{\pi i}{N}(1+b^{2})}$ and $\widetilde{{\bfq}}=e^{\frac{\pi i}{N}(1+b^{-2})}$. $\g(-x,n)=\mathrm{D}_b(x,n)$ is the quantum dilogarithm over $\R\times \Z/N\Z$ defined by Andersen and Kashaev in \cite{Andersen2014} (see also \cite{andersen2016level}). For $N=1$, $\g(x,n)=\g(x,0)\equiv\g(x)$ is Faddeev's quantum dilogarithm in e.g. \cite{Derkachov:2013cqa,Faddeev:1995nb}:
\be
\g(x)=\exp\lt[\frac{1}{4}\int_{\R+i0^+}\frac{\rmd w}{w}\frac{e^{2 i w x} }{\sinh \left(b^{-1}{w}\right) \sinh (b w)}\rt]
\ee
The relation with the quantum dilogarithm $\varphi$ used above is 
\be
\varphi\lt(-y,-\tilde{y}\rt)=\g\lt(-\frac{\mu}{\sqrt{N}},-m\rt)^{-1}.
\ee
We introduce some short-hand notations that are useful below
\be
c_b=\frac{i}{2}(b+b^{-1}),\quad \omega = \frac{i}{2 b \sqrt{N}},\quad
\o'= \frac{i b}{2 \sqrt{N}},\quad
\o''= \frac{c_b}{\sqrt{N}},\quad
x = \frac{\mu }{\sqrt{N}},\quad
\lambda =\frac{\mu_\chi}{\sqrt{N}}.
\ee
Some useful properties of $\gamma(x,n)$ including the inverse relation, recursion relation, and integral identity can be found in \cite{Han:2024nkf,andersen2016level}.

The following function
\be
\psi_\chi(\mu ,m)&=&\exp \left(\frac{i \pi  m^2}{N}+\frac{i \pi  m_\chi^2}{N}+i \pi  m_\chi^2-i \pi  m-i \pi  \left(x-\omega ''+i \epsilon \right)^2\right)\nonumber\\
&&\gamma \left(-\lambda +x-\omega ''+i \epsilon ,m-m_\chi\right) \gamma \left(\lambda +x-\omega ''+i \epsilon ,m_\chi+m\right),\label{psichimumexpandgamma}
\ee
satisfies the eigen-equations as $\epsilon\to 0$ \cite{Han:2024nkf}
\be 
&&\bmL \psi_\chi(\mu ,m)=\lt(\chi+\chi^{-1}\rt)\psi_\chi(\mu ,m),\qquad \widetilde{\bmL}\psi_\chi(\mu ,m)=\lt(\overline{\chi}+\overline{\chi}^{-1}\rt)\psi_r(\mu ,m),\\
&& \quad \chi=\exp\lt[\frac{2\pi i}{N}(-ib\mu_\chi-m_\chi)\rt],\qquad\quad \overline{\chi}=\exp\lt[\frac{2\pi i}{N}(-ib^{-1}\mu_\chi+m_\chi)\rt].\label{chichibar}
\ee
$\psi_\chi(\mu ,m)$ is invariant manifestly under $(\mu_\chi,m_\chi)\to(-\mu_\chi,-m_\chi)$ and periodic under $m\to m+N$, $m_\chi\to m_\chi+N$. When $N=1$, $m=m_\chi=0$, $\psi_r$ reduces to Kashaev's eigenfunctions \cite{Kashaev:2000ku} ($\phi(x,\l)$ in \cite{Derkachov:2013cqa}) for the Dehn-twist operator. 

A regulator $\epsilon$ has been implemented in $\psi_\chi$. For $\epsilon>0$, $\psi_\chi$ is a meromorphic function with all simple poles in lower-half plane (a pair of poles approach to the real line when $\epsilon\to0$):
\begin{eqnarray}
	\frac{\mu}{\sqrt{N}}=\pm\frac{\mu_{\chi}}{\sqrt{N}}-i\epsilon-i\frac{b^{-1}}{\sqrt{N}}l_{p}-i\frac{b}{\sqrt{N}}m_{p},\qquad m_{p}-l_{p}=m\mp m_{\chi}+N\mathbb{Z},\\
	l_{p},m_{p}\in\mathbb{Z},\qquad l_{p},m_{p}\geq0.\nonumber
\end{eqnarray}
Additionally, $\psi_\chi$ has the following asymptotic behavior as $\mathrm{Re}(\mu)\to\pm\infty$: 
\begin{eqnarray}
\psi_{\chi}\left(\mu,m\right)\sim \begin{cases}
		e^{-\frac{i\pi}{N}\mu^{2}}e^{\frac{2\pi}{\sqrt{N}}\left(-\frac{\mathrm{Re}(b)}{\sqrt{N}}+\epsilon\right)\mu} , & \mathrm{Re}(\mu)\to\infty,\\
		e^{\frac{i\pi}{N}\mu^{2}}e^{\frac{2\pi}{\sqrt{N}}\left(\frac{\mathrm{Re}(b)}{\sqrt{N}}-\epsilon\right)\mu} , & \mathrm{Re}(\mu)\to-\infty,
		\end{cases}
\end{eqnarray}

\begin{lemma}\label{psichiisdistribution}

$\psi_\chi$ is a continuous linear functional in $\overline{\Fd}'$. 

\end{lemma}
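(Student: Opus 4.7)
The strategy is to realize $\psi_\chi$ as the $\eps\to 0^+$ limit of the regularized meromorphic family $\psi_\chi^\eps$ and to bound the resulting pairing with $f\in\overline{\Fd}$ by finitely many of the seminorms $\|f\|_{\a,\b}=\|\bm{u}_{\a,\b}f\|$ via a contour shift combined with the Hadamard three-lines theorem.

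First, the asymptotics displayed above show that for each $\eps>0$ the function $|\psi_\chi^\eps(\mu,m)|$ decays exponentially as $|\re(\mu)|\to\infty$, so $\psi_\chi^\eps\in\ch$ and the pairing
\begin{equation*}
\psi_\chi^\eps[f] \;=\; \sum_{m\in\Z/N\Z}\int_\R \rmd\mu\,\psi_\chi^\eps(\mu,m)\,f(\mu,m)
\end{equation*}
is absolutely convergent for every $f\in\overline{\Fd}$. Because the poles of $\psi_\chi^\eps$ all satisfy $\im(\mu)\le -\eps\sqrt{N}$ and $f$ is entire with super-exponential decay in $|\re(\mu)|$, I would deform the $\mu$-contour from $\R$ to $\R+i\eta$ for any fixed $0<\eta<\re(b)$ without crossing residues. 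On the shifted contour the integrand is uniformly dominated in $\eps\in[0,\eta/\sqrt{N})$ by an integrable function, so dominated convergence yields
\begin{equation*}
\psi_\chi[f]\;:=\;\lim_{\eps\to 0^+}\psi_\chi^\eps[f]\;=\;\sum_{m}\int_\R \rmd\nu\,\psi_\chi^{0}(\nu+i\eta,m)\,f(\nu+i\eta,m).
\end{equation*}

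Cauchy--Schwarz gives $|\psi_\chi[f]|\le \|\psi_\chi^0(\cdot+i\eta,\cdot)\|_\ch\cdot\|f(\cdot+i\eta,\cdot)\|_\ch$. The first factor is a finite constant because the exponential decay rates computed from the asymptotics remain positive along $\R+i\eta$ for $0<\eta<\re(b)$. For the second factor I would apply the Hadamard three-lines theorem in the strip $0\le \im(\mu)\le \re(b)$ to each $f(\cdot,m)$: on the bottom edge the norm is exactly $\|f\|_{0,0}$, while on the top edge the substitution $\nu=\mu-\im(b)$ identifies $\|f(\cdot+i\re(b),\cdot)\|_\ch$ with $\|\bm{u}f\|_\ch=\|f\|_{1,0}$. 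Log-convexity of $t\mapsto\|f(\cdot+it,\cdot)\|_\ch$ then gives $\|f(\cdot+i\eta,\cdot)\|_\ch\le\|f\|_{0,0}^{1-\eta/\re(b)}\|f\|_{1,0}^{\eta/\re(b)}\le\|f\|_{0,0}+\|f\|_{1,0}$, hence $|\psi_\chi[f]|\le C(\|f\|_{0,0}+\|f\|_{1,0})$ and $\psi_\chi\in\overline{\Fd}'$.

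The principal obstacle is that at $\eps=0$ the function $\psi_\chi^0$ has simple poles on the real axis at $\mu=\pm\mu_\chi$, so it fails to lie in $\ch$ and a direct $L^2$-pairing diverges. The shift of contour by $i\eta$ is what resolves this: it exploits the analyticity in $\mu$ enforced on elements of $\overline{\Fd}$ by the definition \eqref{domainD11}, re-expressing the distributional pairing as an absolutely convergent integral on a line safely above every pole of $\psi_\chi^\eps$ uniformly in $\eps\ge 0$.
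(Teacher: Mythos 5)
Your overall strategy (regularize, shift the contour off the real axis to dodge the poles that collide with it as $\epsilon\to0$, then bound the pairing by finitely many seminorms) is in the same spirit as the paper's proof, but the mechanism is different: the paper shifts by a full period $ib$ so that the shifted $f$ becomes $\bm{u}^{-1}f$ exactly, absorbs the resulting $O(1)$ (non-$L^2$) behaviour of the kernel by dividing by $e^{\frac{2\pi i}{N/l}(-ib\mu-m)}+e^{-\frac{2\pi i}{N/l}(-ib\mu-m)}$ at the cost of hitting $f$ with $\bm{y}^{l}+\bm{y}^{-l}$, establishes the estimate for $f\in\Fd$, and only then extends to $\overline{\Fd}$ by density in the Fr\'echet topology. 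Your intermediate line $\R+i\eta$, $0<\eta<\re(b)$, buys genuine exponential decay of $\psi_\chi^{0}$ on the contour (so no division trick is needed), but it is exactly this choice that creates a gap.

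The gap is that you apply the contour shift, the dominated convergence in $\epsilon$, and the Hadamard three-lines inequality directly to an arbitrary $f\in\overline{\Fd}$. By \eqref{domainD11}, membership in $\overline{\Fd}$ only controls $f$ on the discrete family of lines $\im\mu=\b\,\re(b)$, $\b\in\Z$: it gives no pointwise bound and no $L^2$ bound on the intermediate line $\R+i\eta$. Consequently (i) the absolute convergence of $\sum_m\int\psi_\chi^{0}(\nu+i\eta,m)f(\nu+i\eta,m)\,\rmd\nu$ and the existence of an $\epsilon$-uniform dominating function are unjustified, and (ii) the asserted log-convexity of $t\mapsto\Vert f(\cdot+it,\cdot)\Vert$ does not follow: two-line boundary control does not propagate into the interior of a strip without an a priori growth hypothesis there (Phragm\'en--Lindel\"of counterexamples of the type $e^{e^{\l z}}$ are bounded on the two edges of a strip yet unbounded inside). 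The needed interior control is precisely what distinguishes $\Fd$ (where $|f(\mu+i\eta,m)|\le C_\eta e^{-a|\mu|}$ for every horizontal line) from $\overline{\Fd}$. Your argument can be repaired by running it only for $f\in\Fd$, where the three-lines/Cauchy--Schwarz steps are legitimate and yield $|\psi_\chi[f]|\le C(\Vert f\Vert_{0,0}+\Vert f\Vert_{1,0})$, and then invoking the density of $\Fd$ in $\overline{\Fd}$ for the Fr\'echet topology to extend the functional continuously --- which is the closing step of the paper's proof and must appear in yours as well.
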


\begin{proof}

Given $f\in \Fd$, the action of $\psi_\chi$ is given by 
\begin{eqnarray}
\langle\psi_\chi\mid f\rangle&=&\lim_{\epsilon\to0}\sum_{m\in\mathbb{Z}/N\mathbb{Z}}\int^\infty_{-\infty} d\mu\,\overline{\psi_{\chi}\left(\mu,m\right)}f\left(\mu,m\right)\nonumber\\
&=&\lim_{\epsilon\to0}\sum_{m\in\mathbb{Z}/N\mathbb{Z}}\int^\infty_{-\infty}d\mu\,\overline{\psi_{\chi}\left(\mu+ib^{-1},m+1\right)}f\left(\mu-i b,m+1\right)\nonumber\\
&=&\lim_{\epsilon\to0}\sum_{m\in\mathbb{Z}/N\mathbb{Z}}\int^\infty_{-\infty} d\mu\,\omega_\chi(\mu,\epsilon,m)\lt(\bm y^l +{\bm y}^{-l}\rt)\bm{u}^{-1}f\left(\mu,m\right).\label{psichif}\\
\omega_\chi(\mu,\epsilon,m)&\equiv&\frac{\overline{\psi_{\chi}\left(\mu+ib^{-1},m+1\right)}}{e^{\frac{2\pi i}{N/l}(-ib\mu-m)}+e^{-\frac{2\pi i}{N/l}(-ib\mu-m)}},
\end{eqnarray}
where $l\in\Z_+$ and $N/l$ is odd so that there is no pole. In the 2nd step, the integration contour can be shifted from $\R$ to $\R-ib$ ($\re(b)>0$), because $\psi_\chi$ is analytic in the upper-half plane and the exponentially-decaying behavior of the integrand as $\mathrm{Re}(\mu)\to \pm \infty$. For $\psi_\chi$, the contour $\R+i\re(b)$ is far away from the poles. 

We find the following limits
\be
\lim_{\mu\to\infty}\lt|\frac{\omega_\chi(\mu,\epsilon,m)}{e^{\frac{2\pi}{\sqrt{N}}\left(-\frac{\mathrm{Re}(b)}{\sqrt{N}}+\epsilon\right)\mu}}\rt|=\delta_{l,1}\exp\left(\frac{2\pi\epsilon}{\sqrt{N}}\mathrm{Im}(b)\right),\qquad
\lim_{\mu\to-\infty}\left|\frac{\omega_\chi(\mu,\epsilon,m)}{e^{\frac{2\pi}{\sqrt{N}}\left(\frac{\mathrm{Re}(b)}{\sqrt{N}}-\epsilon\right)\mu}}\right|=\delta_{l,1}\exp\left(-\frac{2\pi\epsilon}{\sqrt{N}}\mathrm{Im}(b)\right)\nonumber
\ee
Assuming that $\epsilon$ took values in the interval $[0,\epsilon_0]$ where $0<\epsilon_0<\frac{\mathrm{Re}(b)}{\sqrt{N}}$, there exists $\mu_0>0$, so that for all $\mu\in(-\infty,-\mu_0)\cup(\mu_0,\infty)$, $\o_\chi$ is uniformly bounded: $\lt|\omega_\chi(\mu,\epsilon,m)\rt|\leq C$ for some constant $C>0$. 
In addition, for any $m\in\Z/N\Z$, $|\omega_\chi(\mu,\epsilon,m)|$ is smooth and thus is bounded on the closed set $[-\mu_0,\mu_0]\times[0,\epsilon_0]$, so this function is bounded on $(-\infty,\infty)\times[0,\epsilon_0]$ by a constant $M$ (independent of $\mu,\epsilon$). Therefore, the dominated convergence can be applied to \eqref{psichif} for $\epsilon\to 0$, with the dominated function $M[\bm y^l +{\bm y}^{-l}]f\left(\mu-i b,m+1\right)$. Moreover, $\omega_\chi|_{\epsilon=0}\in\ch$ and $|\langle\psi_\chi\mid f\rangle|\leq \Vert\omega_\chi|_{\epsilon=0}\Vert\, \Vert\lt(\bm y^l +{\bm y}^{-l}\rt)\bm{u}^{-1}f\Vert$ imply $\psi_\chi$ is continuous on $\Fd$ by the semi-norms. Since $\Fd$ is dense in $\overline{\Fd}$ \footnote{A subspace $\sw\subset\Fd$ is dense in $\overline{\Fd}$ \cite{2008InMat.175..223F}.}, $\psi_\chi$ extends uniquely as a continuous linear functional on $\overline{\Fd}$.

\end{proof}

Moreover, the following results are proven in \cite{Han:2024nkf}:

\begin{theorem}

The eigenstates $\psi_\chi$ satisfy the orthogonality
\be
\lag\psi_\chi\mid \psi_{\chi'}\rag=\varrho(\mu_\chi,m_\chi)\lt[\delta(\mu_\chi-\mu_{\chi'})\delta_{m_\chi,m_{\chi'}}+\delta(\mu_\chi+\mu_{\chi'})\delta_{m_\chi,-m_{\chi'}}\rt]\ ,
\ee
where
\be
\varrho(\mu_\chi,m_\chi)&=&\frac{N^2}{4}\lt[\sin\lt(\frac{2\pi}{N}(ib\mu_\chi+m_\chi)\rt)\sin\lt(\frac{2\pi}{N}(-ib^{-1}\mu_\chi+m_\chi)\rt)\rt]^{-1}\ .
\ee
The eigenstates $\psi_\chi$ satisfy the resolution of identity on $\ch$
\be
\sum_{m_\chi\in\Z/N\Z}\int_0^\infty\rmd\mu_\chi\, \varrho(\mu_\chi,m_\chi)^{-1}\,\mid\psi_\chi\rangle\langle\psi_\chi\mid=\mathrm{id}_{\ch}\ ,\label{resolutioniden}
\ee
for $\mu,\mu'\in\R$ and $m,m'\in \Z/N\Z$.

\end{theorem}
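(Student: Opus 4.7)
My plan is to exploit that $\bmL$ and $\widetilde{\bmL} = \bmL^{\dagger}$ form a commuting normal pair on $\ch$, so the joint spectral theorem applies and $\psi_\chi$ may be treated as generalized simultaneous eigenstates in the Gelfand triple $\Fd \subset \ch \subset \overline{\Fd}'$ (recalling Lemma \ref{psichiisdistribution}). The symmetry $\chi \leftrightarrow \chi^{-1}$, i.e. $(\mu_\chi, m_\chi) \leftrightarrow (-\mu_\chi, -m_\chi)$, is the manifest invariance of the eigenvalue pair $(\chi + \chi^{-1},\,\overline\chi + \overline\chi^{-1})$, and it is also the manifest symmetry of $\psi_\chi$ as noted below \eqref{chichibar}. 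This explains the two-term structure of the orthogonality kernel: the fundamental domain of the joint spectrum is parametrized by $\mu_\chi \ge 0$ and $m_\chi \in \Z/N\Z$, and contributions from $\chi$ and $\chi^{-1}$ must be identified.

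For the orthogonality statement, I would compute $\langle \psi_\chi \mid \psi_{\chi'}\rangle$ directly from \eqref{psichimumexpandgamma}. The inner product becomes a sum over $m \in \Z/N\Z$ of an integral over $\mu$ of a product of four factors $\gamma(\,\cdot\,,\,\cdot\,)$ times Gaussian prefactors. The plan is to first use the self-adjointness argument: since $\bmL \psi_\chi = (\chi+\chi^{-1})\psi_\chi$ and $\widetilde{\bmL} \psi_\chi = (\overline\chi+\overline\chi^{-1})\psi_\chi$, the inner product must be supported where the joint eigenvalues coincide, which forces either $(\mu_\chi,m_\chi) = (\mu_{\chi'},m_{\chi'})$ or $(\mu_\chi,m_\chi) = (-\mu_{\chi'},-m_{\chi'})$. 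To pin down the density $\varrho(\mu_\chi,m_\chi)$, I would compute the inner product asymptotically by shifting the $\mu$-contour and isolating the residues that survive as the regulator $\epsilon \to 0$; the leading contributions come from pinches of the quantum-dilogarithm poles at $\mu = \pm \mu_\chi$, and the normalization constant is extracted from the Gaussian behavior of $\psi_\chi$ as $|\re(\mu)| \to \infty$ combined with the inversion identity $\gamma(x,n)\gamma(-x,-n)$.

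For the resolution of identity, the plan is to establish a Plancherel-type formula for the integral transform $\psi \mapsto \langle \psi_\chi \mid \psi\rangle$ on $\Fd$, generalizing Kashaev's non-compact Fourier transform from $N=1$ to level-$N$. Concretely, I would test \eqref{resolutioniden} on a dense family (e.g.\ $\sw$) by computing
\[
\sum_{m_\chi \in \Z/N\Z}\int_0^\infty \rmd\mu_\chi\, \varrho(\mu_\chi,m_\chi)^{-1}\,\psi_\chi(\mu,m)\overline{\psi_\chi(\mu',m')}
\]
and showing this equals $\delta(\mu-\mu')\delta_{m,m'}$. The key technical input is an integral identity for the level-$N$ quantum dilogarithm $\gamma$ (an Andersen--Kashaev-type Plancherel identity, extending the Faddeev integral identities used at $N=1$ in \cite{Derkachov:2013cqa}). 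Equivalently, one may deduce completeness abstractly from the joint spectral theorem once one proves that the commutant of $\{\bmL, \widetilde{\bmL}\}$ consists only of bounded Borel functions of the pair, but the explicit Plancherel route is more informative.

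The main obstacle is the analytic step: verifying the Plancherel identity for $\g(\cdot,\cdot)$ at general $N$. For $N=1$ this is the classical Faddeev pentagon/Plancherel identity; at level $N$ one needs its level-$N$ analog, whose proof requires delicate contour manipulation with the periodic integer variable $m$. A secondary obstacle is justifying the $\epsilon \to 0$ limit in the orthogonality computation as a distribution on $\overline{\Fd}$ rather than pointwise, but this is controlled by the asymptotic bounds on $\psi_\chi$ and the dominated-convergence argument already used in Lemma \ref{psichiisdistribution}.
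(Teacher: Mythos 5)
Your outline points in the right direction -- the two-term kernel from the $\chi\leftrightarrow\chi^{-1}$ symmetry, the role of $\bmL,\wt{\bmL}$ as a commuting normal pair, and a Plancherel-type identity for the level-$N$ quantum dilogarithm as the analytic core -- and this is indeed the strategy of the paper (which defers the proof to \cite{Han:2024nkf} and reproduces the completeness mechanism in Appendix \ref{App:spectral representation}). But as written the proposal has two genuine gaps. First, the entire content of the theorem \emph{is} the level-$N$ Plancherel/orthogonality identity for $\gamma(\cdot,\cdot)$, and you supply neither it nor a route to it: saying "one needs its level-$N$ analog, whose proof requires delicate contour manipulation" names the theorem rather than proving it. The paper's concrete mechanism is the decomposition \eqref{regulardeltaI1I2} of a \emph{regularized} delta kernel into two contour integrals $I_1^\delta$ and $I_2^\delta$ of a specific product of five $\gamma$-functions ($\Psi_{m,m'}$), each of which is then recognized via \eqref{I1psipsi}--\eqref{I2psipsi} as a spectral integral of $\psi_\chi\overline{\psi_\chi}$ against $\varsigma_1$ and $\varsigma_2$ respectively; without this (or an equivalent identity) the proof does not close.

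Second, and more concretely a step that would fail: you propose to verify completeness by computing $\sum_{m_\chi}\int_0^\infty\rmd\mu_\chi\,\varrho(\mu_\chi,m_\chi)^{-1}\psi_\chi(\mu,m)\overline{\psi_\chi(\mu',m')}$ directly and showing it equals $\delta(\mu-\mu')\delta_{m,m'}$. This integral is not absolutely convergent: $\varrho^{-1}$ contains the piece $\varsigma_1\sim\frac{1}{N^2}e^{\frac{2\pi\mu_\chi(b+b^{-1})}{N}}$ which grows exponentially in $\mu_\chi$, and the product $\psi_\chi\overline{\psi_{\chi}}$ only compensates this oscillatorily. The paper handles this by splitting $\varrho^{-1}=\varsigma_1+\varsigma_2+(\mu_\chi\to-\mu_\chi)$ and inserting the damping factor $e^{-\frac{2\pi}{\sqrt N}\delta\mu_\chi}$ on the $\varsigma_1$ piece only, so that \eqref{resolutioniden} is really the weak limit \eqref{weaklimitid} with $\delta\to0$, subsequently upgraded to a strong statement on a dense domain by dominated convergence. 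Your plan does not anticipate this regularization, and the "abstract" fallback via the joint spectral theorem would additionally require proving that the joint spectrum of $(\bmL,\wt{\bmL})$ has uniform multiplicity one, which is itself equivalent to the completeness you are trying to establish. The orthogonality half has the analogous issue: the eigenvalue-matching argument only locates the singular support, and extracting $\varrho(\mu_\chi,m_\chi)$ requires the same regularized contour computation, not just the asymptotic Gaussian behavior and the inversion relation.
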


For any $f\in\Fd$, we define
\begin{eqnarray}
	\mathscr{V}_{\psi}:\, f\left(\mu,m\right)\mapsto\mathscr{V}_{\psi}f\left(\mu_{\chi},m_{\chi}\right)=\langle\psi_{\chi}\mid f\rangle.\label{sUf}
\end{eqnarray}
which can be extend to an unitary map from $\ch$ to the direct integral representation $L^2(\C,\rmd\varrho_\chi)$. $\bmL$ and $\wt{\bmL}$ is represented as multiplication operators on $\mathscr{V}_{\psi}f\left(\mu_{\chi},m_{\chi}\right)$
\begin{eqnarray}
\mathscr{V}_{\psi}\bm{L}f\left(\mu_{\chi},m_{\chi}\right)&=\langle\psi_{\chi}\mid\bm{L}f\rangle=\langle\widetilde{\bm{L}}\psi_{\chi}\mid f\rangle=\left(\chi+\chi^{-1}\right)\mathscr{V}_{\psi}f\left(\mu_{\chi},m_{\chi}\right)\\
\mathscr{V}_{\psi}\wt{\bm{L}}f\left(\mu_{\chi},m_{\chi}\right)&=\langle\psi_{\chi}\mid\wt{\bm{L}}f\rangle=\langle{\bm{L}}\psi_{\chi}\mid f\rangle=\left(\overline\chi+\overline\chi^{-1}\right)\mathscr{V}_{\psi}f\left(\mu_{\chi},m_{\chi}\right).
\end{eqnarray}
The resolution of identity \eqref{resolutioniden} implies 
\begin{eqnarray}
\langle f\mid f' \rangle= \sum_{m_\chi\in\Z/N\Z}\int_0^\infty\rmd\mu_\chi\, \varrho(\mu_\chi,m_\chi)^{-1}\, \overline{\mathscr{V}_{\psi}f\left(\mu_{\chi},m_{\chi}\right)}\mathscr{V}_{\psi}f'\left(\mu_{\chi},m_{\chi}\right).\label{sUfsUf}
\end{eqnarray}
Some details about \eqref{sUf} - \eqref{sUfsUf} are discussed in Appendix \ref{App:spectral representation}. Up to a completion, the result \eqref{sUfsUf} gives the direct integral representation of $\ch$ in \eqref{DIDHW}, with the spectral measure $\rmd \varrho_\chi$ given by
\be  
\rmd \varrho_\chi&=&\sum_{m_\chi'\in\Z/N\Z}\rmd\mu_\chi\rmd m_\chi\,\varrho(\mu_\chi,m_\chi)^{-1}\Theta(\mu_\chi)\delta(m_\chi -m_\chi').
\ee
Note that $\mathscr{V}_{\psi}f\left(\mu_{\chi},m_{\chi}\right)$ is invariant under $(\mu_\chi,m_\chi)\to(-\mu_\chi,-m_\chi)$. So \eqref{sUfsUf} can be extended to the integral along the entire $\R$ 
\be
\langle f\mid f' \rangle= \sum_{m_\chi\in\Z/N\Z}\int_{-\infty}^\infty\rmd\mu_\chi\, \varsigma (\mu_\chi,m_\chi)\, \overline{\mathscr{V}_{\psi}f\left(\mu_{\chi},m_{\chi}\right)}\mathscr{V}_{\psi}f'\left(\mu_{\chi},m_{\chi}\right).
\ee
where $	\varrho (\mu _{\chi } ,m_{\chi } )^{-1} = \varsigma (\mu _{\chi } ,m_{\chi } )+ \varsigma (-\mu _{\chi } ,-m_{\chi } )$ and
\begin{equation*}
	\varsigma (\mu ,m )=\frac{1}{N^2}\lt[{e^{\frac{2\pi b\mu }{N} +\frac{2\pi b^{-1} \mu }{N}}} -{e^{-\frac{2\pi b\mu }{N} +\frac{2\pi b^{-1} \mu }{N} +\frac{4i\pi m}{N}}}\rt].
\end{equation*}

The eigenstates $\a_\chi$ of $Q_2'',\widetilde{Q}_2''$ are given by 
\be  
\alpha_\chi&=&\mathcal{S}_{\lambda_{2}}^{-1}\mathcal{D}_{\lambda_{1}}{\cal F}^{-1}\psi_\chi
\ee
where $\chi$ and $\overline{\chi}$ are given by \eqref{chichibar}. 
$\a_\chi$ belongs to $\overline{\Fd}'$. Its action on $f\in\Fd$ is given by 
\begin{eqnarray}
\langle \a_\chi\mid f\rangle=\langle\psi_\chi\mid \cf\cd_{\l_1}^{-1}\cs_{\l_2}f\rangle.
\end{eqnarray}
The unitary transformations $\cf,\cd_{\l_1},\cs_{\l_2}$ leave $\overline{\Fd}$ invariant. The above results imply the orthogonality
\begin{eqnarray}
\lag\a_\chi\mid \a_{\chi'}\rag=\varrho(\mu_\chi,m_\chi)\lt[\delta(\mu_\chi-\mu_{\chi'})\delta_{m_\chi,m_{\chi'}}+\delta(\mu_\chi+\mu_{\chi'})\delta_{m_\chi,-m_{\chi'}}\rt]\ ,
\end{eqnarray}
and the resolution of identity
\begin{eqnarray}
\int_\C\rmd\varrho_\chi \mid\a_\chi\rangle\langle\a_\chi\mid=\mathrm{id}_{\ch}\ .
\end{eqnarray}

\subsection{Two-holed and three-holed spheres}

The representation of the extended graph algebra on 2-holed sphere is carried by $\Fd_2\subset\ch\otimes\ch$, where the representation of gauge transformation is $\pi^{\l_1}\otimes\pi^{\l_2}$. Recall the discussion in Section \ref{Invariant bilinear form}. The gauge invariant in $\Fd_2'$ exists only when $\l_1=\l_2\equiv\l$. When it exists, the gauge invariant is uniquely given by the invariant bilinear form $\Psi_\l$ in \eqref{PsiandV1} up to complex rescaling. Therefore, the Hilbert space of physical states on a 2-hole sphere, $\ch_{phys}^{\l_1,\l_2}$, is 1-dimensional and spanned by $\Psi_\l$ when $\l_1=\l_2$, while being 0-dimensional if $\l_1\neq \l_2$.

For 3-hole sphere, the representation of extended graph algebra is carried by $\ch\otimes\ch\otimes\ch$. We label the representations at the holes by $\l_1,\l_2,\l_3$ and make the assumption \eqref{assumption2N} to these labels. The representation of gauge transformation $\xi\in\suquqt$ is given by $(\pi^{\l_1}\otimes\pi^{\l_2}\otimes\pi^{\l_3})(\Delta^{(3)}\xi)=\sum\pi^{\l_1}(\xi^{(1)})\otimes(\pi^{\l_2}\otimes\pi^{\l_3})(\Delta\xi^{(2)}))$ by the co-associativity. We label $\ch$ by the representation that it carries and let the unitary transformation $\cu_{23}$ ($(1,2)\to(2,3)$ in \eqref{U12}) act on the last two factors in $\ch_{\l_1}\otimes\ch_{\l_2}\otimes\ch_{\l_3}$:
\be 
\cu_{23}:\ \ch_{\l_1}\otimes\ch_{\l_2}\otimes\ch_{\l_3}&\to& \ch_{\l_1}\otimes\ch\otimes\ch,\label{cu231}\\
&\simeq&\int^\oplus\rmd\varrho_\chi\, \ch_{\l_1}\otimes\ch_{\chi}\otimes W_{\chi}. \label{cu232}
\ee
In particular, $\cu_{23}$ is a bijection from $\css_{23}$ (the common domain of $(\pi^{\l_1}\otimes\pi^{\l_2})(\Delta \xi)$) to $\css_{23}''$ (the common domain of $\pi^{Q_3''}(\xi)$). Recall \eqref{Q2pp1} - \eqref{DeltaXtilde12}. For any $\xi\in\suquqt$, the representation of gauge transformation is transformed to $(\pi^\l\otimes\pi^{Q_3''})(\Delta\xi)$, in which $\bm{u}_{\a,\b},\wt{\bm{u}}_{\a,\b}$ act on the first 2 factors of $\ch_{\l_1}\otimes\ch\otimes\ch$, while $Q_3'',\wt Q_3''$ act on the third factor. $Q_3''$ and $\widetilde{Q}_3''$ take the form as \eqref{Q2pp1} and \eqref{Q2pp2} with $2\to 3$. The spectral decomposition of $Q_3''$ and $\widetilde{Q}_3''$ results in the direct integral representation \eqref{cu232}.

The state $f\in \ch_{\l_1}\otimes\ch\otimes\ch$ can be represented by functions
\be
f(\mu_1,m_1;\mu_2,m_2;\mu_3,m_3),\label{frepmum}
\ee
and the inner product is given by
\be
\langle f_1 \mid f_2 \rangle   =\sum _{m_{1} ,m_{2},m_3 ,\in \mathbb{Z} /N\mathbb{Z}}\int _{-\infty }^{\infty } d\mu _{1} d\mu _{2}d\mu_3 \, \overline{f_1 \lt( \mu_1,m_1;\mu_2,m_2;\mu_3,m_3\rt)} \, f_2 \lt( \mu_1,m_1;\mu_2,m_2;\mu_3,m_3\rt).\nonumber
\ee
Note that in our notation, although $(\mu_1,m_1)$ associates to $\ch_{\l_1}$, $(\mu_2,m_2)$ and $(\mu_3,m_3)$ do not associate to $\ch_{\l_2}$ and $\ch_{\l_3}$ respectively but associate to the last 2 factors in $\ch_{\l_1}\otimes\ch\otimes\ch$. The action of $(\pi^{\l_1}\otimes\pi^{Q_3''})(\Delta\xi)$ contains $\bm{u}_{\a,\b},\wt{\bm{u}}_{\a,\b}$'s operating on the first two slots with $(\mu_1,m_1)$ and $(\mu_2,m_2)$ and $Q_3'',\wt{Q}''_3$ operating on the third slot with $(\mu_3,m_3)$. If we define $\overline{\Fd}_3\subset \ch_{\l_1}\otimes\ch\otimes\ch$ to be the common domain of $\bm{u}_{\a,\b}^{(i)}$ ($i=1,2,3$) that are defined in terms of $\mu_i,m_i$ as in \eqref{repuandy} - \eqref{eq:reptor11}, $(\pi^{\l_1}\otimes\pi^{Q_3''})(\Delta\xi)$ leaves $\overline{\Fd}_3$ invariant. $\overline{\Fd}_3$ is Fr\'echet with the semi-norms $\Vert f\Vert_{\vec{\a},\vec{\b}}=\Vert\bm{u}_{\a_1,\b_1}^{(1)}\bm{u}_{\a_2,\b_2}^{(2)}\bm{u}_{\a_3,\b_3}^{(3)}f\Vert$.


Given the direct-integral representation \eqref{cu232}, the state $f$ can be equivalently represented by
\be
f(\mu_1,m_1;\mu_2,m_2;\mu_\chi,m_\chi),\label{frepchi}
\ee
where $(\mu_1,m_1)$ and $(\mu_2,m_2)$ associate to $\ch_{\l_1}$, $\ch_{\chi}$ respectively. In this representation, $Q_3''$ and $\widetilde{Q}_3''$ are multiplication operators
\be
Q_3''f(\mu_1,m_1;\mu_2,m_2;\mu_\chi,m_\chi)=e^{\frac{2\pi i}{N}(-ib\mu_\chi-m_\chi)} f(\mu_1,m_1;\mu_2,m_2;\mu_\chi,m_\chi),\\
\wt Q_3''f(\mu_1,m_1;\mu_2,m_2;\mu_\chi,m_\chi)=e^{\frac{2\pi i}{N}(-ib^{-1}\mu_\chi+m_\chi)} f(\mu_1,m_1;\mu_2,m_2;\mu_\chi,m_\chi)
\ee
The inner product of the Hilbert space is represented by 
\be
\langle f_1 \mid f_2 \rangle   =\sum _{m_{1} ,m_{2} ,\in \mathbb{Z} /N\mathbb{Z}}\int _{-\infty }^{\infty } d\mu _{1} d\mu _{2}\int d\varrho _{\chi} \, \overline{f_1 \lt( \mu _{1} ,m_{1} ; \mu _{2} ,m_{2}  ; \mu_\chi, m_\chi\rt)} \, f_2 \lt( \mu _{1} ,m_{1} ; \mu _{2} ,m_{2}  ; \mu_\chi , m_\chi\rt).\nonumber
\ee
On the direct integral representation, $(\pi^{\l_1}\otimes\pi^{Q_3''})(\Delta\xi)$ reduces to $(\pi^{\l_1}\otimes\pi^{\chi})(\Delta\xi)$, which acts on $\ch_{\l_1}\otimes\ch_\chi$ while leaving $W_\chi$ invariant for each $\chi$. The gauge invariant on $\Fd_2\subset\ch_{\l_1}\otimes\ch_\chi$ is 1-dimensional when $\chi=\l_1$ and zero-dimensional otherwise. Therefore, the gauge invariants for 3-holed sphere should be contained in $W_{\chi=\l_1}$, which is also an 1-dimensional Hilbert space.  


The direct integral representation \eqref{sUfsUf} gives the isomorphism $\ch\simeq L^2(\C,\rmd\varrho_\chi)$. We define $\mathscr{K}\in L^2(\C,\rmd\varrho_\chi)$ of functions $F\left(\mu_{\chi},m_{\chi}\right)$ differentiable for $\mu_{\chi}\in(0,\infty)$ and satisfying $\Vert(\chi+\chi^{-1})^n F\Vert<\infty$ for all $n\in \Z$. 
The domain $\Fd_2\otimes\sk$ is invariant by the action of $(\pi^{\l_{1}}\otimes\pi^{Q''_3})(\Delta\xi)$. We define the union $\mathscr{N}_3=(\Fd_2\otimes\sk)\cup\overline{\Fd}_3$. It is clear that $\mathscr{N}_3$ is invariant by $(\pi^{\l_{1}}\otimes\pi^{Q''_3})(\Delta\xi)$ since both $\Fd_2\otimes\sk$ and $\overline{\Fd}_3$ does.

\begin{theorem}\label{3intertwiner}
The gauge invariants of 3-hole sphere, as linear functionals on $\sn_3$ is 1-dimensional and given by
\be
\Psi_{\l_1,\l_2,\l_3}=c \Psi_{\l_1}\otimes\a_{\l_1}
\ee
where $c\in\C$ is an arbitrary constant. These gauge invariants form the physical Hilbert space $\ch_{phys}^{\l_1,\l_2,\l_3}\simeq W_{\l_1}$ and are continuous linear functionals on $\overline{\Fd}_3$. 

\end{theorem}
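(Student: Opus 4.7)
The strategy is to transfer the gauge invariance condition through the unitary map $\cu_{23}$ of \eqref{cu231}--\eqref{cu232} and then reduce the problem on each fiber of the direct integral to the case of the two-holed sphere already settled by Theorem \ref{uniquebilinear}. Concretely, a linear functional $\Psi$ on $\sn_3$ satisfies
\begin{equation*}
\Psi\bigl[(\pi^{\l_1}\otimes\pi^{\l_2}\otimes\pi^{\l_3})(\Delta^{(3)}\xi)f\bigr]=\eps(\xi)\Psi[f]
\end{equation*}
if and only if $\Phi:=\Psi\circ\cu_{23}^{-1}$, defined on $\cu_{23}\sn_3$, is invariant under the transformed representation $(\pi^{\l_1}\otimes\pi^{Q_3''})(\Delta\xi)$. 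The first step is therefore to justify this bijection; this reduces to the fact that $\cu_{23}$ is a unitary bijection between $\css_{23}$ and $\css_{23}''$ (which passes to $\sn_3$ because $\Fd_2\otimes\sk\subset \css_{23}''$ in the $(\mu_1,m_1;\mu_2,m_2;\mu_\chi,m_\chi)$ representation, and $\overline\Fd_3$ is invariant on both sides).

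The second step uses the spectral decomposition of $(Q_3'',\wt Q_3'')$ via the eigenfunctionals $\a_\chi$ of Section \ref{Spectral decompositions}. Writing $f\in\sn_3$ in the representation \eqref{frepchi} makes the third slot carry the direct-integral decomposition $\ch\simeq\int^\oplus \rmd\varrho_\chi\, W_\chi$; the transformed representation reduces on the fiber at $\chi$ to $(\pi^{\l_1}\otimes\pi^\chi)(\Delta\xi)\otimes \id_{W_\chi}$. For each fixed $\chi$ the slice $f_\chi(\mu_1,m_1;\mu_2,m_2):=f(\mu_1,m_1;\mu_2,m_2;\mu_\chi,m_\chi)$ belongs to $\Fd_2$ (when $f\in\Fd_2\otimes\sk$), and the invariance of $\Phi$ forces the assignment $\chi\mapsto\Phi(\,\cdot\,;\chi)$ to take values in the space of invariant bilinear forms of $\pi^{\l_1}\otimes\pi^\chi$ on $\Fd_2$.

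By Theorem \ref{uniquebilinear} this space is trivial unless $\l_1=\chi^{\pm1}$, and one-dimensional with representative $\Psi_{\l_1}$ otherwise; using the equivalence $\pi^\chi\simeq\pi^{\chi^{-1}}$ we may concentrate the support at $\chi=\l_1$. Translating back, $\Phi$ is supported on the fiber $W_{\l_1}$, which is one-dimensional, so
\begin{equation*}
\Phi=c\,\Psi_{\l_1}\otimes|\a_{\l_1}\rangle
\end{equation*}
for some $c\in\C$, and correspondingly $\Psi_{\l_1,\l_2,\l_3}=c\,\Psi_{\l_1}\otimes\a_{\l_1}$. The continuity on $\overline\Fd_3$ is then a combination of Lemma \ref{bilinearextend} for $\Psi_{\l_1}$ and Lemma \ref{psichiisdistribution} for $\a_{\l_1}$: each factor is continuous in its Fr\'echet topology, so tensoring with the estimate used in \eqref{trickg} applied slot-wise yields a bound of the form $|\Psi_{\l_1,\l_2,\l_3}[f]|\le C\sum_{(\vec\a,\vec\b)\in\ci}\Vert f\Vert_{\vec\a,\vec\b}$ on $\overline\Fd_3$.

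The main obstacle I expect is the spectral/distributional step: asserting that an invariant linear functional on the direct integral decomposes fiberwise, and that a family of fiberwise invariants existing only on the measure-zero locus $\{\chi=\l_1\}$ does extend to a well-defined, continuous functional on $\sn_3$. Handling this cleanly requires that $\sk$ consist of functions smooth in $\mu_\chi$ so that point evaluation at $\mu_\chi=\mu_{\l_1}$ is well-defined, and a dominated-convergence argument to show that any candidate invariant whose fiberwise restriction vanishes almost everywhere must vanish on $\Fd_2\otimes\sk$. Matching this with the continuity estimate on $\overline\Fd_3$ (where no spectral picture is used) is what identifies the two expressions of $\Psi_{\l_1,\l_2,\l_3}$ and pins down uniqueness on all of $\sn_3=\bigl(\Fd_2\otimes\sk\bigr)\cup\overline\Fd_3$.
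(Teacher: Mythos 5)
Your overall architecture (reduce via $\cu_{23}$, invoke Theorem \ref{uniquebilinear} on the two-hole factor, locate the remaining freedom at $\chi=\l_1$) matches the ingredients the paper uses, but your uniqueness step has a genuine gap that you name yourself without closing. A gauge-invariant linear functional on $\Fd_2\otimes\sk$ does not a priori disintegrate into a measurable family of fiberwise functionals over the direct integral $\int^\oplus\rmd\varrho_\chi$; in the variable $\chi$ it is only some distribution, so the inference ``the fiberwise invariant vanishes unless $\chi=\l_1^{\pm1}$, hence $\Phi$ is supported on $W_{\l_1}$'' does not by itself yield $\Phi=c\,\Psi_{\l_1}\otimes|\a_{\l_1}\rangle$: one must additionally exclude distributions supported at $\chi=\l_1$ that are more singular than point evaluation (derivatives of delta, say), and your proposed ``dominated-convergence argument showing the invariant vanishes on $\Fd_2\otimes\sk$'' is not the right target, since the invariant does \emph{not} vanish there --- it is exactly a delta at $\chi=\l_1$.

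The paper closes this gap by a different order of operations (its proof of this theorem is explicitly deferred to the proofs of Theorems \ref{generalPsi} and \ref{PsiFcontinuous}): first apply invariance only for $\xi=K,\wt K,KF,\wt K\wt F$, whose transformed action under $(\pi^{\l_1}\otimes\pi^{Q_3''})(\Delta\xi)$ does not involve $Q_3''$, and use Lemma \ref{distributiondelta} to force $\Psi[f\otimes\phi\otimes G]=C_G[\phi]\,\Psi^{(2)}_{\l_1}[f]$ with $C_G$ a linear functional on $\sk$; then apply $\xi=E,\wt E$ to obtain the exact constraint $C_G[(Q_3''-\l_1-\l_1^{-1})\phi]=C_G[(\wt Q_3''-\overline\l_1-\overline\l_1^{-1})\phi]=0$, which is solved by the division argument of Appendix \ref{App:spectral representation}: every $\phi\in\sk$ vanishing at $(\mu_{\l_1},m_{\l_1})$ factors as $(Q_3''-\l_1-\l_1^{-1})G$ with $G\in\sk$, so $C_G$ is precisely the point evaluation $\a_{\l_1}$ and nothing more singular can occur. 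If you want to keep your fiberwise picture you would need to supply an analogue of that division lemma; as written, uniqueness does not follow. Your continuity claim is essentially the paper's Theorem \ref{PsiFcontinuous}, but note that continuity of each tensor factor separately does not automatically give continuity on the Fr\'echet space $\overline{\Fd}_3$; the paper establishes the joint estimate explicitly via the integral expression \eqref{PsiFfexplicit}.
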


\begin{proof}
Since $\a_{\l_1}$ is the generalized eigenfunction of $Q_3'',\wt Q_3''$,
\be
\Psi_{\l_1,\l_2,\l_3}[\cd^{\l_1,\l_2,\l_3}(\xi)f]=c \Psi_{\l_1}\otimes\a_{\l_1}\lt[\lt((\pi^{\l_1}\otimes\pi^{\l_1})(\Delta\xi)\otimes1\rt)f\rt]=\eps(\xi)\Psi_{\l_1,\l_2,\l_3}[f]
\ee
shows that $\Psi_{\l_1,\l_2,\l_3}$ is gauge invariant. The proof of uniqueness and continuity of $\Psi_{\l_1,\l_2,\l_3}$ is postponed and contained in the proof of Theorems \ref{generalPsi} and \ref{PsiFcontinuous}.
\end{proof}

\subsection{$m$-holed sphere}

The extended graph algebra of $m$-holed sphere is carried by $\ch_{\vec \l}=\ch_{\l_1}\otimes\ch_{\l_2}\otimes\cdots\ch_{\l_m-1}\otimes\ch_{\l_m}$, where $\vec{\l}=(\l_1,\cdots,\l_m)$ labels the holes. For any $\xi\in\suquqt$, the gauge transformation is represented by $(\pi^{\l_1}\otimes\cdots\otimes \pi^{\l_m})(\Delta^{(m)}\xi)$, and $\Delta^{(m)}$ can be decomposed into a sequence of $\Delta$'s by  $\Delta^{(m)}\xi=\Delta_{m-1}\Delta_{m-2}\cdots\Delta_1\xi$, where $\Delta_i$ is the co-mulitplication acting on the $i$-th tensor factor. We apply the unitary map $\mathbb{U}_{m-1,m}=\iota_m\circ\cu_{m-1,m}$ to the last two copies of $\ch$, where $\iota_m$ denotes the unitary transformation diagonalizing $Q''_m$ on the last copy of $\ch$:
\be
\mathbb{U}_{m-1,m}:&& \ch_{\l_1}\otimes\cdots\otimes\ch_{\l_m}\to\int ^{\oplus } d\varrho _{\chi _{m-1}}\mathcal{H}_{\lambda _{1}} \otimes \cdots \otimes \mathcal{H}_{\lambda _{m-2}} \otimes \mathcal{H}_{\chi _{m-1}} \otimes W_{\chi _{m-1}},\label{UHW0} 
\ee
On each fiber Hilbert space of the direct integral, the gauge transformation is represented by  $(\pi^{\l_1}\otimes\cdots\otimes\pi^{\l_{m-2}}\otimes \pi^{\chi_{m-1}})(\Delta^{(m-1)}\xi)$, where $\Delta^{(m-1)}\xi=\Delta_{m-2}\cdots\Delta_1\xi$, and leaves $W_{\chi_{m-1}}$ invariant. Then we apply the unitary map $\mathbb{U}_{m-2,m-1}$, which decomposes into $\iota_{m-1}\circ\cu_{m-2,m-1}$ acting on $ \mathcal{H}_{\lambda _{m-2}} \otimes \mathcal{H}_{\chi _{m-1}}$ of each fiber Hilbert space:
\be
\mathbb{U}_{m-2,m-1}\mathbb{U}_{m-1,m}:&& \ch_{\l_1}\otimes\cdots\otimes\ch_{\l_m}\to\int ^{\oplus } d\varrho _{\chi _{m-1}} d\varrho _{\chi _{m-2}}\mathcal{H}_{\lambda _{1}} \otimes \cdots \otimes \mathcal{H}_{\chi _{m-2}} \otimes W_{\chi _{m-2}} \otimes W_{\chi _{m-1}}\nonumber
\ee
The gauge transformation is represented by  $(\pi^{\l_1}\otimes\cdots\otimes\pi^{\l_{m-3}}\otimes \pi^{\chi_{m-2}})(\Delta^{(m-2)}\xi)$, where $\Delta^{(m-2)}\xi=\Delta_{m-3}\cdots\Delta_1\xi$, and leaves $W_{\chi_{m-2}}\otimes W_{\chi_{m-1}}$ invariant. By iteration, we finally obtain
\be
\mathbb{U}:&& \ch_{\l_1}\otimes\cdots\otimes\ch_{\l_m}\to \int ^{\oplus } d\varrho _{\chi _{2}}(\mathcal{H}_{\lambda _{1}} \otimes \mathcal{H}_{\chi _{2}}) \otimes \mathcal{W}_{\chi _{2}},\label{UHW}
\ee
where $\mathbb{U}$ and $\cw_{\chi_2}$ are given by
\be 
\mathbb{U}&=&\mathbb{U}_{2,3}\cdots\mathbb{U}_{m-2,m-1}\mathbb{U}_{m-1,m},\\
\cw_{\chi_2}&=&\int ^{\oplus } d\varrho _{\chi _{m-1}} d\varrho _{\chi _{m-2}} \cdots d\varrho _{\chi _{3}}\, W_{\chi _{2}} \otimes \cdots \otimes W_{\chi _{m-1}} .
\ee
The representation of gauge transformation $\cd^{\vec{\l}}(\xi)$ acting on $\ch_{\l_1}\otimes\cdots\otimes\ch_{\l_m}$ is transformed to $(\pi^\l\otimes\pi^{\chi_2})(\Delta\xi)$ acting on $\ch_{\l_1}\otimes\ch_{\chi_2}$ while leaving $\cw_{\chi_2}$ invariant for each $\chi$.

At each step of iteration, the unitary operator $\mathbb{U}_{i-1,i}$ is a bijection from $\css_{i-1,i}$ (the common domain of $(\pi^{\l_{i-1}}\otimes\pi^{Q''_{i+1}})(\Delta \xi)$) to $\css_{i-1,i}''$ (the common domain of $\pi^{Q_{i}''}(\xi)$ and $Q''_{i+1}$) \footnote{$\mathbb{U}_{i-1,i}$ replaces $\l_i$ in $\cu_{i-1,i}$ by the operator relating to $Q_{i+1}''$, which acts on another Hilbert space different from the domain of $\cu_{i-1,i}$. The domain analysis in \cite{Han:2024nkf} turns out to be still valid. }. The unitary operator $\mathbb{U}$ is a bijection from the domain $\css_m$ of $(\pi^{\l_1}\otimes\cdots\otimes \pi^{\l_m})(\Delta^{(m)}\xi)$ to the domain $\css''_m$ of $(\pi^{\l_{1}}\otimes\pi^{Q''_3})(\Delta\xi)$ and $\{Q_j'', \widetilde{Q}_j''\}_{j=4}^m$.

The gauge invariant on $\Fd_2\subset\ch_{\l_1}\otimes\ch_{\chi_2}$ is 1-dimensional when $\chi_2=\l_1$ and zero-dimensional otherwise. Therefore, the gauge invariants for $m$-holed sphere should be contained in the Hilbert space $\cw_{\chi_2}$. The states in $\cw_{\chi_2}$ are represented by functions $F(\vec{\mu}_\chi,\vec{m}_\chi)$ where $\vec{\mu}_\chi=(\mu_{\chi_3},\cdots,\mu_{\chi_{m-1}})$ and $\vec{m}_\chi=(m_{\chi_3},\cdots,m_{\chi_{m-1}})$. The inner product is given by
\be 
( F\mid G)=\int \prod_{i=3}^{m-1} d\varrho _{\chi _{i}} \, \overline{F\lt(\vec{\mu}_\chi,\vec{m}_\chi\rt)}\, G\lt(\vec{\mu}_\chi,\vec{m}_\chi\rt).
\ee
The Hilbert spaces $\cw_\chi$ for different $\chi$ are isomorphic, so we can simply denote by $\cw$.

Given the direct integral representation in \eqref{UHW}, any state $\psi\in \ch_{\vec \l}$ can be represented by functions 
\be
\psi(\mu_1,m_1;\mu_2,m_2;\mu_{\chi_2},m_{\chi_2};\vec{\mu}_\chi,\vec{m}_\chi)\label{frepchim}
\ee
with the inner product
\begin{eqnarray}
\lag\psi\mid\psi'\rag&=&\sum_{m_1,m_2\in\Z/N\Z}\int\rmd\mu_1\rmd\mu_2\int\rmd\varrho _{\chi _{2}}\prod_{i=3}^{m-1} \rmd\varrho _{\chi _{i}}\nonumber\\
&&\qquad \overline{\psi(\mu_1,m_1;\mu_2,m_2;\mu_{\chi_2},m_{\chi_2};\vec{\mu}_\chi,\vec{m}_\chi)}\, \psi'(\mu_1,m_1;\mu_2,m_2;\mu_{\chi_2},m_{\chi_2};\vec{\mu}_\chi,\vec{m}_\chi)
\end{eqnarray}
In this representation, the operators $Q_3'', \widetilde{Q}_3''$ and $\{Q_j'', \widetilde{Q}_j''\}_{j=4}^m$ are all diagonalized (The eigenvalue of $Q_j''$ relates to $\mu_{\chi_{j-1}},m_{\chi_{j-1}}$).


\begin{theorem}\label{generalPsi}

There is an 1-to-1 correspondence between $F\in\cw$ and a gauge invariant linear functional $\Psi_F$ on $\sn_3\otimes\cw$ that is continuous on $\cw$.
    
\end{theorem}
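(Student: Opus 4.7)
The strategy is to transport the problem through the unitary $\mathbb{U}$ of \eqref{UHW} into the direct-integral picture, reducing the $m$-holed sphere problem to a $3$-holed sphere problem tensored with $\cw$. In these coordinates any gauge transformation $\cd^{\vec\l}(\xi)$ acts only on the first three-slot factor as $(\pi^{\l_1}\otimes\pi^{\chi_2})(\Delta\xi)$ and leaves $\cw_{\chi_2}$ invariant. As a byproduct, this plan will also supply the uniqueness and continuity postponed in Theorem \ref{3intertwiner} (taking $m=3$ and $\cw$ trivial).

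For existence, I would, given $F\in\cw$, define $\Psi_F$ on simple tensors $f\otimes g$ with $f\in\sn_3$ and $g\in\cw$ by
\[
\Psi_F[f\otimes g]:= (\Psi_{\l_1}\otimes\a_{\l_1})[f]\,\overline{(F,g)_{\cw}},
\]
and extend by linearity. Continuity of $\Psi_{\l_1}$ on $\Fd_2$ (Lemma \ref{bilinearextend}), continuity of $\a_{\l_1}$ on $\overline{\Fd}$ (Lemma \ref{psichiisdistribution}), together with Cauchy--Schwarz on $\cw$, show that $\Psi_F$ is well-defined and continuous in the $\cw$-variable. Gauge invariance is immediate: since $\cd^{\vec\l}(\xi)$ acts trivially on $g\in\cw$ in the direct-integral coordinates, I can pull the gauge action onto $f$ and apply the invariance of $\Psi_{\l_1}\otimes\a_{\l_1}$ already noted in Theorem \ref{3intertwiner}.

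For uniqueness, let $\Psi$ be any gauge-invariant linear functional on $\sn_3\otimes\cw$ that is continuous in the $\cw$-variable. I would fix $g\in\cw$ and study the sliced functional $\phi_g[f]:=\Psi[f\otimes g]$ on $\sn_3$, which remains gauge invariant. The crucial technical step is to establish $\phi_g = c(g)\,\Psi_{\l_1}\otimes\a_{\l_1}$. To get this, I rerun the argument of Theorem \ref{uniquebilinear} with the scalar eigenvalue $\l_2+\l_2^{-1}$ replaced by the multiplication operator $Q''_3$ on the third slot, and likewise $\overline{\l_2}+\overline{\l_2}^{-1}$ replaced by $\widetilde Q''_3$. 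The derivation of Lemma \ref{distributiondelta} (the $K$, $\widetilde K$ constraints) is unaffected and forces the first two-slot part of $\phi_g$ to equal $\Psi_{\l_1}$ up to a scalar function of the third-slot variables; the $E$, $\widetilde E$ constraints then impose the operator equations $\l_1+\l_1^{-1}=Q''_3$ and $\overline{\l_1}+\overline{\l_1}^{-1}=\widetilde Q''_3$, whose joint distributional solution space on $\sn_3$ is spanned by $\a_{\l_1}$ by the spectral analysis of Section \ref{Spectral decompositions}. Linearity of $g\mapsto c(g)$ together with $\cw$-continuity then places $c$ in $\cw^*$, and Riesz yields a unique $F\in\cw$ with $c(g)=\overline{(F,g)_\cw}$, giving $\Psi=\Psi_F$.

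The hardest part will be justifying the fiber-wise reduction rigorously: $\sn_3=(\Fd_2\otimes\sk)\cup\overline{\Fd}_3$ does not split naturally into a family of fiber domains $\ch_{\l_1}\otimes\ch_{\chi_2}$, so there is no direct restriction of $\phi_g$ to a single fiber. The derivation must therefore proceed on $\sn_3$ itself, adapting the Fourier-analytic steps of Lemma \ref{distributiondelta} and the $E$-constraint calculation in the proof of Theorem \ref{uniquebilinear} to an operator-valued setting where $Q''_3$ and $\widetilde Q''_3$ take the place of complex eigenvalues. Showing that these operator equations really pin the third-slot dependence down to $\a_{\l_1}$ alone, rather than to some larger distributional null space, will rely on the normality and spectral completeness of $Q''_3, \widetilde Q''_3$ and the resolution of identity for $\psi_\chi$ established in Section \ref{Spectral decompositions}.
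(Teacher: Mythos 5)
Your overall architecture coincides with the paper's: existence via $\Psi_F=\Psi^{(2)}_{\l_1}\otimes\a_{\l_1}\otimes F$, and uniqueness by slicing off the $\cw$-dependence and reducing to the rank-two analysis of Theorem \ref{uniquebilinear} plus a spectral argument pinning the third slot to $\a_{\l_1}$, with Riesz supplying $F$ at the end. The existence half is complete as you state it.

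The uniqueness half, however, contains exactly the gap you flag yourself, and the paper closes it with a device you did not find. You propose to fix $g\in\cw$ and rerun Lemma \ref{distributiondelta} and the $E$-constraint computation of Theorem \ref{uniquebilinear} on all of $\sn_3$ "in an operator-valued setting where $Q''_3$ and $\wt{Q}''_3$ take the place of complex eigenvalues"; carried out literally this would require a noncommutative version of the division arguments (constructing $h$ with prescribed values at operator-dependent zero loci), and you give no indication of how that would go. The paper instead tests $\Psi$ only against fully factorized vectors $f\otimes\phi\otimes G$ with $f\in\Fd_2$, $\phi\in\sk$, $G\in\cw$ — which is legitimate because $\Fd_2\otimes\sk\subset\sn_3$ by the very definition of $\sn_3$, and $\sk$ lives in the direct-integral coordinates $(\mu_\chi,m_\chi)$ where $Q''_3,\wt{Q}''_3$ are already multiplication operators. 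This decouples the slots: the constraints from $\xi=K,\wt K,KF,\wt K\wt F$ do not involve $Q''_3$ at all, so they force $\Psi[f\otimes\phi\otimes G]=C_G[\phi]\,\Psi^{(2)}_{\l_1}[f]$ by the two-slot uniqueness already proved; the constraints from $\xi=E,\wt E$ then become the \emph{scalar} conditions $C_G\lt[\lt(\l_1+\l_1^{-1}-(\chi+\chi^{-1})\rt)\phi\rt]=C_G\lt[\lt(\overline{\l}_1+\overline{\l}_1^{-1}-(\overline\chi+\overline\chi^{-1})\rt)\phi\rt]=0$ on $\sk$, which are solved by the elementary division argument of Appendix \ref{App:spectral representation} (the analogue of Lemma \ref{distributiondelta} for the variable $\chi+\chi^{-1}$), yielding $C_G=F[G]\,\a_{\l_1}$. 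No operator-valued Fourier analysis and no appeal to the resolution of identity for $\psi_\chi$ is needed at this stage. Until you either supply the operator-valued division argument or adopt the factorized-test-function reduction, your uniqueness proof is incomplete at its central step.
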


\begin{proof}
The gauge invariant $\Psi$ satisfies
\be 
\Psi[\cd^{\vec\l}(\xi)\psi]=\Psi[(\pi^{\l_{1}}\otimes\pi^{Q''_3})(\Delta\xi)\psi]=\eps(\xi)\Psi[\psi]
\ee
for all $\xi\in\suquqt$ and $\psi\in\sn_3\otimes\cw$.

Let us choose $\psi =f( \mu _{1} ,m_{1} ; \mu _{2} ,m_{2} )\ \phi (\mu_\chi,m_\chi) \ G(\vec{\mu}_\chi,\vec{m}_\chi)$ for any $f\in\Fd_2$, $G\in\cw$, and $\phi\in\sk$. We may write $\Psi[\psi]=\Psi[f\otimes\phi\otimes G]\equiv\Psi_{\phi,G}[f]$. $\Psi_{\phi,G}$ is a linear functional on $\Fd_2$.

Although generally $\pi^{Q''_3}$ contains $Q_3'',\wt{Q}_3''$ that operate on $\mu_{\chi_2},m_{\chi_2}$, for some special $\xi$, e.g. $\xi=K,KF$, $(\pi^{\l_{1}}\otimes\pi^{Q''_3})(\Delta\xi)$ does not contain  $Q_3'',\wt{Q}_3''$ and only operates on $\mu_1,m_1;\mu_2,m_2$. Let us choose $\xi=K,\wt{K},KF,\wt{K}\wt{F}$ and recall the proof for Theorem \ref{uniquebilinear}. Solving for invariant with only $\xi=K,\wt{K},KF,\wt{K}\wt{F}$ already determine the invariant bilinear $\Psi_{\l_1}^{(2)}$ up to complex rescaling (we have added a superscript for emphasizing its rank), while considering other generators of $\xi$ only amounts to restrict the representations $\l_1=\l_2^{\pm1}$. This implies that $\Psi_{\phi,G}$ is an invariant bilinear, where different $\phi,G$ correspond to rescaling. If we denote $\Psi_{\l_1}^{(2)}$ to be a fiducial invariant bilinear, then
\be
\Psi[f\otimes\phi\otimes G]=C_G[ \phi] \Psi ^{(2)}_{\l_1}[ f].
\ee
Similar to in the proof of Theorem \ref{uniquebilinear}, when we apply $\xi=E,\widetilde{E}$, the invariance results in the constraint
\be 
C_{G}\left[\left( \lambda _{1} +\lambda _{1}^{-1} -Q_3''\right) \phi \right] =C_{G}\left[ \left(\overline{\lambda }_{1} +\overline{\lambda }_{1}^{-1} -\widetilde{Q}_3''\right) \phi \right]=0.
\ee
It implies $C_G=F[G]\alpha_{\l_1}$ (see Appendix \ref{App:spectral representation}), where $F$ is a linear functional on $\cw$. Moreover, the continuity implies $F\in\cw$ and $F[G]=\lag F\mid G\rag$ by Riesz's Theorem. As a result, we obtain
\be 
\Psi=\Psi_{\l_1}^{(2)}\otimes\alpha_{\l_1}\otimes F\equiv \Psi_F,\qquad \forall F\in\cw,
\ee
which manifests the 1-to-1 correspondence between $F\in\cw$ and $\Psi_F$.

\end{proof}

We also consider the representation 
\be
\psi(\mu_1,m_1;\mu_2,m_2;\mu_{3},m_{3};\vec{\mu}_\chi,\vec{m}_\chi)\label{frepmumchim}
\ee
generalizing \eqref{frepmum}, where $Q_3'', \widetilde{Q}_3''$ are not diagonalized and take the form as \eqref{Q2pp1} and \eqref{Q2pp2}. The inner product is represented by
\begin{eqnarray}
\lag\psi\mid\psi'\rag&=&\sum_{m_1,m_2,m_3\in\Z/N\Z}\int\rmd\mu_1\rmd\mu_2\rmd\mu_3\int\prod_{i=3}^{m-1} \rmd\varrho _{\chi _{i}}\nonumber\\
&&\qquad \overline{\psi(\mu_1,m_1;\mu_2,m_2;\mu_{3},m_{3};\vec{\mu}_\chi,\vec{m}_\chi)}\, \psi'(\mu_1,m_1;\mu_2,m_2;\mu_{3},m_{3};\vec{\mu}_\chi,\vec{m}_\chi).
\end{eqnarray}
We define the Fr\'echet space $\overline{\Fd}_m\subset \ch_{\vec \l}$ with the semi-norms $\Vert f\Vert_{\vec{\a},\vec{\b}}=\Vert\bm{u}_{\a_1,\b_1}^{(1)}\bm{u}_{\a_2,\b_2}^{(2)}\bm{u}_{\a_3,\b_3}^{(3)}f\Vert$. Here we only use the operators for the first three slots to define the semi-norm, because only these operators are involved in the representation $\cd^{\vec\l}(\xi)\simeq (\pi^{\l_{1}}\otimes\pi^{Q''_3})(\Delta\xi)$. $\overline{\Fd}_m$ is inside the domain of $(\pi^{\l_{1}}\otimes\pi^{Q''_3})(\Delta\xi)$ and is invariant by the actions of $(\pi^{\l_{1}}\otimes\pi^{Q''_3})(\Delta\xi)$.

The action of $\Psi_F$ on $f\in\overline{\Fd}_m$ can be written explicitly by using \eqref{PsiandV1}
\be
\Psi_F[f]=\sum _{n_1,{m} \in \mathbb{Z} /N\mathbb{Z}}\int d\nu _{1} d\mu \int\prod_{i=3}^{m-1}d\varrho _{\chi_i}  \overline{\psi _{\lambda _{1}}( \mu ,m)} \, \overline{F(\vec{\mu }_\chi,\vec{m}_\chi)} \, \bm{V}_{1}^{-1}f'\lt( \nu _{1} ,n_{1} ;-\nu _{1} -n_{1} ;\mu ,m ;\vec{\mu }_\chi,\vec{m}_\chi\rt),\nonumber\\
\label{PsiFfexplicit}
\ee
where $f'=\cf\cd_{\l_1}^{-1}\cs_{\l_2}f$ with $\cf\cd_{\l_1}^{-1}\cs_{\l_2}$ operating at the third slot.

\begin{theorem}\label{PsiFcontinuous}
For any $F\in\cw$, the linear functional $\Psi_F$ is continuous and invariant on $\overline{\Fd}_m$.
\end{theorem}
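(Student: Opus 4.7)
The plan is to prove continuity first, then bootstrap invariance from Theorem \ref{generalPsi} by a density argument, using the explicit expression \eqref{PsiFfexplicit} as the starting point. Observe that $\Psi_F$ has a nested structure: for fixed $(\vec{\mu}_\chi,\vec{m}_\chi)$, the inner integral over $(\nu_1,n_1,\mu,m)$ is just the linear functional $\Psi_{\l_1}^{(2)}\otimes\alpha_{\l_1}$ (established already in Theorem \ref{3intertwiner}) applied to the ``slice'' $f_{\vec{\mu}_\chi,\vec{m}_\chi}(\mu_1,m_1;\mu_2,m_2;\mu_3,m_3)$ of $f$; then $F$ is integrated against the resulting function of $(\vec{\mu}_\chi,\vec{m}_\chi)$ using the $\cw$ inner product. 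Writing $\Theta_f(\vec{\mu}_\chi,\vec{m}_\chi):=(\Psi_{\l_1}^{(2)}\otimes \a_{\l_1})[f_{\vec{\mu}_\chi,\vec{m}_\chi}]$, I have
\be
\Psi_F[f]=\langle F\mid \Theta_f\rangle_{\cw},\qquad |\Psi_F[f]|\le \Vert F\Vert_\cw\,\Vert\Theta_f\Vert_\cw
\nonumber
\ee
by Cauchy-Schwarz, so the task reduces to bounding $\Vert\Theta_f\Vert_\cw$ by a finite combination of the semi-norms of $\overline{\Fd}_m$.

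To bound $\Theta_f$ pointwise in $(\vec{\mu}_\chi,\vec{m}_\chi)$, I would combine the $g$-trick of Lemma \ref{bilinearextend} (which controls the $\Psi_{\l_1}^{(2)}$-action on the first two slots by $\Vert\cdot\Vert_{\a,\b}$ semi-norms) with the continuity of $\a_{\l_1}=\a_\chi|_{\chi=\l_1}$ on $\overline{\Fd}$ (Lemma \ref{psichiisdistribution}), which controls the $\a_{\l_1}$-action on the third slot. The result is a pointwise bound of the shape $|\Theta_f(\vec{\mu}_\chi,\vec{m}_\chi)|\le C\sum_{(\vec\a,\vec\b)\in\ci}\Vert \bm{P}_{\vec\a,\vec\b}f_{\vec{\mu}_\chi,\vec{m}_\chi}\Vert_{L^2_{123}}$, where $\ci$ is a finite index set and each $\bm{P}_{\vec\a,\vec\b}$ is a monomial in $\bm{u}_{\a,\b}^{(i)},\bmy^{(i)}$ for $i=1,2,3$. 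Squaring, applying the elementary inequality $(\sum_{j=1}^n a_j)^2\le n\sum a_j^2$, integrating in $(\vec{\mu}_\chi,\vec{m}_\chi)$ against $\prod d\varrho_{\chi_i}$, and using Fubini gives
\be
\Vert\Theta_f\Vert_\cw^2\le C'\sum_{(\vec\a,\vec\b)\in\ci}\Vert \bm{P}_{\vec\a,\vec\b}f\Vert_{\ch_{\vec\l}}^2,
\nonumber
\ee
which is exactly a finite sum of squared semi-norms of $f$ in $\overline{\Fd}_m$. This establishes continuity.

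For invariance, Theorem \ref{generalPsi} already gives $\Psi_F[\cd^{\vec\l}(\xi)\psi]=\eps(\xi)\Psi_F[\psi]$ on $\sn_3\otimes \cw$. The gauge transformation $\cd^{\vec\l}(\xi)$, after the unitary equivalence $\mathbb{U}$, acts as $(\pi^{\l_1}\otimes\pi^{Q_3''})(\Delta\xi)$ on the first three slots and as identity on the remaining $\cw$-factor; in particular it is a Laurent polynomial in $\bm{u}_{\a,\b}^{(i)},\bmy^{(i)}$ ($i=1,2,3$) and $Q_3'',\widetilde Q_3''$, hence it maps $\overline{\Fd}_m$ into itself and is continuous with respect to the Fréchet topology. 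I would then appeal to the density of $\sn_3\otimes\cw$ in $\overline{\Fd}_m$, which reduces to the separate density of $\sn_3=(\Fd_2\otimes \sk)\cup\overline{\Fd}_3$ in $\overline{\Fd}_3$ (itself established earlier by the same Fréchet-completeness argument as in Lemma \ref{Frechet}, since $\Fd\subset \overline{\Fd}$ is dense and $\sk$ is dense in the spectral $L^2$). By continuity of both $\Psi_F$ and $\Psi_F\circ \cd^{\vec\l}(\xi)$, invariance extends from the dense subspace to all of $\overline{\Fd}_m$.

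The main obstacle is the pointwise estimate of $|\Theta_f(\vec{\mu}_\chi,\vec{m}_\chi)|$: one must choose the auxiliary $g\in\ch$ and the contour-shift/$y^{\pm l}$ trick of Lemma \ref{bilinearextend} so that the resulting bound is \emph{uniform} in $(\vec{\mu}_\chi,\vec{m}_\chi)$ with constants that, after squaring and integrating against $\prod d\varrho_{\chi_i}$, still give a finite Hilbert-space norm of a polynomial of $\bm{u}_{\a,\b}^{(i)},\bmy^{(i)}$ applied to $f$; the $\a_{\l_1}$-bound is independent of $(\vec{\mu}_\chi,\vec{m}_\chi)$ since $\l_1$ is fixed, so this step works without further adjustment, but the bookkeeping of which seminorms of $\overline{\Fd}_m$ are invoked is the only non-routine point.
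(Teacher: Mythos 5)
Your proposal is correct and follows essentially the same route as the paper: the continuity estimate is the same $g$-trick of Lemma \ref{bilinearextend} combined with the $\omega_{\l_1}$-bound from Lemma \ref{psichiisdistribution}, merely organized as a nested Cauchy--Schwarz (first pointwise in $(\vec\mu_\chi,\vec m_\chi)$, then in $\cw$) instead of the paper's single global estimate on \eqref{PsiFfexplicit}, and the invariance is obtained by the same density-plus-continuity limit argument (the paper verifies invariance directly on the dense subspace $\overline{\Fd}_2\otimes\overline{\Fd}\otimes\cw$ rather than citing Theorem \ref{generalPsi} on $\sn_3\otimes\cw$, but this is an immaterial difference).
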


\begin{proof}
For the continuity, we employ the above explicit expression of $\Psi_F$ and use a trick similar to \eqref{trickg}
\be
&&\lt|\Psi_F[f]\rt|\nonumber\\
&=&\lt|\sum _{n_1,{m} \in \mathbb{Z} /N\mathbb{Z}}\int d\nu _{1} d\mu_2d\mu \int\prod_{i=3}^{m-1}d\varrho _{\chi_i}\, \overline{g( \nu _{1} ,n_{1} ,\mu _{2} ,m_{2})} \ e^{\frac{2\pi i}{N}( \mu _{2} \nu _{1} -m_{2} n_{1})} \ \rt.\nonumber\\
&&\qquad \lt.\overline{\psi _{\lambda _{1}}\left( \mu +ib^{-1} ,m+1\right)} \, \overline{F(\vec{\mu }_\chi,\vec{m}_\chi)} \frac{\bm{V}_1^{-1}f'\lt( \nu _{1} ,n_{1} ;-\nu _{1} -n_{1} ;\mu-ib ,m+1 ;\vec{\mu }_\chi,\vec{m}_\chi\rt)}{\overline{g( \nu _{1} ,n_{1} ,\mu _{2} ,m_{2})}}\rt|\nonumber\\
&\leq &\Vert g\Vert\, \Vert\o_{\l_1}|_{\epsilon=0}\Vert\, \Vert F\Vert \, \left\Vert \left( \bm{u}_{1}^{l-1} +\bm{u}_1^{-l-1}\right)\left( \bm{y}_{2}^l +\bm{y}_{2}^{-l}\right) \left( \bm{y}^l +\bm{y}^{-l}\right)\bm{u}^{-1} f'\right\Vert.
\ee
The unitary transformation $\cf\cd_{\l_1}^{-1}\cs_{\l_2}$ maps any polynomial of $\bm{u},\bmy$ to a polynomial of $\bm{u}_3,\bmy_3$. For the invariance, one can check that the invariance $\Psi_F[(\pi^{\l_{1}}\otimes\pi^{Q''_3})(\Delta\xi)\psi]=\eps(\xi)\Psi_F[\psi]$ holds for all $\psi$ in $ \overline{\Fd}_2\otimes \overline{\Fd}\otimes\cw$, which is dense in $\overline{\Fd}_m$. Then one can use the limit argument the same as in the proof of Lemma \ref{bilinearextend}.

\end{proof}

We denote by $\mathscr{D}_m$ the maximal domain of $\psi$ such that the invariance $ \Psi_F[(\pi^{\l_{1}}\otimes\pi^{Q''_3})(\Delta\xi)\psi]=\eps(\xi)\Psi_F[\psi]$ holds for all $F\in\cw$. $\mathscr{D}_m$ is clearly dense in $\ch_{\vec\l}$ since it contains both $\overline{\Fd}_m$ and $\sn_3\otimes \cw$.



The physical Hilbert spaces are the space of gauge invariant linear functionals. The above discussion motivates us to define the physical Hilbert space by $\ch_{phys}^{\vec \l}\simeq \cw$. As a special case, when $m=3$, the Hilbert space $\cw$ is one dimensional. $\Psi_F$ reduces to $\Psi_{\l_1,\l_2,\l_3}$, and our result proves Theorem \ref{3intertwiner}.



\subsection{``3j-symbol'' notation}

The gauge invariants can be formulated formally in terms of a ``3j-symbol'' notation: We define the ``3j-symbol'' by 
\begin{eqnarray}
	\Phi_{\lambda_{1}\lambda_{2}}^{\chi}=\langle\alpha_{\chi}\mid{\cal U}_{12}\ . 
	\label{Phinotation}
\end{eqnarray}
Use the relations \eqref{DeltaYK12} - \eqref{DeltaXtilde12} and the fact that $\alpha_\chi$ is the eigenstate of $Q_2''$ and $\widetilde{Q}_2''$, we obtain the intertwining relation
\begin{eqnarray}
\Phi_{\lambda_{1}\lambda_{2}}^{\chi}\left(\pi^{\lambda_{1}}\otimes\pi^{\lambda_{2}}\right)\left(\Delta\xi\right)=\pi^{\chi}\left(\xi\right)\Phi_{\lambda_{1}\lambda_{2}}^{\chi} .\label{intertwiningrelation}
\end{eqnarray}
Following the iterative procedure in \eqref{UHW0} - \eqref{UHW}, we define 
\begin{eqnarray}
\Phi^{\chi_2,\chi_3,\cdots,\chi_{m-1}}_{\l_2,\l_3,\cdots,\l_{m-1},\l_{m}}:=\Phi^{\chi_2}_{\l_2\chi_3}\Phi^{\chi_3}_{\l_3\chi_4}\cdots \Phi^{\chi_{m-2}}_{\l_{m-2}\chi_{m-1}} \Phi^{\chi_{m-1}}_{\l_{m-1}\l_{m}},\label{gluingPhis}
\end{eqnarray}
which satisfies 
\begin{eqnarray}
	\Phi^{\chi_2,\chi_3,\cdots,\chi_{m-1}}_{\l_2,\l_3,\cdots,\l_{m-1},\l_{m}}\left(\pi^{\lambda_{2}}\otimes\pi^{\lambda_{3}}\otimes\cdots\otimes\pi^{\lambda_{m}}\right)\left(\Delta^{(m-1)}\xi\right)=\pi^{\chi_{2}}\left(\xi\right)\Phi^{\chi_2,\chi_3,\cdots,\chi_{m-1}}_{\l_2,\l_3,\cdots,\l_{m-1},\l_{m}}.
\end{eqnarray}
The gauge invariants are given by 
\begin{eqnarray}
\Psi_F[f]&=&\int\prod_{i=3}^{m-1}d\varrho _{\chi_i}\overline{F\lt(\vec{\mu}_\chi,\vec{m}_\chi\rt)}\Psi_{\vec{\mu}_\chi,\vec{m}_\chi}\lt[f\rt],
\ee
where
\be
\Psi_{\vec{\mu}_\chi,\vec{m}_\chi}\lt[f\rt]&=&\Psi_{\l_1}^{(2)}\lt[\lt(\mathrm{id}_{\l_1}\otimes\Phi^{\l_1,\chi_3,\cdots,\chi_{m-1}}_{\l_2,\l_3,\cdots,\l_{m-1},\l_{m}}\rt)f\rt].
\end{eqnarray}

\subsection{Refined algebraic quantization and physical observables}\label{RAQ}

The above discussion of the physical Hilbert space is mainly based on the direct integral representation. We can relate our result to the language of refined algebraic quantization \cite{thiemann2008modern,Giulini:1998rk}: We may write 
\be
\Psi_F[f]=\lt(F\mid {\eta_0(f)}\rt)
\ee
where $(\cdot\mid\cdot)$ is the physical inner product on $\cw$ and
\be
{\eta_0(f)\lt(\vec{\mu }_\chi,\vec{m}_\chi\rt)}=\sum _{n_1\in \mathbb{Z} /N\mathbb{Z}}\int d\nu _{1}\bm{V}_{1}^{-1}f\lt( \nu _{1} ,n_{1} ;-\nu _{1} -n_{1} ;\mu_{\l_1} ,m_{\l_1} ;\vec{\mu }_\chi,\vec{m}_\chi\rt).
\ee
The anti-linear rigging map $\eta:\overline{\Fd}_m\to \overline{\Fd}_m^*$ (the space of linear functionals) is defined by
\be
\eta(f')[f]:=\Psi_{\eta_0(f')}[f].\label{riggingmap}
\ee
It may also be written formally as
\be
\eta(f')[f]&=&\sum_I\overline{\Psi_{F_I}[f']}\Psi_{F_I}[f]=\lag f'\mid\Psi_{\l_1}^{(2)}\otimes\a_{\l_1}\rangle\langle\Psi_{\l_1}^{(2)}\otimes\a_{\l_1}\mid f\rag
\ee
where $\{F_I\}$ is a basis in $\cw$. The relation between physical inner product and rigging map is given by
\be
\lt(\eta(f)\mid\eta(f')\rt)_{phys}:=\eta(f')[f]=\lt(\eta_0(f')\mid\eta_0(f)\rt).
\ee
The physical Hilbert space is spanned by $\eta(\overline{\Fd}_m)/\mathrm{ker}(\eta)$ (the kernel is with respect to $(\cdot\mid\cdot)_{phys}$) and is isomorphic to $\cw$ \footnote{$\eta(\overline{\Fd}_m)$ maps to $\cw$ by definition. The map is surjective, because for any $F\in\cw$, we can find $\eta(f)$ where $f=f_1\lt( \nu _{1} ,n_{1} ;-\nu _{1} -n_{1} ;\mu_{\l_1} ,m_{\l_1}\rt)F\lt(\vec{\mu }_\chi,\vec{m}_\chi\rt)$ for some $f_1$, such that $\eta_0(f)= cF$ for some constant $c$. }.

Recall that in the extended graph algebra $\Fs_{g,m}\otimes\wt{\Fs}_{g,m}$, the commutation relation between $\xi\in\suquqt$ and $\bm{\co}\in \cl_{g,m}\otimes\wt{\cl}_{g,m}$ is given by $\xi\bm{\co}=\sum_{\a}\xi^{(1)}_\a(\bm{\co})\xi_\a^{(2)}$. If $\bm{\co}$ is gauge invariant, $\xi(\bm{\co})=\eps(\xi)\bm{\co}$, we obtain the commutativity $\xi\bm{\co}=\bm{\co}\xi$.

The gauge invariant operator $\bm{\co}$ acts naturally on the physical Hilbert space $\cw$: For any $f\in \mathscr{D}_m$ and any gauge invariant bounded operator $\bm{\co}^\dagger$ leaving $\mathscr{D}_m$ invariant, 
\be
\eps(\xi)\Psi_F\lt[\bm{\co}^\dagger f\rt]=\Psi_F\lt[\cd^{\vec{\l}}(\xi)\bm{\co}^\dagger f\rt]=\Psi_F\lt[\bm{\co}^\dagger\cd^{\vec{\l}}(\xi)f\rt],\qquad \forall F\in\cw,
\ee
implies $\Psi_F\lt[\bm{\co}^\dagger\cdot\rt]$ is a gauge invariant linear functional on $\mathscr{D}_m$ and continuous on $\cw$. 
By Theorem \ref{generalPsi}, there exists $F'\in\cw$ such that $\Psi_F\lt[\bm{\co}^\dagger f\rt]=\Psi_{F'}\lt[f\rt]$, and therefore we obtain an bounded operator $\mathsf{O}:F\mapsto F'$ on $\cw$. In the notation of refined algebraic quantization, the relation can be written as
\be
\eta(f')[\bm{\co}^\dagger f]\equiv \mathsf{O}'\eta(f')[ f]=\lt(\mathsf{O}\eta_0(f')\mid\eta_0(f)\rt).
\ee
$\mathsf{O}$ is an operator only acting on $\vec\mu_\chi,\vec{m}_\chi$. $\mathsf{O}$ commute with $\eta_0$ and can be defined on $\ch_{\vec\l}$: $\mathsf{O}\eta_0(f)=\eta_0(\mathsf{O}f)$ holds on a dense domain in $\ch_{\vec\l}$ \footnote{This is clearly valid at least for factorized $f$, i.e. $f=f_1\lt( \nu _{1} ,n_{1} ;-\nu _{1} -n_{1} ;\mu_{\l_1} ,m_{\l_1}\rt)F\lt(\vec{\mu }_\chi,\vec{m}_\chi\rt)$. Then one can extend the domain of $\mathsf{O}$ to entire $\ch_{\vec\l}$ by continuity. Given the factorization $\ch_{\vec\l}\simeq \ch^{\otimes 3}\otimes\cw$, as the operator on $\ch_{\vec\l}$, $\mathsf{O}$ may be written as $\bm{1}_{\ch^{\otimes 3}}\otimes\mathsf{O}$. Similary $\mathsf{O}^\dagger$ may be written as $\bm{1}_{\ch^{\otimes 3}}\otimes\mathsf{O}^\dagger$, where the second dagger is on $\cw$.}. As a result, the gauge invariant operator $\bm{\co}$ results in $\mathsf{O}'$ acting on the physical state $\eta(f)$ and $\mathsf{O}$ acting on $\ch_{\vec\l}$ satisfying \footnote{The factorized $f$ form a basis in $\mathscr{D}_m$, for $f,f'$ belong to this basis, we have $(\eta_0(\mathsf{O}f')\mid\eta_0(f)) = (\eta_0(f')\mid\eta_0(\mathsf{O}^\dagger f)) $.}
\be
\mathsf{O}'\eta(f')[ f]=\eta(\mathsf{O}f')[f]=\eta(f')[\mathsf{O}^\dagger f],
\ee
for $f,f'$ in a dense domain. We obtain the equivalence $\bm{\co}^\dagger\sim \mathsf{O}^\dagger$ in the sense of $\eta(f')[\bm{\co}^\dagger f]=\eta(f')[\mathsf{O}^\dagger f]$.

The moduli algebra of physical observables is defined by the $*$-algebra of gauge invariant operators modulo the ideal generated by operators $\tr^I[(\bm{M}_0^I-\bm{1})\bm{X}]$ and $\tr^I[(\wt{\bm{M}}_0^I)-\bm{1})\bm{X}]$, where $\bm{X}$ are some operators inserted in order to make the traces gauge invariant. The representation of moduli algebra is often employed in the literature of combinatorial quantization to derive the physical Hilbert space \cite{Alekseev:1995rn,BNR}. However, the above argument indicates that when acting on physical states,  it is sufficient to focus on operators such as $\mathsf{O},\mathsf{O}^\dagger$, which only act on $\vec\mu_\chi,\vec{m}_\chi$, in order to formulate the gauge invariant observables on the physical Hilbert space, due to the above equivalence $\bm{\co}^\dagger\sim \mathsf{O}^\dagger$. Therefore, instead of the moduli algebra, we consider $\sm$ the $*$-algebra of bounded operators on $\ch_{\vec \l}$ that only act on $\vec{\mu}_\chi,\vec{m}_\chi$. Given any pair of operators $\mathsf{O},\mathsf{O}^\dagger\in\sm$, they satisfy
\be
\mathsf{O}'\eta(f')[ f]:=\eta(f')[\mathsf{O}^\dagger f]=\eta(\mathsf{O}f')[f],\qquad \mathsf{O}^{\prime\star}\eta(f')[ f]:=\eta(f')[\mathsf{O} f]=\eta(\mathsf{O}^\dagger f')[f].
\ee
As a standard derivation in refined algebraic quantization,
\be
\lt(\eta(f)\mid\mathsf{O}'\eta(f')\rt)_{phys}=\eta(f')[\mathsf{O}^\dagger f]=\lt(\eta(\mathsf{O}^{\dagger}f)\mid\eta(f')\rt)_{phys}=\lt(\mathsf{O}^{\prime\star}\eta(f)\mid\eta(f')\rt)_{phys},
\ee
shows that the physical Hilbert space $\cw$ carries a $*$-representation of $\sm$. In the next section, we provide useful examples for operators in $\sm$, which are a collection of Wilson loop operators relating to the pants decomposition.

\section{Wilson loops and Fenchel-Nielsen representation}\label{Wilson loop operators}

\subsection{Pants decomposition and Wilson loops}

This subsection relate the procedure in \eqref{UHW} to a pants decomposition of the $m$-holed sphere. Let us first illustrate the main idea by FIG.\ref{pants} as a pants decomposition of 5-holed sphere. The representations labels $\l_1,\cdots,\l_5$ associates to five holes. $\chi_4$ resulting from the CG decomposition of $\pi^{\l_4}\otimes\pi^{\l_5}$ associates to the cut (dashed circle associated with $\chi_4$) that makes a pair of pants together with the holes of $\l_4$ and $\l_5$. The next step of CG decomposition of $\pi^{\l_3}\otimes\pi^{\chi_4}$ associates to the cut associated with $\chi_3$, and the cut makes a pair of pants together with the hole of $\l_3$ and the new hole of $\chi_4$ from the last cut.

\begin{figure}[h]
	\centering
	\includegraphics[width=0.6\textwidth]{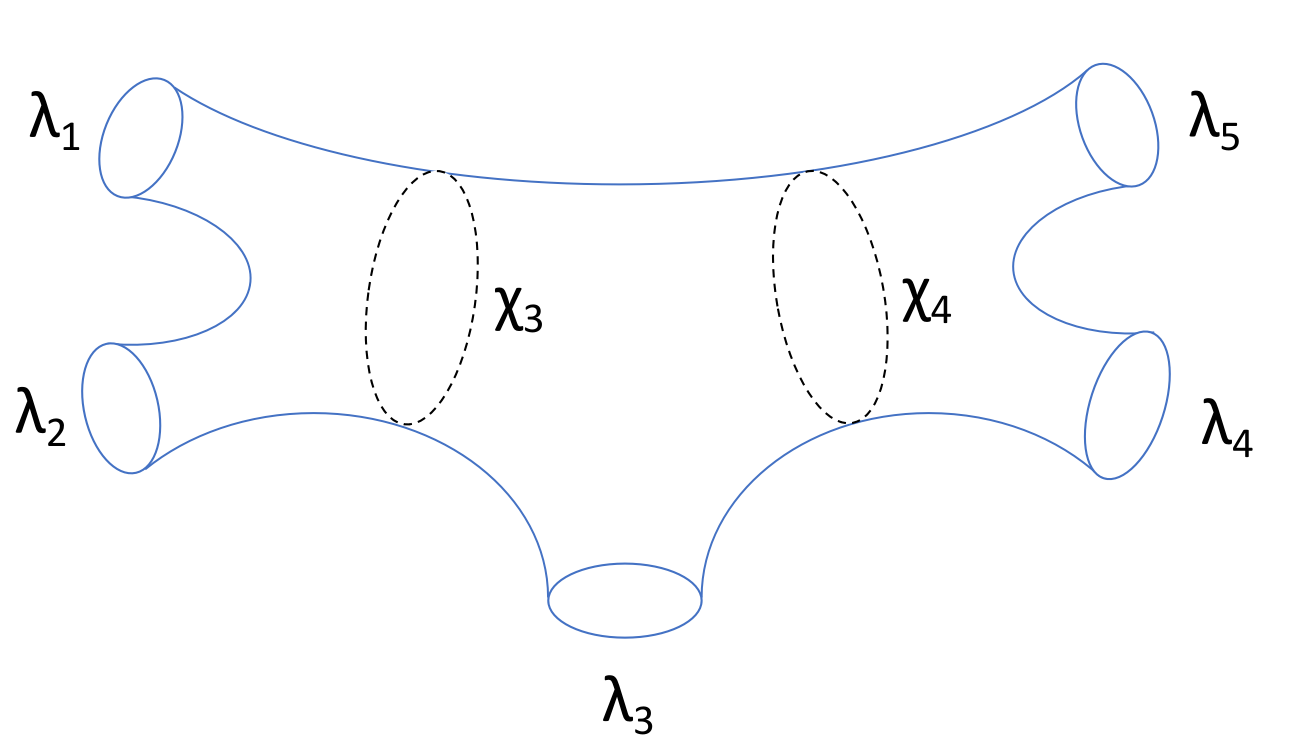}
	\caption{A decomposition of a 5-holed sphere into 3 pairs of pants: The holes are colored by $\l_1,\cdots,\l_5$. 2 cuts (the dashed circle) are colored by $\chi_3$ and $\chi_4$.}
	\label{pants}
\end{figure}

The ``3j-symbol'' $\Phi_{\lambda_{1}\lambda_{2}}^{\chi}$ associates to a pair of pants whose three holes are colored by $\l_1,\l_2$, and $\chi$ (see FIG.\ref{apairofpants}). $\Phi^{\l_1,\chi_3,\cdots,\chi_{m-1}}_{\l_2,\l_3,\cdots,\l_{m-1},\l_{m}}$ in \eqref{gluingPhis} corresponds to gluing a sequence of pants to a $m$-holed sphere, where the holes are colored by $\l_1,\cdots,\l_m$, while the cuts are colored by $\chi_3,\cdots,\chi_{m-1}$.

\begin{figure}[h]
	\centering
	\includegraphics[width=0.3\textwidth]{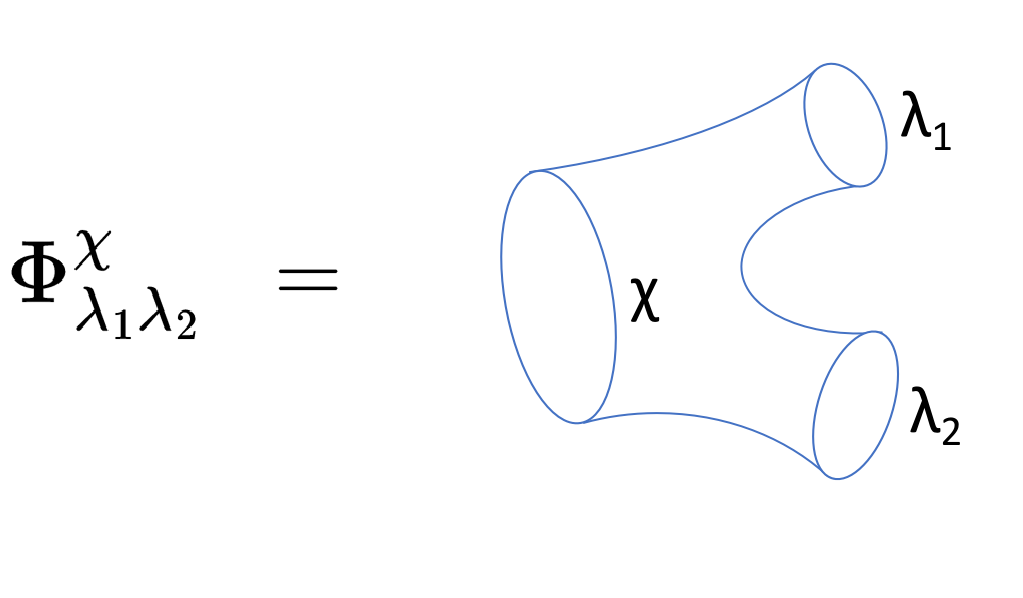}
	\caption{A pair of pants associates to the ``3j-symbol'' $\Phi_{\lambda_{1}\lambda_{2}}^{\chi}$. }
	\label{apairofpants}
\end{figure}

The correspondence with pants decomposition can be formulated more precisely in terms of the Wilson loop operator along the cut. Given a loop that enclose two consecutive holes labelled by $\l_m,\l_{m-1}$, the holomorphic quantum holonomy with the representation $I$ along the loop is given by $\bm{M}^I_{m,m-1}\equiv\bm{M}_{m}^{I}\bm{M}_{m-1}^{I}$. Its representation on $\ch_{\vec \l}$ is 
\be 
{\cal D}^{\vec{\lambda}}\left(\bm{M}_{m}^{I}\right){\cal D}^{\vec{\lambda}}\left(\bm{M}_{m-1}^{I}\right)	
&=&\Delta_{1}^{(m-1)}\left(R_{01}^{\prime}\right)\bm{M}_{0m}\Delta_{1}^{(m-1)}\left(R_{01}^{\prime-1}\right)\Delta_{1}^{(m-2)}\left(R_{01}^{\prime}\right)\bm{M}_{0m-1}\Delta_{1}^{(m-2)}\left(R_{01}^{\prime-1}\right)\nonumber\\
&=&R_{01}^{\prime}R_{02}^{\prime}\cdots R_{0m-1}^{\prime}\left(R_{0m}^{\prime}R_{0m}\right)R_{0m-1}^{\prime-1}\left(R_{0m-1}^{\prime}R_{0m-1}\right)R_{0m-2}^{\prime-1}\cdots R_{01}^{\prime-1}\nonumber\\
&=&R_{01}^{\prime}R_{02}^{\prime}\cdots R_{0m-2}^{\prime} R_{0m-1}^{\prime}R_{0m}^{\prime}R_{0m}R_{0m-1}R_{0m-2}^{\prime-1}\cdots R_{01}^{\prime-1}\nonumber\\
&=&\Delta_{1}^{(m-2)}\left(R_{01}^{\prime}\right)\Delta_{m-1}\left(R_{0m-1}^{\prime}R_{0m-1}\right)\Delta_{1}^{(m-2)}\left(R_{01}^{\prime-1}\right)
\ee
where we have neglect the representation labels $I,\vec{\l}$ in the derivation. The 0-th tensor slot carries the representation $\rho^I$, and the $i$-th tensor ($i>0$) slot carries $\pi^{\l_i}$. 

At the algebraic level, the identity $\bm{1}$ and the matrix elements of $\bm M^I_{m,m-1}$ for all $I\in\Z_+/2$ form a holomorphic loop algebra \cite{Hsiao:2024phb}: 
\be
\left(R^{-1}\right)^{IJ}\bm{M}_{m,m-1}^{I}R^{IJ}\bm{M}_{m,m-1}^{J}
&=&	\sum_{K}(C_{1}^{IJ})_{K}\bm{M}_{m,m-1}^{I}(C_{2}^{IJ})^{K}.
\ee
Similarly, $\wt{\bm M}^I_{m,m-1}=\wt{\bm{M}}_{m}^{I}\wt{\bm{M}}_{m-1}^{I}$ give the anti-holomorphic loop algebra. The $*$-structure implies
\be
(\bm{M}_{m,m-1}^{I})_{\ k}^{i}{}^{*}&=&[(\wt{\bm{M}}_{m,m-1}^{I})^{-1}]_{\ i}^{k}.
\ee

The holomorphic quantum Wilson loop is the quantum trace of the quantum holonomy $\bm{W}^I_{m,m-1}:=\tr_{\bm q}^I\lt[\bm{M}_{m}^{I}\bm{M}_{m-1}^{I}	\rt]$, where $\tr_{\bm q}^I$ denotes the quantum trace of the representation $I$ (recall Section \ref{Finite-dimensional irreducible representations}). The Wilson loop operator is the representation of $\bm{W}^I_{m,m-1}$ on $\ch_{\vec \l}$:
\be 
\cd^{\vec \l}\lt(\bm{W}^I_{m,m-1}\rt)=\tr_{\bm q}^I\lt[{\cal D}^{\vec{\lambda}}\left(\bm{M}_{m}^{I}\right){\cal D}^{\vec{\lambda}}\left(\bm{M}_{m-1}^{I}\right)	\rt]=\lt(\pi^{\l_{m-1}}\otimes\pi^{\l_m}\rt)\Delta_{m-1}\tr_{\bm q}^I\lt[R_{0m-1}^{\prime}R_{0m-1}	\rt],\label{wl1}
\ee
where we have used Lemma \ref{quantumtracelemma}. The anti-holomorphic Wilson loop is defined similarly by $\wt{\bm{W}}^I_{m,m-1}:=\tr_{\wt{\bm q}}^I[\wt{\bm{M}}_{m}^{I}\wt{\bm{M}}_{m-1}^{I}	]$, and
\be 
\cd^{\vec \l}\lt(\wt{\bm{W}}^I_{m,m-1}\rt)=\lt(\pi^{\l_{m-1}}\otimes\pi^{\l_m}\rt)\Delta_{m-1}\tr_{\wt{\bm q}}^I\lt[\wt R_{0m-1}^{\prime}\wt R_{0m-1}	\rt]. ,\label{wlt1}
\ee
Both operator are guage invariant: $\xi(\bm{W}^I_{m,m-1})=\eps(\xi)\bm{W}^I_{m,m-1}$ and $\xi(\wt{\bm{W}}^I_{m,m-1})=\eps(\xi)\wt{\bm{W}}^I_{m,m-1}$, which imply both operators commute with gauge transformations. 

Following the same derivation as in \cite{Alekseev:1994au}, the Wilson loop opeartors satisfy the following (abelian) fusion algebra
\be
\bm{W}^I_{m,m-1}\bm{W}^J_{m,m-1}=\sum_{K=|I-J|}^{I+J}\bm{W}^K_{m,m-1},\qquad \wt{\bm{W}}^I_{m,m-1}\wt{\bm{W}}^J_{m,m-1}=\sum_{K=|I-J|}^{I+J}\wt{\bm{W}}^K_{m,m-1}.\label{fusionalg}
\ee
It implies that each of $\bm{W}^I_{m,m-1},\wt{\bm{W}}^I_{m,m-1}$ for $I>1/2$ is a polynomial $\fp^I$ of $\bm{W}^{1/2}_{m,m-1},\wt{\bm{W}}^{1/2}_{m,m-1}$ with integer coefficients. The $*$-structure and \eqref{inverseM1} - \eqref{inverseM2} implies that 
\be
(\bm{W}^I_{m,m-1})^*=\wt{\bm{W}}^I_{m,m-1}.
\ee
In the case of $I=1/2$, the quantum trace of $R'R$ for relates to the quadratic Casimir by
\be
\tr_{\bm q}^{1/2}\lt[R^{\prime}R\rt]&=&\bm{q}^{-1}K+\bm{q}K^{-1}+\left(\bm{q}-\bm{q}^{-1}\right)^{2}EF=-Q,\\
\tr_{\wt{\bm q}}^{1/2}\lt[\wt{R}^{\prime}\wt{R}\rt]&=&\widetilde{\bm{q}}^{-1}\widetilde{K}+\widetilde{\bm{q}}\widetilde{K}^{-1}+\left(\widetilde{\bm q}-\widetilde{\bm q}^{-1}\right)^{2}\widetilde{E}\widetilde{F}=-\widetilde{Q}.
\ee
They relate the Wilson loop operator of the Chern-Simons theory to the quadratic Casimir of the quantum group. By \eqref{wl1} and \eqref{wlt1}, the Wilson loop operators $\cd^{\vec \l}\lt({\bm{W}}^I_{m,m-1}\rt)$ and $\cd^{\vec \l}\lt(\wt{\bm{W}}^I_{m,m-1}\rt)$ with $I=1/2$ equal $-\Delta Q$ and $-\Delta \wt Q$ represented by $\pi^{\l_{m-1}}\otimes\pi^{\l_m}$, while $\cd^{\vec \l}\lt({\bm{W}}^I_{m,m-1}\rt)$ with $I>1/2$ are the representations of the polynomials of $-\Delta Q$ and $-\Delta \wt Q$. Recall \eqref{Q2pp1} amd \eqref{Q2pp2}, we have
\be
\cd^{\vec \l}\lt({\bm{W}}^{I}_{m,m-1}\rt)&=&\cu_{m-1,m}^{-1}Q_m''\cu_{m-1,m}=\mathbb{U}_{m-1,m}^{-1}\lt[\int \fp^I\lt(\chi_{m-1}+\chi_{m-1}^{-1}\rt)\rmd P_{\chi_{m-1}}\rt]\mathbb{U}_{m-1,m}\nonumber\\
&=&\mathbb{U}^{-1}\lt[\int \fp^I\lt(\chi_{m-1}+\chi_{m-1}^{-1}\rt)\rmd P_{\chi_m}\rt]\mathbb{U}\\
\cd^{\vec \l}\lt(\wt{\bm{W}}^{I}_{m,m-1}\rt)&=&\cu_{m-1,m}^{-1}\wt Q_m''\cu_{m-1,m}=\mathbb{U}_{m-1,m}^{-1}\lt[\int \fp^I\lt(\overline\chi_{m-1}+\overline\chi_{m-1}^{-1}\rt)\rmd P_{\chi_{m-1}}\rt]\mathbb{U}_{m-1,m}\nonumber\\
&=&\mathbb{U}^{-1}\lt[\int \fp^I\lt(\overline\chi_{m-1}+\overline\chi_{m-1}^{-1}\rt)\rmd P_{\chi_m}\rt]\mathbb{U}
\ee
Therefore, the Wilson loop operators $\cd^{\vec \l}\lt({\bm{W}}^{I}_{m,m-1}\rt)$, $\cd^{\vec \l}\lt(\wt{\bm{W}}^{I}_{m,m-1}\rt)$ are diagonalized in the representation \eqref{frepchim}, i.e. they are multiplication operators $\fp^I\lt(\chi_{m-1}+\chi_{m-1}^{-1}\rt)$ and $\fp^I\lt(\overline\chi_{m-1}+\overline\chi_{m-1}^{-1}\rt)$. The polynomial $\fp^I(x)$ satisfies $\fp^0(x)=1$, $\fp^{1/2}(x)=x$, $\overline{\fp^{I}(x)}=\fp^{I}(\overline{x})$ and the fusion algebra
\be
\fp^I\fp^J=\sum_{K=|I-J|}^{I+J}\fp^K.
\ee

The quantum holonomies enclosing three consecutive holes 
\be
\bm{M}^I_{m,m-1,m-2}\equiv\bm{M}_{m}^{I}\bm{M}_{m-1}^{I}\bm{M}_{m-2}^{I},\qquad \wt{\bm{M}}^I_{m,m-1,m-2}\equiv\wt{\bm{M}}_{m}^{I}\wt{\bm{M}}_{m-1}^{I}\wt{\bm{M}}_{m-2}^{I}
\ee
give similar results: Their matrix elements and the identity $\bm{1}$ form a loop algebra. Their Wilson loops $\bm{W}^I_{m,m-1,m-2}:=\tr_{\bm q}^I[\bm{M}^I_{m,m-1,m-2}]$ and $\wt{\bm{W}}^I_{m,m-1,m-2}:=\tr_{\bm q}^I[\wt{\bm{M}}^I_{m,m-1,m-2}]$ are gauge invariant and form the fusion algebra the same as in \eqref{fusionalg}. The Wilson loop operators are given by
\be
\cd^{\vec \l}\lt(\bm{W}^I_{m,m-1,m-2}\rt)&=&\lt(\pi^{\l_{m-2}}\otimes\pi^{\l_{m-1}}\otimes\pi^{\l_m}\rt)\Delta_{m-1}^{(3)}\tr_{\bm q}^I\lt[R_{0m-1}^{\prime}R_{0m-1}	\rt]\\
\cd^{\vec \l}\lt(\wt{\bm{W}}^I_{m,m-1,m-2}\rt)&=&\lt(\pi^{\l_{m-2}}\otimes\pi^{\l_{m-1}}\otimes\pi^{\l_m}\rt)\Delta^{(3)}_{m-1}\tr_{\wt{\bm q}}^I\lt[\wt R_{0m-1}^{\prime}\wt R_{0m-1}	\rt]
\ee
where $\Delta^{(3)}=(\mathrm{id}\otimes\Delta)\Delta=(\Delta\otimes\mathrm{id})\Delta$. It is sufficient to only consider the Wilson loop operators with $I=1/2$ due to the fusion algebra. 
\be 
&&\cd^{\vec \l}\lt(\bm{W}^{1/2}_{m,m-1,m-2}\rt)
\nonumber\\
&=&-\sum_\alpha \pi^{\l_{m-2}}\lt(Q^{(1)}_\alpha\rt)\lt(\pi^{\l_{m-1}}\otimes\pi^{\l_m}\rt)\lt(\Delta Q^{(2)}_\a\rt)\nonumber\\
&=&- \mathbb{U}_{m-1,m}^{-1}\lt[\int \sum_\alpha\pi^{\l_{m-2}}\lt(Q^{(1)}_\alpha\rt) \pi^{\chi_{m-1}}\lt( Q^{(2)}_\a\rt)\otimes \rmd P_{\chi_{m-1}}\rt]\mathbb{U}_{m-1,m}\nonumber\\
&=&- \mathbb{U}_{m-1,m}^{-1}\mathbb{U}_{m-2,m-1}^{-1}\lt[\int Q''_{m-1}\otimes \rmd P_{\chi_{m-1}}\rt]\mathbb{U}_{m-2,m-1}\mathbb{U}_{m-1,m}\nonumber\\
&=&- \mathbb{U}^{-1}\lt[\int (\chi_{m-2}+\chi^{-1}_{m-2})\rmd P_{\chi_{m-2}}\otimes \rmd P_{\chi_{m-1}}\rt]\mathbb{U}.
\ee
Then by the fusion algebra
\be
\cd^{\vec \l}\lt(\bm{W}^{I}_{m,m-1,m-2}\rt)=- \mathbb{U}^{-1}\lt[\int \fp^I\lt(\chi_{m-2}+\chi^{-1}_{m-2}\rt)\rmd P_{\chi_{m-2}}\otimes \rmd P_{\chi_{m-1}}\rt]\mathbb{U}
\ee
A similar diagonalization is obtained for $\cd^{\vec \l}(\wt{\bm{W}}^{I}_{m,m-1,m-2})$:
\be 
\cd^{\vec \l}(\wt{\bm{W}}^{I}_{m,m-1,m-2})=- \mathbb{U}^{-1}\lt[\int \fp^I\lt(\overline\chi_{m-2}+\overline\chi^{-1}_{m-2}\rt)\rmd P_{\chi_{m-2}}\otimes \rmd P_{\chi_{m-1}}\rt]\mathbb{U}.
\ee
Therefore, the Wilson loop operators $\cd^{\vec \l}\lt({\bm{W}}^{I}_{m,m-1,m-2}\rt)$, $\cd^{\vec \l}\lt(\wt{\bm{W}}^{I}_{m,m-1,m-2}\rt)$ are multiplication operators $\fp^I(\chi_{m-2}+\chi_{m-2}^{-1})$ and $\fp^I(\overline{\chi}_{m-2}+\overline{\chi}_{m-2}^{-1})$ in the representation \eqref{frepchim}. 

This generalizes straight-forwardly to the quantum holonomies enclosing $h+1$ consecutive holes 
\be
\bm{M}^I_{m,\cdots,m-h}\equiv\bm{M}_{m}^{I}\bm{M}_{m-1}^{I}\cdots\bm{M}_{m-h}^{I},\qquad \wt{\bm{M}}^I_{m,\cdots,m-h}\equiv\wt{\bm{M}}_{m}^{I}\wt{\bm{M}}_{m-1}^{I}\cdots\wt{\bm{M}}_{m-h}^{I},
\ee
for $h\leq m-3$. The holonomies satisfy the loop algebra. Their Wilson loops $\bm{W}^I_{m,\cdots,m-h}:=\tr_{\bm q}^I[\bm{M}^I_{m,\cdots,m-h}]$ and $\wt{\bm{W}}^I_{m,\cdots,m-h}:=\tr_{\bm q}^I[\wt{\bm{M}}^I_{m,\cdots,m-h}]$ are gauge invariant and satisfy the fusion algebra. The $*$-structure implies $({\bm{W}}^I_{m,\cdots,m-h})^*=\wt{\bm{W}}^I_{m,\cdots,m-h}$. $\cd^{\vec \l}\lt({\bm{W}}^{I}_{m,\cdots,m-h}\rt)$, $\cd^{\vec \l}\lt(\wt{\bm{W}}^{I}_{m,\cdots,m-h}\rt)$ are multiplication operators $\fp^I(\chi_{m-h}+\chi_{m-h}^{-1})$ and $\fp^I(\overline{\chi}_{m-h}+\overline{\chi}_{m-h}^{-1})$ in the representation \eqref{frepchim}.

The above discussion shows that the Wilson loop operators $\cd^{\vec \l}\lt({\bm{W}}^{I}_{m,\cdots,m-h}\rt)$, $\cd^{\vec \l}\lt(\wt{\bm{W}}^{I}_{m,\cdots,m-h}\rt)$ for $h=1,\cdots, m-3$ are simultaneously diagonalized in the representation \eqref{frepchim}, and clearly they are mutually commutative. Moreover, these Wilson loops are along a set of circles that are precisely the cuts making a pants decomposition of the $m$-holed sphere. See FIG.\ref{pants} for an example of $m=5$. The diagonalizations of the Wilson loop operators are equivalent to the direct-integral representation in \eqref{UHW}, so this clarifies the correspondence between the procedure in \eqref{UHW} and a pants decomposition of the $m$-holed sphere.

\subsection{Fenchel-Nielsen representation of the physical Hilbert space}

The Wilson loop operators discussed above belong to the algebra $\sm$ defined in Section \ref{RAQ}. The duals of the Wilson loop operators act on the gauge invariants in the physical Hilbert space $\ch^{\vec \l}_{phys}\simeq \cw$: Recall \eqref{PsiFfexplicit}, we derive for $h=1,\cdots,m-3$,
\be
&&\Psi_F\lt[\cd^{\vec \l}\lt({\bm{W}}^{I}_{m,\cdots,m-h}\rt)f\rt]\nonumber\\
&=&\sum _{n_1,{m} \in \mathbb{Z} /N\mathbb{Z}}\int d\nu _{1} d\mu \int\prod_{i=3}^{m-1}d\varrho _{\chi_i}  \overline{\psi _{\lambda _{1}}( \mu ,m)} \, \overline{F(\vec{\mu }_\chi,\vec{m}_\chi)} \, \bm{V}_{1}^{-1}\fp^I\lt(\chi_{m-h}+\chi_{m-h}^{-1}\rt)f'\lt( \nu _{1} ,n_{1} ;-\nu _{1} -n_{1} ;\mu ,m ;\vec{\mu }_\chi,\vec{m}_\chi\rt),\nonumber\\
&=&\sum _{n_1,{m} \in \mathbb{Z} /N\mathbb{Z}}\int d\nu _{1} d\mu \int\prod_{i=3}^{m-1}d\varrho _{\chi_i}  \overline{\psi _{\lambda _{1}}( \mu ,m)} \, \overline{\fp^I\lt(\overline\chi_{m-h}+\overline\chi_{m-h}^{-1}\rt)F(\vec{\mu }_\chi,\vec{m}_\chi)} \, \bm{V}_{1}^{-1}f'\lt( \nu _{1} ,n_{1} ;-\nu _{1} -n_{1} ;\mu ,m ;\vec{\mu }_\chi,\vec{m}_\chi\rt),\nonumber\\
&=&\Psi_{\fp^I\lt(\overline\chi_{m-h}+\overline\chi_{m-h}^{-1}\rt)F}\lt[f\rt]
\ee
and similarly
\be
\Psi_F\lt[\cd^{\vec \l}\lt(\wt{\bm{W}}^{I}_{m,\cdots,m-h}\rt)f\rt]=\Psi_{\fp^I\lt(\chi_{m-h}+\chi_{m-h}^{-1}\rt)F}\lt[f\rt].
\ee
Due to the 1-to-1 correspondence $\Psi_F\leftrightarrow F$, these define a $*$-representation Wilson loops denoted by ${\mathsf{W}}^{I}_{m,\cdots,m-h}$ and $\wt{\mathsf{W}}^{I}_{m,\cdots,m-h}$, $h=1,\cdots,m-3$ on $\ch_{phys}^{\vec \l}$: For any function $F(\vec{\mu}_\chi,\vec{m}_\chi)$ in a dense domain in $\ch_{phys}^{\vec\l}$, 
\be 
{\mathsf{W}}^{I}_{m,\cdots,m-h}F\lt(\vec{\mu}_\chi,\vec{m}_\chi\rt)&=&\fp^I\lt(\chi_{m-h}+\chi_{m-h}^{-1}\rt)F\lt(\vec{\mu}_\chi,\vec{m}_\chi\rt),\label{Wdiagonal1}\\
\wt{\mathsf{W}}^{I}_{m,\cdots,m-h}F\lt(\vec{\mu}_\chi,\vec{m}_\chi\rt)&=&\fp^I\lt(\overline\chi_{m-h}+\overline\chi_{m-h}^{-1}\rt)F\lt(\vec{\mu}_\chi,\vec{m}_\chi\rt).\label{Wdiagonal2}
\ee
The Wilson loops diagonalized by this representation are along the cuts for the pants decomposition. Classically, they are complexifications of the Fenchel-Nielsen (FN) coordinates of the Teichm\"uller space. Therefore, it is natural to call the representation of $\ch_{phys}^{\vec \l}$ as the $L^2$-space of $F(\vec{\mu}_\chi,\vec{m}_\chi)$ the FN representation. The FN representation of the complex Chern-Simons theory has been derived from the quantization of Fock-Goncharov coordinates in the case of 4-holed sphere \cite{Han:2024nkf}. The above discussion derives the FN representation on $m$-holed sphere for arbitrary $m\geq 4$ from the combinatorial quantization of the theory.

\subsection{Crossing symmetry}

Different pants decompositions of the $m$-holed sphere should correspond to different FN representations of the physical Hilbert space, and any two different FN representations should be related by a unitary transformation. In the following, we consider the situation of the 4-holed sphere: two different pants decompositions are obtained by choosing to cut the 4-holed sphere along the S-cycle (circling the 1st and 2nd holes) or the T-cycle (circling the 2nd and 3rd holes) gives, see FIG.\ref{STcycle}. The change between these two pants decompositions on the 4-holed sphere is the elementary move whose compositions give general pants decompositions on $m$-holed sphere for $m\geq 4$. It is referred to as the A-move in the Moore-Seiberg groupoid \cite{Moore:1988qv}.

\begin{figure}[h]
	\centering
	\includegraphics[width=0.9\textwidth]{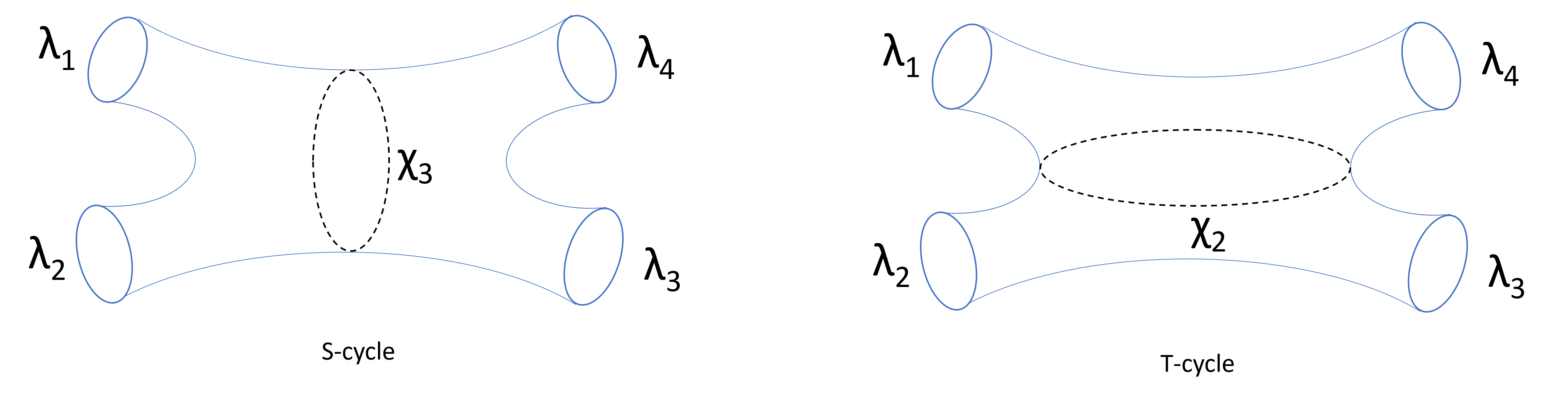}
	\caption{A change of pants decomposition from cutting along the S-cycle to cutting along the T-cycle.}
	\label{STcycle}
\end{figure}

By the coassociativity, $\Delta^{(4)}\xi=(1\otimes 1\otimes\Delta)(1\otimes\Delta)\Delta \xi=(1\otimes\Delta\otimes 1)(1\otimes\Delta)\Delta \xi$. Two equivalent decompositions of $\Delta^{(4)}$ relates to two different pants decompositions: First, based on $\Delta^{(4)}\xi=(1\otimes 1\otimes\Delta)(1\otimes\Delta)\Delta \xi$, we construct the following unitary transformation in the same way as before
\be
\mathbb{U}_S=\mathbb{U}_{2,3}\mathbb{U}_{3,4}: \quad \ch_{\l_1}\otimes\cdots\otimes\ch_{\l_4}\to \int^\oplus\rmd\varrho_{\chi}\lt(\ch_{\l_1}\otimes\ch_{\chi}\rt)\otimes \cw_\chi.
\ee
The physical Hilbert space is
\be
\cw^{(S)}\equiv\cw_{\l_1}\simeq \int^\oplus\rmd\varrho_{\chi_3}W_{\l_1}\otimes W_{\chi_3},\label{WS}
\ee
$\cw^{(S)}$ is isomorphic to an $L^2$-type Hilbert space $L^2(\C,\rmd\varrho_{\chi_3})$, in which the states are represented by functions $F(\mu_{\chi_3},m_{\chi_3})$ ($W_{\l_1}\simeq \C$ is 1-dimensional). The Wilson loop operators along the S-cycle, $\bm{W}_S^I=\bm{W}_{4,3}^I$ and  $\wt{\bm{W}}_S^I=\wt{\bm{W}}_{4,3}^I$ are diagonalized on $\ch_{\vec \l}$ by $\mathbb{U}_S$. $\mathsf{W}^I_{S}=\mathsf{W}^I_{4,3}$ and $\wt{\mathsf{W}}^I_{S}=\wt{\mathsf{W}}^I_{4,3}$ are diagonalized in the above direct integral representation of $\cw^{(S)}$ as in \eqref{Wdiagonal1} and \eqref{Wdiagonal2} (take $m=4$ and $h=1$):
\be 
\mathsf{W}_{S}^IF\lt(\mu_{\chi_3},m_{\chi_3}\rt)&=&\fp^I\lt(\chi_{3}+\chi_{3}^{-1}\rt)F\lt(\mu_{\chi_3},m_{\chi_3}\rt),\\
\wt{\mathsf{W}}_{S}^IF\lt(\mu_{\chi_3},m_{\chi_3}\rt)&=&\fp^I\lt(\overline\chi_{3}+\overline\chi_{3}^{-1}\rt)F\lt(\mu_{\chi_3},m_{\chi_3}\rt).
\ee

Based on $\Delta^{(4)}\xi=(1\otimes\Delta\otimes 1)(1\otimes\Delta)\Delta \xi$, we use
\be
\mathbb{U}_{2,3}:&& \ch_{\l_1}\otimes\cdots\otimes\ch_{\l_4}\to \int^\oplus\rmd\varrho_{\chi_2}\ch_{\l_1}\otimes\ch_{\chi_2}\otimes W_{\chi_2}\otimes \ch_{\l_4},\\
\mathbb{U}_{T}=\mathbb{U}_{2,4}\mathbb{U}_{2,3}:&& \ch_{\l_1}\otimes\cdots\otimes\ch_{\l_4}\to \int^\oplus\rmd\varrho_{\chi}\rmd\varrho_{\chi_2}\ch_{\l_1}\otimes\ch_{\chi}\otimes W_{\chi_2}\otimes W_{\chi}
\ee
The physical Hilbert space is given by
\be
\cw^{(T)}\simeq \int^\oplus\rmd\varrho_{\chi_2} W_{\chi_2}\otimes W_{\l_1}.\label{WT}
\ee
The Wilson loop operators along the T-cycle, $\bm{W}_{3,2}$ and $\wt{\bm{W}}_{3,2}$, relates to $Q_{23}$ and $\wt{Q}_{23}$, which are diagonalized on $\ch_{\vec\l}$ by $\mathbb{U}_{2,3}$. Similar to $\cw^{(S)}$, the states in $\cw^{(T)}$ are represented by functions $F(\mu_{\chi_2},m_{\chi_2})$. For $F(\mu_{\chi_2},m_{\chi_2})$ in a dense domain of $ \cw^{(T)}$, the Wilson loop operators $\mathsf{W}^I_T=\mathsf{W}^I_{3,2}$ and $\wt{\mathsf{W}}^I_{T}=\wt{\mathsf{W}}^I_{3,2}$ are diagonalized
\be 
\mathsf{W}^I_T F\lt(\mu_{\chi_2},m_{\chi_2}\rt)&=&\fp^I\lt(\chi_{2}+\chi_{2}^{-1}\rt)F\lt(\mu_{\chi_2},m_{\chi_2}\rt),\label{Wdiagonal1}\\
\wt{\mathsf{W}}^I_{T} F\lt(\mu_{\chi_2},m_{\chi_2}\rt)&=&\fp^I\lt(\overline\chi_{2}+\overline\chi_{2}^{-1}\rt)F\lt(\mu_{\chi_2},m_{\chi_2}\rt).\label{Wdiagonal2}
\ee

The physical Hilbert spaces from different pants decompositions are unitary equivalent, since both $\cw^{(S)}$ and $\cw^{(T)}$ are isomorphic to $L^2(\C,d\rho_{\chi})$. Two direction integral representations \eqref{WS} and \eqref{WT} are unitary equivalent representations of the same Hilbert space $\cw\simeq \cw^{(S)}\simeq \cw^{(T)}$. 

Since $\mathbb{U}_S$ and $\mathbb{U}_T$ diagonalize $\bm{W}^I_S$ and $\bm{W}^I_T$ respectively, we have the equality
\be
\mathbb{U}_S\cd^{\vec\l}(\bm{W}^I_S)\mathbb{U}_S^{-1}=\mathbb{U}_T\cd^{\vec\l}(\bm{W}^I_T)\mathbb{U}_T^{-1},\qquad \mathbb{U}_S\cd^{\vec\l}(\wt{\bm{W}}^I_S)\mathbb{U}_S^{-1}=\mathbb{U}_T\cd^{\vec\l}(\wt{\bm{W}}^I_T)\mathbb{U}_T^{-1},
\ee
which equal the multiplication operators $\fp^I(\chi+\chi^{-1})$ and $\fp^I(\wt\chi+\wt\chi^{-1})$. The unitary operator $\mathbb{U}_T^{-1}\mathbb{U}_S$ transforms $\bm{W}^I_S$ to $\bm{W}^I_T$ and thus realizes on $\ch_{\vec\l}$ the A-move in Moore-Seiberg groupoid. 


\section*{Acknowledgements}

M.H. acknowledges Chen-Hung Hsiao and Qiaoyin Pan for helpful discussions. M.H. receives supports from the National Science Foundation through grant PHY-2207763, the College of Science Research Fellowship at Florida Atlantic University, and the Blaumann Foundation.  

\appendix

\section{Duality and $\star$-Hopf algebra structure}\label{duality and star-Hopf algebra strucutre}

The duality between ${SL}_{\bfq}(2)\otimes {SL}_{\widetilde \bfq}(2)$ and $\suquqt$ is given by 
\be
\lag (g^I)^i_{\ j}\otimes (\widetilde{g}^J)^k_{\ l},\xi\otimes \widetilde{\zeta}\rag = \rho^{I}(\xi)^i_{\ j}\rho^{J}(\widetilde{\zeta})^k_{\ l}.
\ee
In a more concise notation, one may write $(g^I)^i_{\ j}(\widetilde{g}^J)^k_{\ l}=(g^I)^i_{\ j}\otimes(\widetilde{g}^J)^k_{\ l}\in {SL}_{\bfq}(2)\otimes {SL}_{\widetilde \bfq}(2)$ and $\xi\widetilde{\zeta}=\xi\otimes \widetilde{\zeta}\in \suquqt$.

The identity of ${SL}_{\bfq}(2)$ or $ {SL}_{\widetilde \bfq}(2)$ is dual to the counit of $\suq $ or $  \suqt$, i.e.
\be
\lag (g^I)^i_{\ j},\xi \widetilde{\zeta}\rag = \rho^{I}(\xi)^i_{\ j}\eps(\widetilde{\zeta}),\qquad \lag  (\widetilde{g}^J)^k_{\ l},\xi\otimes \widetilde{\zeta}\rag = \eps(\xi) \rho^{J}(\widetilde{\zeta})^k_{\ l}.
\ee
the counit $\eps$ is the same as the trivial representation.

The multiplication of ${SL}_{\bfq}(2)\otimes {SL}_{\widetilde \bfq}(2)$ is the dual of the comultiplication of $\suquqt$: First of all, $(g^I)^i_{\ j}(\widetilde{g}^J)^k_{\ l}$ can be understood as the multiplication between $(g^I)^i_{\ j}$ and $(\widetilde{g}^J)^k_{\ l}$, consistent with the duality
\be 
&&\langle (g^I)^i_{\ j}(\widetilde{g}^J)^k_{\ l},\xi\widetilde{\zeta}\rangle=\langle (g^I)^i_{\ j}\otimes(\widetilde{g}^J)^k_{\ l},\Delta\left(\xi\widetilde{\zeta}\right)\rangle=\langle (g^I)^i_{\ j}\otimes(\widetilde{g}^J)^k_{\ l},\Delta\xi\Delta\widetilde{\zeta}\rangle\nonumber\\
&=&\sum_{\alpha,\beta}\langle (g^I)^i_{\ j}\otimes(\widetilde{g}^J)^k_{\ l},\xi_{\alpha}^{1}\widetilde{\zeta}_{\beta}^{1}\otimes\xi_{\alpha}^{2}\widetilde{\zeta}_{\beta}^{2}\rangle
=\sum_{\alpha,\beta}\rho^{I}\left(\xi_{\alpha}^{1}\right)^i_{\ j}\varepsilon\left(\xi_{\alpha}^{2}\right)\varepsilon\left(\widetilde{\zeta}_{\beta}^{1}\right)\rho^{J}\left(\widetilde{\zeta}_{\beta}^{2}\right)^k_{\ l}=\rho^{I}(\xi)^i_{\ j}\rho^{J}(\widetilde{\zeta})^k_{\ l}
\ee 
A similar computation shows that $\langle (\widetilde{g}^J)^k_{\ l}(g^I)^i_{\ j},\xi\widetilde{\zeta}\rangle$ gives the same result, so we obtain $[(g^I)^i_{\ j},(\widetilde{g}^J)^k_{\ l}]=0$.

The multiplication between $(g^I)^i_{\ j} $ and $ ({g}^J)^k_{\ l}$ can be studied by the following computation ($\widetilde{\zeta}$ are ignored below since inserting $\widetilde{\zeta}$ only adds a factor $\eps(\widetilde{\zeta})$ to each step)
\be 
\lag (g^I)^i_{\ j}({g}^J)^k_{\ l},\xi\rag&=&\lag (g^I)^i_{\ j}\otimes({g}^J)^k_{\ l},\Delta \xi\rag=\sum_\alpha\rho^{I}(\xi^1_\a)^i_{\ j}\rho^{J}(\xi^2_\a)^k_{\ l}\nonumber\\
\lag ({g}^J)^k_{\ l}(g^I)^i_{\ j},\xi\rag&=&\lag ({g}^J)^k_{\ l}\otimes (g^I)^i_{\ j},\Delta \xi\rag=\sum_\alpha\rho^{I}(\xi^2_\a)^i_{\ j}\rho^{J}(\xi^1_\a)^k_{\ l}\nonumber\\
&=&\lag (g^I)^i_{\ j}\otimes({g}^J)^k_{\ l},\Delta^\prime \xi\rag=\lag (g^I)^i_{\ j}\otimes({g}^J)^k_{\ l},R\Delta (\xi)R^{-1}\rag\nonumber\\
&=&\lt[R^{IJ}\lag g^I\otimes{g}^J,\Delta (\xi)\rag (R^{-1})^{IJ}\rt]^{i,\ k}_{\ j,\ l}=\lt[R^{IJ}\lag g^I{g}^J,\xi\rag (R^{-1})^{IJ}\rt]^{i,\ k}_{\ j,\ l}
\ee
A similar computation can be done in the tilded sector. The results imply
\be
R^{IJ}  g^I{g}^J=g^J g^I R^{IJ},\qquad \widetilde{R}^{IJ} \widetilde{ g}^I\widetilde{g}^J=\widetilde{g}^J \widetilde{g}^I \widetilde{R}^{IJ}.\label{Rgg=ggR}
\ee
where $g^I\in \mathrm{End}(V^I)\otimes SL_{\bfq}(2)$ and $\widetilde{g}^I\in \mathrm{End}(V^I)\otimes SL_{\widetilde \bfq}(2)$. They are the same relations obtained by quantizing the poisson Lie group of gauge transformations \eqref{poissongg1} - \eqref{poissongg3}.

It is straight-forward to check that $\langle \det {}_{\bfq}(g),\xi\widetilde{\zeta}\rangle=\langle \det{}_{\widetilde \bfq}(\widetilde{g}),\xi\widetilde{\zeta}\rangle=\eps(\xi\widetilde{\zeta})$, which implies $\det {}_{\bfq}(g)=\det{}_{\widetilde \bfq}(\widetilde{g})=1$. 

The comultiplication of $(g^I)^i_{\ j}$ is obtained by
\be
\lag \delta (g^I)^i_{\ j},\xi\otimes\zeta \rag=\lag (g^I)^i_{\ j},\xi\zeta\rag =\rho^I(\xi\zeta)^i_{\ j}=\rho^I(\xi)^i_{\ k}\rho^I(\zeta)^k_{\ j}=\lag (g^I)^i_{\ k}\otimes (g^I)^k_{\ j},\xi\otimes\zeta \rag
\ee
The comultiplication of $(\widetilde{g}^I)^i_{\ j}$ can be obtained similarly.

The relations $\cs(g^I)=(g^I)^{-1}$ is implied by $m(\cs\otimes \mathrm{id})\delta g^I=m(\mathrm{id}\otimes \cs)\delta g^I =\epsilon(g^I) 1$ where $m$ denotes the multiplication, and similar for $\cs(\widetilde{g}^I )=(\widetilde{g}^I)^{-1}$. In addition,
\be 
\lag \cs^{2n}(g^I),\xi\rag =\lag g^I,S^{2n}(\xi)\rag =\lag g^I,u^n\xi (u^{-1})^n\rag=\rho^I\lt(u\rt)^n\lag g^I,\xi\rag\rho^I\lt(u^{-1}\rt)^{n},\qquad u=\bfq^{H}
\ee 
A similar relation holds for $\cs^{2n}(\widetilde g^I)$ with $\widetilde{u}=\overline{\bfq}^{\widetilde H}$. They imply 
\be 
\cs^{2n}(g^I)=\rho^I\lt(u\rt)^n g^I \rho^I\lt(u^{-1}\rt)^n,\qquad \cs^{2n}(\widetilde{g}^I)=\rho^I\lt(\widetilde{u}\rt)^n \widetilde{g}^I \rho^I\lt(\widetilde{u}^{-1}\rt)^n
\ee
For example,
\be 
\cs^{2n}(g^{1/2})=\left(
	\begin{array}{cc}
		\bfq^{n} & 0\\
		0 & \bfq^{-n}
	\end{array}
	\rt)g^{1/2}\left(
		\begin{array}{cc}
			\bfq^{-n} & 0\\
			0 & \bfq^{n}
		\end{array}
		\rt).
\ee

The $\star$-structure of the dual Hopf algebra $\cg^\prime$ is induced from the $*$-Hopf algebra $\cg$ by
\be 
\lag \phi^\star, \xi\rag=\overline{\lag \phi, S(\xi)^*\rag},\qquad \phi\in \cg',\quad \xi\in\cg.
\ee
Apply this relation to $\cg^\prime={SL}_{\bfq}(2)\otimes {SL}_{\widetilde \bfq}(2)$ and $\cg=\suquqt$, we first insert the generators $\widetilde{\cx},\widetilde{\cy},\widetilde{\ck}^{\pm1}$:
\be 
\left\langle (g^{I})_{\ j}^{i}{}^{\star},\widetilde{{\cal X}}\right\rangle 
&=&\overline{\left\langle (g^{I})_{\ j}^{i},S\left(\widetilde{{\cal X}}\right)^{*}\right\rangle }=\overline{\left\langle (g^{I})_{\ j}^{i},\left(-\widetilde{\boldsymbol{q}}\widetilde{\mathcal{X}}\right)^{*}\right\rangle }=-\widetilde{\boldsymbol{q}}\overline{\left\langle (g^{I})_{\ j}^{i},\mathcal{X}\right\rangle }=-\widetilde{\boldsymbol{q}}\overline{\rho^{I}(\mathcal{X})_{\ j}^{i}}\nonumber\\
&=&-\widetilde{\boldsymbol{q}}\widetilde{\rho}^{I}(\widetilde{\mathcal{X}})_{\ i}^{j}=\left\langle S\left((\widetilde{g}^{I})_{\ i}^{j}\right),\widetilde{{\cal X}}\right\rangle \\
\left\langle (g^{I})_{\ j}^{i}{}^{\star},\widetilde{{\cal Y}}\right\rangle 	
&=&\overline{\left\langle (g^{I})_{\ j}^{i},S\left(\widetilde{{\cal Y}}\right)^{*}\right\rangle }=\overline{\left\langle (g^{I})_{\ j}^{i},\left(-\widetilde{\boldsymbol{q}}^{-1}\widetilde{\mathcal{Y}}\right)^{*}\right\rangle }=-\widetilde{\boldsymbol{q}}^{-1}\overline{\left\langle (g^{I})_{\ j}^{i},\mathcal{Y}\right\rangle }=-\widetilde{\boldsymbol{q}}^{-1}\overline{\rho^{I}(\mathcal{Y})_{\ j}^{i}}\nonumber\\
&=&-\widetilde{\boldsymbol{q}}^{-1}\widetilde{\rho}^{I}(\widetilde{\mathcal{Y}})_{\ i}^{j}=\left\langle S\left((\widetilde{g}^{I})_{\ i}^{j}\right),\widetilde{{\cal Y}}\right\rangle \\
\left\langle (g^{I})_{\ j}^{i}{}^{\star},\widetilde{{\cal K}}^{\pm1}\right\rangle 	
&=&\overline{\left\langle (g^{I})_{\ j}^{i},S\left(\widetilde{{\cal K}}^{\pm1}\right)^{*}\right\rangle }=\overline{\left\langle (g^{I})_{\ j}^{i},\left(\widetilde{{\cal K}}^{\mp1}\right)^{*}\right\rangle }=\overline{\left\langle (g^{I})_{\ j}^{i},{\cal K}^{\mp1}\right\rangle }=\overline{\rho^{I}({\cal K}^{\mp1})_{\ j}^{i}}\nonumber\\
&=&\widetilde{\rho}^{I}(\widetilde{\mathcal{K}}^{\mp1})_{\ i}^{j}=\left\langle S\left((\widetilde{g}^{I})_{\ i}^{j}\right),\widetilde{{\cal K}}^{\pm1}\right\rangle 
\ee 
For the identity, $\langle (g^{I})_{\ j}^{i}{}^{\star},1\rangle=\overline{\left\langle (g^{I})_{\ j}^{i},S\left(1\right)^{*}\right\rangle }=\delta^i_{\ j}= \langle S\left((\widetilde{g}^{I})_{\ i}^{j}\right),1\rangle$.
Once $\langle (g^I)^i_{\ j}{}^\star,\,\cdot\,\rangle=\langle \cs((\widetilde{g}^I)^j_{\ i}), \,\cdot\,\rangle$ holds for generators, it holds for general monomials of generators, because if $\langle (g^I)^i_{\ j}{}^\star,\widetilde{\xi} \rangle=\langle \cs((\widetilde{g}^I)^j_{\ i}), \widetilde{\xi}\rangle$ and $\langle (g^I)^i_{\ j}{}^\star,\widetilde{\zeta} \rangle=\langle \cs((\widetilde{g}^I)^j_{\ i}), \widetilde{\zeta}\rangle$ hold, then
\be 
\lag (g^I)^i_{\ j}{}^\star,\widetilde{\xi}\widetilde{\zeta}\rag&=&\lag \delta (g^I)^i_{\ j}{}^\star,\widetilde{\xi}\otimes\widetilde{\zeta}\rag=\sum_k\lag  (g^I)^i_{\ k}{}^\star,\widetilde{\xi}\rag\lag (g^I)^k_{\ j}{}^\star,\widetilde{\zeta}\rag\nonumber\\
&=&\sum_k\lag  \cs\Big((\widetilde g^I)^k_{\ i}{}\Big),\widetilde{\xi}\rag\lag \cs\lt((\widetilde g^I)^j_{\ k}{}\rt),\widetilde{\zeta}\rag=\sum_k\lag (\widetilde g^I)^j_{\ k}{},S\lt(\widetilde{\zeta}\rt)\rag\lag  (\widetilde g^I)^k_{\ i}{},S\lt(\widetilde{\xi}\rt)\rag\nonumber\\
&=&\lag  (\widetilde g^I)^j_{\ i},S\lt(\widetilde{\zeta}\rt)S\lt(\widetilde{\xi}\rt)\rag =\lag  \cs\lt((\widetilde g^I)^j_{\ i}\rt),\widetilde{\xi}\widetilde{\zeta}\rag.
\ee 
A similar computation can be done for $(\widetilde g^I)^i_{\ j}{}^\star$. As a result,
\be 
(g^I)^i_{\ j}{}^\star=\cs\lt((\widetilde g^I)^j_{\ i}\rt),\qquad (\widetilde g^I)^i_{\ j}{}^\star=\cs\lt(( g^I)^j_{\ i}\rt). 
\ee 
We check the $\star$-structure is consistent with \eqref{Rgg=ggR}: First, we write the left equation in \eqref{Rgg=ggR} in terms of indices
\be
(R^{IJ})^{i,a}_{\ j,b}(g^{I})^j_{\ k}(g^{J})^b_{\ c}=(g^{J})^a_{\ b}(g^{I})^i_{\ j}(R^{IJ})^{j,b}_{\ k,c}.
\ee
Then we apply $\cs$ and $\star$ to the equation
\be 
\overline{(R^{IJ})^{i,a}_{\ j,b}}\cs\Big((g^{I})^j_{\ k}\Big)^\star \cs\Big((g^{J})^b_{\ c}\Big)^\star=\cs\Big((g^{J})^a_{\ b}\Big)^\star \cs\Big((g^{I})^i_{\ j}\Big)^\star \, \overline{(R^{IJ})^{j,b}_{\ k,c}}.
\ee
Using $(R^{IJ})^{\dagger\otimes \dagger}=(\widetilde{R}^{-1})^{IJ}$ and $\star^2=\mathrm{id}$, we obtain
\be
(\widetilde g^I)^k_{\ j}(\widetilde g^J)^c_{\ b}{(\widetilde{R}^{-1 IJ})^{j,b}_{\ i,a}}=(\widetilde{R}^{-1 IJ})^{k,c}_{\ j,b}(\widetilde g^J)^b_{\ a} (\widetilde g^I)^j_{\ i}.
\ee
Multiplying $R^{IJ}$ from the left and right of the equation. We recover the right equation in \eqref{Rgg=ggR}.

\section{The derivation about $\xi=E$ for Theorem \ref{uniquebilinear}}\label{invbilinearwithE}

\be
0&=&	\left(-\frac{iq^{-1}}{\bm{q-q}^{-1}}\right)^{-1}\Psi\left[U^{-1}U\left(\pi^{\lambda_{1}}\otimes\pi^{\lambda_{2}}\right)\left(\Delta E\right)U^{-1}Uf\right]\nonumber\\
&=&	C\sum_{n_{1},n\in\mathbb{Z}/N\mathbb{Z}}\int d\nu_{1}d\nu\delta\left(\nu\right)\delta_{\exp\left[\frac{2\pi i}{N}n\right],1}\bm{V}_{1}^{-1}\left[U\left(\pi^{\lambda_{1}}\otimes\pi^{\lambda_{2}}\right)\left(\Delta E\right)U^{-1}\right]Uf\left(\nu_{1},\nu,n_{1},n\right)\nonumber\\
&=&	C\sum_{n_{1},n\in\mathbb{Z}/N\mathbb{Z}}\int d\nu_{1}d\nu\delta\left(\nu\right)\delta_{\exp\left[\frac{2\pi i}{N}n\right],1}\bm{V}_{1}^{-1}\left[\left(\lambda_{1}+\lambda_{1}^{-1}\right)\bm{u}_{1,-1}^{(1)}\bm{u}_{-1,-1}+\bm{u}_{2,-1}^{(1)}\bm{u}_{-1,-1}+\bm{u}_{0,-1}^{(1)}\bm{u}_{-1,-1}\rt.\nonumber\\
&&\qquad \lt.+\left(\lambda_{2}+\lambda_{2}^{-1}\right)\bm{u}_{0,-1}+\bm{u}_{-1,0}^{(1)}\bm{u}_{1,-1}+\bm{u}_{1,0}^{(1)}\bm{u}_{-1,-1}\right]Uf\left(\nu_{1},\nu,n_{1},n\right)\nonumber\\
&=&	C\sum_{n_{1},n\in\mathbb{Z}/N\mathbb{Z}}\int d\nu_{1}d\nu\delta\left(\nu\right)\delta_{\exp\left[\frac{2\pi i}{N}n\right],1}\left[-\left(\lambda_{1}+\lambda_{1}^{-1}\right)\bm{q}\bm{u}_{0,-1}^{(1)}\bm{u}_{-1,-1}-\bm{q}\bm{u}_{1,-1}^{(1)}\bm{u}_{-1,-1}-\bm{q}\bm{u}_{-1,-1}^{(1)}\bm{u}_{-1,-1}\rt.\nonumber\\
&&\qquad\lt.+\left(\lambda_{2}+\lambda_{2}^{-1}\right)\bm{u}_{0,-1}+\bm{u}_{-1,0}^{(1)}\bm{u}_{1,-1}+\bm{u}_{1,0}^{(1)}\bm{u}_{-1,-1}\right]\bm{V}_{1}^{-1}Uf\left(\nu_{1},\nu,n_{1},n\right)\nonumber\\
&=&	C\sum_{n_{1},n\in\mathbb{Z}/N\mathbb{Z}}\int d\nu_{1}d\nu\delta\left(\nu\right)\delta_{\exp\left[\frac{2\pi i}{N}n\right],1}\left[-\left(\lambda_{1}+\lambda_{1}^{-1}\right)\bm{u}_{0,-1}^{(1)}-\bm{u}_{1,-1}^{(1)}-\bm{u}_{-1,-1}^{(1)}\rt.\nonumber\\
&&\qquad\lt.+\left(\lambda_{2}+\lambda_{2}^{-1}\right)+\bm{q}\bm{u}_{-1,0}^{(1)}+\bm{q}^{-1}\bm{u}_{1,0}^{(1)}\right]\bm{y}^{-1}\bm{V}_{1}^{-1}Uf\left(\nu_{1},\nu,n_{1},n\right)\nonumber\\
&=&	C\sum_{n_{1},n\in\mathbb{Z}/N\mathbb{Z}}\int d\nu_{1}d\nu\delta\left(\nu\right)\delta_{\exp\left[\frac{2\pi i}{N}n\right],1}\left[-\left(\lambda_{1}+\lambda_{1}^{-1}\right)+\left(\lambda_{2}+\lambda_{2}^{-1}\right)\right]\bm{y}^{-1}\bm{V}_{1}^{-1}Uf\left(\nu_{1},\nu,n_{1},n\right)
\ee

\section{Representations of $R$-elements}\label{rep of Rs}

\be 
R^{1/2,\lambda}&=&\begin{pmatrix}\ck_\l & \left(\bm{q}-\bm{q}^{-1}\right)\bm{q}^{-1/2}\cy_\l\\
0 & \ck_\l^{-1}
\end{pmatrix},\qquad \quad \widetilde{R}^{1/2,\lambda}=\begin{pmatrix}\widetilde{\ck}^{-1}_\l & 0\\
\left(\widetilde{\bm{q}}-\widetilde{\bm{q}}^{-1}\right)\widetilde{\bm{q}}^{-1/2}\widetilde{\cy}_\l & \quad\widetilde{\ck}_\l
\end{pmatrix}\\
R^{\prime 1/2,\lambda}&=&\begin{pmatrix}\ck_\l & 0\\
\left(\bm{q}-\bm{q}^{-1}\right)\bm{q}^{-1/2}{\cal X}_\l\quad & \ck^{-1}_\l
\end{pmatrix},\qquad \widetilde{R}^{\prime1/2,\lambda}=\begin{pmatrix}\widetilde{\ck}^{-1} & \qquad\left(\widetilde{\bm{q}}-\widetilde{\bm{q}}^{-1}\right)\widetilde{\bm{q}}^{-1/2}\widetilde{{\cal X}}_\l\\
0 & \widetilde{\ck}_\l
\end{pmatrix}
\ee
We check that 
\be 
(R^{1/2,\lambda})^{\dagger\otimes\dagger}=(\widetilde{R}^{1/2,\lambda})^{-1}\qquad (R^{\prime1/2,\lambda})^{\dagger\otimes\dagger}=(\widetilde{R}^{\prime1/2,\lambda})^{-1}.
\ee
Eqs.\eqref{MandUq} and \eqref{MandUqtilde} are derived from these representations of $R$-matrices.

\section{Loop equations}\label{Loop equation}

By \eqref{Crho1} and the relation $\Delta \xi=R^{-1}(\Delta'\xi) R$ for $\xi\in \uq$,
	\be 
	(C_{1}^{JI})^{ja}_{\ Kk'}{\rho}^{K}\left(\xi\right)^{k'}_{\ k}&=&\lt[\left(\rho^{J}\otimes\rho^{I}\right)\left(\Delta\xi\right)\rt]^{ja}_{\ j'a'}(C_{1}^{JI})^{j'a'}_{\ \ Kk}\nonumber\\
	&=&[\left(R^{JI}\right)^{-1}]^{ja}_{\ kb}\lt[\left(\rho^{J}\otimes\rho^{I}\right)\left(\Delta^{\prime}\xi\right)\rt]^{kb}_{\ k'b'}(R^{JI})^{k'b'}_{\ \ lc}(C_{1}^{JI})^{lc}_{\ Kk}
	\ee
	If we write $\Delta\xi = \sum_\alpha \xi^{(1)}_\a\otimes\xi^{(2)}_\a$, the above formula can be written as
	\be 
	(R^{JI})^{j a}_{\ kb}(C_{1}^{JI})^{kb}_{\ Kk'}{\rho}^{K}\left(\xi\right)^{k'}_{\ k}=\sum_\a\rho^J(\xi^{(2)}_\a)^k_{\ k'}\rho^I(\xi^{(1)}_\a)^b_{\ b'}(R^{JI})^{k'b'}_{\ \ lc}(C_{1}^{JI})^{lc}_{\ Kk}.
	\ee
	Since $(R'^{IJ})^{ a j}_{\ bk}=(R^{JI})^{j a}_{\ kb}$, we obtain
	\be 
	(R'^{IJ})^{ a j}_{\ bk}(C_{1}^{JI})^{kb}_{\ Kk'}{\rho}^{K}\left(\xi\right)^{k'}_{\ k}=\lt[\rho^I\otimes\rho^J(\Delta\xi)\rt]^{bk}_{\ b'k'}(R'^{IJ})^{b'k'}_{\ \ cl}(C_{1}^{JI})^{lc}_{\ Kk}
	\ee
	Insert $[\lt(\rho^I\otimes\rho^J\rt)(\Delta\xi)]_{\ \ bk}^{b'k'}=(C_1^{IJ})_{\ \ Kl}^{bk}\rho^K(\xi)_{\ l'}^{l}(C_2^{IJ})_{\ \ b'k'}^{ Kl'}$ and using \eqref{CCKK}, we obtain that 
	\be
	\lt[(C_2^{IJ})_{\ \ aj}^{ K'l}(R'^{IJ})^{ a j}_{\ bk}(C_{1}^{JI})^{kb}_{\ Kk'}\rt]{\rho}^{K}\left(\xi\right)^{k'}_{\ k}=\delta^{K'}_K\rho^K(\xi)_{\ l'}^{l}\lt[(C_2^{IJ})_{\ \ b'k'}^{ Kl'}(R'^{IJ})^{b'k'}_{\ \ cj}(C_{1}^{JI})^{jc}_{\ Kk}\rt].
	\ee
	By Schur's lemma, the quantity in the square bracket is a multiple of identity:
	\be 
	&&\lt(C_2^{IJ}\rt)_{\ \ b'k'}^{ Kl'}(R'^{IJ})^{b'k'}_{\ \ cj}(C_{1}^{JI})^{jc}_{\ Kk}=\cc_K\delta^{l'}_k,\\
	\text{or}&& (R'^{IJ})^{bk}_{\ \ cj}(C_{1}^{JI})^{jc}_{\ Kk}=\cc_K(C^{IJ}_1)^{bk}_{\ Kk},\qquad (C_2^{IJ})_{\ \ b'k'}^{ Kl}(R'^{IJ})^{b'k'}_{\ \ cj}=\cc_K(C_2^{JI})_{\ \ jc}^{ Kl},
	\ee
	where $\cc_K$ is a constant.

	Consider \eqref{prodrquantum3} and interchange the labels $I\leftrightarrow J$ and $1\leftrightarrow 2$ on top of $\bm{M}$: 
	\be 
	&&[\left(R^{-1}\right){}^{JI}]_{\ mc}^{ja}(\stackrel{2}{\bm{M}}{}^{J})_{\ n}^{m}(R^{JI})_{\ kd}^{nc}(\stackrel{1}{\bm{M}}{}^{I})_{\ b}^{d}=\sum_{K}(C_{1}^{JI})_{\ K c}^{ja}(\bm{M}^{K})_{\ d}^{c}(C_{2}^{JI})_{\ \ kb}^{Kd}\nonumber\\
	&=&\sum_{K}[(R'^{-1})^{IJ}]^{aj}_{\ \ bk}(C_{1}^{IJ})_{\ K c}^{bk}(\bm{M}^{K})_{\ d}^{c}(C_{2}^{IJ})_{\ \ b'k'}^{Kd}(R'^{IJ})^{b'k'}_{\ \ bk}.
	\ee
	The left-hand side can be rewritten as $[\left(R'^{-1}\right){}^{IJ}]_{\ cm}^{aj}(\stackrel{2}{\bm{M}}{}^{J})_{\ n}^{m}(R'^{IJ})_{\ dk}^{cn}(\stackrel{1}{\bm{M}}{}^{I})_{\ b}^{d}$. As a result, we obtain
	\be 
	&&(\stackrel{2}{\bm{M}}{}^{J})_{\ n}^{m}(R'^{IJ})_{\ dk}^{cn}(\stackrel{1}{\bm{M}}{}^{I})_{\ b}^{d}[(R'^{-1})^{IJ}]^{bk}_{\ \ b'k'}=\sum_{K}(C_{1}^{IJ})_{\ K c}^{cm}(\bm{M}^{K})_{\ d}^{c}(C_{2}^{IJ})_{\ \ b'k'}^{Kd}\nonumber\\
	&=&[(R^{-1})^{IJ}]_{\ aj}^{cm}(\stackrel{1}{\bm{M}}{}^{I})_{\ d}^{a}(R^{IJ})_{\ b'n}^{dj}(\stackrel{2}{\bm{M}}{}^{J})_{\ k'}^{n}.
	\ee
	Eqs.\eqref{lpeqn1} and \eqref{lpeqn2} can be derived similarly.

\section{Some computations of the representation of graph algebra}\label{App:Rep of graph algebra}

In the following, we check the representation satisfy the holomorphic part $\cl_{0,m}$ of the graph algebra 
\be 
\cd^{\vec{\l}}\left(\bm{M}_{\nu}^{J}\right)&=&\left(\rho^{J}\otimes\iota_\nu\right)\left(R^{\prime}\right)D_{\nu}^{\lambda_{\nu}}\left(\bm{M}_{\nu}^{J}\right)\left(\rho^{J}\otimes\iota_\nu\right)\left(R^{\prime-1}\right),\label{DlMnuJ}
\ee
where $\vec{\l}=(\lambda_{1},\cdots,\lambda_{m})$ and we skip the representation label of $\iota_\nu$. We label the tensor slots corresponding to $I,J,\l_1,\cdots,\l_m$ by $x,y,1,\cdots, m$ and skip all representation labels. Eq.\eqref{DlMnuJ} is written as
\be 
\cd^{\vec{\l}}\left(\bm{M}_{\nu}^{J}\right)= \Delta_{1}^{(\nu-1)}\left(R_{x1}^{\prime}\right)\bm{M}_{x\nu}\Delta_{1}^{(\nu-1)}\left(R_{x1}^{\prime-1}\right)=\Delta_{1}^{(\nu-1)}\left(R_{x1}^{\prime}\bm{M}_{x\nu}R_{x1}^{\prime-1}\right)
\ee
where $\Delta_{1}^{(\nu-1)}=\mathrm{id}_{x}\otimes\mathrm{id}_{y}\otimes\Delta^{(\nu-1)}\otimes\mathrm{id}_\nu$. 

Applying the representation to the left-hand side of \eqref{prodrquantum3}:
\be
(R^{-1})^{IJ}{\cal D}^{\vec{\lambda}}\left(\bm{M}_{\nu}^{I}\right)R^{IJ}{\cal D}^{\vec{\lambda}}\left(\bm{M}_{\nu}^{J}\right)&=&\Delta_{1}^{(\nu-1)}\left(R_{xy}^{-1}R_{x1}^{\prime}\bm{M}_{x\nu}R_{x1}^{\prime-1}R_{xy}R_{y1}^{\prime}\bm{M}_{y\nu}R_{y1}^{\prime-1}\right)\nonumber\\
&=&\Delta_{1}^{(\nu-1)}\left(R_{xy}^{-1}R_{x1}^{\prime}R_{y1}^{\prime}\bm{M}_{x\nu}R_{xy}\bm{M}_{y\nu}R_{x1}^{\prime-1}R_{y1}^{\prime-1}\right)\nonumber\\
&=&\Delta_{1}^{(\nu-1)}\left(R_{y1}^{\prime}R_{x1}^{\prime}R_{xy}^{-1}\bm{M}_{x\nu}R_{xy}\bm{M}_{y\nu}R_{x1}^{\prime-1}R_{y1}^{\prime-1}\right)
\ee
where we have used the relations $R_{x1}^{\prime-1}R_{xy}R_{y1}^{\prime}=R_{y1}^{\prime}R_{xy}R_{x1}^{\prime-1}$ and $R_{y1}^{\prime}R_{x1}^{\prime}R_{xy}^{-1}=R_{xy}^{-1}R_{x1}^{\prime}R_{y1}^{\prime}$.
By $(1\otimes\Delta)(R)=R_{13}R_{12}$ and cyclic permutation $(123)$, we obtain $(\Delta\otimes1)(R^{\prime})=R_{23}^{\prime}R_{13}^{\prime}$, and therefore,
\be
(R^{-1})^{IJ}{\cal D}^{\vec{\lambda}}\left(\bm{M}_{\nu}^{I}\right)R^{IJ}{\cal D}^{\vec{\lambda}}\left(\bm{M}_{\nu}^{J}\right)=\sum_{K}\Delta_{1}^{(\nu-1)}\left[\left(\Delta_{x}R_{x1}^{\prime}\right)(C_{1}^{IJ})_{K}\bm{M}_{\nu}^{K}(C_{2}^{IJ})^{K}\left(\Delta_{x}R_{x1}^{\prime-1}\right)\right].
\ee
We have used the result from the representation of loop algebra. By the intertwining property of $C_1$ and $C_2$, we have 
\be 
\left(\Delta_{x}R_{x1}^{\prime}\right)^{IJ\lambda}(C_{1}^{IJ})_{K}=(C_{1}^{IJ})_{K}\left(R^{\prime}\right)^{K\lambda},\qquad(C_{2}^{IJ})^{K}\left(\Delta_{x}R_{x1}^{\prime-1}\right)^{IJ\lambda}=\left(R^{\prime-1}\right)^{K\lambda}(C_{2}^{IJ})^{K}. 
\ee
As a result,
\be 
(R^{-1})^{IJ}{\cal D}^{\vec{\lambda}}\left(\bm{M}_{\nu}^{I}\right)R^{IJ}{\cal D}^{\vec{\lambda}}\left(\bm{M}_{\nu}^{J}\right)&=&\sum_{K}(C_{1}^{IJ})_{K}\left[\Delta_{1}^{(\nu-1)}\left(R^{\prime}\right)\bm{M}_{\nu}^{K}\Delta_{1}^{(\nu-1)}\left(R^{\prime-1}\right)\right](C_{2}^{IJ})^{K}\nonumber\\
&=&\sum_{K}(C_{1}^{IJ})_{K}{\cal D}^{\vec{\lambda}}\left(\bm{M}_{\nu}^{K}\right)(C_{2}^{IJ})^{K}. 
\ee
It shows that the representation is compatible to \eqref{prodrquantum3}.

The representation of the relation \eqref{loopeqnMM} involves both ${\cal D}^{\vec{\lambda}}\left(\bm{M}_{\nu}^{I}\right)$ and ${\cal D}^{\vec{\lambda}}\left(\bm{M}_{\mu}^{I}\right)$ for $\nu<\mu$. ${\cal D}^{\vec{\lambda}}\left(\bm{M}_{\mu}^{I}\right)$ can be written as
\be 
\bm{M}_{y\mu}&=& \Delta_{1}^{(\mu-1)}\left(R_{x1}^{\prime}\right)\bm{M}_{x\mu}\Delta_{1}^{(\mu-1)}\left(R_{x1}^{\prime-1}\right)\nonumber\\
&=&\Delta_{1}^{(\nu-1)}\lt(R_{y1}^{\prime}\rt)\Delta_{\nu}^{(\mu-\nu)}\lt(R_{y\nu}^{\prime}\rt)\bm{M}_{y\mu}\Delta_{\nu}^{(\mu-\nu)}\lt(R_{y\nu}^{\prime-1}\rt)\Delta_{1}^{(\nu-1)}\lt(R_{y1}^{\prime-1}\rt)
\ee
We have used the following relation:
\be
\Delta_{1}^{(\mu-1)}\lt(R_{y1}^{\prime}\rt)&=&R_{y1}^{\prime}R_{y2}^{\prime}R_{y3}^{\prime}\cdots R_{y\nu-1}^{\prime}R_{y\nu}^{\prime}\cdots R_{y\mu-1}^{\prime}
\nonumber\\
&=&\left(\Delta_{1}^{(\nu-1)}R_{y1}^{\prime}\right)R_{y\nu}^{\prime}\left(\Delta_{\nu+1}^{(\mu-\nu-1)}R_{y\nu+1}^{\prime}\right)
\ee
Applying the representation to the left-hand side of \eqref{loopeqnMM}:
\be
&&(R^{-1})^{IJ}{\cal D}^{\vec{\lambda}}\left(\bm{M}_{\nu}^{I}\right)R^{IJ}{\cal D}^{\vec{\lambda}}\left(\bm{M}_{\mu}^{J}\right)\nonumber\\
&=&\Delta_{1}^{(\nu-1)}\otimes\Delta_{\nu+1}^{(\mu-\nu-1)}\left[R_{xy}^{-1}\left(R_{x1}^{\prime}\bm{M}_{x\nu}R_{x1}^{\prime-1}\right)R_{xy}\left(R_{y1}^{\prime}R_{y\nu}^{\prime}R_{y\nu+1}^{\prime}\bm{M}_{y\mu}R_{y\nu+1}^{\prime-1}R_{y\nu}^{\prime-1}R_{y1}^{\prime-1}\right)\right].\label{RDMRDM111}
\ee
By the relations $R_{x1}^{\prime-1}R_{xy}R_{y1}^{\prime}=R_{y1}^{\prime}R_{xy}R_{x1}^{\prime-1}$ and $R_{xy}^{-1}R_{x1}^{\prime}R_{y1}^{\prime}=R_{y1}^{\prime}R_{x1}^{\prime}R_{xy}^{-1}$,
\be 
=\Delta_{1}^{(\nu-1)}\otimes\Delta_{\nu+1}^{(\mu-\nu-1)}\left[R_{y1}^{\prime}R_{x1}^{\prime}R_{xy}^{-1}\bm{M}_{x\nu}R_{xy}\left(R_{y\nu}^{\prime}R_{y\nu+1}^{\prime}\bm{M}_{y\mu}R_{y\nu+1}^{\prime-1}R_{y\nu}^{\prime-1}\right)R_{x1}^{\prime-1}R_{y1}^{\prime-1}\right]
\ee
By the relations $\bm{M}_{x\nu}R_{xy}R_{y\nu}^{\prime}=R_{xy}R_{y\nu}^{\prime}\bm{M}_{x\nu}$, $R_{x1}^{\prime-1}R_{y1}^{\prime-1}=R_{xy}^{-1}R_{y1}^{\prime-1}R_{x1}^{\prime-1}R_{xy}$, and $R_{y\nu}^{\prime-1}R_{xy}^{-1}\bm{M}_{x\nu}=\bm{M}_{x\nu}R_{y\nu}^{\prime-1}R_{xy}^{-1}$,
\be 
=\Delta_{1}^{(\nu-1)}\otimes\Delta_{\nu+1}^{(\mu-\nu-1)}\left(R_{y1}^{\prime}R_{y\nu}^{\prime}R_{y\nu+1}^{\prime}\bm{M}_{y\mu}R_{y\nu+1}^{\prime-1}R_{y\nu}^{\prime-1}R_{x1}^{\prime}R_{xy}^{-1}R_{y1}^{\prime-1}\bm{M}_{x\nu}R_{x1}^{\prime-1}R_{xy}\right)
\ee
By the relation $R_{y1}^{\prime-1}R_{xy}^{-1}R_{x1}^{\prime}=R_{x1}^{\prime}R_{xy}^{-1}R_{y1}^{\prime-1}$,
\be 
&=&\Delta_{1}^{(\nu-1)}\otimes\Delta_{\nu+1}^{(\mu-\nu-1)}\left[\left(R_{y1}^{\prime}R_{y\nu}^{\prime}R_{y\nu+1}^{\prime}\bm{M}_{y\mu}R_{y\nu+1}^{\prime-1}R_{y\nu}^{\prime-1}R_{y1}^{\prime-1}\right)R_{xy}^{-1}\left(R_{x1}^{\prime}\bm{M}_{x\nu}R_{x1}^{\prime-1}\right)R_{xy}\right]\nonumber\\
&=&\left[\left(\Delta_{1}^{(\mu-1)}R_{y1}^{\prime}\right)\bm{M}_{y\mu}\left(\Delta_{1}^{(\mu-1)}R_{y1}^{\prime-1}\right)\right]R_{xy}^{-1}\left[\Delta_{1}^{(\nu-1)}\left(R_{x1}^{\prime}\right)\bm{M}_{x\nu}\Delta_{1}^{(\nu-1)}\left(R_{x1}^{\prime-1}\right)\right]R_{xy}\nonumber\\
&=&{\cal D}^{\vec{\lambda}}\left(\bm{M}_{\mu}^{J}\right)\left(R^{-1}\right)^{IJ}{\cal D}^{\vec{\lambda}}\left(\bm{M}_{\nu}^{I}\right)R^{IJ}.
\ee
This shows that the representation is compatible to the relation \eqref{loopeqnMM}.

\section{Spectral representation}\label{App:spectral representation}

First, we can interchange the limit $\epsilon\to0$ and the integral once the $\mu$-integration contour is deformed $\mu\to\mu-i\alpha$ (for all $\a>0$), similar to the discussion in \eqref{psichif}, for any $f\in\Fd$
\be
\mathscr{V}_{\psi}f\left(\mu_{\chi},m_{\chi}\right)&=&\langle\psi_{\chi}\mid f\rangle=\lim_{\epsilon\to0}\sum_{m\in\mathbb{Z}/N\mathbb{Z}}\int_{-\infty}^{\infty}d\mu\,\overline{\psi_{\chi}^{\epsilon}\left(\mu,m\right)}f\left(\mu,m\right)\nonumber\\
&=&\lim_{\epsilon\to0}\sum_{m\in\mathbb{Z}/N\mathbb{Z}}\int_{-\infty}^{\infty}d\mu\,\overline{\psi_{\chi}^{\epsilon}\left(\mu+ib^{-1},m+1\right)}f\left(\mu-ib,m+1\right)\nonumber\\
&=&\lim_{\epsilon\to0}\sum_{m\in\mathbb{Z}/N\mathbb{Z}}\int_{-\infty}^{\infty}d\mu\,\frac{\overline{\psi_{\chi}^{\epsilon}\left(\mu+ib^{-1},m+1\right)}}{e^{\frac{2\pi i}{N/l}(-ib\mu-m)}+e^{-\frac{2\pi i}{N/l}(-ib\mu-m)}}\left(\bm{y}^l+\bm{y}^{-l}\right)\bm{u}^{-1}f\left(\mu,m\right)\nonumber\\
&=&\sum_{m\in\mathbb{Z}/N\mathbb{Z}}\int_{-\infty}^{\infty}d\mu\,\overline{\psi_{\chi}^{\epsilon=0}\left(\mu+ib^{-1},m+1\right)}f\left(\mu-ib,m+1\right)\nonumber\\
&=&\sum_{m\in\mathbb{Z}/N\mathbb{Z}}\int_{-\infty}^{\infty}d\mu\,\overline{\psi_{\chi}^{\epsilon=0}\left(\mu+i\alpha,m+1\right)}f\left(\mu-i\alpha,m+1\right)
\ee
where we have use $\psi^\epsilon_\chi$ to represent \eqref{psichimumexpandgamma} to make $\epsilon$ explicit. $N/l$ is odd for avoiding any pole.

The following derivation shows $\bmL$ is diagonalized in this representation
\be
&&\mathscr{V}_{\psi}\bm{L}f\left(\mu_{\chi},m_{\chi}\right)
=\lim_{\epsilon\to0}\sum_{m\in\mathbb{Z}/N\mathbb{Z}}\int_{-\infty}^{\infty}d\mu\,\overline{\psi_{\chi}^{\epsilon}\left(\mu,m\right)}\bm{L}f\left(\mu,m\right)
=\lim_{\epsilon\to0}\sum_{m\in\mathbb{Z}/N\mathbb{Z}}\int_{-\infty}^{\infty}d\mu\,\overline{\widetilde{\bm{L}}\psi_{\chi}^{\epsilon}\left(\mu,m\right)}f\left(\mu,m\right)\nonumber\\
&=&\lim_{\epsilon\to0}\sum_{m\in\mathbb{Z}/N\mathbb{Z}}\int_{-\infty}^{\infty}d\mu\,\frac{\chi+\chi^{-1}+R(\mu,m,\epsilon)}{e^{\frac{2\pi i}{N/l}(-ib\mu-m)}+e^{-\frac{2\pi i}{N/l}(-ib\mu-m)}}\frac{\overline{\psi_{\chi}^{\epsilon}\left(\mu+ib^{-1},m+1\right)}}{e^{\frac{2\pi i}{N/l}(-ib\mu-m)}+e^{-\frac{2\pi i}{N/l}(-ib\mu-m)}}\left(\bm{y}^l+\bm{y}^{-l}\right)^{2}\bm{u}^{-1}f\left(\mu,m\right)\nonumber\\
&=&\sum_{m\in\mathbb{Z}/N\mathbb{Z}}\int_{-\infty}^{\infty}d\mu\,\frac{\chi+\chi^{-1}}{e^{\frac{2\pi i}{N/l}(-ib\mu-m)}+e^{-\frac{2\pi i}{N/l}(-ib\mu-m)}}\frac{\overline{\psi_{\chi}^{\epsilon=0}\left(\mu+ib^{-1},m+1\right)}}{e^{\frac{2\pi i}{N/l}(-ib\mu-m)}+e^{-\frac{2\pi i}{N/l}(-ib\mu-m)}}\left(\bm{y}^l+\bm{y}^{-l}\right)^{2}\bm{u}^{-1}f\left(\mu,m\right)\nonumber\\
&=&\left(\chi+\chi^{-1}\right)\langle\psi_{\chi}\mid f\rangle=\left(\chi+\chi^{-1}\right)\mathscr{V}_{\psi}f\left(\mu_{\chi},m_{\chi}\right)
\ee
where 
\be
R(\mu,m,\epsilon)=\bm{q}^{-2}\left(1-e^{-\frac{2i\pi}{\sqrt{N}}b\epsilon}\right)e^{\frac{2\pi i}{N}\left(-ib\mu-m\right)}+\bm{q}^{2}\left(1-e^{\frac{2i\pi}{\sqrt{N}}b\epsilon}\right)e^{-\frac{2\pi i}{N}\left(-ib\mu-m\right)}
\ee
The reason for interchanging limit and integral (in the 4th step) is similar to the one in proving Lemma  \ref{psichiisdistribution}.


We derive the direct integral representation \eqref{sUfsUf}: We define $\varsigma(\mu_{\chi},m_{\chi})=\varsigma_1(\mu_{\chi},m_{\chi})+\varsigma_2(\mu_{\chi},m_{\chi})$ with
\be
\varsigma_1(\mu_{\chi},m_{\chi})&=&\frac{1}{N^{2}}e^{\frac{2\pi\mu_{\chi}b}{N}+\frac{2\pi\mu_{\chi}b^{-1}}{N}}\nonumber\\
\varsigma_2(\mu_{\chi},m_{\chi})&=&-\frac{1}{N^{2}}e^{-\frac{2\pi\mu_{\chi}b}{N}+\frac{2\pi\mu_{\chi}b^{-1}}{N}+\frac{4i\pi m_{\chi}}{N}}
\ee 
satisfies $\varrho^{-1}(\mu_\chi,m_\chi)=\varsigma(\mu_\chi,m_\chi)+\varsigma(-\mu_\chi,-m_\chi)$. We consider the following regularized delta function with $\delta>2\epsilon>0$:
\be
\frac{\delta}{2\pi\left(z-2i\epsilon\right)\left(z+i(\delta-2\epsilon)\right)}=\int_{-\infty}^\infty \frac{\rmd k}{2\pi}e^{ikz} s_{\epsilon,\delta}(k),\qquad s_{\epsilon,\delta}(k)=\begin{cases}
e^{-k\left(\delta-2\epsilon\right)} & k>0,\\
e^{2k\epsilon} & k<0.
\end{cases}
\ee
The inner product between $f.f'\in\Fd$ can be written as
\be  	
\langle f\mid f'\rangle&=&\sum_{m,m'\in\mathbb{Z}/N\mathbb{Z} }\ \lim_{\substack{\delta\to0\\\delta>2\epsilon}}\ \int_{\R^2}dx dy\left[\frac{\sqrt{N}\delta}{2\pi\left(x-y-2i\epsilon\right)\left(i(\delta-2\epsilon)+x-y\right)}\delta_{m.m'}\right]\nonumber\\
&&\qquad \qquad\qquad \qquad\qquad \qquad \overline{f\left(x-i\epsilon,m\right)}f^{\prime}\left(y-i\epsilon,m\right)\label{innerprodinstandard}
\ee
We use the result in \cite{Han:2024nkf} that 
\be
\frac{\delta}{2\pi\sqrt{N}\left(x-y-2i\epsilon\right)\left(i(\delta-2\epsilon)+x-y\right)}\delta_{m.m'}= I^\delta_1+I^\delta_2.\label{regulardeltaI1I2}
\ee  
$I^\delta_1$ and $I^\delta_2$ are integrals given by the same integrand but along different contours $\mathscr{C}_1$ and $\mathscr{C}_2$
\begin{eqnarray}
I^\delta_{1,2}
&=&\sum_{m_{\chi}\in\mathbb{Z}/N\mathbb{Z}}\int_{-\infty}^{\infty}d\mu_{\chi}\,e^{-\frac{2\pi}{\sqrt{N}}\delta\mu_{\chi}}\int_{\mathscr{C}_{1,2}}ds\,\Psi_{m,m'}\left(x-i\epsilon,y+i\epsilon,\mu_{\chi},m_{\chi},s\right)
\end{eqnarray}
The contour $\mathscr{C}_1$ is along $\mathbb{R}+i\im(s)$ with $2\epsilon-\delta<\im(s)<0$, while $\mathscr{C}_2$ is obtained by $\mathscr{C}_1$ by shifting $s\to s+ib/\sqrt{N}$ then reversing the direction. $\Psi_{m,m'}\left(x-i\epsilon,y+i\epsilon,\mu_{\chi},m_{\chi},s\right)$ is given by
\be
\Psi_{m,m'}&=& \frac{\zeta _0}{N^{3/2}}\sum_{c\in\Z/N\Z} e^{-i \pi  \left(\alpha ^2-\beta ^2\right)+\frac{i \pi  \left(a^2+2 a c-p^2\right)}{N}-i \pi  (a-p)-\frac{2 i \pi}{N}  d c}\nonumber\\
&&e^{2 i \pi  s \left(\alpha +\omega ''\right)-2 i \pi  t (s-2\omega '')}\frac{ \gamma (\alpha +t,a+d) \gamma \left(-\alpha +\beta +s-\omega '',-a-c+p\right) }{ \gamma \left(s+\omega '',-c\right) \gamma (\beta +t,d+p) \gamma \left(-\alpha +\beta -\omega '',p-a\right)},\\
-t&=& \lambda=\frac{\mu_\chi}{\sqrt{N}} ,\qquad \alpha = -\omega ''+y+i \epsilon ,\qquad\beta = x+\omega ''-i \epsilon ,\qquad a= m',\qquad p= m,\qquad -d= {m_r}.\nonumber
\ee
The following relations are shown in \cite{Han:2024nkf}:
\be 
I_1^\delta&=&\sum_{m_{\chi}\in\mathbb{Z}/N\mathbb{Z}}\int_{-\infty}^{\infty}d\mu_{\chi}\,e^{-\frac{2\pi}{\sqrt{N}}\delta\mu_{\chi}}\varsigma_{1}(\mu_{\chi},m_{\chi})\psi_{\chi}^{\epsilon}\left(y,m\right)\overline{\psi_{\chi}^{\epsilon}\left(x,m^{\prime}\right)},\label{I1psipsi}\\
I_2^{\delta=0}&=&\sum_{m_{\chi}\in\mathbb{Z}/N\mathbb{Z}}\int_{-\infty}^{\infty}d\mu_{\chi}\,\varsigma_{2}(\mu_{\chi},m_{\chi})\psi_{\chi}^{\epsilon}\left(y,m\right)\overline{\psi_{\chi}^{\epsilon}\left(x,m^{\prime}\right)}. \label{I2psipsi}
\ee
Inserting the relation \eqref{regulardeltaI1I2} in \eqref{innerprodinstandard}, the resulting integrand 
\be
\ci^\delta(x,y,\mu_\chi,s)\equiv\sum_{m,m'\in\Z/N\Z} e^{-\frac{2\pi}{\sqrt{N}}\delta\mu_\chi}\Psi_{m,m'}\left(x-i\epsilon,y+i\epsilon,\mu_{\chi},m_{\chi},s\right)\overline{f\left(y-i\epsilon,m\right)}f^{\prime}\left(x-i\epsilon,m'\right)\nonumber
\ee
exponentially suppresses at infinity for all integration variables $x,y,\mu_\chi \in \R$ and $s\in\mathscr{C}_{1,2}$. When $s\in\mathscr{C}_{2}$, $\ci^\delta(x,y,\mu_\chi,s)$ still exponentially suppresses for all integration variables even for $\delta=0$. We can freely interchange the order of integrations. Moreover, by analyticity  
\be 
\int dxdy\,\Psi_{m,m'}\left(x-i\epsilon,y+i\epsilon,\mu_{\chi},m_{\chi},s\right)\overline{f\left(y-i\epsilon,m\right)}f^{\prime}\left(x-i\epsilon,m\right)
\ee
is independent of $\epsilon>0$. Therefore, the limit in \eqref{innerprodinstandard} becomes only $\delta\to0$. By \eqref{I1psipsi},
\be
&&\langle f\mid f'\rangle=\sum_{m_{\chi}\in\mathbb{Z}/N\mathbb{Z}}\lim_{\delta\to0}\int_{-\infty}^{\infty}d\mu_{\chi}\,e^{-\frac{2\pi}{\sqrt{N}}\delta\mu_{\chi}}\varsigma_{1}(\mu_{\chi},m_{\chi})\langle f\mid\psi_{\chi}\rangle\langle\psi_{\chi}\mid f^{\prime}\rangle+\nonumber\\
&&\sum_{m,m',m_{\chi}\in\mathbb{Z}/N\mathbb{Z}}\lim_{\delta\to0}\int_{-\infty}^{\infty}d\mu_{\chi}e^{-\frac{2\pi}{\sqrt{N}}\delta\mu_{\chi}}\int_{\mathscr{C}_{2}}ds\int dxdy\,\Psi_{m,m'}\left(x-i\epsilon,y+i\epsilon,\mu_{\chi},m_{\chi},s\right)\overline{f\left(y-i\epsilon,m\right)}f^{\prime}\left(x-i\epsilon,m\right)\nonumber
\ee
The second integrand can be bounded by exponentially decaying function independent of $\delta$. Therefore we can interchange the limit and integral. By the relation \eqref{I2psipsi}, we obtain
\be
\langle f\mid f'\rangle&=&\sum_{m_{\chi}\in\mathbb{Z}/N\mathbb{Z}}\lim_{\delta\to0}\int_{-\infty}^{\infty}d\mu_{\chi}\lt[e^{-\frac{2\pi}{\sqrt{N}}\delta\mu_{\chi}}\varsigma_{1}(\mu_{\chi},m_{\chi})+\varsigma_{2}(\mu_{\chi},m_{\chi})\rt]\langle f\mid\psi_{\chi}\rangle\langle\psi_{\chi}\mid f^{\prime}\rangle\label{F13}\\
&=&\sum_{m_{\chi}\in\mathbb{Z}/N\mathbb{Z}}\lim_{\delta\to0}\int_{-\infty}^{\infty}d\mu_{\chi}\lt[e^{-\frac{2\pi}{\sqrt{N}}\delta\mu_{\chi}}\varsigma_{1}(\mu_{\chi},m_{\chi})+\varsigma_{2}(\mu_{\chi},m_{\chi})\rt]\overline{\mathscr{V}_{\psi} f(\mu_\chi,m_\chi)}\mathscr{V}_{\psi} f'(\mu_\chi,m_\chi).\nonumber
\ee

We denote by $\varsigma_\delta(\mu_\chi,m_\chi)=e^{-\frac{2\pi}{\sqrt{N}}\delta\mu_{\chi}}\varsigma_{1}(\mu_{\chi},m_{\chi})+\varsigma_{2}(\mu_{\chi},m_{\chi})$ and $\varrho^{-1}_\delta(\mu_\chi,m_\chi)=\varsigma_\delta(\mu_\chi,m_\chi)+\varsigma_\delta(-\mu_\chi,-m_\chi)$. We define the Hilbert space $L^2(\C,\rmd\varrho_\chi)$ by the resulting inner product
\be
\lag F\mid G\rag_\varrho=\sum_{m_{\chi}\in\mathbb{Z}/N\mathbb{Z}}\lim_{\delta\to0}\int_{0}^{\infty}d\mu_{\chi}\varrho^{-1}_\delta(\mu_\chi,m_\chi)\overline{F(\mu_\chi,m_\chi)}G(\mu_\chi,m_\chi).\label{limitinnerprod}
\ee
The result \eqref{F13} has the following implications:

\begin{itemize}

\item $\Vert \mathscr{V}_\psi f\Vert_\varrho=\Vert f\Vert $ indicates that $\mathscr{V}_\psi$ is norm preserving, of the domain of $\mathscr{V}_\psi$ can be extended to entire $\ch$. $\mathscr{V}_\psi: \ch\to L^2(\C,\rmd\varrho_\chi)$ is injective, since $\mathscr{V}_\psi f=0$ implies $f=0$.

\item We have the resolution of identity as the weak limit
\be
\text{w-}\!\!\lim_{\delta\to0}\sum_{m_{\chi}\in\mathbb{Z}/N\mathbb{Z}}\int_{0}^{\infty}d\mu_{\chi}\varrho^{-1}_\delta(\mu_\chi,m_\chi)\mid\psi_\chi\rangle\langle\psi_\chi\mid =\mathrm{id}_{\ch}.\label{weaklimitid}
\ee

\end{itemize}

\noindent
We define $\mathscr{V}_{\psi,\delta>2\epsilon}^{-1}$ by 
\be
\mathscr{V}_{\psi,\delta>2\epsilon}^{-1}\mid F\rangle_\varrho =\sum_{m_{\chi}\in\mathbb{Z}/N\mathbb{Z}}\int_{0}^{\infty}d\mu_{\chi}\varrho^{-1}_\delta(\mu_\chi,m_\chi)\mid\psi_\chi\rangle F(\mu_\chi,m_\chi).
\ee
Consider the dense domain $C_{c,0}^\infty(\R_{\geq 0})\otimes\C^N$ containing the smooth and compact support functions of $\mu_\chi$ vanishing at $\mu_\chi=0$ (the one-side derivatives exist at $\mu_\chi=0$), For any pair $F(\mu_\chi,m_\chi),G(\mu_\chi,m_\chi)\in C_{c,0}^\infty(\R_{\geq 0})\otimes\C^N $, \footnote{The regularized delta function is given by $\delta^\epsilon(\mu_\chi-\mu_\chi')=\int ds\,e^{2\pi is\left(\mu_{\chi}-\mu_{\chi}^{\prime}\right)}f_{\epsilon}\left(\chi,\chi^{\prime},s\right)$, where $f_{\epsilon}\left(\chi,\chi^{\prime},s\right)$ is analytic and exponentially decays as $s\to\pm\infty$, and $\lim_{\epsilon \to0}f_{\epsilon}=1$ \cite{Han:2024nkf}. Consider $\int_{0}^{\infty}d\mu_{\chi}^{\prime}\rho_{\delta}\left(\mu_{\chi}^{\prime}\right)^{-1}f_{\epsilon,\chi,\chi^{\prime}}\left(k\right)e^{-2\pi ik\mu_{\chi}^{\prime}}G\left(\mu_{\chi}^{\prime}\right)\equiv\tilde{G}_{\epsilon}(\chi,k)$, skipping the dependence on $m_\chi$. For $\mu_{\chi}$ in a compact interval, one can show that $\left|\tilde{G}_{\epsilon}(\chi,k)\right|\leq\frac{C_{2}}{1+\left|2\pi k\right|^{2}}$. Then $\int dke^{2\pi ik\mu_{\chi}}\tilde{G}_{\epsilon}(\chi,k)\to \rho_{\delta}\left(\mu_{\chi}\right)^{-1}G\left(\mu_{\chi}\right)$ as $\epsilon \to 0$ by the dominated convergence.
}
\be
&&\lim_{\substack{\delta\to0\\\delta>2\epsilon}}\ \sum_{m\in\mathbb{Z}/N\mathbb{Z}}\int d\mu\overline{\mathscr{V}_{\psi,\delta>2\epsilon}^{-1}F\left(\mu,m\right)}\mathscr{V}_{\psi,\delta>2\epsilon}^{-1}G\left(\mu,m\right)\nonumber\\
&=&	\lim_{\substack{\delta\to0\\\delta>2\epsilon}}\ \sum_{m_{\chi},m_{\chi}^{\prime}\in\mathbb{Z}/N\mathbb{Z}}\int_{0}^{\infty}d\mu_{\chi}d\mu_{\chi}^{\prime}\,\varrho_{\delta}\left(\mu_{\chi},m_{\chi}\right)\varrho_{\delta}\left(\mu_{\chi}^{\prime},m_{\chi}^{\prime}\right)\overline{F\left(\mu_{\chi},m_{\chi}\right)}\nonumber\\
&&\left[\varrho(\mu_{\chi},m_{\chi})\left(\delta^{\epsilon}(\mu_{\chi}-\mu_{\chi}^{\prime})\delta_{m_{\chi},m_{\chi'}}+\delta^{\epsilon}(\mu_{\chi}+\mu_{\chi}^{\prime})\delta_{m_{\chi},-m_{\chi'}}\right)\right]G\left(\mu_{\chi}^{\prime},m_{\chi}^{\prime}\right)\nonumber\\
&=&	\sum_{m_{\chi}\in\mathbb{Z}/N\mathbb{Z}}\int_{0}^{\infty}d\mu_{\chi}\,\varrho\left(\mu_{\chi},m_{\chi}\right)^{-1}\overline{F\left(\mu_{\chi},m_{\chi}\right)}G\left(\mu_{\chi}^{\prime},m_{\chi}^{\prime}\right).
\ee
The above implies for any $(\delta,\epsilon)$ in a small neighborhood and $\delta>2\epsilon>0$, $\left\Vert \mathscr{V}_{\psi,\delta>2\epsilon}^{-1}F\right\Vert \leq C\left\Vert F\right\Vert _{\varrho}$ so $\mathscr{V}_{\psi,\delta>2\epsilon}^{-1}: L^2(\C,\rmd\varrho_\chi)\to\ch$ is a bounded operator, and
\be
\lim_{\delta\to0}\left\Vert \mathscr{V}_{\psi,\delta>2\epsilon}^{-1}F\right\Vert =\left\Vert F\right\Vert _{\varrho},\label{convergenceofnorm}
\ee 
for any $F\in L^2(\C,\rmd\varrho_\chi)$.

For any pair $f\in\ch$ and $F\in L^2(\C,\rmd\varrho_\chi)$
\be
\lim_{\delta\to0}\langle f\mid \mathscr{V}_{\psi,\delta>2\epsilon}^{-1}\mid F\rangle= \langle \mathscr{V}_\psi f\mid F\rangle_\varrho \label{F17}
\ee
is a continuous linear functional of $f\in\Fd$, so $\lim_{\delta\to0}\mathscr{V}_{\psi,\delta>2\epsilon}\mid F\rangle_\varrho$ weakly converges in $\ch$ by Riesz's Theorem. We also have the convergence of norm \eqref{convergenceofnorm}. Therefore, $\lim_{\delta\to0}\mathscr{V}_{\psi,\delta>2\epsilon}\mid F\rangle_\varrho$ converges strongly to a state in $\ch$. As a result, we have the stong operator limit $\text{s-}\!\!\lim_{\delta\to0}\mathscr{V}_{\psi,\delta>2\epsilon}^{-1}$ converge to a norm preserving injection from $ L^2(\C,\rmd\varrho_\chi)$ to $\ch$.

The resolution of identity implies for any $f\in\ch$ and $F$ in the image of $\mathscr{V}_\psi$ 
\be
\langle f \mid \mathscr{V}_\psi^{-1}\mid F\rangle=\lim_{\delta\to0}\langle f\mid \mathscr{V}_{\psi,\delta>2\epsilon}^{-1}\mid F\rangle.
\ee
Therefore $\text{s-}\!\!\lim_{\delta\to0}\mathscr{V}_{\psi,\delta>2\epsilon}^{-1}=\mathscr{V}_\psi^{-1}$, and it implies that $\mathscr{V}_{\psi}^{-1}$ is defined on entire $L^2(\C,\rmd\varrho_\chi)$. So $\mathscr{V}_{\psi}$ is surjective. As a result, we prove that $\mathscr{V}_{\psi}$ is an isomorphism from $\ch$ to $L^2(\C,\rmd\varrho_\chi)$.

For any $F(\mu_\chi,m_\chi)$ in the dense domain of the smooth and compact support functions of $\mu_\chi$ and any $G\in L^2(\C,\rmd\varrho_\chi)$ continuous in $\mu_\chi$,
\be
\lag F\mid G\rag_\varrho=\sum_{m_{\chi}\in\mathbb{Z}/N\mathbb{Z}}\int_{0}^{\infty}d\mu_{\chi}\varrho^{-1}(\mu_\chi,m_\chi)\overline{F(\mu_\chi,m_\chi)}G(\mu_\chi,m_\chi).\label{F20}
\ee
since the integral and limit $\delta\to0$ can be interchanged by the dominated convergence theorem. The inner products \eqref{F20} and \eqref{limitinnerprod} are identical on the dense domain, so they have identical Cauchy sequences, then the completions give identical Hilbert spaces. It means that one can always interchange the limit and integral in \eqref{limitinnerprod}, as well as in \eqref{F13} and \eqref{weaklimitid}.

We define a subspace $\mathscr{K}\in L^2(\C,\rmd\varrho_\chi)$ of functions $F\left(\mu_{\chi},m_{\chi}\right)$ differentiable for $\mu_{\chi}\in(0,\infty)$ and satisfying $\Vert(\chi+\chi^{-1})^n F\Vert<\infty$ for all $n\in \Z$. We want to determine the linear functional $\xi$ on $\sk$ satisfing 
\be
\xi\lt[\lt((\chi+\chi^{-1})-(\chi_0+\chi_0^{-1})\rt)G\rt]=0, \qquad \forall G\in\sk,
\ee
for a given $\chi_0=\exp[\frac{2\pi i}{N}\left(-ib\mu_{\chi_{0}}-m_{\chi_{0}}\right)]$ with $\mu_{\chi_{0}}\geq 0$ and $m_{\chi_0}\in \Z/N\Z$. For any $F\in \sk$ satisfying $F\left(\mu_{\chi_{0}},m_{\chi_{0}}\right)=0$, we define 
\be
G\left(\mu_{\chi},m_{\chi}\right):=\frac{F\left(\mu_{\chi},m_{\chi}\right)}{\left(\chi+\chi^{-1}\right)-\left(\chi_{0}+\chi_{0}^{-1}\right)}
\ee
If $ \mu_{\chi_{0}}>0$ or $m_{\chi_{0}}\neq0$, $G\left(\mu_{\chi_{0}},m_{\chi_{0}}\right)=\frac{N}{2\pi b}(\chi_0-\chi_0^{-1})^{-1}{\partial_{\mu_{\chi}}F\left(\mu_{\chi_{0}},m_{\chi_{0}}\right)}$ is regular. If $\mu_{\chi_{0}}=0$ and $m_{\chi_{0}}=0$, $G\left(\mu_{\chi},m_{\chi}\right)$ is singular at $(0,0)$, but 
\be
\lim_{\mu_{\chi}\to0}\varrho\left(\mu_{\chi},0\right)^{-1}\left|G\left(\mu_{\chi},0\right)\right|^{2}=\frac{1}{4\pi^{2}}\left|\partial_{\mu_{\chi}}F\left(0,0\right)\right|^{2}
\ee
is regular. Moreover, for any given $ \mu_{\chi_{0}}, m_{\chi_{0}}$, there exists $\mathring{\mu}_{\chi}\gg\mu_{\chi_{0}}$,
\be
&&\sum_{m_{\chi}\in\mathbb{Z}/N\mathbb{Z}}\int_{0}^{\infty}d\mu_{\chi}\varrho\left(\mu_{\chi},m_{\chi}\right)^{-1}\left|\left(\chi+\chi^{-1}\right)^{n}G\left(\mu_{\chi},m_{\chi}\right)\right|^{2}\nonumber\\
&=&	\text{finite}+\sum_{m_{\chi}\in\mathbb{Z}/N\mathbb{Z}}\int_{\mathring{\mu}_{\chi}}^{\infty}d\mu_{\chi}\varrho\left(\mu_{\chi},m_{\chi}\right)^{-1}\left|\frac{X^{n}F\left(\mu_{\chi},m_{\chi}\right)}{X-X_{0}}\right|^{2}\nonumber\\
&\leq&\text{finite}+C \sum_{m_{\chi}\in\mathbb{Z}/N\mathbb{Z}}\int_{\mathring{\mu}_{\chi}}^{\infty}d\mu_{\chi}\varrho\left(\mu_{\chi},m_{\chi}\right)^{-1}\left|X^{n-1}F\left(\mu_{\chi},m_{\chi}\right)\right|^{2}<\infty 
\ee
where $X=\chi+\chi^{-1}$ and $X_0=\chi_0+\chi_0^{-1}$. So $G\in\sk$. In summary, any $F\in \sk$ satisfying $F\left(\mu_{\chi_{0}},m_{\chi_{0}}\right)=0$ can be written as $F\left(\mu_{\chi},m_{\chi}\right)=\lt[\left(\chi+\chi^{-1}\right)-\left(\chi_{0}+\chi_{0}^{-1}\right)\rt]G\left(\mu_{\chi},m_{\chi}\right)$ with $G\in \sk$. Then for any $F\in \sk$ and any $\phi\in\sk$ satisfying $\phi(\mu_{\chi_{0}},m_{\chi_{0}})=1 $, 
\be
F\left(\mu_{\chi},m_{\chi}\right)-F(\mu_{\chi_{0}},m_{\chi_{0}})\phi(\mu_\chi,m_{\chi})=\lt[\left(\chi+\chi^{-1}\right)-\left(\chi_{0}+\chi_{0}^{-1}\right)\rt]G\left(\mu_{\chi},m_{\chi}\right)
\ee
Then $\xi$ is determined as a delta function
\be
\xi[F]=cF(\mu_{\chi_{0}},m_{\chi_{0}}),\label{deltafunctionxi}
\ee
where $c=\xi[\phi]$ is a constant, since $\phi$ is arbitrary and independent of $F$.

In a similar way, 
\be
\xi\lt[\lt((\overline\chi+\overline\chi^{-1})-(\overline\chi_0+\overline\chi_0^{-1})\rt)G\rt]=0, \qquad \forall G\in\sk,
\ee
can also determine $\xi$ to satisfy \eqref{deltafunctionxi}.

\bibliographystyle{jhep}
\bibliography{muxin}

\providecommand{\href}[2]{#2}\begingroup\raggedright\begin{thebibliography}{10}

\bibitem{ChernSimons1974}
S.-S. Chern and J.~H. Simons, {\it Characteristic forms and geometric invariants},  {\em Annals of Mathematics} {\bf 99} (1974), no.~1 48--69.

\bibitem{Witten1989a}
E.~Witten, {\it {Quantum Field Theory and the Jones Polynomial}},  {\em Communications in Mathematical Physics} {\bf 121} (1989) 351--399.

\bibitem{ReshetikhinTuraev1991}
N.~Y. Reshetikhin and V.~G. Turaev, {\it Invariants of 3-manifolds via link polynomials and quantum groups},  {\em Inventiones mathematicae} {\bf 103} (1991), no.~1 547--597.

\bibitem{Elitzur:1989nr}
S.~Elitzur, G.~W. Moore, A.~Schwimmer, and N.~Seiberg, {\it {Remarks on the Canonical Quantization of the Chern-Simons-Witten Theory}},  {\em Nucl. Phys. B} {\bf 326} (1989) 108--134.

\bibitem{Moore:1989yh}
G.~W. Moore and N.~Seiberg, {\it {Taming the Conformal Zoo}},  {\em Phys. Lett. B} {\bf 220} (1989) 422--430.

\bibitem{Witten1988}
E.~Witten, {\it 2+ 1 dimensional gravity as an exactly soluble system},  {\em Nuclear Physics B} (1988).

\bibitem{thiemann2008modern}
T.~Thiemann, {\em Modern Canonical Quantum General Relativity}.
\newblock Cambridge University Press, 2007.

\bibitem{Engle2011}
J.~Engle, K.~Noui, A.~Perez, and D.~Pranzetti, {\it {The SU(2) Black Hole entropy revisited}},  {\em JHEP} {\bf 1105} (2011) 016, [\href{http://arxiv.org/abs/1103.2723}{{\tt arXiv:1103.2723}}].

\bibitem{Witten:2015aoa}
E.~Witten, {\it {Three lectures on topological phases of matter}},  {\em Riv. Nuovo Cim.} {\bf 39} (2016), no.~7 313--370, [\href{http://arxiv.org/abs/1510.07698}{{\tt arXiv:1510.07698}}].

\bibitem{1995AdPhy..44..405W}
X.-G. {Wen}, {\it {Topological orders and edge excitations in fractional quantum Hall states}},  {\em Advances in Physics} {\bf 44} (Sept., 1995) 405--473, [\href{http://arxiv.org/abs/cond-mat/9506066}{{\tt cond-mat/9506066}}].

\bibitem{Freedman:2001eqc}
M.~H. Freedman, A.~Kitaev, M.~J. Larsen, and Z.~Wang, {\it {Topological Quantum Computation}},  \href{http://arxiv.org/abs/quant-ph/0101025}{{\tt quant-ph/0101025}}.

\bibitem{Witten1991}
E.~Witten, {\it Quantization of chern-simons gauge theory with complex gauge group},  {\em Communications in Mathematical Physics} {\bf 137} (1991), no.~1 29--66.

\bibitem{analcs}
E.~Witten, {\it {Analytic Continuation Of Chern-Simons Theory}},  {\em Chern-Simons Gauge Theory: 20 years after} (2010) 347--446, [\href{http://arxiv.org/abs/1001.2933}{{\tt arXiv:1001.2933}}].

\bibitem{QFTvolume}
T.~Dimofte and S.~Gukov, {\it {Quantum field theory and the volume conjecture}},  {\em Contemp.Math.} {\bf 541} (2011) 41--67, [\href{http://arxiv.org/abs/1003.4808}{{\tt arXiv:1003.4808}}].

\bibitem{1997LMaPh..39..269K}
R.~M. {Kashaev}, {\it {The Hyperbolic Volume of Knots from the Quantum Dilogarithm}},  {\em Letters in Mathematical Physics} {\bf 39} (Feb., 1997) 269--275, [\href{http://arxiv.org/abs/q-alg/9601025}{{\tt q-alg/9601025}}].

\bibitem{1999math......5075M}
H.~{Murakami} and J.~{Murakami}, {\it {The colored Jones polynomials and the simplicial volume of a knot}},  {\em arXiv Mathematics e-prints} (May, 1999) math/9905075, [\href{http://arxiv.org/abs/math/9905075}{{\tt math/9905075}}].

\bibitem{HHKR}
H.~M. Haggard, M.~Han, W.~Kaminski, and A.~Riello, {\it {SL(2,C) Chern-Simons Theory, a non-Planar Graph Operator, and 4D Loop Quantum Gravity with a Cosmological Constant: Semiclassical Geometry}},  {\em Nucl. Phys.} {\bf B900} (2015) 1--79, [\href{http://arxiv.org/abs/1412.7546}{{\tt arXiv:1412.7546}}].

\bibitem{Han:2021tzw}
M.~Han, {\it {Four-dimensional spinfoam quantum gravity with a cosmological constant: Finiteness and semiclassical limit}},  {\em Phys. Rev. D} {\bf 104} (2021), no.~10 104035, [\href{http://arxiv.org/abs/2109.00034}{{\tt arXiv:2109.00034}}].

\bibitem{DGG11}
T.~Dimofte, D.~Gaiotto, and S.~Gukov, {\it {Gauge Theories Labelled by Three-Manifolds}},  {\em Commun.Math.Phys.} {\bf 325} (2014) 367--419, [\href{http://arxiv.org/abs/1108.4389}{{\tt arXiv:1108.4389}}].

\bibitem{Cordova:2013cea}
C.~Cordova and D.~L. Jafferis, {\it {Complex Chern-Simons from M5-branes on the Squashed Three-Sphere}},  {\em JHEP} {\bf 11} (2017) 119, [\href{http://arxiv.org/abs/1305.2891}{{\tt arXiv:1305.2891}}].

\bibitem{Dimofte2011}
T.~Dimofte, {\it {Quantum Riemann surfaces in Chern-Simons theory}},  {\em Adv.Theor.Math.Phys.} {\bf 17} (2013) 479--599, [\href{http://arxiv.org/abs/1102.4847}{{\tt arXiv:1102.4847}}].

\bibitem{levelk}
T.~Dimofte, {\it {Complex Chern-Simons Theory at Level k via the 3d-3d Correspondence}},  {\em Commun. Math. Phys.} {\bf 339} (2015), no.~2 619--662, [\href{http://arxiv.org/abs/1409.0857}{{\tt arXiv:1409.0857}}].

\bibitem{2020arXiv201215630E}
J.~{Ellegaard Andersen}, A.~{Malus{\`a}}, and G.~{Rembado}, {\it {Genus-one complex quantum Chern--Simons theory}},  {\em arXiv e-prints} (Dec., 2020) arXiv:2012.15630, [\href{http://arxiv.org/abs/2012.15630}{{\tt arXiv:2012.15630}}].

\bibitem{andersen2016level}
J.~E. Andersen and S.~Marzioni, {\it Level n teichm\"uller tqft and complex chern-simons theory},  \href{http://arxiv.org/abs/1612.06986}{{\tt arXiv:1612.06986}}.

\bibitem{Andersen2014}
J.~{Ellegaard Andersen} and R.~{Kashaev}, {\it {Complex Quantum Chern-Simons}},  {\em ArXiv e-prints} (Sept., 2014) [\href{http://arxiv.org/abs/1409.1208}{{\tt arXiv:1409.1208}}].

\bibitem{Bar-Natan1991}
D.~Bar-Natan and E.~Witten, {\it Perturbative expansion of chern-simons theory with noncompact gauge group},  {\em Communications in Mathematical Physics} {\bf 141} (1991), no.~2 423--440.

\bibitem{DGLZ}
T.~Dimofte, S.~Gukov, J.~Lenells, and D.~Zagier, {\it {Exact results for perturbative Chern-Simons theory with complex gauge group}},  {\em Commun.Num.Theor.Phys.} {\bf 3} (2009) 363--443.

\bibitem{GukovMarinoPutrov2017}
S.~Gukov, M.~Mari{\~n}o, and P.~Putrov, {\it Resurgence in complex chern--simons theory},  {\em Annales Henri Poincar{\'e}} {\bf 18} (2017), no.~4 1117--1169, [\href{http://arxiv.org/abs/1605.07615}{{\tt arXiv:1605.07615}}].

\bibitem{Fock:1998nu}
V.~V. Fock and A.~A. Rosly, {\it {Poisson structure on moduli of flat connections on Riemann surfaces and r matrix}},  {\em Am. Math. Soc. Transl.} {\bf 191} (1999) 67--86, [\href{http://arxiv.org/abs/math/9802054}{{\tt math/9802054}}].

\bibitem{Alekseev:1994pa}
A.~Y. Alekseev, H.~Grosse, and V.~Schomerus, {\it {Combinatorial quantization of the Hamiltonian Chern-Simons theory}},  {\em Commun. Math. Phys.} {\bf 172} (1995) 317--358, [\href{http://arxiv.org/abs/hep-th/9403066}{{\tt hep-th/9403066}}].

\bibitem{Alekseev:1994au}
A.~Y. Alekseev, H.~Grosse, and V.~Schomerus, {\it {Combinatorial quantization of the Hamiltonian Chern-Simons theory. 2.}},  {\em Commun. Math. Phys.} {\bf 174} (1995) 561--604, [\href{http://arxiv.org/abs/hep-th/9408097}{{\tt hep-th/9408097}}].

\bibitem{Alekseev:1995rn}
A.~Y. Alekseev and V.~Schomerus, {\it {Representation theory of Chern-Simons observables}},  \href{http://arxiv.org/abs/q-alg/9503016}{{\tt q-alg/9503016}}.

\bibitem{BNR}
E.~Buffenoir, K.~Noui, and P.~Roche, {\it {Hamiltonian quantization of Chern-Simons theory with SL(2,C) group}},  {\em Class.Quant.Grav.} {\bf 19} (2002) 4953, [\href{http://arxiv.org/abs/hep-th/0202121}{{\tt hep-th/0202121}}].

\bibitem{Podles1990}
P.~Podles and S.~Woronowicz, {\it {Quantum deformation of Lorentz group}},  {\em Commun.Math.Phys.} {\bf 130} (1990) 381--431.

\bibitem{BR}
E.~Buffenoir and P.~Roche, {\it {Harmonic analysis on the quantum Lorentz group}},  {\em Commun.Math.Phys.} {\bf 207} (1999) 499--555, [\href{http://arxiv.org/abs/q-alg/9710022}{{\tt q-alg/9710022}}].

\bibitem{Gaiotto:2024osr}
D.~Gaiotto and J.~Teschner, {\it {Schur Quantization and Complex Chern-Simons theory}},  \href{http://arxiv.org/abs/2406.09171}{{\tt arXiv:2406.09171}}.

\bibitem{Han:2024nkf}
M.~Han, {\it {Representations of a quantum-deformed Lorentz algebra, Clebsch-Gordan map, and Fenchel-Nielsen representation of quantum complex flat connections at level-$k$}},  \href{http://arxiv.org/abs/2402.08176}{{\tt arXiv:2402.08176}}.

\bibitem{Derkachov:2013cqa}
S.~E. Derkachov and L.~D. Faddeev, {\it {3j-symbol for the modular double of $SL_q(2,\mathbb{R})$ revisited}},  {\em J. Phys. Conf. Ser.} {\bf 532} (2014) 012005, [\href{http://arxiv.org/abs/1302.5400}{{\tt arXiv:1302.5400}}].

\bibitem{Kashaev2001}
R.~Kashaev, {\em The Quantum Dilogarithm and Dehn Twists in Quantum Teichm{\"u}ller Theory}, pp.~211--221.
\newblock Springer Netherlands, Dordrecht, 2001.

\bibitem{Nidaiev:2013bda}
I.~Nidaiev and J.~Teschner, {\it {On the relation between the modular double of $U_q(sl(2,R))$ and the quantum Teichmueller theory}},  \href{http://arxiv.org/abs/1302.3454}{{\tt arXiv:1302.3454}}.

\bibitem{MR1059324}
P.~Podle\'{s} and S.~L. Woronowicz, {\it Quantum deformation of {L}orentz group},  {\em Comm. Math. Phys.} {\bf 130} (1990), no.~2 381--431.

\bibitem{majid2000foundations}
S.~Majid, {\em Foundations of Quantum Group Theory}.
\newblock Cambridge University Press, 2000.

\bibitem{2008InMat.175..223F}
V.~V. {Fock} and A.~B. {Goncharov}, {\it {The quantum dilogarithm and representations of quantum cluster varieties}},  {\em Inventiones Mathematicae} {\bf 175} (Sept., 2008) 223--286, [\href{http://arxiv.org/abs/math/0702397}{{\tt math/0702397}}].

\bibitem{MOLNAR197729}
R.~K. Molnar, {\it Semi-direct products of hopf algebras},  {\em Journal of Algebra} {\bf 47} (1977), no.~1 29--51.

\bibitem{GelfandVol4}
I.~M. Gelfand and G.~E. Shilov, {\em {Generalized functions, Vol.4}}.
\newblock Academic Press, New York, NY, 1967.

\bibitem{Kashaev:2000ku}
R.~M. Kashaev, {\it {On the spectrum of Dehn twists in quantum Teichmuller theory}},  \href{http://arxiv.org/abs/math/0008148}{{\tt math/0008148}}.

\bibitem{Ponsot:2000mt}
B.~Ponsot and J.~Teschner, {\it {Clebsch-Gordan and Racah-Wigner coefficients for a continuous series of representations of U(q)(sl(2,R))}},  {\em Commun. Math. Phys.} {\bf 224} (2001) 613--655, [\href{http://arxiv.org/abs/math/0007097}{{\tt math/0007097}}].

\bibitem{Faddeev:1995nb}
L.~Faddeev, {\it {Discrete Heisenberg-Weyl group and modular group}},  {\em Lett.Math.Phys.} {\bf 34} (1995) 249--254, [\href{http://arxiv.org/abs/hep-th/9504111}{{\tt hep-th/9504111}}].

\bibitem{Giulini:1998rk}
D.~Giulini and D.~Marolf, {\it {On the generality of refined algebraic quantization}},  {\em Class. Quant. Grav.} {\bf 16} (1999) 2479--2488, [\href{http://arxiv.org/abs/gr-qc/9812024}{{\tt gr-qc/9812024}}].

\bibitem{Hsiao:2024phb}
C.-H. Hsiao and Q.~Pan, {\it {Quantum Curved Tetrahedron, Quantum Group Intertwiner Space, and Coherent States}},  \href{http://arxiv.org/abs/2407.03242}{{\tt arXiv:2407.03242}}.

\bibitem{Moore:1988qv}
G.~W. Moore and N.~Seiberg, {\it {Classical and Quantum Conformal Field Theory}},  {\em Commun. Math. Phys.} {\bf 123} (1989) 177.

\end{thebibliography}\endgroup

\end{document}